\documentclass[11pt]{article}
\usepackage[utf8]{inputenc}
\usepackage{amsmath,amsthm,amsfonts,fullpage, amssymb, xcolor,hyperref,bbm,graphicx}
\usepackage[ruled,noend,noline]{algorithm2e}
\usepackage{authblk}
\usepackage{mathtools}
\usepackage{cleveref}
\usepackage{natbib}

\usepackage{amsthm}
\usepackage{aliascnt}
\usepackage{cleveref}
\theoremstyle{plain}
\newtheorem{theorem}{Theorem} 

\newcommand{\makenamedtheorem}[2]{%
  \newaliascnt{#1}{theorem}
  \newtheorem{#1}[#1]{#2}
  \aliascntresetthe{#1}
}

\newtheorem*{theorem*}{Theorem}
\makenamedtheorem{corollary}{Corollary}
\makenamedtheorem{conjecture}{Conjecture}
\makenamedtheorem{lemma}{Lemma}
\makenamedtheorem{proposition}{Proposition}
\makenamedtheorem{protocol}{Protocol}
\makenamedtheorem{claim}{Claim}
\makenamedtheorem{fact}{Fact}
\makenamedtheorem{assumption}{Assumption}

\makenamedtheorem{example}{Example}
\makenamedtheorem{problem}{Problem}
\makenamedtheorem{definition}{Definition}
\makenamedtheorem{intuition}{Intuition}
\makenamedtheorem{idea}{Idea}
\makenamedtheorem{exercise}{Exercise}
\makenamedtheorem{remark}{Remark}

\crefname{theorem}{Theorem}{Theorems}
\Crefname{theorem}{Theorem}{Theorems}

\crefname{lemma}{Lemma}{Lemmas}
\Crefname{lemma}{Lemma}{Lemmas}

\crefname{corollary}{Corollary}{Corollaries}
\Crefname{corollary}{Corollary}{Corollaries}

\crefname{conjecture}{Conjecture}{Conjectures}
\Crefname{conjecture}{Conjecture}{Conjectures}

\crefname{proposition}{Proposition}{Propositions}
\Crefname{proposition}{Proposition}{Propositions}

\crefname{protocol}{Protocol}{Protocols}
\Crefname{protocol}{Protocol}{Protocols}

\crefname{claim}{Claim}{Claims}
\Crefname{claim}{Claim}{Claims}

\crefname{fact}{Fact}{Facts}
\Crefname{fact}{Fact}{Facts}

\crefname{assumption}{Assumption}{Assumptions}
\Crefname{assumption}{Assumption}{Assumptions}

\crefname{example}{Example}{Examples}
\Crefname{example}{Example}{Examples}

\crefname{problem}{Problem}{Problems}
\Crefname{problem}{Problem}{Problems}

\crefname{definition}{Definition}{Definitions}
\Crefname{definition}{Definition}{Definitions}

\crefname{intuition}{Intuition}{Intuitions}
\Crefname{intuition}{Intuition}{Intuitions}

\crefname{idea}{Idea}{Ideas}
\Crefname{idea}{Idea}{Ideas}

\crefname{exercise}{Exercise}{Exercises}
\Crefname{exercise}{Exercise}{Exercises}

\crefname{remark}{Remark}{Remarks}
\Crefname{remark}{Remark}{Remarks}

\newcommand{\Del}[2]{\textbf{Del}_{#1}(#2)}
\newcommand{\Bin}[2]{\textbf{Bin}(#1,#2)}

\newcommand{\Exp}{\operatorname*{\mathbb{E}}}

\newcommand{\abs}[1]{\left|#1\right|}

\renewcommand{\mod}{\text{ mod }}

\renewcommand{\Pr}{\textbf{Pr}}

\newcommand{\eps}{\varepsilon}

\title{Quasipolynomial Trace Reconstruction}
\author[ ]{Arnav Burudgunte, Paul Valiant, Hongao Wang}\affil[ ]{Purdue University}
\begin{document}

\maketitle
\begin{abstract}
We show that trace reconstruction on $n$-bit strings is possible using a quasipolynomial number of traces, for any retention probability $p$ that is at least inverse polylogarithmic in $n$.
\end{abstract}
\section{Introduction}

Given an $n$-bit string $x$, a \emph{deletion channel with retention probability $p$} deletes each bit of $x$ with probability $1-p$ and returns the remaining bits, which is called a \emph{trace}. 
The trace reconstruction problem asks, how many traces from an unknown string $x$ are needed to reconstruct $x$?

This problem relates to some of the most fundamental issues in information theory, and has been studied---along with many proposed variants---for decades, both for its own sake and for the sake of applications. See Section~\ref{sec:related} for a discussion of related work.

For constant retention probability $p$, the best lower bound says that $\tilde{\Omega}(n^{3/2})$ traces are necessary~\cite{chase2021new}; the best previously known upper bound is exponentially higher, saying that $\exp(\tilde{O}(n^{1/5}))$ traces suffice~\cite{chase2021separating}.

We substantially improve our algorithmic understanding of trace reconstruction, showing a quasipolynomial upper bound.

\begin{theorem*}
There exists a constant $c>0$ such that for any retention probability $p>0$, trace reconstruction on $n>1$ bit strings can be done from $e^{p^{-7/3}(\log_2 n)^c}$ traces.
\end{theorem*}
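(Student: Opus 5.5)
Our plan is to combine the standard reduction ``reconstruction $\Leftarrow$ distinguishing'' with a new distinguishing test that certifies that two candidate strings differ, using a test statistic of only polylogarithmic complexity.

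\textbf{Step 1: reduce to distinguishing.} By a union bound over a tournament, or by bit-by-bit reconstruction with hypothesis testing, it suffices to do the following. For any two distinct strings $x\neq y\in\{0,1\}^n$, we want a bounded test function $f$ of a trace whose expectations under $x$ and under $y$ differ by at least $\delta=e^{-p^{-7/3}(\log n)^{c'}}$. Given such an $f$, Hoeffding's inequality says that $O(\delta^{-2}\cdot n)$ traces suffice to decide every pairwise comparison correctly with high probability. Then $x$ is the unique candidate that wins all of its comparisons. The running time may be exponential, but only the number of traces matters.

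\textbf{Step 2: localize the difference.} Let $i$ be the first position where $x$ and $y$ differ. Rather than working with the whole string, we would align on a short ``anchor'' pattern near $i$ and use statistics of the trace near an occurrence of that anchor. Concretely, we consider $k$-mer statistics, that is, the expected number of occurrences of a fixed short pattern $w$ of length $k=\mathrm{polylog}(n)$ as a subsequence-window in the trace. We would try to show that for some $w$ of length $(\log n)^{O(1)}$ these expectations differ between $x$ and $y$ by at least $\delta$. Such a count is a polynomial identity. Its expectation is $\sum_j$ (probability that the window at position $j$ of the trace maps to $w$), which can be written as a generating polynomial in a complex variable $z$ whose coefficients are functions of $x-y$. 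This is exactly the setting of the complex-analytic (Littlewood-polynomial) method of Nazarov--Peres and De--O'Donnell--Servedio, but with the sum restricted to windows.

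\textbf{Step 3: the analytic lower bound, the main obstacle.} The crux is to show that a nonzero polynomial with bounded coefficients and a structured leading part is not small everywhere on a small arc near $z=1$ of width about $p^{O(1)}/\mathrm{polylog}(n)$. Classical bounds such as Borwein--Erd\'elyi give only $e^{-n^{1/3}}$, because the full string is involved. Our fix is to make the effective polynomial have degree polylogarithmic by conditioning on the anchor. We would choose the anchor $w$ so that it occurs in $x$ with bounded multiplicity in a window of polylog length around $i$. Occurrences far away would then contribute exponentially small terms, controlled by concentration of where the anchor's image lands. With that structure, Borwein--Erd\'elyi applied to a degree-$(\log n)^{O(1)}$ polynomial gives a $\delta$ of the claimed form. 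The $p^{-7/3}$ exponent would come from optimizing the arc width against the deletion-induced blurring $|pz+1-p|$.

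The hardest part will be making the anchoring rigorous for adversarial $x$, for example periodic strings where any anchor recurs often. For these we would try a separate case analysis: handle long runs and periodic regions by counting statistics at the period scale, and use the complex-analytic argument only in aperiodic neighborhoods.
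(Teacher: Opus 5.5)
Your Step 1 reduction is fine, but Steps 2--3 have a genuine gap, and the strategy cannot work as stated. Your statistics are expected counts of a pattern $w$ of length $k=\mathrm{polylog}(n)$ over contiguous windows of the trace, or statistics of bits a short distance past an anchor occurrence. Each is a sum over trace positions $j$ of $k'$-local queries, i.e.\ bounded functions of $k'=\mathrm{polylog}(n)$ consecutive trace bits. The local statistical query lower bound of \cite{chen24trace}, discussed in the paper's introduction, exhibits (already for constant $p$) pairs $x\neq y$ on which every $\tilde O(n^{1/5})$-local query has expectation gap at most $\exp(-\tilde\Omega(n^{1/5}))$. Summing over at most $n$ positions, every one of your window counts differs by at most $n\exp(-\tilde\Omega(n^{1/5}))$ on such a pair, far below your target $\delta=e^{-p^{-7/3}(\log n)^{c'}}$. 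So the central claim of Step 2 is false in general. No choice of anchor, and no periodic/aperiodic case split that still uses polylog-length contiguous windows, can rescue it.

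Concretely, Step 3 breaks at the degree reduction. An expected window count gives weight $\Theta(1)$ to every occurrence of $w$, and to every stretch of $x$ that can become $w$ after deletions, however far it is from the first discrepancy $i$. Even if you restrict to trace positions near $ip$, the image of position $i$ is blurred over $\Theta(\sqrt{ip})$ trace positions, which is polynomial in $n$. There is no concentration making distant contributions exponentially small. The polynomial you would need to lower-bound genuinely has degree $\mathrm{poly}(n)$, and you are back in the regime where Borwein--Erd\'elyi gives only stretched-exponential bounds.

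The missing idea is to use \emph{global}, non-contiguous product statistics $x^{(s)}_p=\Exp[U_{s_1}\cdots U_{s_k}]$, and to defeat the blurring by induction over scales rather than by anchoring.

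The paper does use a Littlewood bound on a polylog-size window, as you anticipated, but only in the base case (Lemma~\ref{lem:base}). There it is applied to a \emph{hypothetical} channel started $2R_{base}=\mathrm{polylog}(n)$ bits before $d$. The paper then applies Proposition~\ref{prop:induction} about $\log_{1.1}\log_2 n$ times, each time moving the channel's start from $d-2R$ back to $d-R-R'$ with $R'\approx R^2$. Each step does three things:
\begin{itemize}
\item It looks at the sequence of shifted statistics $j\mapsto x^{(s+j)}$ and shows that some summed double or triple product of shifts survives binomial blurring. This uses the bispectrum/3-point linearity argument plus the abrupt-start Fourier bound (Lemmas~\ref{lem:contrapositive}, \ref{lem:mostly-big-laurent}, \ref{lem:combined-contrapositive} and Corollary~\ref{cor:shift}).
\item It turns that triple product of expectations into a single statistic of order $\le 3k$, by simulating three $p$-traces from one trace with retention $P=1-(1-p)^3$ (Lemma~\ref{lem:simulation}).
\item It rewrites the smooth coefficients as a binomial convolution, via the tightly supported deconvolution of Corollary~\ref{cor:deconvolution}.
\end{itemize}
The order grows to $3^b=\mathrm{polylog}(n)$ while $\tau$ drops only polynomially per step, giving an inverse-quasipolynomial gap. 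The $p^{-7/3}$ comes from the base-case bound $\exp(-C d_{base}^{1/3}p_{base}^{-1/3})$ with $d_{base}$ essentially $p_{base}^{-6}$ and $p_{base}=3^{-b}\min(\frac14,p)$.
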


Our techniques bypass known barriers for the trace reconstruction problem. The strongest lower bound for the standard setting of trace reconstruction is the ``local statistical query (SQ)'' bound of~\cite{chen24trace}, which we briefly introduce. An $\ell$-local query specifies a function $f$ on $\ell$ consecutive bits of a trace $U$, to which an oracle responds with a $\delta$-accurate estimate of the expected value of $f$ across traces from the unknown string. The surprising lower bound from this work is: any trace reconstruction algorithm that makes $\tilde{O}(n^{1/5})$-local queries must have tolerance $\delta = \exp(-\tilde{\Omega}(n^{1/5}))$. Further, the $\exp(\tilde{O}(n^{1/5}))$ upper bound of~\cite{chase2021separating} can be reinterpreted in a $\tilde{O}(n^{1/5})$-local guise~\cite{chen24trace}. Thus, any algorithm that beats the $\exp(\tilde{O}(n^{1/5}))$ upper bound must take advantage of global structure in traces, in a way that previous algorithms have not.

One of the simplest global queries asks: what is the expected value of the product of bits $s_1,\ldots,s_k$ of a trace $U\sim \Del{p}{x}$. We call this a $k^\textrm{th}$ order statistic, and will denote it $x_p^{(s)}$ (see Definition~\ref{def:k-th}). Our analysis will show how to identify a string from its $polylog(n)$ order statistics.

It is well known that, up to $O(n)$ factors in sample complexity, the problem of recovering an unknown $n$-bit string $x$ is equivalent to the problem of \emph{distinguishing} a given pair $x,y\in\{0,1\}^n$, given repeated traces from one of the strings. It is often simpler to think of trace reconstruction as synonymous with this problem of distinguishing $x$ from $y$.

Overall, our approach to distinguishing traces from $x$ versus $y$ can be described as ``zooming out around the point $d$ of first discrepancy between $x,y$.'' Trace reconstruction from very short strings is easy, even with exponential approaches, so if we artificially pretend we had traces from a tiny window around $d$, then it is easy to find statistics $s$ that strongly distinguish $x$ from $y$, even simple statistics with $k=1$ that look at the mean value of a single bit of the trace.
Our idea is to iteratively transform a statistic $s$ that only works on traces from a very local window around $d$ into a new statistic $S$ that survives being used on a more zoomed-out window around $d$. We thus iteratively transform a trivial $k=1$ local statistic into a successively more global statistic of higher order until we have ultimately recovered a statistic that distinguishes $x$ from $y$ on traces from the entire string.

Intuitively, given a window of size $R$, the trace will produce $\Bin{R}{p}$ bits, and thus, roughly, the location of individual bits will end up binomially ``blurred'' by radius $\approx\sqrt{Rp}$. At each step of the induction, we will roughly \emph{square} the size of the window, squaring the blurring. The challenge of the induction step then is: given a statistic $s$ that distinguishes structure from $x$ versus $y$ when blurred by $\approx\sqrt{Rp}$, can we somehow construct a new statistic $S$ whose distinguishing signal survives blurring by $\approx Rp$? 

To understand the effect of blurring, consider the sequence of all shifts of statistic $s$ on some small window $W$ of $x$: let $f(j)=x^{(s+j)}_{W,p}$. Consider traces from some larger window; each trace lets us construct a noisy estimate of the sequence $f$, but \emph{shifted} by a large binomially-distributed offset. Can we construct a statistic $S$ on this larger window that lets us reconstruct $f$ (up to shift) on the smaller window, or at least emulate its power to distinguish $x$ from $y$? Phrased in this way, our challenge is similar to the problem of \emph{multiple reference alignment}, abstracted from an electron microscopy setting, where, given many noisy randomly-translated images of a molecule, you want to reconstruct the molecule (up to a shift). This was the algorithmic challenge at the center of the 2017 Nobel Prize in Chemistry.

The solution is to show that $f$ is uniquely defined (up to shift) by its second and third order statistics, even when summed/blurred over all shifts. Inspired by BLR linearity testing, we show in Section~\ref{sec:sequences}, roughly, that there is a three-point test to distinguish statistics from $x$ versus $y$ on the zoomed-out window: we show that the following is nonnegligible for some offsets $\ell_0,\ell_1,\ell_2$ \[\left|\sum_j x_{W,p}^{(s+j+\ell_0)}x_{W,p}^{(s+j+\ell_1)}x_{W,p}^{(s+j+\ell_2)}-y_{W,p}^{(s+j+\ell_0)}y_{W,p}^{(s+j+\ell_1)}y_{W,p}^{(s+j+\ell_2)}\right|\]

The focus of Section~\ref{sec:alpha} is to reexpress a triple product of $k^{\textrm{th}}$ order statistics in terms of a linear combination of statistics of order $\leq 3k$, to complete the induction. Crucial towards this goal, Lemma~\ref{lem:simulation} shows that we can actually simulate three \emph{independent} traces from a single trace; the cost of this simulation is that the retention probability triples, going from $p$ to $P\in[p,3p]$.

See Section~\ref{sec:overview} for a much more detailed discussion of the technical components of this paper, including many important steps we have omitted here.

\medskip\noindent{\bf The trajectory of our induction:}

The successive zoom-out idea leads to an induction step where, given a window of size $R$ around the point of first discrepancy $d$, and a statistic $s$ of order $k$ with some nonnegligible discrepancy $\tau$ between $x,y$ when applied on this window: we seek a statistic $S$ of order $\leq 3k$ that, for a window of size $\approx R^2$ has discrepancy $\geq \tau^c$ (for constant $c$), when the retention probability $p$ increases to $P\in[p,3p]$.

Since it takes $\log\log n$ squarings to go from some small window size to a window size $n$, we repeat the induction step $\approx\log\log n$ times. 

Over the course of $\log\log n$ steps of induction: the order $k$ will increase by $polylog(n)$ factor; $\log\tau<0$ will get scaled by $polylog(n)$ factor meaning that $\tau$ will decay inverse quasipolynomially; and the retention probability $p$ might increase by $polylog(n)$ factor.

The $polylog(n)$ growth of retention probability puts us in an interesting situation: at the end of the induction, we want a conclusion about deletion channels for the entire strings $x,y$ and for $p$ in a (typically) constant range; thus the start of the induction must be in the somewhat weird regime where the retention probability $p$ is inverse polylogarithmic.

Lemma~\ref{lem:base} shows the base case for our induction. For technical reasons, we start the induction with a window size $R=polylog(n)$. We show that, for two strings $x,y$ with a point of first discrepancy that is $polylog(n)$ close to the start of the window, and a retention probability that is $1/polylog(n)$, then there is an order $k=1$ (``mean based'') statistic on the window that distinguishes $x,y$ with inverse quasipolynomial signal. This is proved with mostly standard techniques, though we adapt these techniques to get tight bounds on the \emph{location} in the trace at which the means of traces from $x,y$ differ, in terms of the point of first discrepancy.

\medskip\noindent{\bf Run time vs. sample complexity:}

Our main result shows the quasipolynomial sample complexity of trace reconstruction. It has been well known that maximum likelihood estimation (MLE) is the algorithm with essentially optimal sample complexity for this problem (see~\cite{kuan24on}) and thus our paper implies that MLE also succeeds from quasipolynomial traces. In the context of distinguishing two arbitrary strings $x,y$ (as opposed to reconstructing one arbitrary string $x$), there is a natural quadratic-time dynamic programming algorithm to compute the likelihood of generating a particular trace $U$ from a particular string $x$. And thus we conclude there is a quasipolynomial time algorithm for distinguishing $x,y$ from traces.

For the case of reconstructing an arbitrary string, MLE suggests computing the likelihood of generating the given traces from each of the exponential number of $x\in\{0,1\}^n$, reusing the quasipolynomial number of traces. However it is unclear whether there is an adaptation of our approach that additionally has quasipolynomial running time.

\subsection{Outline}
We discuss related work in Section~\ref{sec:related}. Section~\ref{sec:definitions} contains notation and definitions. Section~\ref{sec:overview} contains an in-depth technical overview of some of the global relations between technical sections in this paper, and we strongly suggest readers review this before going into more specific technical sections. Section~\ref{sec:sequences} proves our analog of ``3-point linearity testing'' combined with analysis of Fourier properties of sequences with ``abrupt start''. Section~\ref{sec:alpha} shows how to simulate several traces from one trace of higher retention probability, and analyzes the relation between statistics of traces simulated in this manner. Section~\ref{sec:deconvolution} shows how to deconvolve a sequence by the pdf of a binomial, while maintaining very tight control of the support and Fourier decay. Section~\ref{sec:induction} shows our main induction step in Proposition~\ref{prop:induction}. Section~\ref{sec:main} contains the base case in Lemma~\ref{lem:base}, and the main result in Theorem~\ref{thm:main}. Section~\ref{sec:misc} contains a few small technical tools.

\subsection{Related Work}
\label{sec:related}

The trace reconstruction problem is motivated by the multiple sequence alignment problem in computational biology and was introduced to the TCS community by the work of Batu et al. \cite{Batu2004reconstruct}. Since then, there has been a line of work (\cite{holenstein08trace, de2019optimal,nazarov16Trace,holden20Lower,chase2021new,chase2021separating}) improving the upper or lower bounds bounds on the sample complexity. However, after decades of effort, the best prior upper bound shown by \cite{chase2021separating} is $\exp(\tilde{O}(n^{1/5}))$ and the best lower bound by \cite{chase2021new} is $\tilde{\Omega}(n^{3/2})$; these bounds have an exponential gap. 

In this paper, we consider $k^{\textrm{th}}$ order statistics corresponding to the expectation of the product of $k$ bits in a trace. When $k=1$, these statistics measure the expected value of a single bit of a trace and are called \emph{mean-based statistics}. Mean-based statistics for the trace reconstruction problem have been thoroughly studied. It is known that $\exp(\Theta(n^{1/3}))$ traces are necessary and sufficient to reconstruct a string via mean-based statistics \cite{de2019optimal,nazarov16Trace}. The base case of our analysis uses a stronger fact about mean-based statistics: if two strings $x,y$ have a point of first discrepancy $d$, they can be distinguished by a mean-based statistic using $\exp(O(d^{1/3})$ traces. A similar result was proved by \cite{peres17Average}, though with looser bounds on the location of this statistic.

Many known algorithms for worst-case trace reconstruction (e.g., \cite{holenstein08trace, de2019optimal,nazarov16Trace}) can be interpreted through the \emph{statistical query} model in which the algorithm accesses traces only by estimating the expectation of some function $f$ on the trace process---which includes approaches like ours that rely on $k^{\textrm{th}}$ order statistics. A strong lower bound is known for \emph{local} statistical query algorithms, those algorithms which query functions of at most $k$ \textit{consecutive} bits~\cite{kuan24on,chen24trace}. The lower bound states that $O(n^{1/5})$-local statistical query algorithm can only succeed by generating estimates which are accurate to tolerance $2^{-\tilde{\Omega}(n^{1/5})}$, which requires exponentially many samples. Significantly, the previous upper bound of \cite{chase2021separating} can be reexpressed as an $O(n^{1/5})$-local statistical query algorithm~\cite{chen24trace}. In contrast, while our algorithm only queries $polylog(n)$ bits in the trace, these bits could be \emph{anywhere}, and not consecutive.

The average-case version of trace reconstruction---reconstruction of a string $x$ chosen at random---has also been studied extensively. There are sublinear upper bounds for this case \cite{peres17Average,pmlr-v75-holden18a,holden2020subpolynomial}; the best upper bound is $\exp(\tilde{O}(\log^{1/5}n))$ \cite{rubinstein2023average}, and the best lower bound is $\tilde{\Omega}(\log^{5/2}n)$ \cite{chase2021new}. Notably, the algorithmic upper bounds for average-case reconstruction cannot be expressed as statistical queries; they generally rely on global alignment between traces rather than summarizing traces via statistics. 

Many variants of trace reconstruction have been productively investigated: circular trace reconstruction \cite{narayanan2021circular, burudgunte_et_al:LIPIcs.ITCS.2026.30}, coded trace reconstruction \cite{cheraghchi2020coded,brakensiek2020coded}, approximate trace reconstruction \cite{daviesApproximateTraceReconstruction2021,chenApproximateTraceReconstruction2023}, matrix trace reconstruction  \cite{krishnamurthy2021trace}, and population recovery \cite{ban2019beyond, 3458064.3458141, rivkin25a}. 

As a key component of our analysis, in Section~\ref{sec:sequences} we study the problem of distinguishing arbitrary real sequences via summed second and third order statistics. This problem has analogs in the field of multiple reference alignment, in which one seeks to reconstruct a real-valued sequence from measurements which add a random shift and Gaussian noise \cite{perry2019sample, bandeira2023estimation}. Our proofs in Section~\ref{sec:sequences}, particularly the idea of ``three-point tests" for matching third order statistics, are partially inspired by the classic BLR test for linearity \cite{10.1145/100216.100225}, which has been generalized to linearity testing for all finite groups (including non-abelian groups)~\cite{https://doi.org/10.1002/rsa.20182}, and to the domain of integers \cite{devadas2016self}, which corresponds to our setting.

\section{Notation and Deletion Channel Basics}\label{sec:definitions}

We denote the binomial distribution, representing $n$ flips of a $p$-biased coin, as $\Bin{n}{p}$. The probability of getting exactly $j$ heads from this distribution we denote with the 3-argument function $bin(n,j,p):=\binom{n}{j}p^j(1-p)^{1-j}$, though taking values 0 when $j\notin \{0,\ldots,n\}$.

For a binary string $x\in\{0,1\}^n$ we 1-index it via indices from $\{1,\ldots,n\}$. We may refer to a substring of $x$ using notation $x_{[i:j]}$ which refers to the string $(x_i,\ldots,x_j)$. For two binary strings $x\neq y$, we will often focus on the ``point of first discrepancy'', $d:=\min \{i: x_i\neq y_i\}$.

We formally define the deletion channel and the trace reconstruction problem now.
\begin{definition}
Given a binary string $x\in\{0,1\}^n$, and a probability $p$, the deletion channel with retention probability $p$ is defined as the probabilistic process that, for each bit of $x$, retains it with probability $p$ independently, discarding the other bits. The result of this process is a binary string $U$ of length between 0 and $n$, which we call a trace. We denote the process of drawing a trace from the deletion channel by $U\sim \Del{p}{x}$.
\end{definition}

\begin{definition}[Trace reconstruction problem]
    Let $x \in \{0,1\}^n$. Given sample access to traces generated by the deletion channel $\Del{p}{x}$, an algorithm $\mathcal{A}$ which returns $x'$ solves the \textit{trace reconstruction problem} if, with probability at least $2/3$, $x'=x$. 
\end{definition}

The main object of study in this paper is ``low-order statistics'' of the deletion channel, which we define now.

\begin{definition}\label{def:k-th}
Given a tuple $s$ of $k$ locations in $\{1,\ldots,\sigma\}$ and a probability $p$ and an $n$-bit binary string $x$, we define the notation $x^{(s)}_p$ to denote the expected value of product of locations $s_1,\ldots,s_k$ in a trace from $x$, where by convention, any accesses outside the range of the trace are interpreted as being 0: \[x^{(s)}_p:=\Exp_{U\sim \Del{p}{x}}[U_{s_1}U_{s_2}\cdots U_{s_k}]\]

We say that statistic $s$ has order $k$ and span $\sigma$.

Further, we will apply the above notation to traces of \emph{portions} of larger strings: given indices $i<j$, we use $x_{[i,j],p}^{(s)}$ to denote the expected value of the statistic $s$, when applied to the portion of $x$ from $i$ to $j$. Treating $s$ as a vector, if $\ell$ is an integer then $s+\ell$ is interpreted as adding $\ell$ to all entries of $s$, and thus we may use $x^{(s+\ell)}_p$ to denote the statistic of $s$ shifted by $\ell$. When $p$ is implicit, we may drop the subscript $p$.
\end{definition}

The analysis makes repeated use of the Fourier transform:

\begin{definition}
Given a sequence $f:\mathbb{Z}\rightarrow \mathbb{R}$ such that $\sum_j |f(j)|$ is finite, we define its Fourier transform $F:(-\pi,\pi]\rightarrow\mathbb{C}$, with argument considered as an angle mod $2\pi$, by
\[F(\xi):= \sum_{j=-\infty}^{\infty} f(j)\cdot e^{ij\xi}\]
and for such $F$, we may invert the Fourier transform as
\[f(j)=\frac{1}{2\pi}\int_{-\pi}^{\pi} F(\xi)\cdot e^{-ij\xi}\,d\xi\]
\end{definition}

We now introduce and prove a basic lemma that relates statistics on a substring to statistics on a larger string, showing, essentially, that ``zooming out'' to a larger string a has the effect of adding a binomially-distributed shift to a statistic, essentially blurring it. ``Blurring'' more technically means convolution with the pdf of a binomial distribution.

\begin{lemma}\label{lem:binomial-shift} Let $x \in \{0,1\}^n$. For any statistic $s$ and $0 \leq R \leq n$, we have
\[x_{[1:n]}^{(s)}=\eps+\sum_{j=0}^{R} bin(R,j,p) \cdot x_{[R+1:n]}^{(s-j)}\]
where $0\leq \eps\leq Pr[\Bin{R}{p}\geq \min_i s_i]$.
\end{lemma}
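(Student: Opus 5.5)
We condition on the number $J$ of bits retained from the prefix $x_{[1:R]}$. Since each bit of $x$ is retained independently with probability $p$, we have $J\sim\Bin{R}{p}$. Moreover, conditioned on $J=j$, the portion of the trace $U$ coming from $x_{[R+1:n]}$ is distributed exactly as a trace $V\sim\Del{p}{x_{[R+1:n]}}$, independent of the prefix. In other words, $U$ is the concatenation of a length-$j$ trace of the prefix with $V$, so $U_{t}=V_{t-j}$ for every $t>j$, where out-of-range accesses are read as $0$ as in Definition~\ref{def:k-th}.

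Next we split according to whether $j<\min_i s_i$. On the event $J=j<\min_i s_i$, every queried position $s_i$ lies strictly after the prefix part of the trace. Hence
\[U_{s_1}\cdots U_{s_k}=V_{s_1-j}\cdots V_{s_k-j},\]
and the conditional expectation of this product is exactly $x^{(s-j)}_{[R+1:n],p}$. This gives
\[x^{(s)}_{[1:n]}=\sum_{j=0}^{R} bin(R,j,p)\, x^{(s-j)}_{[R+1:n]}+\eps,\]
where
\[\eps=\sum_{j\ge \min_i s_i} bin(R,j,p)\left(\Exp[U_{s_1}\cdots U_{s_k}\mid J=j]-x^{(s-j)}_{[R+1:n]}\right).\]

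For $j\ge\min_i s_i$, the shifted tuple $s-j$ has some entry $\le 0$. That entry indexes outside the trace of $x_{[R+1:n]}$, so it is read as $0$ and $x^{(s-j)}_{[R+1:n]}=0$. Thus $\eps=\sum_{j\ge\min s_i} bin(R,j,p)\,\Exp[U_{s_1}\cdots U_{s_k}\mid J=j]$. Each conditional expectation is of a product of bits, so it lies in $[0,1]$. Therefore $0\le\eps\le\Pr[\Bin{R}{p}\ge\min_i s_i]$.

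The only delicate point is the boundary convention, namely checking that $x^{(s-j)}_{[R+1:n]}$ vanishes exactly when some index becomes nonpositive; this holds by Definition~\ref{def:k-th}. The factorization $U_t=V_{t-j}$ for $t>j$ holds by the independence of deletions across bits.
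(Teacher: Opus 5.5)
Your proof is correct and follows the paper's argument. Like the paper, you condition on the number $J\sim\Bin{R}{p}$ of bits retained from the prefix, identify the conditional statistic with $x^{(s-j)}_{[R+1:n]}$ when $j<\min_i s_i$, and bound the leftover terms using the fact that each conditional expectation lies in $[0,1]$. You also state explicitly, within the proof itself, something the paper leaves to a remark after the lemma: $x^{(s-j)}_{[R+1:n]}=0$ for $j\ge\min_i s_i$ by the out-of-range convention, which is why the sum in the statement can run over all $j\in\{0,\ldots,R\}$.
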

We note that for $j>\min_i s_i-1$, the statistic $s$ will be shifted to have invalid indices, smaller than 1; and in these cases implicitly returns value 0. We thus could equivalently view the sum in the lemma as having upper bound $\min_i s_i-1$.
\begin{proof}
    A trace from the string $x_{[1:n]}$ can be viewed as the concatenation of a trace from $x_{1:R}$ and a trace from $x_{R+1:n}$. We write $x^{(s) \mid j}_{1:n}$ to denote the expectation of $s$ conditioned on precisely $j$ bits being retained from $x_{1:R}$. If $j < \min_i s_i$, the bits contributing to $s$ come exclusively from $x_{R+1:n}$, and therefore $x^{(s) \mid j}_{[1:n]} = x_{[R+1,n]}^{(s-j)}$. Since the number of retained bits $j$ is distributed as $\Bin{R}{p}$, we have 
    \begin{align*}
        x_{[1:n]}^{(s)} &= \sum_{j=0}^{R} bin(R,j,p) x^{(s) \mid j}_{[1:n]} = \sum_{j=s_i}^{R} bin(R,j,p) x^{(s) \mid j}_{[R+1:n]} + \sum_{j=0}^{\min_i s_i-1} bin(R,j,p) x^{(s-j)}_{[R+1:n]} 
    \end{align*}
    To complete the proof, we note that $x^{s|j} \in (0,1)$ for all $j$, so the first term is at least $0$ and at most $\Pr[\Bin{R}{p} \geq \min_i s_i]$, as desired. 
\end{proof}

\section{Proof Strategy and Discussion}\label{sec:overview}

At a high level, this paper shows how to distinguish two strings $x,y\in\{0,1\}^n$ using quasipolynomial traces via an induction argument that successively zooms out around the location $d$ where $x$ and $y$ first disagree. Each step of induction (roughly), given a statistic $s$ that could distinguish $x$ from $y$ given traces from a window starting a small distance $R$ before $d$, shows how to construct a new statistic $S$ that can distinguish $x$ from $y$ on a much larger (``zoomed out'') window starting $R'$ farther back, where $R'$ is nearly $R^2$. The induction ends after roughly $\log\log n$ iterations when we have zoomed out to the full size of $x,y$, so that we have constructed a statistic that distinguishes traces on the original strings $x,y$.

As expressed in the introduction, we are guided by the intuition provided by Lemma~\ref{lem:binomial-shift} that formally explains how to compare the value of a statistic $s$ on a large window $x_{[d-R-R':n]}$ to the value of various shifts $s+j$ on the small window $x_{[d-R:n]}$. Suppose $s$ consists of indices far enough to the right so that $x^{(s)}_{[d-R-R':n],P}$ is likely to query indices entirely within the small window $x_{[d-R:n]}$. Then $x^{(s)}_{[d-R-R':n],P}$ essentially equals an average of shifted statistics $x_{[d-R:n]}^{(s+j),P}$ on the small window (weighted by the pdf of the binomial distribution $\Bin{R'}{P}$, though for the purpose of this high-level overview it will be simpler to ignore the weights and regard this as an unweighted average). The conceptual task needed to achieve an induction step is now: given a statistic $s$ that distinguishes $x$ from $y$ given traces on the small window $\{d-R,\ldots,n\}$, can we construct a related statistic $S$ that distinguishes $x$ from $y$ \emph{even when summed over all offsets} $j$? Namely, given $s$ for which $\left|x_{[d-R:n]}^{(s)}-y_{[d-R:n]}^{(s)}\right|$ is large, can we find an $S$ for which $\left|\sum_j x_{[d-R:n]}^{(S+j)}-y_{[d-R:n]}^{(S+j)}\right|$ is large? This will enable us to apply Lemma~\ref{lem:binomial-shift} to shift the window start back by $\approx R^2$, thus zooming out and completing one step of the induction.

As it turns out, we cannot do this directly, and instead ask a related question: are there offsets $\ell_0=0,\ell_1,\ell_2$ such that the following is large \[\left|\sum_j x_{[d-R:n]}^{(s+j+\ell_0)}x_{[d-R:n]}^{(s+j+\ell_1)}x_{[d-R:n]}^{(s+j+\ell_2)}-y_{[d-R:n]}^{(s+j+\ell_0)}y_{[d-R:n]}^{(s+j+\ell_1)}y_{[d-R:n]}^{(s+j+\ell_2)}\right|\]
We then connect this back to the previous question by showing, in Section~\ref{sec:alpha} (see below for more discussion) that a product of three expected statistics $s+j+\ell_0,s+j+\ell_1,s+j+\ell_2$ can be expressed as a linear combination of expected statistics $S$ of slightly higher order.

Phrased abstractly, if we define the sequence $f(j):=x^{(s+j)}_{[d-R:n]}$ to denote the values of all shifts of the statistic $s$---where $f$ will be in the range $[0,1]$ since it is expectations of a product of bits from a trace of a binary string---and define $g$ analogously as the statistics of $y$ then: under the assumption that the sequence $f$ is nonnegligibly different from $g$, can we construct some triple of offsets $\ell_0=0,\ell_1,\ell_2$ such that \begin{equation}\label{eq:third-order-example-early}\Big|\sum_j f(j+\ell_0)f(j+\ell_1)f(j+\ell_2)-g(j+\ell_0)g(j+\ell_1)g(j+\ell_2)\Big|\end{equation}
is nonnegligible. Namely, given that the sequences of statistics $f,g$ are nonnegligibly different, is there a third-order ``statistic of the statistics'' that distinguishes $f$ from $g$ even when summed over all offsets?

Resolving this challenge is the point of Section~\ref{sec:sequences}. We note that Section~\ref{sec:sequences} can be viewed as entirely about sequences, and that while it is designed to fit precisely into the induction structure of our trace reconstruction argument, the results of this section do not use the deletion channel in any way. An early lemma of this section is one of the central technical tools of this paper: Lemma~\ref{lem:contrapositive} shows that, subject to certain conditions discussed soon, \textbf{either} there is a summed second or third order statistic that distinguishes $f,g$ in the sense of Equation~\ref{eq:third-order-example-early}, \textbf{or} $f$ is essentially a shifted version of $g$ (in which case $f,g$ are indistinguishable to statistics that are summed over all shifts). 

Lemma~\ref{lem:contrapositive} can be viewed as a stronger form of one of the central results of the field of ``multiple reference alignment'', where summed second and third order statistics are referred to as the autocorrelation function and the bispectrum respectively, and the classic result is that these statistics suffice to reconstruct a sequence up to a shift, \emph{provided} that the sequence has entirely nonzero Fourier transform. The multiple reference alignment field justifies this assumption from an average-case analysis, where vanishingly small Fourier coefficients are vanishingly unlikely for random inputs; however, we need a worst-case result. 
In the setting of \emph{cyclic} trace reconstruction, we had shown in prior work that \emph{sixth} order statistics can always distinguish \emph{integer} sequences~\cite{burudgunte_et_al:LIPIcs.ITCS.2026.30}; but we cannot use this result here because our sequences $f,g$ are themselves expected statistics of traces and thus real-valued. Perhaps the closest related work to our Lemma~\ref{lem:contrapositive} comes from the celebrated 3-point linearity testing underlying the original proof of the PCP theorem~\cite{10.1145/100216.100225}, which can be interpreted as saying ``if a function passes most 3-point linearity tests, then it can be error-corrected to a linear function.''

The perhaps unexpected way this intuition shows up here is, defining $H(\xi)=\log\frac{F(\xi)}{G(\xi)}$ to be the log of the ratios of the Fourier transforms of $f,g$, we show that failed 3-point linearity tests on $H$ precisely correspond to summed third-order statistics on which $f,g$ differ (see Lemma~\ref{lem:projection-slice}), except for cases where $H$ is undefined because the Fourier transforms of $f$ or $g$ are 0. Thus, under the assumption that $f,g$ have Fourier transforms that mostly stay away from 0, then if $H$ passes the remaining linearity tests, $H$ must be linear on the points for which $F,G$ stay away from 0. Rephrased: if $f,g$ have very similar summed second and third order statistics, then $H(\xi)=\log\frac{F(\xi)}{G(\xi)}$ must be essentially a linear function of $\xi$ when it is defined; basic Fourier analysis lets us reinterpret the conclusion $F(\xi)\approx G(\xi)\cdot e^{i\xi\alpha}$ as saying that $f$ is a shifted version of $g$.

This is essentially the form of our Lemma~\ref{lem:contrapositive}, proven in terms of a 3-point linearity testing result, Lemma~\ref{lem:linearity-tester}, morally following the guidance of the BLR linearity tester but disguised by its adaptation to our setting.

The remainder of Section~\ref{sec:sequences} consists of building the tool of Lemma~\ref{lem:mostly-big-laurent} to guarantee that the Fourier transform of $f-g$ will be nonnegligible on $\geq 90\%$ of its domain, as required by the input condition of Lemma~\ref{lem:contrapositive}; combining Lemmas~\ref{lem:contrapositive} and~\ref{lem:mostly-big-laurent} into Lemma~\ref{lem:combined-contrapositive}; and proving Corollary~\ref{cor:shift} that will ultimately let us disambiguate the case that $f,g$ are shifted versions of each other.

\medskip\noindent{\bf Proving Fourier Coefficients are Nonnegligible $\geq 90\%$ of the Time, From ``Abruptness'':}

Guaranteeing that the Fourier transform of the difference in our statistics, $f-g$, stays mostly nonnegligible reflects some global decisions in our proof  strategy that might otherwise appear mysterious if not explained in this context, so we discuss this now.

An example of a sequence $f:\mathbb{Z}\rightarrow \mathbb{R}$ with mostly negligible Fourier transform is any extremely smooth function, such as the pdf of a binomial of high variance. This kind of function not only violates the input conditions of our lemmas but is also fatal to their conclusions: let $f(j)$ consist of the even entries of $bin(n,j,\frac{1}{2})$, taking values 0 for odd $j$; and let $g(j)$ consist of the odd entries. One can show that \emph{all} constant-order statistics of $f$ are exponentially close to those of $g$, despite $f,g$ being nonnegative functions that are not shifts of each other. We need to show that this kind of situation will not arise in our trace reconstruction setting. We achieve this via Lemma~\ref{lem:mostly-big-laurent}, which roughly says that if a sequence starts \emph{abruptly}, instead of having a long left tail, then its Fourier transform must be mostly nonnegligible.

From a broader perspective, we avoid vanishing Fourier transforms by focusing on discriminating $f,g$ via statistics that originate \emph{as close to the point of first discrepancy, $d$, as possible}. Namely, $x,y$ are identical to each other to the left of location $d$ by definition; and location $d$ in $x$ will end up near location $Rp$ in a trace from $\Del{p}{x_{[d-R:n]}}$. Thus for any statistic $s$ with small span, its discrepancy $|f(j)-g(j)|=\left|x^{(s+j)}_{[d-R:n]}-y^{(s+j)}_{[d-R:n]}\right|$ will decay extremely rapidly for $j$ to the left of location $Rp$, while being nonnegligible at some location $\ell$ near $Rp$ by assumption. We intuitively summarize this by saying that $f(j)-g(j)$ ``starts abruptly'' near location $Rp$.

In short, Lemma~\ref{lem:mostly-big-laurent} will formalize the intuition that ``sequences that start abruptly have mostly nonnegligible Fourier transform.'' The repeated use of this crucial lemma in our induction is enabled by our strategy of successively ``zooming out'' around location $d$, repeatedly focusing on statistics $s$ that are located as close to $d$ as possible while still distinguishing $x$ from $y$: guaranteeing sharp left tail decay of $f-g$ just to the left of a location $\ell$ where the difference is nonnegligible.

To motivate Lemma~\ref{lem:mostly-big-laurent}, we start by presenting the simplest version of it that formalizes the intuition ``if a sequence starts \emph{abruptly} then its Fourier transform must be mostly nonnegligible.'' In the below lemmas, the sequence $f$ is meant to correspond to the difference in statistics that we have been denoting as $f-g$. The following lemma considers the most drastic possible ``abrupt start'', where a sequence has values 0 for negative inputs, and then has significant value $f(0)$. We lower-bound the average value of the log of the magnitude of the Fourier transform by $\log|f(0)|$.

\begin{lemma}\label{lem:mostly-big}
Given a sequence $f:\mathbb{Z}\rightarrow\mathbb{R}$ supported only on $j\geq 0$ then: letting $F(\xi)$ be the Fourier transform of $f$, we have that $\frac{1}{2\pi}\int_{-\pi}^{\pi}\log|F(\xi)|\,d\xi\geq \log|f(0)|$.
\end{lemma}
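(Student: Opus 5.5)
This is Jensen's formula for the function $\Phi(z)=\sum_{j\ge 0} f(j) z^j$. We may assume $f(0)\neq 0$, since otherwise the right-hand side is $-\infty$ and there is nothing to prove. With the paper's convention $F(\xi)=\sum_j f(j)e^{ij\xi}$, we have $F(\xi)=\Phi(e^{i\xi})$ because $f$ is supported on $j\ge 0$. Since $\sum_j|f(j)|<\infty$ (needed for $F$ to be defined), $\Phi$ is holomorphic on the open unit disk and continuous on the closed disk.

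The plan is to apply Jensen's formula to $\Phi$ on circles of radius $r<1$, and then pass to the limit $r\to 1$. For $r<1$, Jensen's formula gives
\[\frac{1}{2\pi}\int_{-\pi}^{\pi}\log|\Phi(re^{i\xi})|\,d\xi=\log|\Phi(0)|+\sum_{a:\,\Phi(a)=0,\,|a|<r}\log\frac{r}{|a|}\ \ge\ \log|f(0)|.\]
Equivalently, $\log|\Phi|$ is subharmonic, so its circle averages are nondecreasing in $r$ and bounded below by $\log|\Phi(0)|=\log|f(0)|$.

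The main obstacle is passing from $r<1$ to $r=1$. Here $\log|\Phi|$ can be $-\infty$ on the boundary, and we only know that $\Phi$ is continuous there, not holomorphic. I will handle this with Fatou's lemma applied to the functions $-\log|\Phi(re^{i\xi})|$, using a uniform upper bound to control them. As $r\to1$, $\Phi(re^{i\xi})\to F(\xi)$ uniformly, by continuity on the compact closed disk. Also $|\Phi|\le M:=\sum_j|f(j)|$ on the closed disk, so the functions $\log M-\log|\Phi(re^{i\xi})|$ are nonnegative. Fatou's lemma then gives
\[\int\bigl(\log M-\log|F(\xi)|\bigr)\,d\xi\ \le\ \liminf_{r\to1}\int\bigl(\log M-\log|\Phi(re^{i\xi})|\bigr)\,d\xi.\]
Here we use that $\log|\Phi(re^{i\xi})|\to\log|F(\xi)|$ pointwise in the extended sense, which holds even where $F(\xi)=0$, since the limit is then $-\infty$. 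Rearranging yields
\[\frac1{2\pi}\int\log|F|\ \ge\ \limsup_{r\to1}\frac1{2\pi}\int\log|\Phi(re^{i\cdot})|\ \ge\ \log|f(0)|.\]
If the integral of $\log|F|$ is $-\infty$, the Fatou inequality would say this is impossible, so that case is excluded automatically and $\log|F|$ is integrable.

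As an alternative route, one can truncate $f$ to a polynomial $f_N$, apply the elementary polynomial version of Jensen's formula (the Mahler measure bound $\ge|f(0)|$), and take $N\to\infty$ with the same Fatou argument. I would mention this option only as a fallback.
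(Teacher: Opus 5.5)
Your argument is correct and complete. You apply Jensen's formula (equivalently, the sub-mean-value property of the subharmonic function $\log|\Phi|$) on circles of radius $r<1$, then use Fatou's lemma on the nonnegative functions $\log M-\log|\Phi(re^{i\xi})|$. This gives the boundary inequality, and integrability of $\log|F|$ comes out as a byproduct.

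The paper states this lemma in its overview without a written proof. Your approach is nonetheless essentially the paper's: its rigorous analogue, Lemma~\ref{lem:mostly-big-laurent}, uses the same mechanism, bounding $\log|F|$ at an interior point by a Poisson-kernel average of its boundary values. Your Fatou step makes rigorous the passage to the unit circle, which the paper's arguments use implicitly.
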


If we denote $|f(0)|$ as $\tau$ and add the assumption that $\sum_j |f(j)|\leq 1$, which imples $|F(\xi)|\leq 1$ everywhere, then Markov's inequality lets us conclude that, for $\geq 90\%$ of the domain, $|F(\xi)|\geq \tau^{10}$. Namely, ``if $f$ is $\tau$-abrupt, then its Fourier transform has magnitude at least $\tau^{10}$ for $\geq 90\%$ of its domain''.

However, we might not be able to prove that $f-g$ vanishes \emph{immediately} to the left of the location $\ell$ where it is nonnegligible. The following lemma captures how these results degrade as the location $\ell$ at which we can lower-bound $|f|$ moves farther from 0, the location at which $f$ vanishes.

\begin{lemma}\label{lem:mostly-big-exponential}
Given a sequence $f:\mathbb{Z}\rightarrow\mathbb{R}$ supported only on $j\geq 0$, where $\sum_j |f(j)|\leq 1$ and $|f(\ell)|\geq \tau$ for some $\ell\geq 0$ then: letting $F(\xi)$ be the Fourier transform of $f$, we have that $|F(\xi)|$ is $\geq \tau^c e^{-c \ell}$ on $\geq 90\%$ of the frequencies $\xi$, for some universal constant $c>0$.
\end{lemma}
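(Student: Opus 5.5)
The plan is to view the Fourier transform as the boundary values of an analytic function on the unit disk, and then use subharmonicity of $\log|A|$ with the Poisson kernel. This generalizes the Jensen-type inequality behind Lemma~\ref{lem:mostly-big}.

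\textbf{Setup.} Define $A(z):=\sum_{j\ge 0} f(j) z^j$. Since $f$ is supported on $j\ge 0$ and $\sum_j|f(j)|\le 1$, the series converges absolutely on the closed unit disk. So $A$ is analytic inside, continuous up to the boundary, and satisfies $|A|\le 1$ there. With the paper's convention, $F(\xi)=A(e^{i\xi})$, and in particular $\log|F(\xi)|\le 0$ everywhere. Also $A\not\equiv 0$, because $f(\ell)\neq 0$.

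\textbf{Step 1: a large interior value.} By Cauchy's formula on the circle of radius $r=\tfrac12$,
\[f(\ell)=\frac{1}{2\pi}\int_{-\pi}^{\pi} A(re^{i\theta})\,r^{-\ell}e^{-i\ell\theta}\,d\theta .\]
Hence some point $w$ with $|w|=\tfrac12$ satisfies $|A(w)|\ge \tau 2^{-\ell}$.

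\textbf{Step 2: transfer to the circle average.} Since $\log|A|$ is subharmonic, the Poisson representation gives
\[\log|A(w)|\le \frac{1}{2\pi}\int_{-\pi}^{\pi} \log|A(e^{i\xi})|\,P_w(\xi)\,d\xi .\]
For $|w|=\tfrac12$ the Poisson kernel satisfies $P_w(\xi)\ge \frac{1-|w|}{1+|w|}=\tfrac13$. Because $\log|A(e^{i\xi})|\le 0$, the right-hand side is at most $\tfrac13\cdot\frac{1}{2\pi}\int\log|F(\xi)|\,d\xi$. Combining with Step 1,
\[\frac{1}{2\pi}\int_{-\pi}^{\pi}\log|F(\xi)|\,d\xi\;\ge\;3\log\bigl(\tau 2^{-\ell}\bigr)=-3\bigl(\log(1/\tau)+\ell\log 2\bigr).\]

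\textbf{Step 3: Markov.} The quantity $-\log|F|$ is nonnegative with mean at most $3(\log(1/\tau)+\ell\log2)$. By Markov's inequality it exceeds ten times this bound on at most $10\%$ of frequencies. So on at least $90\%$ of frequencies,
\[|F(\xi)|\ge \tau^{30}2^{-30\ell},\]
which gives the claim with, say, $c=30$.

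\textbf{Main obstacle.} The delicate step is justifying the subharmonic/Poisson inequality with the boundary values themselves, since $A$ may have zeros on the unit circle where $\log|A|=-\infty$. I would first apply the inequality on circles of radius $\rho<1$, with the point $w/\rho$ and kernel $P_{w/\rho}$; there it holds by standard subharmonicity. Then let $\rho\to1$. The functions $\log|A(\rho e^{i\xi})|$ are uniformly bounded above by $0$ and converge pointwise in the limsup sense to $\log|F(\xi)|$ by continuity of $A$. Reverse Fatou therefore gives $\limsup_\rho\int\log|A(\rho e^{i\xi})|P\le\int\log|F|\,P$, which is exactly the direction needed. The remaining steps are routine.
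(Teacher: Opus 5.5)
Your proof is correct. The paper states this lemma only as motivation in its overview and gives no separate proof, but your argument is the unit-disk specialization of the template it uses to prove the generalization, Lemma~\ref{lem:mostly-big-laurent}: Cauchy's formula on an interior circle to find a point where the generating function is large, then subharmonicity of $\log|F|$ with a Poisson-kernel lower bound to turn this into a lower bound on the boundary average of $\log|F|$, then Markov's inequality. Your $\rho\to 1$ reverse-Fatou justification of the boundary Poisson inequality, which handles possible zeros on the circle, is a detail the paper leaves implicit.
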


Lemma~\ref{lem:mostly-big-exponential} yields lower bounds on the Fourier transform that decay \emph{exponentially} with $\ell$. However, when we use Lemma~\ref{lem:mostly-big-laurent} (via Lemma~\ref{lem:combined-contrapositive} in our main induction step, Proposition~\ref{prop:induction}), we cannot afford bounds that are inverse exponential in $\ell$; the rough scale of control we will have over $\ell$ is given by the size of the window of the \emph{previous} iteration of induction, $\ell\approx R_{prev}\cdot p$, which might be polynomially large in the string size $n$. At a high level, our solution is to ``change the scale'' at which we apply Lemma~\ref{lem:mostly-big-exponential} by blurring the sequence $f-g$ to width $\approx R_{prev}\cdot p$; this will correspond to blurring to variance $Rp$, achieved by convolving with $\Bin{R}{p}$---recall that $R\approx R_{prev}^2$ since we roughly square the window size at each iteration.

Lemma~\ref{lem:mostly-big-laurent} is essentially what one would expect from changing the length scale of Lemma~\ref{lem:mostly-big-exponential} by $\frac{1}{\sqrt{Rp}}$ factor by blurring with $\Bin{R}{p}$: the sharp cutoff of $f$ to the left of $0$ becomes a condition that ``$f$ decays roughly at the rate of $\Bin{R}{p}$ to the left of 0''; the Fourier conclusion only applies for frequencies $\lesssim\frac{1}{\sqrt{Rp}}$; and the exponential decay of the Fourier lower bound as a function of $\ell$ becomes $\frac{1}{\sqrt{Rp}}$ times slower.
 
In short: convolving with $\Bin{R}{p}$ in the input of Lemma~\ref{lem:mostly-big-laurent} lets us change the meaning of ``abruptly'' to count in a different pixel size.

\medskip\noindent{\bf Using Sections~\ref{sec:sequences},~\ref{sec:alpha}, and~\ref{sec:deconvolution} in the Induction:}

As we have argued above, convolving the sequences of statistics $x_{[d-R:n]}^{(s+j)},y_{[d-R:n]}^{(s+j)}$ with $\Bin{R}{p}$ gives us crucial properties that enable the rest of our analysis. Fortuitously, the deletion channel itself has the effect of \textit{implicitly} convolving with $\Bin{R}{p}$: Lemma~\ref{lem:binomial-shift} says that $x_{[d-2R:n]}^{(s+j')}$ is essentially the convolution of $x_{[d-R:n]}^{(s+j)}$ with $\Bin{R}{p}$. Thus the main induction proposition starts with the assumption $|x_{[d-2R:n],p}^{(s+\ell)}-y_{[d-2R:n],p}^{(s+\ell)}|\geq \tau$, starting $2R$ back from $d$. We then define the sequences $f,g$ correspondingly as (weighted versions of) the statistics, $w(j) x^{(s+j)}_{[d-R:n],p}, w(j) y^{(s+j)}_{[d-R:n],p}$, but then ultimately apply the lemmas of Section~\ref{sec:sequences} to $f_b:=f\ast \Bin{R}{p}, g_b:=g\ast \Bin{R}{p}$ which emulate $x_{[d-2R:n],p}^{(s+j)},y_{[d-2R:n],p}^{(s+j)}$ respectively, to give us the guarantee that $|f_b(\ell)-g_b(\ell)|\gtrsim \tau$.

Section~\ref{sec:sequences} then shows how to construct a double or triple product that distinguishes $f,g$ respectively even when summed.

Section~\ref{sec:alpha} starts with a simple lemma that shows that we can simulate a product of two or three expected statistics of order $k$ as a linear combination of expected statistics of order $\leq 3k$; the pigeonhole principle lets us extract a single statistic $S$ of order $\leq 3k$ with nonnegligible discrepancy between $x,y$ when summed over all shifts $j$ with coefficient $A(j)$. The rest of the section shows that the domain bounds and smoothness bounds of the weight function $w$ are essentially inherited by the coefficients $A(j)$ of the shifts.

Section~\ref{sec:deconvolution} is an analysis of how to \emph{deconvolve} by a binomial, eventually showing how to construct a sequence $A'$ such that $A'\ast \Bin{R'}{P}\approx A$. Reexpressing the coefficients in terms of a convolution with $\Bin{R'}{P}$ sets us up for a final application of Lemma~\ref{lem:binomial-shift}, moving the start of our window back by $R'$. Knowing that $\left|\sum_j A'(j) \big(x_{[d-R-R':n],P}^{(S+j)}-y_{[d-R-R':n],P}^{(S+j)}\big)\right|\geq \tau^c$, we then use pigeonhole a final time to extract a single shift of $S$ on which $x,y$ differ nonnegligibly.

To recap, we thus use Lemma~\ref{lem:binomial-shift} twice per induction step: once at the beginning to convolve with $\Bin{R}{p}$ to add smoothness before the lemmas of Section~\ref{sec:sequences}, effectively moving the deletion channel start point from $d-2R$ forward to $d-R$, and once at the end to \emph{deconvolve} with $\Bin{R'}{P}$ to let us simulate moving the deletion channel start point back to $d-R-R'$.

\medskip\noindent{\bf Achieving Quasipolynomial Sample Complexity:}

We discuss here what it takes to get an induction step  ``powerful'' enough to achieve quasipolynomial sample complexity. Beyond the ingredients mentioned above, the main challenge is maintaining extremely tight control of both the support size and the smoothness of the sequences we analyze. Throughout the induction, $\tau$ serves as a ``smallness parameter'', where we ignore errors that are smaller than $poly(\tau)$; this thus gives us the relevant definition of size and smoothness: throwing out up to $poly(\tau)$ mass, what is the length $L$ of the smallest interval on which the function is supported, and what is the frequency cutoff $\eps$ beyond which its Fourier transform vanishes? 

The induction step chooses a weight function $w$ and two parameters $R',\tau'$ defining scales for the \emph{next} step of induction; we discuss these choices here. 
The weight function $w$ defines the crucial weighted statistics $f,g$, which we run through the machinery of Section~\ref{sec:sequences} to get a summed double or triple product. We run the double or triple product through the Section~\ref{sec:alpha} machinery to reexpress it as a linear combination of shifts of a slightly higher order statistic $S$, where we show that the coefficients $A$ of this linear combination roughly inherit the support size and smoothness of $w$. We then run the coefficients $A$ through the deconvolution machinery of Section~\ref{sec:deconvolution} which says that, provided $A$ is similarly compact to the pdf of $\Bin{R'}{P}$, and constant-factor smoother than it, then we can deconvolve $A$ by $\Bin{R'}{P}$ to get a sequence $A'$ that is similarly compact and smooth as both $\Bin{R'}{P}$ and $A$.

\textbf{If} Sections~\ref{sec:sequences},~\ref{sec:alpha}, and~\ref{sec:deconvolution} have, cumulatively, lost \textbf{nothing} except constant factors in terms of both domain size \textbf{and} smoothness during the process of going from $w$ to $A'$ then: since we can pick $w$, we pretend for this exposition that, up to constants, we can pick $A'$. Essentially, suppose we can pick any valid combination of domain size and smoothness for $A'$, both measured via our smallness cutoff $poly(\tau)$.

This can be viewed as picking a domain size $L$ (which you can think of as $Rp$), and asking for the smoothest function in this domain, measured after throwing out tails $\leq poly(\tau)$ in both the real and Fourier domain. As it turns out, up to constant factors in the width, the set of binomial distributions of different numbers of samples represent the \emph{optimal} tradeoff between support size and Fourier decay; so we may simply compare to binomials to determine possible length and smoothness scales. We now describe the two simple constraints that control the success of induction.

For deconvolution by $\Bin{P'}{R}$ to have a chance, we need $w$ to be smoother than $\Bin{P'}{R}$, (namely, $w$ has Fourier transform that decays faster than the Fourier transform of $\Bin{P'}{R}$, up to $poly(\tau)$ error). Thus we need $w$ to be supported on a domain larger than the length scale at which the pdf of $\Bin{R'}{P}$ decays to $poly(\tau)$ (intuitively, the radius of the domain needs to be at least $\sqrt{\log\frac{1}{\tau}}$ standard deviations, where the standard deviation of the binomial is $\approx\sqrt{R'P}$): this gives us the constraint \[ \sqrt{R'P\log\frac{1}{\tau}}\lesssim L\]

Meanwhile, if we control location to within our window size $L$, then $L$ becomes our bound on the ``abruptness'' of the sequence of the differences of statistics between $x,y$ at the \emph{next} iteration of the induction. In other words, $L$ is, roughly, a bound on the gap between where the statistic $S$ will have nonnegligible discrepancy and the point to the left of which the discrepancy will decay rapidly. The input condition for the next step of induction then reads \[L\lesssim\sqrt{R'P \log\frac{1}{\tau'}}\]

These last two equations tell us how to choose parameters: first choose $R'$ small enough that the first equation is satisfied; then, given $R'$, choose $\tau'$ small enough that the second equation is satisfied.

Finally, combining these two equations, we see that $\log\frac{1}{\tau'}$ is a constant factor larger than $\log\frac{1}{\tau}$---and hence $\tau'$ is a constant power of $\tau$---\textbf{only} if both inequalities are tight up to constants. Since, overall, we repeat the induction step $\log\log n$ times, quasipolynomial performance is possible only if each of these steps involves merely \textit{polynomial} decay of our discrepancy $\tau$ (since $\frac{1}{\tau^2}$ controls our sample complexity and thus we ultimately need the final $\tau$ to be at least inverse quasipolynomial). Thus: quasipolynomial performance is possible only because \textit{every} step across Sections~\ref{sec:sequences},~\ref{sec:alpha}, and~\ref{sec:deconvolution} has maintained constant-factor optimal control simultaneously of both domain size and smoothness during the process of going from $w$ to $A'$ in our induction step.

\section{Low Order Statistics of Sequences}\label{sec:sequences}

The main results of this section are Lemmas~\ref{lem:contrapositive} and~\ref{lem:mostly-big-laurent}, which are combined in Lemma~\ref{lem:combined-contrapositive}, extended by Corollary~\ref{cor:shift}.

The results of this section are stated without reference to the deletion channel. The goal of this section is to show that for any sequences $f\neq g$ that satisfy certain properties, we can distinguish them via second or third order statistics, even when summed over all shifts: there are offsets $\ell_0=0,\ell_1,\ell_2$ for which the following is nonnegligible
\begin{equation}\label{eq:third-order-example}\Big|\sum_j f(j+\ell_0)f(j+\ell_1)f(j+\ell_2)-g(j+\ell_0)g(j+\ell_1)g(j+\ell_2)\Big|\end{equation}

See Section~\ref{sec:overview} for an in-depth discussion of how to interpret the results of this section and how they fit into the global structure of the paper.

A key component of our approach is a ``3-point linearity testing'' lemma, Lemma~\ref{lem:linearity-tester}, which is motivated by but somewhat different from the standard BLR linearity testing result originally proved in~\cite{10.1145/100216.100225} and stated in more general form below for reference.

\begin{lemma}[From~\cite{https://doi.org/10.1002/rsa.20182}]
Given a function $f:A\rightarrow B$ between two finite groups, then if $f$ passes $\geq 1-\eps$ fraction of 3-point linearity tests $f(x)+f(y)=f(x+y)$---with $``+''$ denoting the group operation in the respective domains---then $f$ is $O(\eps)$ close to a homomorphism from $A$ to $B$ (namely, there is a homomorphism $f'$ that agrees with $f$ on $\geq 1-O(\eps)$ fraction of the inputs).
\end{lemma}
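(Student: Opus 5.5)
The plan is the classical plurality-decoding (self-correction) argument of Blum--Luby--Rubinfeld, written multiplicatively so that it works in non-abelian groups. I will write the group operations as products, so the test on a uniformly random pair $(a,b)\in A^2$ checks whether $f(ab)=f(a)f(b)$. Let $\eps$ be the probability that this test rejects. If $\eps\geq 1/6$, the conclusion holds trivially with constant $6$, since any homomorphism (for instance the trivial one) is within distance $1\le 6\eps$. So I assume $\eps<1/6$ throughout. The candidate corrected function is
\[g(x):=\text{the plurality value over uniform } y\in A \text{ of } f(xy)f(y)^{-1}.\]

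\textbf{Step 1 (the plurality is strong).} For fixed $x$, draw independent uniform $y,z$. The pair $(xz,\,z^{-1}y)$ is uniform in $A^2$, and so is the pair $(z,\,z^{-1}y)$. Hence, except with probability at most $2\eps$, both
\[f(xy)=f(xz)f(z^{-1}y)\quad\text{and}\quad f(y)=f(z)f(z^{-1}y)\]
hold. When both hold, substituting gives
\[f(xy)f(y)^{-1}=f(xz)f(z^{-1}y)f(z^{-1}y)^{-1}f(z)^{-1}=f(xz)f(z)^{-1}.\]
So two independent samples of the random variable $f(xy)f(y)^{-1}$ collide with probability at least $1-2\eps$. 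Since the collision probability is at most the largest point mass, the plurality value $g(x)$ is attained with probability at least $1-2\eps$ over $y$, for \emph{every} $x$.

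\textbf{Step 2 ($g$ is exactly a homomorphism).} Fix $x,x'$ and draw a uniform $y$. By Step 1 applied three times, each of the following fails with probability at most $2\eps$:
\begin{itemize}
\item[(i)] $g(xx')=f(xx'y)f(y)^{-1}$;
\item[(ii)] $g(x)=f(x\cdot x'y)f(x'y)^{-1}$, which is Step 1 with the uniform sample $x'y$ in place of $y$;
\item[(iii)] $g(x')=f(x'y)f(y)^{-1}$.
\end{itemize}
Since $6\eps<1$, some $y$ satisfies all three. Multiplying (ii) by (iii) gives $g(x)g(x')=f(xx'y)f(y)^{-1}=g(xx')$.

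\textbf{Step 3 ($g$ is close to $f$).} Suppose $f(x)\neq g(x)$. By Step 1, $f(xy)f(y)^{-1}=f(x)$ holds for at most a $2\eps$ fraction of $y$, so the test on $(x,y)$ rejects with probability at least $1-2\eps$. Averaging over $x$ gives $\eps\geq \Pr_x[f(x)\neq g(x)]\,(1-2\eps)$. Therefore $\Pr_x[f(x)\neq g(x)]\le \eps/(1-2\eps)\le 2\eps$.

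The main obstacle is Step 1 in the non-abelian setting. The substitution has to be chosen so that both auxiliary test pairs are exactly uniform and the middle factors $f(z^{-1}y)$ cancel in the correct order. The choice $(xz,z^{-1}y)$ and $(z,z^{-1}y)$ achieves both. Once Step 1 gives a $1-2\eps$ plurality for every $x$, Steps 2 and 3 are routine union bounds.
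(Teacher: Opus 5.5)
Your proof is correct and complete. The substitutions $(xz,z^{-1}y)$ and $(z,z^{-1}y)$ are exactly uniform on $A^2$ for each fixed $x$, so the plurality bound of Step~1 holds for every $x$. The order of cancellation is right for non-abelian groups. Steps~2 and~3 then produce an exact homomorphism $g$ at distance at most $\varepsilon/(1-2\varepsilon)\le 2\varepsilon$ when $\varepsilon<1/6$, and the trivial case gives constant $6$ for $\varepsilon\ge 1/6$.

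The paper does not prove this statement; it quotes it for reference from Ben-Or, Coppersmith, Luby and Rubinfeld. So your self-correction argument is a self-contained substitute for that citation, and its explicit constants are enough for the $O(\varepsilon)$ claim as quoted.

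For context, this argument cannot be transplanted directly to the variant the paper actually proves and uses, Lemma~\ref{lem:linearity-tester}, for three reasons:
\begin{itemize}
\item There the domain is only a large subset $S$ of the integer interval $[-\gamma,\gamma]$, not a group.
\item The tests hold only approximately ($\approx_\delta$ in $\mathbb{R}$ mod $2\pi$), so a plurality of exact values carries no information.
\item Tests are available only for triples lying inside $S$.
\end{itemize}
The paper replaces your majority decoding with a deterministic counting argument. It proves that an ``increment'' function $inc(\sigma)$ is consistent using a single auxiliary point $z$, which plays a role similar to your $z$ in Step~1. It then chains increments across the gaps of $S$, at the cost of an error $12\gamma\delta$ that grows linearly in $\gamma$.
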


In the context of the below lemma only, let $x\approx_\delta y$ denote that $|x-y|\leq \delta$.

\begin{lemma}\label{lem:linearity-tester}
For integer $\gamma>0$, let $S\subseteq [-\gamma,\gamma]\cap\mathbb{Z}$ contain all but $\leq\frac{\gamma}{4}-1$ integers from this interval. If for some $\delta>0$ a function $H:S\rightarrow \mathbb{R}\mod 2\pi$ satisfies $H(x)+H(y)\approx_\delta H(x+y)$ whenever $x,y,x+y\in S$, then there is an $\alpha\in\mathbb{R}$ such that $H(x)\approx_{12\gamma\delta} \alpha x$ for all $x\in S\cap [-\frac{1}{2}\gamma,\frac{1}{2}\gamma]$.    
\end{lemma}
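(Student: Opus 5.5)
The plan is to extract $\alpha$ from a ``discrete derivative'' of $H$, and then integrate it along chains. All arithmetic is mod $2\pi$. At the end we lift a single representative $c\in\mathbb{R}$ and set $\alpha:=c$, so that the conclusion $H(x)\approx_{12\gamma\delta}\alpha x$ is read mod $2\pi$.

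\textbf{Step 1 (small facts).} If $0\in S$, then $2H(0)\approx_\delta H(0)$ gives $H(0)\approx_\delta 0$. For $x\in S$, call $y$ \emph{$x$-good} if $y,x+y\in S$; then $H(x+y)-H(y)\approx_\delta H(x)$. The number of integers $y\in[-\gamma,\gamma]$ with $x+y\in[-\gamma,\gamma]$ is at least $\gamma+1$ whenever $|x|\le\gamma$. At most $2(\frac{\gamma}{4}-1)$ of these $y$ are spoiled by missing points. Hence a constant fraction (more than half) of the admissible $y$ are $x$-good. This counting, together with the slack ``$-1$'' in the hypothesis, is what will make every pigeonhole step below go through.

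\textbf{Step 2 (define the slope).} Call $y$ a \emph{unit-good} point if $y,y+1\in S$, and set $c_y:=H(y+1)-H(y)$. I claim all these $c_y$ agree to within $O(\delta)$. Take unit-good $y,y'$ and put $z=y'-y$. If $z\in S$, then $H(y')\approx_\delta H(y)+H(z)$ and $H(y'+1)\approx_\delta H(y+1)+H(z)$, so subtracting gives $c_{y'}\approx_{2\delta}c_y$. When $z\notin S$, I would route through an intermediate unit-good $y''$ with both $y''-y$ and $y'-y''$ in $S$. This costs $4\delta$, and such a $y''$ exists by the counting of Step 1, since each of the three membership constraints excludes at most $\frac{\gamma}{4}-1$ candidates. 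Fix one unit-good $y_0$ near the middle of the interval and set $\alpha:=c_{y_0}$ (any real lift).

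\textbf{Step 3 (integrate).} For $x\in S\cap[-\frac{\gamma}{2},\frac{\gamma}{2}]$, it suffices to find one $x$-good $y$ such that the walk $y,y+1,\ldots,y+x$ stays in $[-\gamma,\gamma]$. On this walk, $H(y+x)-H(y)$ telescopes into $|x|$ unit increments, each of which should be $\approx\alpha$. Then $H(x)\approx_\delta H(y+x)-H(y)\approx|x|\cdot O(\delta)+\alpha x$, and since $|x|\le\gamma/2$ this yields the $12\gamma\delta$ bound once the constants are tracked. The catch is that individual walk points may be missing from $S$. At such a point I would not take unit steps; instead I would replace a run of consecutive steps $w\to w+m$ by the single relation $H(w+m)-H(w)\approx H(m)$, where $m\in S$. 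I would then compare $H(m)$ to $m\alpha$ recursively, or via a fresh unit-good chain elsewhere in the interval, found by Step 1 counting.

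\textbf{Main obstacle.} The delicate part is Step 3's bookkeeping: I must ensure that the detours around the at most $\frac{\gamma}{4}-1$ missing points keep the total error linear in $|x|$ (at most $12\gamma\delta$), rather than compounding. I would first try an induction on $|x|$ that establishes $H(x)\approx_{\kappa|x|\delta}\alpha x$ for a suitable constant $\kappa$, using the decomposition $x=x_1+x_2$ with $x_1,x_2\in S$ roughly halves. Splitting into halves makes the errors add (giving a linear bound) rather than accumulate per unit step. Restricting the conclusion to $[-\frac{\gamma}{2},\frac{\gamma}{2}]$ leaves enough room in $[-\gamma,\gamma]$ for each such split to be found by pigeonhole.
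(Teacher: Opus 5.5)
There is a genuine gap exactly where you flag the main obstacle. Step 3 needs a way to cross the holes of $S$ on the walk from $y$ to $y+x$ with error $O(|x|\delta)$, and none of the mechanisms you propose provides one.

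\begin{itemize}
\item A ``fresh unit-good chain'' of length $m$ needs $m+1$ consecutive elements of $S$, and these need not exist anywhere. Omitting every ninth integer stays within the budget of $\frac{\gamma}{4}-1$ omissions for large $\gamma$, and it leaves no run longer than $8$.
\item Replacing a run $w\to w+g$ by $H(g)$ needs $g\in S$, which also need not hold.
\item The halving induction fails by counting, not just by bookkeeping. For $x>0$, a split $x=x_1+x_2$ with $x_1,x_2\in S$ both positive must have $x_1\in\{1,\ldots,x-1\}$. The two membership constraints can exclude up to $2(\frac{\gamma}{4}-1)$ of these candidates, so pigeonhole only helps once $x\gtrsim\frac{\gamma}{2}$. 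Concretely, if $S$ omits $\{1,\ldots,m\}$ with $m=\lfloor\gamma/4\rfloor-1$, then no $x\in S$ with $0<x\le 2m+1$ is a sum of two positive elements of $S$. That is essentially the whole target range.
\item The extra room in $[-\gamma,\gamma]$ only yields opposite-sign splits. There $|x_1|+|x_2|>|x|$ and one part exceeds $|x|$, so neither the induction on $|x|$ nor the hypothesis $H(x)\approx_{\kappa|x|\delta}\alpha x$ survives.
\end{itemize}

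The obstructions combine within budget. For example, omit $\{1,\ldots,\lfloor\gamma/8\rfloor\}$ together with every $17$th integer. Then no $x\in S\cap[17,\frac{\gamma}{4}]$ can be reached by unit chains, nor split into two positive elements of $S$.

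What is missing is to work with increments of every size, not just unit increments. Your Step 2 argument works verbatim for any shift $\sigma\le\frac{\gamma}{4}$ and never uses $\sigma\in S$. It shows that $H(w+\sigma)-H(w)$ is the same, up to $4\delta$, for all base points $w,w+\sigma\in S\cap[-\frac{\gamma}{2},\frac{\gamma}{2}]$; call this common value $\mathrm{inc}(\sigma)$.

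You should restrict to this middle region. Your routing point $y''$ must satisfy four constraints, namely $y'',\,y''+\sigma,\,y''-y,\,y'-y''\in S$, not three. For base points near opposite ends of $[-\gamma,\gamma]$, only two candidates for $y''$ remain.

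The key step, which is what the paper does, is approximate additivity of increments. Pick a single $z$ with $z,\,z+\sigma-1,\,z+\sigma\in S\cap[-\frac{\gamma}{2},\frac{\gamma}{2}]$; this exists since three constraints act on about $\frac34\gamma$ candidates. This gives $\mathrm{inc}(\sigma)\approx_{12\delta}\mathrm{inc}(\sigma-1)+\mathrm{inc}(1)$, and by induction $\mathrm{inc}(\sigma)\approx_{12(\sigma-1)\delta}\sigma\,\mathrm{inc}(1)$.

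Consecutive elements of $S$ are at most $\frac{\gamma}{4}$ apart. So your telescoping walk can now cross each gap $g$ at cost at most $4\delta+12(g-1)\delta\le 12g\delta$, for a total error of at most $12|x|\delta$. Set $\alpha:=\mathrm{inc}(1)$ and keep your anchoring $H(x)\approx_\delta H(y+x)-H(y)$, choosing $y,y+x\in S\cap[-\frac{\gamma}{2},\frac{\gamma}{2}]$. The total error is then $(12|x|+1)\delta\le(6\gamma+1)\delta\le 12\gamma\delta$.

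The paper anchors differently, via a small $\sigma\in S$ with $\mathrm{inc}(\sigma)\approx_{5\delta}H(\sigma)$, but either anchor works.
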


\begin{proof}
Let $I:=[-\gamma,\gamma]\cap\mathbb{Z}$, where we will also use $\frac{1}{2}I$ to denote $[-\frac{1}{2}\gamma,\frac{1}{2}\gamma]$ etc.

We first prove that there is a function $inc(\sigma)$ that robustly captures how much $H$ changes when incrementing its input by $\sigma\in\{1,\ldots,\lfloor\frac{1}{4}\gamma\rfloor\}$. Namely we claim $\exists \,inc:\{1,\ldots,\lfloor\frac{1}{4}\gamma\rfloor\}\rightarrow \mathbb{R}/2\pi\mathbb{Z}$ such that for all $\sigma\in \{1,\ldots,\lfloor\frac{1}{4}\gamma\rfloor\}$, and for all $x$ such that both $x,x+\sigma\in \frac{1}{2}I\cap S$, then $H(x+\sigma)-H(x)\approx_{4\delta} inc(\sigma)$.

To prove this, fix $\sigma\in \{1,\ldots,\lfloor\frac{1}{4}\gamma\rfloor\}$. We claim that for any $x,y$ for which all four of $x,y,x+\sigma,y+\sigma$ are in $\frac{1}{2}I\cap S$, we have $H(x+\sigma)-H(x)\approx_{4\delta} H(y+\sigma)-H(y)$, which would imply that we can define $inc(\sigma):=H(x+\sigma)-H(x)$ for an arbitrary such $x$ if one exists, and therefore have its value be within $4\delta$ of the value of all other possibilities.
 
Given any such $x,y$, we claim that there exists $z\in\frac{1}{2}I$ such that all four of $z,z+\sigma,x+z+\sigma,y+z+\sigma$ are in $S$. We first point out that all of these quantities are in $I$, since each can be expressed  as the sum of $z\in\frac{1}{2}I$ with another number that was assumed to be in $\frac{1}{2}I$ by the previous paragraph. Then, picking one of these four expressions, the number of $z$ for which the expression is \textit{not} in $S$ is $\leq\frac{1}{4}\gamma-1$ (since all but $\leq\frac{1}{4}\gamma-1$ entries of $I$ are in $S$); and thus, multiplying this by 4, we have the number of failing $z\in\frac{1}{2}I$ is $<\gamma$, yet there are at least $\gamma$ integers in $\frac{1}{2}I$, so at least one $z$ must succeed.

Thus, given that all 8 of $x,y,z,x+\sigma,y+\sigma,z+\sigma,x+z+\sigma,y+z+\sigma$ are in $S$: we use the relations $H(x)+H(z+\sigma)\approx_\delta H(x+z+\sigma)$ and $H(x+\sigma)+H(z)\approx_\delta H(x+z+\sigma)$ to conclude first that $H(x+\sigma)-H(x)\approx_{2\delta} H(z+\sigma)-H(z)$. The corresponding expression with $y$ replacing $x$ correspondingly implies $H(y+\sigma)-H(y)\approx_{2\delta} H(z+\sigma)-H(z)$. Combining lets us conclude that $H(x+\sigma)-H(x)\approx_{4\delta} H(y+\sigma)-H(y)$ as claimed.

This concludes the proof that there is a consistent definition of $inc(\sigma)$ for $\sigma\in\{1,\ldots,\lfloor\frac{1}{4}\gamma\rfloor\}$.

Next we show by induction on $\sigma$ that for all $\sigma\in \{1,\ldots,\lfloor\frac{1}{4}\gamma\rfloor\}$, we have \[inc(\sigma)\approx_{12(\sigma-1)\delta} \sigma\cdot inc(1)\] The claim is trivially true for $\sigma=1$. To show that the claim for $\sigma-1$ implies the claim for $\sigma$, consider the set of $z\in [-\frac{1}{2}\gamma,\frac{1}{4}\gamma]\cap\mathbb{Z}$ for which all three of $z,z+\sigma-1,z+\sigma$ are in $S$. The set $[-\frac{1}{2}\gamma,\frac{1}{4}\gamma]\cap\mathbb{Z}$ has size at least $\frac{3}{4}\gamma-1$; and for each of the three expressions there are $\leq\frac{1}{4}\gamma-1$ invalid values for $z$; so thus there exists some $z$ such that all of $z,z+\sigma-1,z+\sigma$ are in $\frac{1}{2}I\cap S$.

Thus from the ``consistency of $inc()$'' we can compare $H$ on any two of these three values to approximately deduce a value of $inc$: we have $inc(\sigma)\approx_{4\delta} H(z+\sigma)-H(z)$, and $inc(\sigma-1)\approx_{4\delta} H(z+\sigma-1)-H(z)$, and $inc(1)\approx_{4\delta} H(z+\sigma)-H(z+\sigma-1)$. Combining these three expressions yields $inc(\sigma)\approx_{12\delta} inc(\sigma-1)+inc(1)$, proving the induction.

Next, there exists an element $\sigma$ in both $S$ and $\{1,\ldots,\lfloor\frac{1}{4}\gamma\rfloor\}$, since the number of positive integers up to $\frac{1}{4}\gamma$ is greater than $\frac{1}{4}\gamma-1$, the number of integers of $I$ not in $S$. Further, for this $\sigma$ there exists $z$ such that both $z,z+\sigma$ are in $\frac{1}{2}I\cap S$ (by the same counting argument as usual), and thus the relation $H(z)+H(\sigma)\approx_{\delta} H(z+\sigma)$ implies that \[inc(\sigma)\approx_{5\delta} H(\sigma)\]

Thus, for any $x\in \frac{1}{2}I\cap S$, consider the intersection of $S$ with the interval between $x$ and $\sigma$: this consists of entries of $S$, with gaps between consecutive entries $\leq \frac{1}{4}\gamma$---since there are $\leq\frac{1}{4}\gamma-1$ integers in $I\backslash S$. Then, for each of the gaps $g\leq  \frac{1}{4}\gamma$, we applying the claim $inc(g)\approx_{12(g-1)\delta} g\cdot inc(1)$ across the gap, and then sum, to deduce that \[(\sigma-x)\cdot inc(1)+H(x)\approx_{12|\sigma-x|\delta} H(\sigma) \approx_{5\delta} inc(\sigma)\approx_{12(\sigma-1)\delta} \sigma\cdot inc(1)\] from which, since $|\sigma|\leq \frac{1}{4}\gamma$ and $|x|\leq\frac{1}{2}\gamma$,  we conclude the desired bound $H(x)\approx_{12\gamma\delta} x\cdot inc(1)$.
\end{proof}

We note that Lemma~\ref{lem:contrapositive} and the associated standard Lemmas~\ref{lem:autocorrelation} and~\ref{lem:projection-slice} are stated for the \emph{discrete} Fourier transform, mapping sequences of length $N$ (with indices considered mod $N$) to sequences of length $N$; this contrasts with the convention used in the rest of the paper where we use the Fourier transform to map integer sequences to functions of an angle mod $2\pi$. We use the discrete Fourier transform here because the ``linearity testing'' intuition applied to logarithms of Fourier transforms in  Lemma~\ref{lem:contrapositive} is much more natural with a discrete input space.

We now state the standard results about the Fourier transform of the autocorrelation function and its 3-point extension known as the bispectrum. Both results can be viewed as special cases of the projection-slice theorem.

\begin{lemma}\label{lem:autocorrelation}
Let $f$ be a real sequence of length $N$ with discrete Fourier transform $F$. Consider the function $s(\ell) = \sum_{j=1}^N f(j) f(j+\ell)$, with indices interpreted$\mod N$. The Fourier transform of $s$ is $S(\xi) = F(\xi)F(-\xi)=|F(\xi)|^2$.
\end{lemma}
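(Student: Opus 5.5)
We compute the discrete Fourier transform of $s$ directly, using the same sign convention as the paper, adapted to length-$N$ sequences: $F(\xi)=\sum_{j=1}^N f(j)e^{2\pi i j\xi/N}$ for $\xi\in\mathbb{Z}/N\mathbb{Z}$, and similarly for $S$. Nothing beyond the definitions is needed. The only structural facts we use are that all indices live in $\mathbb{Z}/N\mathbb{Z}$, and that $f$ is real.

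The plan is to write
\[
S(\xi)=\sum_{\ell=1}^N s(\ell)e^{2\pi i\ell\xi/N}=\sum_{\ell=1}^N\sum_{j=1}^N f(j)f(j+\ell)e^{2\pi i\ell\xi/N}
\]
and then change variables. For each fixed $j$, set $m=j+\ell \bmod N$. As $\ell$ ranges over $\mathbb{Z}/N\mathbb{Z}$, so does $m$, because translation is a bijection on $\mathbb{Z}/N\mathbb{Z}$. Since $e^{2\pi i(\cdot)\xi/N}$ is $N$-periodic, we have $e^{2\pi i\ell\xi/N}=e^{2\pi i m\xi/N}e^{-2\pi i j\xi/N}$ exactly. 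The double sum therefore factors as
\[
\Big(\sum_j f(j)e^{-2\pi i j\xi/N}\Big)\Big(\sum_m f(m)e^{2\pi i m\xi/N}\Big)=F(-\xi)F(\xi).
\]

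Finally, because $f$ is real, $F(-\xi)=\overline{F(\xi)}$: conjugating each term of $F(\xi)$ flips the sign of the exponent. Hence $F(\xi)F(-\xi)=|F(\xi)|^2$.

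The only point requiring care is the reindexing. Periodicity mod $N$ is exactly what makes the substitution $m=j+\ell$ a clean bijection with no boundary terms; in the non-cyclic setting, edge effects would appear instead. If the paper's DFT convention has the opposite sign, the same computation goes through with the roles of $F(\xi)$ and $F(-\xi)$ swapped, and the product is unchanged.
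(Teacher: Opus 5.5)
Your proof is correct and takes essentially the paper's approach. The paper states this lemma as standard, a special case of the projection-slice theorem, and writes out only the proof of its three-point analogue, Lemma~\ref{lem:projection-slice}; that proof uses exactly your reindexing $j_1=j+\ell$ to factor the double sum, and your final step uses that $f$ is real to get $F(-\xi)=\overline{F(\xi)}$, which gives the $|F(\xi)|^2$ form.
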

\begin{lemma}\label{lem:projection-slice}
Let $f$ be a sequence of length $N$ with discrete Fourier transform $F$. Consider the two dimensional function $s(\ell_1,\ell_2) = \sum_{j=1}^N f(j) f(j+\ell_1)f(j+\ell_2)$, with indices interpreted$\mod N$. The Fourier transform of $s$ is $S(\xi_1,\xi_2) = F(\xi_1)F(\xi_2)F(-\xi_1-\xi_2)$.
\end{lemma}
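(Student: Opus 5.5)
We prove this by direct computation: write the two-dimensional discrete Fourier transform of $s$, substitute the definition of $s$, and reindex. We use the discrete Fourier transform convention matching the rest of the paper, $F(\xi)=\sum_{j=1}^N f(j)\,\omega^{j\xi}$ with $\omega=e^{2\pi i/N}$ and $\xi\in\mathbb{Z}/N\mathbb{Z}$. The two-dimensional transform is then $S(\xi_1,\xi_2)=\sum_{\ell_1,\ell_2}s(\ell_1,\ell_2)\,\omega^{\ell_1\xi_1+\ell_2\xi_2}$, with all index arithmetic mod $N$.

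Substituting the definition of $s$ gives the triple sum
\[S(\xi_1,\xi_2)=\sum_{j}\sum_{\ell_1,\ell_2} f(j)f(j+\ell_1)f(j+\ell_2)\,\omega^{\ell_1\xi_1+\ell_2\xi_2}.\]
For each fixed $j$ we change variables to $a=j+\ell_1$ and $b=j+\ell_2$. Since we are in $\mathbb{Z}/N\mathbb{Z}$, this is a bijection on the pair $(\ell_1,\ell_2)$, so it is a legitimate reindexing of the sum. The exponent becomes
\[(a-j)\xi_1+(b-j)\xi_2=a\xi_1+b\xi_2+j(-\xi_1-\xi_2).\]
The sum therefore factors into three independent sums:
\[S(\xi_1,\xi_2)=\Big(\sum_j f(j)\,\omega^{j(-\xi_1-\xi_2)}\Big)\Big(\sum_a f(a)\,\omega^{a\xi_1}\Big)\Big(\sum_b f(b)\,\omega^{b\xi_2}\Big)=F(-\xi_1-\xi_2)\,F(\xi_1)\,F(\xi_2).\]

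The same calculation with a single offset proves Lemma~\ref{lem:autocorrelation}. If $f$ is real, then $F(-\xi)=\overline{F(\xi)}$, which gives the $|F|^2$ form there. For the present lemma, reality of $f$ is not needed.

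There is no real obstacle. The only points needing care are consistency of the sign convention in the exponent and the remark that the wraparound indexing mod $N$ is exactly what makes the change of variables a bijection, so that no boundary terms appear. If the opposite sign convention is used, the same argument goes through with $\xi\mapsto-\xi$ throughout, and the identity is unchanged in form.
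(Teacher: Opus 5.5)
Your proposal is correct and follows essentially the same route as the paper: expand $S(\xi_1,\xi_2)$ using the definition of $s$, rewrite the exponent as $(j+\ell_1)\xi_1+(j+\ell_2)\xi_2+j(-\xi_1-\xi_2)$, reindex by $j_1=j+\ell_1$ and $j_2=j+\ell_2$, and factor into $F(\xi_1)F(\xi_2)F(-\xi_1-\xi_2)$. Your remarks that the mod-$N$ indexing makes the reindexing a bijection and that reality of $f$ is not needed are accurate and make explicit points the paper leaves implicit.
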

\begin{proof}
    By definition, $S(\xi_1,\xi_2) = \sum_{\ell_1,\ell_2}  s(\ell_1,\ell_2) e^{\frac{2\pi i}{N} \cdot (\ell_1\xi_1 + \ell_2\xi_2)}$. From the definition of $s$, we have 
    \begin{equation*}
        S(\xi_1,\xi_2) = \sum_{\ell_1,\ell_2} \sum_j f(j) f(j+\ell_1)f(j+\ell_2) e^{\frac{2\pi i}{N} \cdot ((j+\ell_1)\xi_1 + (j+\ell_2)\xi_2 + j(-\xi_1 - \xi_2))} 
    \end{equation*}
    Reparameterizing the right hand side in terms of $j_1 = j + \ell_1$ and $j_2 = j + \ell_2$, the sum factors as
    \begin{equation*}
        S(\xi_1,\xi_2) = \left( \sum_{j_1} f(j_1) e^{\frac{2\pi i}{N} j_1\xi_1} \right) \left( \sum_{j_2} f(j_2) e^{\frac{2\pi i}{N} j_2\xi_2} \right) \left( \sum_{j} f(j) e^{\frac{2\pi i}{N} j(-\xi_1-\xi_2)} \right) 
    \end{equation*}
    which is precisely $F(\xi_1)F(\xi_2)F(-\xi_1-\xi_2)$. 
\end{proof}

We use the linearity testing result of Lemma~\ref{lem:linearity-tester} to show that, subject to condition on size of Fourier components, any two sequences either can be distinguished by a summed second or third order statistic, or are essentially shifted versions of each other. 

\begin{lemma}\label{lem:contrapositive}
For any $\tau > 0$ and $N \in \mathbb{N}$, the following holds. 
Let $f,g$ be length $N$ sequences of real numbers of size at most 1. Letting $F,G$ be the discrete Fourier transforms of $f,g$ respectively, assume $|F-G|$ is at least $\tau$ large on $\geq\frac{9}{10}$ fraction of (integer) frequencies up to frequency $\gamma$, for $\gamma\geq 22$. If all summed 2nd and 3rd order statistics of $f,g$ are all $\leq \frac{\tau^4}{2000N^4}$ close to each other then ``$f$ is close to a shift of $g$ at low frequencies'' in the sense that there exists a real number $\alpha$ such that for all (integer) frequencies $|\xi|$ up to $\frac{1}{2}\gamma$, we have $F(\xi)$ is within $\tau$ of $G(\xi) e^{i\xi \alpha}$.
\end{lemma}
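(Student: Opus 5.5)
The plan is to pass from statistics to Fourier space, reduce to the $3$-point linearity tester of Lemma~\ref{lem:linearity-tester} applied to the phase difference $H(\xi)=\arg F(\xi)-\arg G(\xi)$, and then recover the conclusion from the linear phase. Throughout, set $\eps:=\frac{\tau^4}{2000N^4}$.

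\textbf{Step 1: transfer closeness of the statistics to Fourier space.} By hypothesis the summed second order statistics $s_f(\ell),s_g(\ell)$ differ by at most $\eps$ for every $\ell$, and likewise the third order statistics $s_f(\ell_1,\ell_2),s_g(\ell_1,\ell_2)$. Taking discrete Fourier transforms, Lemma~\ref{lem:autocorrelation} gives $\bigl||F(\xi)|^2-|G(\xi)|^2\bigr|\le N\eps$ for all $\xi$. Lemma~\ref{lem:projection-slice} gives
\[
\bigl|F(a)F(b)F(-a-b)-G(a)G(b)G(-a-b)\bigr|\le N^2\eps
\]
for all $a,b$. Since $f,g$ are real, $F(-c)=\overline{F(c)}$, so the triple product equals $F(a)F(b)\overline{F(a+b)}$; its phase is $\arg F(a)+\arg F(b)-\arg F(a+b)$.

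\textbf{Step 2: define the good set and check the linearity hypothesis.} Let $S$ be the set of integer frequencies $\xi\in[-\gamma,\gamma]$ with $|F(\xi)-G(\xi)|\ge\tau$. On $S$ one of $|F|,|G|$ is at least $\tau/2$. The modulus bound then forces both to be at least roughly $\tau/2$, since $N\eps\ll\tau^2$, and also forces $\bigl||F|-|G|\bigr|\le N\eps/(|F|+|G|)=O(N\eps/\tau)$.

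The complement of $S$ in $[-\gamma,\gamma]$ has at most $\frac{1}{10}(2\gamma+1)$ points, and this is at most $\frac{\gamma}{4}-1$ exactly when $\gamma\ge 22$. That is where the hypothesis $\gamma\ge22$ is used.

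For $a,b,a+b\in S$ all six moduli are $\gtrsim\tau/2$, so each triple product has modulus $\gtrsim\tau^3/8$. Dividing the bispectrum bound by this, the phases of the two triple products agree up to $\delta=O(N^2\eps/\tau^3)$. In other words, $H(a)+H(b)\approx_\delta H(a+b)$. Lemma~\ref{lem:linearity-tester} then gives $\alpha$ with $H(\xi)\approx_{12\gamma\delta}\alpha\xi$ on $S\cap[-\frac{\gamma}{2},\frac{\gamma}{2}]$. Using $\gamma\le N$, this error is $O(N^3\eps/\tau^3)=O(\tau/N)$, which is tiny.

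For such $\xi$ we then bound
\[
|F(\xi)-G(\xi)e^{i\alpha\xi}|\le\bigl||F|-|G|\bigr|+|G|\,|H(\xi)-\alpha\xi|,
\]
and both terms are far below $\tau$ because $|G|\le N$.

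\textbf{Step 3: frequencies in $[-\frac{\gamma}{2},\frac{\gamma}{2}]\setminus S$.} This is the main obstacle, since there $|F-G|<\tau$ says nothing about $G e^{i\alpha\xi}$. I would split into two cases.

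\emph{Case of small moduli.} If $|G(\xi)|$, and hence $|F(\xi)|$, is at most about $\tau/2$, the bound $|F|+|G|<\tau$ suffices, after tightening constants using $\bigl||F|^2-|G|^2\bigr|\le N\eps$.

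\emph{Case of large moduli.} Otherwise both moduli are $\gtrsim\tau/2$. I would then find $a\in S\cap[-\frac{\gamma}{2},\frac{\gamma}{2}]$ with $\xi-a\in S\cap[-\frac{\gamma}{2},\frac{\gamma}{2}]$; the same counting as in the proof of Lemma~\ref{lem:linearity-tester} shows such an $a$ exists. The bispectrum bound, which only needs the three moduli to be large, gives $H(\xi)\approx H(a)+H(\xi-a)\approx\alpha a+\alpha(\xi-a)=\alpha\xi$. The final estimate then goes exactly as in Step 2.

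The delicate part is bookkeeping. The threshold between the two cases has to be chosen so the small case gives strictly less than $\tau$ while the large case keeps $\delta$ small, and the choice $\eps=\tau^4/(2000N^4)$ is exactly what absorbs these constants and powers of $N$.
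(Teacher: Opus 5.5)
Your proposal is correct and follows the paper's proof: the autocorrelation bounds $\bigl||F|-|G|\bigr|$, the bispectrum makes the phase difference $H$ approximately additive, Lemma~\ref{lem:linearity-tester} linearizes $H$, and the pieces are then reassembled. The only difference is the good set: the paper takes $S$ to be the integers $\xi\in[-\gamma,\gamma]$ with $|F(\xi)|\ge\tau/2$ or $|G(\xi)|\ge\tau/2$, a superset of your $\{|F-G|\ge\tau\}$ that still has density $\ge\frac{9}{10}$, so the linearity tester directly covers every large-modulus frequency and your ``large moduli'' extension in Step~3 becomes unnecessary (the remaining frequencies trivially satisfy $|F|+|G|<\tau$).
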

\begin{proof}
    Let $I = [-\gamma,\gamma]$. Let $S$ be the set of integers $\xi\in I$ for which either $|F(\xi)|$ or $|G(\xi)|$ is at least $\frac{1}{2}\tau$. The set $S$ must contain any points for which $|F(\xi)-G(\xi)|\geq\tau$, since at least one of $F,G$ must have size at least $\frac{\tau}{2}$ for this to happen. Thus $S$ must contain $\geq\frac{9}{10}$ fraction of the points in $I$.

    We have, for $\delta:=\frac{\tau^4}{2000N^4}$, that $f,g$ are $\delta$-close in all summed 2nd and 3rd order statistics. Then by Lemma~\ref{lem:autocorrelation} we have that $||F(\xi)|^2-|G(\xi)|^2|\leq \delta N$ for all (integer) frequencies $\xi$. Since $\delta N\leq (\frac{\tau}{4})^2$, this implies that, for all $\xi$: \begin{equation}\label{eq:close-magnitudes}||F(\xi)|-|G(\xi)||\leq \frac{\tau}{4}\end{equation}
    In particular, for all $\xi \in S$, since at least one of $|F(\xi)|,|G(\xi)|$ is at least $\frac{\tau}{2}$, by Equation~\ref{eq:close-magnitudes}, the other must be at least $\frac{\tau}{4}$ too. Combining this with the previous paragraph, $S$ comprises $\geq\frac{9}{10}$ fraction of $I$, and on the set $S$, \emph{both} $F$ and $G$ have magnitude at least $\frac{\tau}{4}$.

    For $\xi \in S$, define $H(\xi) = imag\big(\log \frac{F(\xi)}{G(\xi)}\big)$, where $\log$ is only defined up to addition of integer multiples of $2\pi i$, so $H$ should be interpreted as a real number mod $2\pi$; equivalently, we can think of $H(\xi)=angle(F(\xi))-angle(G(\xi))$. Using Lemma~\ref{lem:projection-slice}, the Fourier transforms of the summed third-order statistics are $\delta N^2$-close everywhere, meaning that 
    \[\forall \xi_1,\xi_2,\xi_3\textrm{ s.t. }\xi_1+\xi_2+\xi_3=0:\,|F(\xi_1)F(\xi_2)F(\xi_3)-G(\xi_1)G(\xi_2)G(\xi_3)|\leq \delta N^2\]

    Thus, in the case that $\xi_1,\xi_2,\xi_3\in S$, we have that each of $F,G$ has magnitude $\geq \frac{\tau}{4}$, and thus $|F(\xi_1)F(\xi_2)F(\xi_3)|,|G(\xi_1)G(\xi_2)G(\xi_3)| \geq \frac{\tau^3}{64}$. Given that these two products have magnitudes lower bounded by $\frac{\tau^3}{64}$ and distance upper bounded by $\delta N^2$, we conclude that the angle between them is bounded by
    \[|angle(F(\xi_1)F(\xi_2)F(\xi_3))-angle(G(\xi_1)G(\xi_2)G(\xi_3))|\leq \frac{\pi}{2}\cdot\frac{\delta N^2}{\tau^3/64}\]
    Since for all $\xi$, as mentioned above, $H(\xi)=angle(F(\xi))-angle(G(\xi))$, and thus also $H(-\xi)=-H(\xi)$ since $F,G$ are Fourier transforms of real sequences, and since $\xi_1+\xi_2+\xi_3=0$ implies $\xi_1+\xi_2=-\xi_3$, we may thus conclude the following statement about $H$, restricted to inputs in $S$:
    \begin{equation}\label{eq:H-condition}
        \forall \xi_1,\xi_2\textrm{ s.t. } \xi_1,\xi_2,\xi_1+\xi_2\in S: |H(\xi_1)+H(\xi_2)-H(\xi_1+\xi_2) \mod 2\pi | \leq \frac{\pi}{2}\cdot\frac{\delta N^2}{\tau^3/64}
    \end{equation}

We thus apply Lemma~\ref{lem:linearity-tester}
to $H,S,\lfloor \gamma\rfloor, \delta_{lem}:=\frac{\pi}{2}\cdot\frac{\delta N^2}{\tau^3/64}$, where, the set of integers up to size $\gamma$ has size $2\lfloor\gamma\rfloor+1$, and we have assumed at most $\frac{1}{10}$ fraction of these are omitted from $S$; since $\gamma\geq 22$, the fraction omitted is thus at most $\frac{\gamma}{4}-1$, satisfying the input assumption of Lemma~\ref{lem:linearity-tester}.
The lemma concludes that there is an $\alpha\in\mathbb{R}$ such that for all $\xi\in S\cap [-\frac{1}{2}\gamma,\frac{1}{2}\gamma]$ we have $|H(\xi)- \alpha \xi|\leq 12\gamma\delta_{lem}$; namely, $angle(F(\xi))$ is within $12\gamma\delta_{lem}$ of $angle(G(\xi))+\alpha\xi$. Combined with $|F(\xi)|,|G(\xi)|\leq N$ (from our assumption that $|f|,|g|\leq 1$ pointwise) and $||F(\xi)|-|G(\xi)||\leq \frac{\tau}{4}$ (Equation~\ref{eq:close-magnitudes}), we conclude $F(\xi)$ is within $N\cdot 12\gamma\delta_{lem}+\frac{\tau}{4}$ of $G(\xi) e^{i\xi \alpha}$.

Crudely bounding $\gamma$ by our domain size, $N$, thus for $\delta=\frac{\tau^4}{2000N^4}$ the above bound yields that $F(\xi)$ is within $\tau$ of $G(\xi) e^{i\xi \alpha}$.

\end{proof}

The following lemma makes precise the intuition explained in Section~\ref{sec:overview} that ``sequences that start \textit{abruptly} cannot have mostly small Fourier transforms'', extending the simpler versions stated as Lemmas~\ref{lem:mostly-big} and~\ref{lem:mostly-big-exponential}.

\begin{lemma}\label{lem:mostly-big-laurent}
Let $c,c'>0$ and $c''\in(0,1)$ be constants. Given a sequence $f:\mathbb{Z} \rightarrow\mathbb{R}$ with $\sum_{j=-\infty}^{\infty} |f(j)|\leq 1$, and a binomial distribution $\Bin{R}{p}$ with $p\in (0,\frac{1}{4}]$, define the binomially blurred version of $f$ by convolving with the binomial: let $f_b:=f\ast \Bin{R}{p}$. Suppose there exists a bound $\tau\in(0,\frac{1}{2}]$, where, defining $u:=\sqrt{\frac{\log\frac{1}{\tau}}{R p}}$,  there exists an integer location $\ell\leq Rp+c\sqrt{Rp\log\frac{1}{\tau}}$ such that $|f_b(\ell)|\geq\tau$, and suppose $\sum_{j=-\infty}^{\infty} |f(j)|e^{-2uj}\leq \frac{1}{\tau^c}$. Then, defining the series $F_b(z):=\sum_{j=-\infty}^{\infty} f_b(j)z^j$ for complex $z$, we have $|F_b(e^{i\theta})|$ is at least $1/poly_{c,c',c''}(\frac{1}{\tau})$ for at least a $c''$ fraction of the frequencies $\theta$ in the range $[-c'u,c'u]$, assuming $c'u\leq \pi$.
\end{lemma}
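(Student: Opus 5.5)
The plan is to move to the logarithmic coordinate $z=e^{uw}$, strip off the large harmonic part of the binomial factor, and then run a harmonic-measure (two-constants) argument for the subharmonic function $\log|F_b|$ on a fixed-size rectangle in the $w$-plane. Two facts drive it. First, $F_b(z)=F(z)\,B(z)$, where $F(z)=\sum_j f(j)z^j$ and $B(z)=(1-p+pz)^R$. Second, the hypothesis $\sum_j|f(j)|e^{-2uj}\le\tau^{-c}$, together with $\sum_j|f(j)|\le1$, controls $|F(z)|$ on the whole closed annulus $e^{-2u}\le|z|\le1$. This holds because $|z|^j\le\max(1,e^{-2uj})$ there, so $|F(z)|\le 1+\tau^{-c}$. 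Under the rescaling $z=e^{uw}$, radii correspond to $\operatorname{Re}w\in[-2,0]$, and the arc $|\theta|\le c'u$ of the unit circle corresponds to the segment $\operatorname{Re}w=0$, $|\operatorname{Im}w|\le c'$.

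\textbf{Step 1: normalize $B$.} I would define
\[\Phi(w):=F_b(e^{uw})\,e^{-Rpuw},\]
which is holomorphic on the strip, equals $|F_b|$ in absolute value on $\operatorname{Re}w=0$, and satisfies $\log|\Phi|=\log|F|+\log|B|-Rpu\operatorname{Re}w$. The key estimate is that on the rectangle $Q=[-2,0]\times[-K,K]$, with $K\ge 2c'$ a constant to be chosen (intersected with $|u\operatorname{Im}w|\le\pi$),
\[\log|B(e^{uw})|-Rpu\operatorname{Re}w\;\le\;O(\log\tfrac1\tau)-\kappa\, Rpu^2(\operatorname{Im}w)^2,\]
where $Rpu^2=\log\frac1\tau$ and $\kappa>0$ is an absolute constant. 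To prove it I would write $|1-p+pe^{uw}|^2=(1-p+pe^{ux})^2-2p(1-p)e^{ux}(1-\cos uy)$ for $w=x+iy$. I would then use $p\le\frac14$, $1-\cos t\ge 2t^2/\pi^2$ for $|t|\le\pi$, and $\log(1-p+pe^{t})\le pt+O(pt^2)$ for $t\in[-2u,0]$. The error term $RpO(u^2)$ is exactly $O(\log\frac1\tau)$. I expect this step to be the main technical obstacle, because it has to hold uniformly whether or not $u$ is small, that is, even when $Rp$ is comparable to $\log\frac1\tau$. This is where $p\le\frac14$ and the restriction $|uy|\le\pi$ are used. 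Consequently $\log|\Phi|\le C_0\log\frac1\tau$ on all of $Q$, and $\log|\Phi|\le -\kappa K^2\log\frac1\tau+O(\log\frac1\tau)$ on the horizontal sides $|\operatorname{Im}w|=K$.

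\textbf{Step 2: a good interior point.} By Cauchy's coefficient formula on the circle $|z|=e^{-u}$,
\[\tau\le|f_b(\ell)|\le e^{u\ell}\max_\theta|F_b(e^{-u+i\theta})|.\]
Since $u\ell\le uRp+c\log\frac1\tau$, the factor $e^{u\ell}$ is cancelled by the normalization $e^{-Rpu\operatorname{Re}w}=e^{Rpu}$ up to $\tau^{-c}$. Hence some $w_0$ with $\operatorname{Re}w_0=-1$ has $\log|\Phi(w_0)|\ge-(1+c)\log\frac1\tau$. By the Gaussian decay in $\operatorname{Im}w$ from Step 1, if $K$ is chosen large enough in terms of $c$, this maximizing point must satisfy $|\operatorname{Im}w_0|\le K/2$. (If $K/2\cdot u>\pi$, the whole circle is already covered and this is automatic.)

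\textbf{Step 3: harmonic measure and Markov.} Since $\log|\Phi|$ is subharmonic on $Q$,
\[\log|\Phi(w_0)|\le\int_{\partial Q}\log|\Phi|\,d\omega_{w_0},\]
where $\omega_{w_0}$ is the harmonic measure of $Q$ seen from $w_0$. The point $w_0$ lies at distance at least $1$ from the vertical sides and $K/2$ from the horizontal ones, so the density of $\omega_{w_0}$ on the segment $J=\{0\}\times[-c',c']$ is bounded above and below by positive constants depending only on $c,c'$; here $J$ stays away from the corners because $K\ge2c'$. On the rest of $\partial Q$ the integrand is at most $C_0\log\frac1\tau$. Therefore
\[\int_J\log|\Phi|\,d\omega_{w_0}\ge -C_1\log\tfrac1\tau.\]
Now let $E\subseteq J$ be the set where $\log|\Phi|<-M\log\frac1\tau$. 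Combining the upper bound $C_0\log\frac1\tau$ on $J\setminus E$ with the two-sided density bounds gives $|E|/|J|\le (C_0+C_1)/(\eta M)$, where $\eta$ is the density lower bound. Choosing $M$ large in terms of $c,c',c''$ makes this at most $1-c''$. Translating back to $\theta=u\operatorname{Im}w$, this yields $|F_b(e^{i\theta})|\ge\tau^{M}$ on at least a $c''$ fraction of $[-c'u,c'u]$.
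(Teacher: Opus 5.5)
Your proposal is correct and follows essentially the same route as the paper. Both arguments use Cauchy's coefficient formula on the circle $|z|=e^{-u}$ to find a point where the integrand is at least $\tau$, localize that point to angle $O(u)$ via the Gaussian decay of $|(1-p)+pz|^R$, and then apply a harmonic-measure (Poisson kernel) bound to a subharmonic $\log|\cdot|$ on the annulus $e^{-2u}\le|z|\le 1$, followed by Markov's inequality on the arc $[-c'u,c'u]$. The differences are cosmetic: the paper applies the explicit annulus Poisson kernel (your strip kernel, periodized) to $\log|F|$, which is bounded by $c\log\frac{1}{\tau}$ on the whole annulus, and reattaches the factor $|(1-p)+pe^{i\theta}|^R\geq\tau^{O(c'^2)}$ only on the arc at the end, whereas you fold the normalized binomial factor into $\Phi$ and use a fixed rectangle in logarithmic coordinates. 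Two small fixes apply to your version: take $Q$ to be the full rectangle $[-2,0]\times[-K,K]$ rather than cutting it at $|u\operatorname{Im}w|=\pi$, which is harmless since $\Phi$ is holomorphic on the whole strip; and note that your $\kappa$ depends on $c'$ through $u\le\pi/c'$.
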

\begin{proof}
      Since $\sum_{j=-\infty}^{\infty} |f(j)| \leq 1$, the Laurent series $F(z) \coloneqq \sum_{j=-\infty}^{\infty} f(j)z^j$ converges and satisfies $|F(z)| \leq 1$ for all $|z| = 1$. By the assumption that $\sum_{j=-\infty}^{\infty} |f(j)|e^{-2uj}$ is bounded, $F$ is analytic in the annulus $\{e^{-2u} < |z| < 1\}$. More specifically, for $z$ at some radius $e^{-r}$, we bound $|F(z)|\leq \sum_{j=-\infty}^{\infty} |f(j)|e^{-rj}$; this is a convex function of $r$ since it is a sum of convex functions of $r$; thus since we have the bounds that $|F(z)|\leq 1$ when $|z|=1$ and $|F(z)|\leq \frac{1}{\tau^c}$ when $|z|=e^{-2u}$, we can bound it on the whole annulus by the max of these, $|F(z)|\leq \frac{1}{\tau^c}$ when $|z|\in[e^{-2u},1]$.
      
      The Laurent series $F_b(z)$ is precisely the $z$-transform of $f_b$. Analogous to the Fourier transform, the convolution $f \ast \Bin{R}{p}$ corresponds to elementwise multiplication of $F$ with the generating function of $\Bin{R}{p}$, which is $((1-p)+pz)^R$. By assumption, $|f_b(\ell)| \geq \tau$; applying the Cauchy integral theorem to extract the $z^\ell$ coefficient of $F_b$ from its values on a circle of radius $e^{-u}$, we have
    \begin{equation}\label{eq:Cauchy}
        \tau\leq |[z^\ell](F(z)\cdot ((1-p)+p z)^R)| = \left|\frac{1}{2\pi}\int_0^{2\pi} \frac{F(e^{-u+i\theta})\cdot ((1-p)+p \cdot e^{-u+i\theta})^R}{(e^{-u+i\theta })^\ell}\,d\theta\right|
    \end{equation}
    Viewing the right hand side as an average over angles in $[0,2\pi]$, we conclude that the integrand must have magnitude $\geq \tau$ somewhere. In other words, there exists $z_0=e^{-u+i\theta_0}$ such that $|F(z_0)((1-p)+p z_0)^R z_0^{-\ell}|\geq \tau$.

    Let $G(z) = ((1-p)+pz)^Rz^{-\ell}$, so that integrand of the right hand side of Equation~\ref{eq:Cauchy} is equal to $F(e^{-u+i\theta})G(e^{-u+i\theta})$. For $p \in [0,1]$ and $u > 0$, we have that $\log((1-p)+pe^{-u}) \leq -p + pe^{-u} \leq p(-u+u^2)$. Using this inequality, the definition $u=\sqrt{\frac{\log\frac{1}{\tau}}{R p}}$, and the bound $\ell \leq Rp + c\sqrt{Rp\log\frac{1}{\tau}}$ yields
    \begin{equation*}
        |G(e^{-u+i\theta})| \leq \exp(p(-u+u^2)R+u\ell) \leq \exp\left((1+c)\log\frac{1}{\tau}\right) = \frac{1}{\tau^{1+c}}
    \end{equation*}
    Next, we bound $G(e^{-u+i\theta})$ as a function of $\theta$ to argue that the integrand of Equation~\ref{eq:Cauchy} can only be large at small angle. Consider the ratio $\abs{\frac{G(e^{-u+i\theta})}{G(e^{-u})}} = \abs{\frac{1-p+pe^{-u+i\theta}}{1-p+pe^{-u}}}^R$, which compares $|G|$ at angle $\theta$ to $|G|$ at angle $0$. Taking the logarithm, we have
    \begin{equation*}
        R \log \abs{\frac{1-p+pe^{-u+i\theta}}{1-p+pe^{-u}}} = R \log \abs{1 + \frac{pe^{-u}(e^{i\theta} - 1)}{1-p+pe^{-u}}} \leq -C(c')Rp\theta^2 
    \end{equation*}
    for some constant $C(c')$.
    Since $|F(z_0)G(z_0)| \geq \tau$ and $|F(z_0)| \leq \frac{1}{\tau^c}$, we must have $\log |G(z_0)| \geq (c+1)\log \tau$. Since from above we have $\log |G(e^{-u})| \leq (1+c) \log \frac{1}{\tau}$, combining this with the bound  $\log\abs{\frac{G(e^{-u+i\theta})}{G(e^{-u})}}\leq -C(c')Rp\theta^2$ yields $(c+1)\log \tau \leq \log |G(z_0)| \leq C\cdot(-Rp\theta_0^2 + \log \frac{1}{\tau})$. Solving for $\theta_0$, we have $|\theta_0| \leq C_2\sqrt{\frac{\log \frac{1}{\tau}}{Rp}}=C_2 \cdot u$ for some constant $C_2(c,c')$.  

\medskip\noindent{\bf Poisson kernel argument:}

Recall that $F(z)$ is bounded by $\frac{1}{\tau^c}$; we also showed that $|F(z_0)G(z_0)|\geq \tau$ and $|G(z_0)|\leq \frac{1}{\tau^{1+c}}$, thus $|F(z_0)|\geq \tau^{c+2}$.

Since $F(z)$ is analytic on the annulus of radii between $[e^{-2u},1]$, thus $\log|F(z)|$ is \emph{subharmonic} on the annulus. Thus for $z_0=e^{-u+i\theta_0 }$ we can use the Poisson kernel to upper bound $\log|F(z_0)|$ by an appropriate weighted average of $\log |F(z)|$ around the circles of radii 1 and $e^{-2u}$. For an annulus of radii $[\rho,1]$, 
\cite{Wang1983annulus} shows that, defining

    \begin{equation*}
        P(r,\theta) = \frac{1}{\log\frac{1}{\rho}} \cdot \frac{\cos \left( \frac{\pi}{\log\frac{1}{\rho}} \log \frac{r}{\sqrt{\rho}}\right)}{\cosh \frac{\pi \theta}{\log\frac{1}{\rho}} - \sin \left(\frac{\pi}{\log\frac{1}{\rho}} \log \frac{r}{\sqrt{\rho}}\right)}
    \end{equation*}
    then $p_r(\theta) = \sum_{k=-\infty}^{\infty} P(r,\theta+2\pi k)$ is the \emph{Poisson kernel} on the annulus. This means that, for any subharmonic function $h$ on the annulus, and any radius $r\in (\rho,1)$ and angle $\theta_0$, 
    \begin{equation*}
        h(r\cdot e^{i\theta_0}) \leq \frac{1}{2}\int_0^{2\pi} p_{\rho/r}(\theta-\theta_0) h(\rho e^{i\theta}) d\theta +  \frac{1}{2}\int_0^{2\pi} p_{r}(\theta-\theta_0) h(e^{i\theta}) d\theta
    \end{equation*}

For our particular setting, these equations simplify significantly. Letting $\rho=e^{-2u}$ and $r=e^{-u}$, then $P(r,\theta)=\frac{1}{2u}\frac{1}{\cosh\frac{\pi\theta}{2u}}$. Thus, letting $h$ be our subharmonic function $\log|F|$ we have 
\begin{equation}\label{eq:poisson}
(c+2)\log\tau\leq\log|F(e^{-u+i\theta_0 })|\leq\frac{1}{2}\int_{-\pi}^{\pi} \left(\log|F(e^{-2u+i\theta})|+\log|F( e^{i\theta})|\right)p_r(\theta-\theta_0)\,d\theta
\end{equation}

We point out that $\int_{-\infty}^\infty P(r,\theta)\,d\theta=1$ so that $\int_{-\pi}^\pi p_r(\theta)\,d\theta=1$ so that Equation~\ref{eq:poisson} can be interpreted as saying that $\log|F|$ on the interior of the annulus is bounded by a weighted average of its values on the boundary.

We now show a lower bound for $p_r(\theta-\theta_0)$ when $\theta\in [-c'u,c'u]$. Thus $|\theta-\theta_0|\leq (C_2+c')u$; thus we bound $p_r(\theta-\theta_0)$ by the $k=0$ term, since $1/\cosh$ is positive and decreasing away from 0, as $P(r,\theta-\theta_0)\geq \frac{1}{2u}\frac{1}{\cosh \frac{\pi (C_2+c')u}{2u}}$, which we represent as $\frac{C_3}{u}$.

Markov's inequality says that, since $\log|F(e^{-2u+i\theta})|,\log|F( e^{i\theta})|\leq c\log\frac{1}{\tau}$ then, it \emph{cannot} be the case that $> 1-c''$ fraction of $\theta\in [-c'u,c'u]$ have $\log|F( e^{i\theta})|\leq \frac{2c+2}{(1-c'') c' C_3} \log\tau$, because then the right hand side of Equation~\ref{eq:poisson} would be $<(c\log\frac{1}{\tau})+\frac{1}{2}\frac{C_3}{u}\cdot (1-c'')\cdot |[-c'u,c'u]|\cdot\frac{2c+2}{(1-c'') c' C_3} \log\tau=(c+2)\log\tau$ while actually it is $\geq (c+2)\log\tau$.

Finally, we have  $|F_b(e^{i\theta})|=|F(e^{i\theta})|\cdot |(1-p)+p\cdot e^{i\theta}|^R$, and, for $\theta\in [-c'u,c'u]$, and $p\in (0, \frac{1}{4}]$ we bound $|(1-p)+p\cdot e^{i\theta}|^R\geq e^{-p\theta^2R\cdot C_4}\geq \tau^{{c'}^2C_4}$ for some constant $C_4$,
proving the lemma.
\end{proof}

The next lemma shows how to convert bounds on the fraction of small Fourier elements over a continuous domain into a corresponding statement in a discrete domain. This result is needed to translate between the conclusion of Lemma~\ref{lem:mostly-big-laurent}, which applies to the Fourier transform evaluated on a continuous interval, and the assumptions of Lemma~\ref{lem:contrapositive}, which involve the discrete Fourier transform.

\begin{lemma}\label{lem:fourier-continuous-to-discrete}
Let $f:\mathbb{Z}\rightarrow \mathbb{R}$ be supported on $\{0,\ldots,W-1\}$. Define the Fourier transform $F(\xi)=\sum_j f(j) e^{\xi j i}$. Suppose that for an interval $I$ we have that for $\geq \alpha$ fraction of angles $\xi\in I$ we have $|F(\xi)|\geq \tau$. Then for any positive integer $N$, the number of integers $\ell\in\{0,\ldots,N-1\}$ for which $\frac{2\pi}{N} \ell\in I$ and $|F(\frac{2\pi}{N} \ell)|\geq \tau$ is at least $\frac{N}{2\pi}\alpha|I|-W$.
\end{lemma}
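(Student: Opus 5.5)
The plan is to show that $|F|$ cannot change much between nearby angles, because $f$ has support of width $W$. Then each ``good'' angle forces a nearby grid point to be good, up to a small loss. The cleanest route avoids Lipschitz bounds entirely and uses a direct counting/measure argument, as follows.

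Write $B=\{\xi\in I: |F(\xi)|\geq\tau\}$, which has measure at least $\alpha|I|$, and let $G=\{\ell\in\{0,\ldots,N-1\}: \frac{2\pi}{N}\ell\in I,\ |F(\frac{2\pi}{N}\ell)|\geq\tau\}$ be the set of good grid points. Partition the circle into the $N$ arcs $J_\ell=[\frac{2\pi}{N}\ell,\frac{2\pi}{N}(\ell+1))$, each of length $\frac{2\pi}{N}$. Call an arc \emph{mixed} if it meets $B$ but its left endpoint $\frac{2\pi}{N}\ell$ is not in $G$. The measure of $B$ is at most $\frac{2\pi}{N}$ times the number of arcs meeting $B$. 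The arcs meeting $B$ are either arcs whose left endpoint lies in $G$, or mixed arcs, or at most two arcs whose left endpoint lies outside $I$ (these occur at the two ends of $I$). It therefore suffices to show that the number of mixed arcs is at most about $W-2$, since then $\alpha|I|\leq\frac{2\pi}{N}(|G|+W)$, which is the claim.

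To bound the mixed arcs, note that on a mixed arc the continuous function $|F(\xi)|^2-\tau^2$ is negative at the left endpoint and nonnegative somewhere in the arc. Hence, within that arc or at its right endpoint, $|F|^2-\tau^2$ either has a zero or else stays positive through the right endpoint. In the latter case the right endpoint is a good grid point, so these arcs can be charged to a sign change of $|F|^2-\tau^2$ just after a bad left endpoint. More simply, every mixed arc contains an upcrossing of level $\tau^2$ by $|F|^2$, and distinct arcs contain distinct upcrossings. Now $|F(\xi)|^2=\sum_{|m|<W} a_m e^{im\xi}$ is a real trigonometric polynomial of degree at most $W-1$, so $|F|^2-\tau^2$ is also one. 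Unless it vanishes identically, it has at most $2(W-1)$ zeros on the circle. Upcrossings alternate with downcrossings, so there are at most $W-1$ upcrossings. If $|F|^2\equiv\tau^2$, every point is good and the claim is trivial.

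The main obstacle is the bookkeeping at the endpoints of $I$ and at grid points where $|F|=\tau$ exactly, which lie in $G$ since the inequality is $\geq$. I would handle this by treating an upcrossing as a transition from $<\tau$ to $\geq\tau$, and by absorbing the two boundary arcs into the slack between $W-1$ and $W$. If that slack proves insufficient, I would fall back on a tangential zero count: a real trigonometric polynomial of degree $W-1$ has at most $2W-2$ sign changes, which gives at most $W-1$ transitions from ``bad'' to ``good''. I would then lose one extra arc per endpoint, which may require the bound $W+1$. The stated $-W$ should still be achievable, because a mixed arc must be preceded by a bad point; the count of such transitions is therefore at most $W-1$, and the boundary contributes at most $1$ net.
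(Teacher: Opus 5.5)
Your approach is essentially the paper's, in dual form. Both rest on the fact that $|F|^2-\tau^2$ is a trigonometric polynomial of degree $W-1$ with at most $2W-2$ zeros, so $B=\{\xi\in I:|F(\xi)|\geq\tau\}$ has few pieces. The paper reads this as ``$B$ is a union of at most $W$ intervals'' (at most $2W-2$ roots plus the two ends of $I$), then uses that an interval of scaled length $L$ contains at least $L-1$ grid points. You instead count grid arcs meeting $B$ and charge those with a bad left endpoint to upcrossings.

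One inference in your write-up does not hold as stated. A bad-to-good transition need not be a sign change: where $|F|^2$ touches $\tau^2$ from below there is such a transition but no sign change. So ``at most $2W-2$ sign changes'' does not by itself bound those transitions. Either fix closes it:
\begin{itemize}
\item Count only arcs that meet $B$ in positive measure. The bound $|B|\leq\frac{2\pi}{N}$ times the number of such arcs still holds. For such an arc, a bad left endpoint forces a genuine sign change from negative to positive, and there are at most $W-1$ of those.
\item Observe that a touching zero has even multiplicity $\geq 2$ as a root of the degree-$(2W-2)$ polynomial $z^{W-1}(Q(z)-\tau^2)$. Hence $\{|F|\geq\tau\}$ has at most $W-1$ components on the circle.
\end{itemize}

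Finally, at most one arc (the one containing the initial endpoint of $I$) can meet $I$ while starting outside it. The tally is therefore $|G|+(W-1)+1$, giving precisely $\frac{N}{2\pi}\alpha|I|-W$. Your first estimate of two boundary arcs would have lost one.
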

\begin{proof}
Since $f$ is real-valued, we have $|F(\xi)|^2=F(\xi)F(-\xi)$. Define the Laurent polynomial $Q(z)=(\sum_{j=0}^{W-1} f(j) z^j)(\sum_{j=0}^{W-1} f(j) z^{-j})$ so that $Q(e^{\xi i})=|F(\xi)|^2$. The set $S=\{\xi\in I:|F(\xi)|\geq \tau\}$ thus is a union of intervals whose endpoints are either solutions to the equation $Q(e^{\xi i})=\tau^2$ or one of the two endpoints of $I$; for $z=e^{\xi i}$, multiplying both sides by $z^{W-1}$ yields an ordinary polynomial equation of degree $2(W-1)$, implying that the set $S$ has at most $2W$ endpoints and thus consists of a union of at most $W$ intervals. Since for any single interval, the number of integer points in the interval is at least the length of the interval minus 1: scaling the domain and the set $S$ by $\frac{N}{2\pi}$ and summing over the $\leq W$ constituent intervals immediately yields the desired conclusion.
\end{proof}

We are now prepared to prove the main result of this section: if $f-g$ ``starts abruptly'' and $f$ is far from any cyclic shift of $g$, then the ``binomially blurred" versions of $f,g$ differ in some statistic of order at most $3$.

\begin{lemma}\label{lem:combined-contrapositive}
Given an integer $R\geq 250$ and sequences $f,g:\mathbb{Z}\rightarrow\mathbb{R}_{\geq 0}$ supported on $\{-R,\ldots,R\}$ with $\|f\|_1,\|g\|_1\leq \frac{1}{2}$, and a binomial distribution $\Bin{R}{p}$ with $p\in (0,\frac{1}{4}]$, then define the binomially blurred versions of $f,g$ by convolving with the binomial: let $f_b:=f\ast \Bin{R}{p}$ and $g_b:=g\ast \Bin{R}{p}$. Then, if there exists a bound $\tau\in (0,\frac{1}{4}]$ and a location $\ell\leq pR+\sqrt{Rp\log\frac{1}{\tau}}$  such that $|f_b(\ell)-g_b(\ell)|\geq\tau$, and if $\sqrt{\frac{\log\frac{1}{\tau}}{Rp}}\leq \frac{\pi}{8}$ and the left tail of $f-g$ is controlled as $\sum_j |f(j)-g(j)|\cdot e^{-2j\sqrt{(\log\frac{1}{\tau})/(Rp)}}\leq \frac{1}{\tau^c}$ for some constant $c\geq 1$, then: 
\begin{itemize}
    \item[\textup{1)}] Either there exist offsets $\ell_0=0,\ell_1\geq 0,\ell_2\geq 0$ such that $|\sum_j f_b(j+\ell_0)f_b(j+\ell_1)f_b(j+\ell_2)-g_b(j+\ell_0)g_b(j+\ell_1)g_b(j+\ell_2)|\geq 1/poly_c(\frac{1}{\tau},R)$, or the analogous product of two terms ignoring $\ell_2$;
    \item[\textup{2)}] Or there exists $\alpha\in\mathbb{R}$ such that ``$f_b$ is approximately a shifted version of $g_b$'' in the sense that, defining the Fourier transforms $F_b(\xi)=\sum_j f_b(j)\cdot  e^{ij\xi}$ and $G_b$ defined correspondingly in terms of $g_b$, then for all angles $\xi$ that are integer multiples of $\frac{2\pi}{R^2}$ we have $|F_b(\xi)-G_b(\xi)\cdot e^{i\xi \alpha}|\leq \tau^2$.
\end{itemize}

\end{lemma}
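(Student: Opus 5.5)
The plan is to combine Lemma~\ref{lem:mostly-big-laurent} (applied to the difference $f-g$) with Lemma~\ref{lem:contrapositive} (applied to periodizations of $f_b,g_b$), using Lemma~\ref{lem:fourier-continuous-to-discrete} to pass from continuous to discrete frequencies. First I will apply Lemma~\ref{lem:mostly-big-laurent} to the sequence $h:=f-g$. The hypotheses check out directly. We have $\|h\|_1\le\|f\|_1+\|g\|_1\le 1$ and $h_b=f_b-g_b$ with $|h_b(\ell)|\ge\tau$ at a location $\ell\le Rp+\sqrt{Rp\log\frac1\tau}$ (so $c=1$ there). The weighted left-tail bound is assumed. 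Finally, $u=\sqrt{\log\frac1\tau/(Rp)}\le\pi/8$ gives $c'u\le\pi$ for, say, $c'=8$. Choosing $c''$ close to $1$ (e.g.\ $c''=0.96$) yields that $|F_b(\theta)-G_b(\theta)|\ge\tau_1:=1/poly_c(\frac1\tau)$ on at least a $c''$ fraction of $\theta\in[-c'u,c'u]$.

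Next I will discretize. Since $f,g$ are supported on $\{-R,\dots,R\}$, the blurred sequences $f_b,g_b$ are supported on $\{-R,\dots,2R\}$, an interval of length $W=3R+1$. I will take $N=R^2$ and treat $f_b,g_b$ (shifted to start at $0$) as length-$N$ sequences. Because the support length $3R+1$ is much less than $N$, nothing wraps around. This has two consequences. The DFT at integer frequency $\xi$ is $F_b(2\pi\xi/N)$, up to a harmless phase coming from the shift, which is common to $F$ and $G$ and cancels in the differences. Also, cyclic summed 2nd and 3rd order statistics coincide with the non-cyclic sums over $\mathbb{Z}$ for offsets $\ell_1,\ell_2$ in the relevant range; negative offsets can be turned into nonnegative ones by reindexing $j$ and reordering the three factors. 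Lemma~\ref{lem:fourier-continuous-to-discrete} then says the number of integers $\xi$ with $|2\pi\xi/N|\le c'u$ and $|F_b-G_b|\ge\tau_1$ is at least $\frac{N}{2\pi}c''\cdot 2c'u-W$. With $\gamma:=\lfloor Nc'u/(2\pi)\rfloor$, this count is a $\ge 9/10$ fraction of the $2\gamma+1$ integers in $[-\gamma,\gamma]$, provided $W$ is small compared with $Nu=R^2u$. I expect this to be the main obstacle, since it needs $Ru\gg 1$, i.e.\ $R\log\frac1\tau/p\gg1$, which holds comfortably because $p\le\frac14$ and $R\ge250$ (so $R^2u\gtrsim R^{3/2}\sqrt{\log 4/p}\cdot$const exceeds $30W$). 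The same estimate gives $\gamma\ge22$. Some care with constants (the choice of $c',c''$, and whether the range $[-\gamma/2,\gamma/2]$ is wide enough) is the delicate part.

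Finally I apply Lemma~\ref{lem:contrapositive} with threshold $\tau_1$, after rescaling so that the entries are bounded by $1$; they already are, since $\|f_b\|_1\le\frac12$. There are two cases. If some summed 2nd or 3rd order statistic differs by more than $\tau_1^4/(2000N^4)=1/poly_c(\frac1\tau,R)$, we are in conclusion 1). Otherwise there is $\alpha$ with $|F_b(\xi)-G_b(\xi)e^{i\xi\alpha}|\le\tau_1$ at all discrete frequencies with $|\xi|\le\gamma/2$.

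To reach conclusion 2) we need a bound of $\tau^2$, not $\tau_1$, and at all multiples of $2\pi/R^2$, not just low ones. I will handle both together by rerunning the argument with the discrete threshold lowered to $\tau^2$. The only requirement is that the $9/10$-fraction hypothesis hold at threshold $\tau_1\ge\tau^2$, which is stronger than needed. This adjusts the poly in 1) but nothing else.

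For high frequencies $|\theta|>c'u/2$, I will use the binomial factor directly. There $|F_b|,|G_b|\le|(1-p)+pe^{i\theta}|^R\le e^{-C Rp\theta^2}\le\tau^{Cc'^2/4}$. Picking $c'$ large makes this at most $\tau^2/2$, so the bound $\tau^2$ holds trivially for any $\alpha$. This matching of the low-frequency window to the binomial decay is the step I'd verify most carefully.
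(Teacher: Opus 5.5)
Your proposal is correct and is essentially the paper's own proof. It has the same four steps: Lemma~\ref{lem:mostly-big-laurent} on $f-g$ with $c'=8$, discretization at $N=R^2$ via Lemma~\ref{lem:fourier-continuous-to-discrete}, Lemma~\ref{lem:contrapositive} on the zero-padded $f_b,g_b$, and the decay $|(1-p)+pe^{i\xi}|^R\le e^{-pR\xi^2/5}$ for $|\xi|\ge\frac{1}{2}c'u$. There $c'=8$ already gives $|F_b|,|G_b|\le\frac{1}{2}\tau^{16/5}$, so no larger $c'$ is needed; none is available anyway, since only $u\le\pi/8$ is assumed.

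Two small corrections. Take the lemma's constant to be the given $c$, not $1$, because it also governs the tail hypothesis. And since $\tau_1=1/poly_c(\frac{1}{\tau})$ is typically far below $\tau^2$, do not ``lower the threshold to $\tau^2$''; instead apply Lemma~\ref{lem:contrapositive} once with threshold $\min(\tau_1,\tau^2)$, as the paper does.
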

\begin{proof}
Apply Lemma~\ref{lem:mostly-big-laurent} to $f-g$, for $c''=0.95$ and $c'=8$ with $c_{lem}=c$ to conclude that, letting $\gamma:=c'\sqrt{\frac{\log\frac{1}{\tau}}{Rp}}$, we have that at least a $0.95$ fraction of the frequencies $\xi\in[-\gamma,\gamma]$ have $|F_b(\xi)-G_b(\xi)|\geq 1/poly_c(\frac{1}{\tau})$.

In order to apply Lemma~\ref{lem:contrapositive} to $f_b,g_b$ we first need to convert the conclusions about the Fourier transforms $F_b,G_b$ to a discrete domain, which we do by applying Lemma~\ref{lem:fourier-continuous-to-discrete} to $f_b-g_b$. The domain size of $f_b-g_b$ is $W_{lem}:=3R+1$. We choose a domain size for the lemma $N_{lem}:=R^2$, which for $R\geq 250$ satisfies $N_{lem} > \frac{20\pi W_{lem} }{ \gamma}=20\pi (3R+1) \frac{\sqrt{Rp}}{c'\sqrt{\log\frac{1}{\tau}}}$. Then we apply Lemma~\ref{lem:fourier-continuous-to-discrete}, checking that $\gamma\frac{N_{lem}}{2\pi}\geq R>22$ satisfies the input condition for $\gamma$, to yield that $\geq 0.9$ fraction of the integers $\theta\in[-\gamma\frac{N_{lem}}{2\pi},\gamma\frac{N_{lem}}{2\pi}]$ have $|F_b(\theta\frac{2\pi}{N_{lem}})-G_b(\theta\frac{2\pi}{N_{lem}})|\geq 1/poly_c(\frac{1}{\tau})$.

We thus apply Lemma~\ref{lem:contrapositive} to $f_b,g_b$ padded with zeros to have size $N_{lem}$, with $\tau_{lem}$ taken to be $\min(1/poly_c(\frac{1}{\tau}),\tau^2)$ from the above bounds; we note that $f_b,f_g$ are pointwise bounded by 1. The conclusion of Lemma~\ref{lem:contrapositive} is phrased as ``if A then B'' which we rephrase as ``either not A, or B'':

\begin{itemize}
    \item[\textup{1)}] Either there exist indices $\ell_0,\ell_1,\ell_2$ for which $|\sum_j f_b(j+\ell_0)f_b(j+\ell_1)f_b(j+\ell_2)-g_b(j+\ell_0)g_b(j+\ell_1)g_b(j+\ell_2)|\geq 1/poly'_c(\frac{1}{\tau},R)$, where without loss of generality, we may take $\ell_0=0$ and $\ell_1,\ell_2\geq 0$; or the corresponding second order variant holds;
    \item[\textup{2)}] Or there exists $\alpha\in\mathbb{R}$ such that for all angles $\xi\in [-\frac{1}{2}\gamma,\frac{1}{2}\gamma]$ that are integer multiples of $\frac{2\pi}{R^2}$ we have $|F_b(\xi)-G_b(\xi)\cdot e^{i\xi \alpha}|\leq \tau^2$.
\end{itemize}

Finally, we use tail bounds on $F_b,G_b$ to extend the conclusion of the second case to \emph{all} frequencies $\xi$. Since $\|f\|_1,\|g\|_1\leq\frac{1}{2}$, we define $F,G$ to be the Fourier transforms of $f,g$ and have that $|F(\xi)|,|G(\xi)|\leq \frac{1}{2}$ everywhere. In the Fourier domain, convolving by $bin(R,\cdot,p)$ has the effect of elementwise multiplying the magnitude of each Fourier transform entry $|F(\xi)|$ by $|(1-p)+p\cdot e^{i\xi}|^R$, which for $p\leq \frac{1}{4}$ and $\xi\in[-\pi,\pi]$ is at most $e^{-\frac{p}{5}\xi^2 R}$. Thus $|F_b(\xi)|,|G_b(\xi)|\leq \frac{1}{2}e^{-\frac{p}{5}\xi^2 R}$. Recalling that $\gamma=c'\sqrt{\frac{\log\frac{1}{\tau}}{Rp}}$ we have that, when $|\xi|\geq \frac{1}{2}\gamma$ we have $|F_b(\xi)|,|G_b(\xi)|\leq \frac{1}{2}e^{-\frac{p}{5}\xi^2 R}\leq \frac{1}{2}\tau^{\frac{1}{5}(\frac{1}{2})^2{c'}^2}$, which is smaller than $\frac{1}{2}\tau^2$ for $c':=8$. Thus for all angles $\xi$ that are integer multiples of $\frac{2\pi}{R^2}$ we have $|F_b(\xi)-G_b(\xi)\cdot e^{i\xi \alpha}|\leq \tau^2$.
\end{proof}

The following corollary continues from Case 2 of Lemma~\ref{lem:combined-contrapositive} to show how to produce a statistic that distinguishes the blurred sequences $f_b,f_g$. We show that a linear, low-frequency weighting of $f_b,g_b$ suffices.

\begin{corollary}\label{cor:shift}
Under the assumptions of Lemma~\ref{lem:combined-contrapositive}, if Conclusion 2) applies, then we can further conclude that \[\Big|\sum_{j} (f_b(j)-g_b(j))\cdot e^{ij\frac{2\pi}{R^2}}\Big|\geq \frac{\tau}{R^2}-\tau^2\]
\end{corollary}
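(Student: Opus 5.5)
The plan is to note first that the quantity in the statement is exactly $F_b(\xi_1)-G_b(\xi_1)$, where $\xi_k:=k\frac{2\pi}{R^2}$. Since $\xi_1$ is an integer multiple of $\frac{2\pi}{R^2}$, Conclusion 2) of Lemma~\ref{lem:combined-contrapositive} gives
\[|F_b(\xi_1)-G_b(\xi_1)|\geq |G_b(\xi_1)|\cdot|e^{i\xi_1\alpha}-1|-\tau^2 .\]
So it suffices to prove
\[|G_b(\xi_1)|\cdot|e^{i\xi_1\alpha}-1|\geq \frac{\tau}{R^2}.\]
I will get this by locating some frequency $\xi_k$ where $f_b$ and $g_b$ differ substantially, and then transporting that information down to frequency $\xi_1$.

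\textbf{Step 1: find a frequency where $F_b-G_b$ is large.} The sequences $f_b,g_b$ are supported on $\{-R,\ldots,2R\}$, which has $3R+1<R^2$ points. So the discrete Fourier transform with $N=R^2$ recovers them exactly, and
\[f_b(\ell)-g_b(\ell)=\frac{1}{R^2}\sum_{k}(F_b(\xi_k)-G_b(\xi_k))\,e^{-i\ell\xi_k},\]
where $k$ ranges over a full residue system mod $R^2$ that we may take with $|k|\leq R^2/2$. The hypothesis $|f_b(\ell)-g_b(\ell)|\geq\tau$ then forces some such $k$ to satisfy $|F_b(\xi_k)-G_b(\xi_k)|\geq\tau$. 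This $k$ is nonzero, because Conclusion 2) at $\xi_0=0$ gives $|F_b(0)-G_b(0)|\leq\tau^2<\tau$. Applying Conclusion 2) again at $\xi_k$ yields
\[|G_b(\xi_k)|\,|e^{i\xi_k\alpha}-1|\geq\tau-\tau^2.\]

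\textbf{Step 2: transfer to frequency $\xi_1$.} There are two ingredients.
\begin{itemize}
\item[(a)] Telescoping $e^{ik\theta}-1=\sum_{m=0}^{k-1} e^{im\theta}(e^{i\theta}-1)$ gives $|e^{i\xi_k\alpha}-1|\leq |k|\,|e^{i\xi_1\alpha}-1|\leq\frac{R^2}{2}|e^{i\xi_1\alpha}-1|$.
\item[(b)] Because $g_b\geq 0$, we have $|G_b(\xi_k)|\leq\|g_b\|_1$. Conversely, factor out the phase at the center $m$ of the support of $g_b$; then every $j$ in the support has $|(j-m)\xi_1|\leq\frac{3R+1}{2}\cdot\frac{2\pi}{R^2}$, which is tiny for $R\geq 250$. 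Hence
\[|G_b(\xi_1)|\geq\sum_j g_b(j)\cos\big((j-m)\xi_1\big)\geq 0.9\,\|g_b\|_1 .\]
\end{itemize}
Combining (a), (b) and Step 1:
\[|G_b(\xi_1)|\,|e^{i\xi_1\alpha}-1|\;\geq\;0.9\,\|g_b\|_1\cdot\frac{2}{R^2}\,|e^{i\xi_k\alpha}-1|\;\geq\;\frac{1.8}{R^2}\,|G_b(\xi_k)|\,|e^{i\xi_k\alpha}-1|\;\geq\;\frac{1.8(\tau-\tau^2)}{R^2}\;\geq\;\frac{\tau}{R^2},\]
where the last inequality uses $\tau\leq\frac{1}{4}$. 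Subtracting the $\tau^2$ error from Conclusion 2) at $\xi_1$ then gives the claimed bound.

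\textbf{Main obstacle.} The delicate point is Step 2. A naive argument bounds $|G_b(\xi_1)|$ and $|e^{i\xi_1\alpha}-1|$ separately, and each bound costs a factor of $\tau$, which would only give $\tau^2/R^2$. Pairing the magnitude comparison (b) with the phase comparison (a) against the single frequency $\xi_k$ found in Step 1 keeps the loss at one factor of $\tau$. This pairing relies on nonnegativity of $g_b$ and on its support being short compared with $R^2$. The remaining work is routine: checking the cosine estimate for $R\geq 250$, and noting that the DFT inversion only needs the support of the padded sequences to fit inside one period of length $R^2$.
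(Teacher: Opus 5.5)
Your proposal is correct and follows the paper's proof essentially step for step. You use DFT inversion on the grid of multiples of $\frac{2\pi}{R^2}$ to find a frequency $\xi'$ with $|F_b(\xi')-G_b(\xi')|\geq\tau$. Nonnegativity and the short support of $g_b$ then give $|G_b(\xi_1)|\geq 0.9\,G_b(0)\geq 0.9\,|G_b(\xi')|$, and you pair this with $|e^{i\xi'\alpha}-1|\leq\frac{R^2}{2}|e^{i\xi_1\alpha}-1|$ and the triangle inequality against Conclusion 2). The only cosmetic differences are that you take the phase reference at the center of the support rather than at $\ell$, and you keep the continuous transform sampled on the grid instead of rescaling to a discrete transform.
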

\begin{proof}
For the sake of this proof, we redefine $F_b,G_b$ to be the \emph{discrete} Fourier transforms of $f_b,g_b$ respectively on the discrete domain of size $R^2$. This differs from the $F^{lem}_b,G^{lem}_b$ used in Lemma~\ref{lem:combined-contrapositive} by a scaling of the input: $F_b(\xi)=F^{lem}_b(\frac{2\pi}{R^2}\xi)$; so thus here we take $\alpha$ to be $\frac{2\pi}{R^2}\alpha_{lem}$ in terms of the $\alpha_{lem}$ of Lemma~\ref{lem:combined-contrapositive} so that for all integers $\xi$ we have $|F_b(\xi)-G_b(\xi)\cdot e^{i\xi \alpha}|\leq \tau^2$.

We have, from the assumptions of Lemma~\ref{lem:combined-contrapositive}, that
$|f_b(\ell)-g_b(\ell)|\geq\tau$; since $f_b-g_b$ is the inverse Fourier transform of $F_b-G_b$, then the average value of $|F_b(\xi)-G_b(\xi)|$ must be at least $\tau$, and thus there must exist a particular $\xi'$ for which $|F_b(\xi')-G_b(\xi')|\geq \tau$. Combined with the conclusion of Case 2) of Lemma~\ref{lem:combined-contrapositive} that $|F_b(\xi')-G_b(\xi')\cdot e^{i\xi' \alpha}|\leq \tau^2$, we have that $|G_b(\xi')(1-e^{i\xi'\alpha})|\geq \tau-\tau^2$.

We now analyze $G_b(1)$, the lowest nonzero frequency entry of the Fourier transform. From the proof of Lemma~\ref{lem:combined-contrapositive}, recall that $f_b,g_b$ are supported on a contiguous domain of size $3R+1$ that contains the point $\ell$, and since $R\geq 250,$ we thus have that all points in the support of $f_b-g_b$ are within $\frac{R^2}{20}$ of $\ell$. Now, every point in the support's coefficient $e^{ij\frac{2\pi}{R^2}}$ to Fourier entry $G_b(1)$, when projected to direction $e^{i\ell\frac{2\pi}{R^2}}$ is $\cos((j-\ell)\frac{2\pi}{R^2})\geq\cos(\frac{2\pi}{20})>0.9$. Thus, since $G_b$ is the Fourier transform of a nonnegative real sequence and thus $G_b(0)$ equals simply the sum of the entries of $g$, we have $|G_b(1)|\geq 0.9G_b(0)\geq 0.9 |G_b(\xi')|$. Because $\xi'$ is an integer mod $R^2$, its equivalence class contains an integer of magnitude at most $\frac{R^2}{2}$; thus $|1-e^{i\xi'\alpha}|\leq |\xi'|\cdot |1-e^{i\alpha}|\leq \frac{R^2}{2}|1-e^{i\alpha}|$.
Thus $|G_b(1)|\cdot|1-e^{i\alpha}|\geq 0.9|G_b(\xi')|\cdot\frac{2}{R^2}|1-e^{i\xi'\alpha}|\geq \frac{1.8(\tau-\tau^2)}{R^2}$.

The fact that $F_b$ is $\tau^2$ close to the $\alpha$-shifted $G_b$ means that $|F_b(1)-e^{i\alpha}G_b(1)|\leq \tau^2$. Combining this with the bound of the previous paragraph, the triangle inequality yields that $|F_b(1)-G_b(1)|\geq \frac{1.8(\tau-\tau^2)}{R^2}-\tau^2$. Thus
\[\Big|\sum_{j} (f_b(j)-g_b(j))\cdot e^{ij\frac{2\pi}{R^2}}\Big|\geq \frac{1.8(\tau-\tau^2)}{R^2}-\tau^2\geq \frac{\tau}{R^2}-\tau^2\]
as desired.

\end{proof}

\section{Simulating Multiple Traces from One Trace}\label{sec:alpha}

Suppose the expected value of some statistic $s$ of order $k$ differs significantly between traces from $x$ and $y$. In~\Cref{sec:induction}, we will apply Lemma~\ref{lem:combined-contrapositive} to (weighted versions of) the sequences $x^{(s+j)}_{[d-R:n],p},y^{(s+j)}_{[d-R:n],p}$, to show that there exists a summed $\leq 3$rd order statistic of these sequences on which the two sequences differ. Notice that this is a statistic \emph{of a sequence of statistics}, of the form
\begin{equation}
    \label{eq:third-order}
    \sum_{j} x^{(s+j)}_{[d-R:n],p} x^{(s+j+\ell_1)}_{[d-R:n],p}x^{(s+j+\ell_2)}_{[d-R:n],p}
\end{equation}
for two fixed offsets $\ell_1,\ell_2$. However, the goal of the inductive step is not to produce a summed product of three expectations, but a \emph{single} low-order statistic of a trace. To bridge this gap, in this section, we show how to convert a summed third order statistic of $x^{(s+j)}_{[d-R:n],p}$ into a linear combination of low-order statistics, at the cost of increasing the retention probability by a factor of up to $3$: $P\in [p,3p]$. In particular, we will convert Equation~\ref{eq:third-order} into a weighted sum $\sum_{j,S} A_S(j) x^{(S+j)}_{[d-R:n],P}$ of  statistics $S$ of order at most $3k$. Most of this section is occupied with analyzing the coefficient sequence $A_S(j)$, finding sufficient conditions under which it will be an extremely smooth function of $j$, with bounded support, captured by Lemma~\ref{lem:fourier-alpha}.

\subsection{Simulation}

We begin with the following observation, which simulates three traces with retention probability $p$ using a single trace with retention probability at most $3p$. Since the three way product $x^{(s_1)}_{[d-R:n],p} x^{(s_2)}_{[d-R:n],p}x^{(s_3)}_{[d-R:n],p}$ is the product of expectations across three \emph{independent} traces, the below result is a fundamental transformation, showing, suprisingly, that we can reexpress this as an expectation of a function of a single trace.

\begin{lemma}
    \label{lem:simulation}
    Three independent deletion channels with deletion probability $1-p$ can be simulated from one deletion channel with deletion probability $(1-p)^3$.
\end{lemma}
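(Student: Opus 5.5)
The plan is to build a coupling: from one trace with retention probability $P:=1-(1-p)^3$, produce three jointly distributed traces $U^{(1)},U^{(2)},U^{(3)}$ that are exactly independent samples of $\Del{p}{x}$. The key observation is that three independent deletion channels on $x$ act bitwise independently. For each position $i$ of $x$, let $B_i=(b_i^{(1)},b_i^{(2)},b_i^{(3)})\in\{0,1\}^3$ indicate which of the three channels retain bit $i$. These vectors are i.i.d.\ across $i$, with coordinates i.i.d.\ Bernoulli$(p)$. The bit $x_i$ is retained by \emph{at least one} of the three channels exactly when $B_i\neq(0,0,0)$, which happens with probability $1-(1-p)^3=P$, independently across $i$. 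So the set of positions retained by at least one channel has the same law as the set of positions retained by a single channel $\Del{P}{x}$.

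Given a trace $U\sim\Del{P}{x}$ of length $m$, the simulation proceeds as follows:
\begin{itemize}
\item[1)] For each position $t\in\{1,\ldots,m\}$ of $U$, independently sample a label $B_t\in\{0,1\}^3\setminus\{(0,0,0)\}$ from the conditional distribution of three i.i.d.\ Bernoulli$(p)$ bits given that they are not all zero. This distribution assigns probability $p^{|b|}(1-p)^{3-|b|}/P$ to each nonzero $b$.
\item[2)] For each $r\in\{1,2,3\}$, let $U^{(r)}$ be the subsequence of $U$ consisting of those $U_t$ with $B_t^{(r)}=1$, kept in order.
\end{itemize}

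To verify correctness, I will condition on the set $T\subseteq\{1,\ldots,n\}$ of original positions appearing in $U$. The procedure assigns i.i.d.\ nonzero labels to the positions in $T$. Assigning the label $(0,0,0)$ to every position outside $T$ reproduces exactly the joint law of i.i.d.\ $B_i$ over all $i\in\{1,\ldots,n\}$. This follows from a short factorization: the probability of any full labeling equals $\prod_{i\in T}P\cdot\frac{p^{|b_i|}(1-p)^{3-|b_i|}}{P}\cdot\prod_{i\notin T}(1-p)^3$, which is the product measure. Each $U^{(r)}$ is then $x$ restricted to $\{i:b_i^{(r)}=1\}$. Since the $B_i$ have independent coordinates, the three retained sets are independent, and each has the law of $\Del{p}{x}$. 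The construction never looks at $x$ or at the original positions, only at $U$. So it is a valid simulation from $U$ alone, using extra randomness.

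I expect no serious obstacle; the only care needed is the bookkeeping in the factorization above. Afterwards I will record the consequence that is presumably the reason for this lemma. Let $A_t$ denote the set of $r$ with $B_t^{(r)}=1$. Then $\Exp[U^{(1)}_{s_1}\cdots]\cdot\Exp[U^{(2)}_{s_2}\cdots]\cdot\Exp[U^{(3)}_{s_3}\cdots]$ becomes an expectation over $U\sim\Del{P}{x}$ and the labels. Here, for each $r$, the $j$-th retained position of channel $r$ is the $j$-th index $t$ with $r\in A_t$. Averaging over the labels writes the product of three order-$k$ statistics as a linear combination of order-$\leq 3k$ statistics of $U$, with nonnegative coefficients.
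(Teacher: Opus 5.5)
Your proposal is correct and follows the paper's proof. Both keep, via a single channel of retention probability $P=1-(1-p)^3$, exactly the bits retained by at least one of the three channels, then assign each surviving bit to the three channels with fresh randomness from the conditional distribution given a nonzero label. Your explicit factorization against the product measure spells out the verification that the paper leaves implicit.
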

\begin{proof}
    Given a binary string $x$, for each location $j$, the three independent deletion channels will flip one coin each of bias $1-p$ and keep bit $j$ according to the results of these coins. The probability that at least one of the deletion channels keeps bit $j$ is $1-(1-p)^3$. Thus we can simulate all three deletion channels by conditioning on this event: store via a new deletion channel only those bits kept by at least one of the three channels (which occurs with retention probability $1-(1-p)^3$ ); and then simulate the rest with fresh randomness.  
\end{proof}

As a simple consequence of the above, each term in Equation~\ref{eq:third-order} can be written as a linear combination of statistics of order at most $3k$. 

\begin{lemma}\label{lem:alpha}
Let $P=1-(1-p)^3$. Given three statistics $s_1,s_2,s_3$ we can express $x_p^{(s_1)}x_p^{(s_2)}x_p^{(s_3)}$ as a linear combination $\sum_S \alpha(S) x_P^{(S)}$ with nonnegative coefficients $\alpha_{s_1,s_2,s_3}(S)$.
\end{lemma}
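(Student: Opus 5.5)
We will derive this directly from the coupling in Lemma~\ref{lem:simulation}. Sample a single trace $V\sim\Del{P}{x}$ with $P=1-(1-p)^3$. Conditioned on the event that a bit of $x$ is kept by at least one of three independent $\Del{p}{x}$ channels, the pattern of which channels keep it is a random nonempty subset $T\subseteq\{1,2,3\}$. This subset has distribution $\Pr[T]=p^{|T|}(1-p)^{3-|T|}/P$, and the subsets are independent across the bits of $V$. Assigning each position $m$ of $V$ an independent label $T_m$ with this law, and letting trace $U^{(a)}$ consist of the bits $V_m$ with $a\in T_m$ in order, yields three independent traces $U^{(1)},U^{(2)},U^{(3)}\sim\Del{p}{x}$. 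Hence
\[
x_p^{(s_1)}x_p^{(s_2)}x_p^{(s_3)}=\Exp\Big[\prod_{a=1}^3\prod_{i} U^{(a)}_{(s_a)_i}\Big],
\]
where the expectation is over $V$ and the labels.

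Next, condition on the labels. Position $t$ of $U^{(a)}$ is the $t$-th index $m$ of $V$ with $a\in T_m$. This index is a deterministic function $\pi_a(t;T)$ of the label sequence $T=(T_1,T_2,\ldots)$. We extend the labels to an infinite i.i.d.\ sequence independent of $V$, and take the convention that positions beyond $|V|$ read as $0$. Both sides then agree with the paper's convention that out-of-range accesses return $0$. The product of bits is then $\prod_{a,i}V_{\pi_a((s_a)_i;T)}$. Because bits are $0/1$, repeated indices collapse, so this equals $\prod_{m\in M(T)}V_m$, where $M(T)$ is the set of distinct indices hit. Its size is at most $|s_1|+|s_2|+|s_3|\le 3k$.

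Since $T$ is independent of $V$, we condition on $T$ and take expectation over $V$ to get $x_P^{(S(T))}$. Here $S(T)$ is the tuple listing $M(T)$. Averaging over $T$ gives
\[
\sum_S \alpha(S)\,x_P^{(S)},\qquad \alpha(S)=\Pr_T[S(T)=S]\ge 0.
\]
These coefficients are nonnegative and sum to $1$.

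The main point requiring care is the independence of the label sequence from $V$ and the claim that the labeled construction exactly reproduces three independent $\Del{p}{x}$ traces. For independence, each original bit independently decides first whether to be kept, with probability $P$, and then its label. So the labels of the kept bits, read in order, are i.i.d.\ and independent of which bits were kept. Padding the sequence with further i.i.d.\ labels changes nothing, since those positions correspond to absent bits.

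A secondary issue is that $M(T)$ depends on only finitely many labels: the largest index needed is $\pi_a(\max_i (s_a)_i;T)$, which is finite almost surely. So $\alpha$ is a genuine probability distribution on finitely-supported tuples, and the sum may be infinite but converges absolutely.
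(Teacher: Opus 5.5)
Your proposal is correct and takes the same route as the paper: you couple the three independent $p$-traces to a single $P$-trace via Lemma~\ref{lem:simulation}, and you define $\alpha(S)$ as the probability that the union of the $V$-positions feeding the queried indices of $s_1,s_2,s_3$ equals $S$. Your version also spells out details the paper leaves implicit: that the labels are independent of $V$, and that the convention that out-of-range reads return $0$ holds on both sides.
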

\begin{proof}
    Each quantity in the product is the expectation of a statistic $s_i$ on a trace $U_i$ with retention probability $p$. Simulating independent traces $U_1,U_2,U_3$ by a single trace $U$ with retention probability $1-(1-p)^3$ by Lemma~\ref{lem:simulation}, each bit of $U_1,U_2,U_3$ must correspond to some bit in $U$. Defining $\alpha(S)$ to be the probability that $S$ equals the union of the bits in $U$ that became the bits in $U_1$ of indices $s_1$; the bits in $U$ that became the bits in $U_2$ of indices $s_2$; and the bits in $U$ that became the bits in $U_3$ of indices $s_3$, the lemma follows.
\end{proof}

We now study the coefficients $\alpha(S)$ for each statistic $S$ applied to a trace $U \sim \Del{P}{x}$. Considering only those statistics $S$ which start at the beginning of the trace, i.e., $\min_i S_i = 1$, we consider the coefficients for each shift of $s_1,s_2,s_3,S$. We show that the coefficients $\alpha$ of different shifts can be related to each other by a somewhat simpler function $\beta$.

\begin{lemma}\label{lem:beta}
Let $P=1-(1-p)^3$. Let $s_1,s_2,s_3$ be three order-$k$ statistics each with minimum index $\min_i s_1(i)=\min_i s_2(i)=\min_i s_3(i)=1$. We can express $x_p^{(s_1+j_1)}x_p^{(s_2+j_2)}x_p^{(s_3+j_3)}$ as a mixture $\sum_{j\geq 0, S:\min_i S_i=1} \alpha_{s_1+j_1,s_2+j_2,s_3+j_3}(S+j) x_P^{(S+j)}$ where each $S$ is a statistic of order $\leq 3k$; the coefficients are nonnegative and satisfy \[\alpha_{s_1+j_1,s_2+j_2,s_3+j_3}(S+j)=\sum_{\ell_1,\ell_2,\ell_3:\min(\ell_1,\ell_2,\ell_3)=0} \alpha_{s_1+\ell_1,s_2+\ell_2,s_3+\ell_3}(S) \beta_p(j_1-\ell_1,j_2-\ell_2,j_3-\ell_3,j)\]
where we define $\beta_p(j_1,j_2,j_3,j)$ to be the probability that when using $j$ bits of a $P$-trace to simulate three $p$-traces, the number of retained bits from the three traces equal $j_1,j_2,j_3$ respectively.
\end{lemma}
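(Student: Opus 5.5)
The plan is to make the simulation of Lemma~\ref{lem:simulation} explicit and then condition on the position in the $P$-trace $U$ of the first bit that is used by any of the three simulated traces. Concretely, I would generate $U\sim\Del{P}{x}$ and, independently for each retained bit, label it with a nonempty subset $T\subseteq\{1,2,3\}$ (the set of simulated traces that keep it), using the conditional law of the three independent $p$-coins given that at least one is heads. Trace $U_i$ is then the subsequence of bits of $U$ whose label contains $i$. This realizes three independent $\Del{p}{x}$ traces exactly, which is just Lemma~\ref{lem:simulation}. As in Lemma~\ref{lem:alpha}, the product $x_p^{(s_1+j_1)}x_p^{(s_2+j_2)}x_p^{(s_3+j_3)}$ equals the expectation of the product over $i$ of the bits of $U_i$ at positions $s_i+j_i$. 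Each such bit is a bit of $U$, so the product equals $U$'s product over the random set of positions $\mathcal{S}$ in $U$ that these bits came from. Since the labels are independent of $x$ and of $U$'s contents, we get $\sum_{S'}\Pr[\mathcal{S}=S']\,x_P^{(S')}$. Every nonempty $S'$ can be written uniquely as $S+j$ with $\min_i S_i=1$ and $j\ge 0$, and $|S'|\le 3k$. This gives the stated mixture with nonnegative coefficients $\alpha(S+j)=\Pr[\mathcal{S}=S+j]$.

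For the convolution identity, I would condition on $j$, where $j+1$ is the smallest element of $\mathcal{S}$, and look at the label sequence of $U$. Let $m_i$ be the number of bits among the first $j$ positions of $U$ whose label contains $i$. Because labels are i.i.d.\ across positions, the first $j$ labels and the labels from position $j+1$ onward are independent, and the law of the tail is a shifted copy of the original label process. The count vector $(m_1,m_2,m_3)$ after $j$ positions has distribution $\beta_p(\cdot,\cdot,\cdot,j)$, by the definition of $\beta_p$. Given $(m_1,m_2,m_3)$, the requested bits of $U_i$ lie at positions $s_i+j_i-m_i$ of the tail trace, for tail traces $U_i'$. The event $\mathcal{S}=S+j$ with $\min S=1$ is then exactly the event that the tail process maps the shifted statistics $s_i+(j_i-m_i)$ to the set $S$. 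In particular, the tail's first bit is used, and all $m_i\le j_i$.

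Setting $\ell_i=j_i-m_i\ge0$ and summing over $(m_1,m_2,m_3)$ gives $\alpha_{s_1+j_1,s_2+j_2,s_3+j_3}(S+j)=\sum_{\ell}\beta_p(j_1-\ell_1,j_2-\ell_2,j_3-\ell_3,j)\,\alpha_{s_1+\ell_1,s_2+\ell_2,s_3+\ell_3}(S)$. The restriction $\min(\ell_1,\ell_2,\ell_3)=0$ remains to be justified. Since $\min_i S_i=1$, the tail's first position carries a requested bit. Its label $T$ is nonempty, so for some $i\in T$ it is the first bit of $U_i'$, and that bit is requested only if $\min(s_i+\ell_i)=1$, i.e.\ $\ell_i=0$. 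So terms with all $\ell_i>0$ have $\alpha(S)=0$ and can be dropped, or the sum can be kept over all $\ell$ as the statement intends.

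The main obstacle is bookkeeping the independence carefully. The event "first used position is $j+1$" must factor into a statement about the first $j$ labels (which determine the $m_i$) and a statement about the tail. It is not required that the first $j$ bits be unused. The prefix only contributes through its counts, since the requested positions $s_i+j_i$ are fixed and any prefix bit of $U_i$ shifts the indices in the tail. I would double-check one point: the $\beta_p$ event is just the counts, with no condition that prefix bits avoid requested positions. This holds automatically, because requested bits of $U_i$ at positions at most $m_i$ would lie in the prefix, and that is excluded precisely by $\ell_i\ge0$ together with $\min(s_i+\ell_i)\ge1$.
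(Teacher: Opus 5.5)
Your proposal is correct and takes essentially the same approach as the paper: condition on the counts $(j_1-\ell_1,j_2-\ell_2,j_3-\ell_3)$ of the first $j$ bits of the $P$-trace that go to each simulated trace, which occur with probability $\beta_p$, and then use that the remaining label process is a fresh copy, so the conditional probability is $\alpha_{s_1+\ell_1,s_2+\ell_2,s_3+\ell_3}(S)$. Your version also spells out two points the paper only asserts: why the prefix and tail labels are independent, and why only terms with all $\ell_i\ge 0$ and $\min_i \ell_i=0$ contribute.
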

\begin{proof}
    Consider simulating three $p$-traces $U_1,U_2,U_3$ with a single $P$-trace $U$.
    By \Cref{lem:alpha}, the coefficient $\alpha_{s_1+j_1,s_2+j_2,s_3+j_3}(S+j)$ is the probability that the union of the bits at indices $s_1+j_1,s_2+j_2,s_3+j_3$ of the three $p$-traces is the set of bits at $S+j$ in $U$. This event requires that bit $j+1$ of $U$ corresponds to either bit $j_1+1$ of $U_1$, $j_2+1$ of $U_2$, or $j_3+1$ of $U_3$. Conditioned on the number of the first $j$ bits of $U$ that are retained by $U_1,U_2,U_3$ respectively being $j_1-\ell_1,j_2-\ell_2,j_3-\ell_3$, (with $\min(\ell_1,\ell_2,\ell_3) = 0$), the probability that the union of the three sets is $S+j$ is $\alpha_{s_1+\ell_1,s_2+\ell_2,s_3+\ell_3}(S)$. The probability of the event we conditioned on was defined as $\beta_p(j_1-\ell_1,j_2-\ell_2,j_3-\ell_3,j)$.    
    Summing over all possible tuples $(\ell_1,\ell_2,\ell_3)$ gives the desired result. 
\end{proof}

We will also require a basic support bound on the sequence $\alpha(S+j)$. 

\begin{lemma}\label{lem:alpha-range}
For statistics $s_1,s_2,s_3,S$ each with minimum index 1, the coefficient $\alpha_{s_1+j_1,s_2+j_2,s_3+j_3}(S+j)$ is 0 unless $\min(j_1,j_2,j_3)\leq j\leq j_1+j_2+j_3$.
\end{lemma}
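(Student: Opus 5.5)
We argue directly from the combinatorial meaning of $\alpha$ given in Lemma~\ref{lem:alpha} and Lemma~\ref{lem:beta}. We simulate three $p$-traces $U_1,U_2,U_3$ from one $P$-trace $U$ as in Lemma~\ref{lem:simulation}. Every bit of $U$ is retained by at least one of $U_1,U_2,U_3$, and every bit of each $U_t$ is some bit of $U$. Moreover, this correspondence is monotone: the map from positions of $U_t$ to positions of $U$ is strictly increasing. The coefficient $\alpha_{s_1+j_1,s_2+j_2,s_3+j_3}(S+j)$ is the probability that the union of the $U$-positions corresponding to the queried positions $s_t+j_t$ of $U_t$ equals $S+j$. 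Since $\min_i S_i=1$, the smallest element of $S+j$ is $j+1$. So it suffices to show that whenever this event has positive probability, the minimum position $m$ of the union satisfies $\min(j_1,j_2,j_3)+1\leq m\leq j_1+j_2+j_3+1$.

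For the lower bound, note that the smallest queried position in $U_t$ is $j_t+1$, since $\min_i s_t(i)=1$. Under a strictly increasing map from positions of $U_t$ into positions of $U$, position $j_t+1$ maps to a position of $U$ that is at least $j_t+1$. Hence every element of the union is at least $\min_t(j_t+1)$, giving $j\geq \min(j_1,j_2,j_3)$.

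For the upper bound, let $t^\ast$ attain the minimum, so that $m$ is the $U$-position of bit $j_{t^\ast}+1$ of $U_{t^\ast}$ for some $t^\ast$. We must show that $m\leq j_1+j_2+j_3+1$. We do this by counting the $m-1$ bits of $U$ strictly before position $m$. Each such bit is retained by at least one $U_t$. We claim that the number of bits of $U_t$ lying strictly before $U$-position $m$ is at most $j_t$ for each $t$. If some $U_t$ had $j_t+1$ or more bits before $m$, then its bit $j_t+1$, which is queried because position $s_t(i)+j_t=j_t+1$ occurs in the statistic, would sit at a $U$-position less than $m$. That contradicts the minimality of $m$ as the smallest element of the union. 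A union bound then gives $m-1\leq j_1+j_2+j_3$, that is, $j\leq j_1+j_2+j_3$.

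The only subtle point is making precise that queried position $j_t+1$ of each $U_t$ actually lands in the union. This holds because $\min_i s_t(i)=1$, so each shifted statistic includes index $j_t+1$. We also need that $U_t$ has at least $j_t+1$ bits, which is otherwise handled by the convention that the statistic is zero and contributes nothing. With this observation both bounds are immediate counting arguments, so we expect no real obstacle beyond bookkeeping of the correspondence between the traces.
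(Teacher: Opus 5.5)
Your proof is correct and follows the same argument as the paper. The lower bound comes from the monotone embedding of each $U_t$ into $U$. The upper bound comes from counting the $j$ bits of $U$ before position $j+1$: each lies in some $U_t$, and $U_t$ can contribute at most $j_t$ of them before its first queried bit $j_t+1$.
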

\begin{proof}
    Consider the process described by Lemma~\ref{lem:simulation}, simulating three $p$-traces $U_1,U_2,U_3$ with a single $P$-trace $U$. Bit $1+j_1$ of $U_1$ must correspond to a bit of $U$ at position $\geq 1+j_1$, and similarly for bits $1+j_2$ of $U_2$ and $1+j_3$ of $U_3$. Therefore, the first bit contained in $S+j$---which is the $1+j$th bit of $U$---cannot appear in $s_1,s_2$ or $s_3$ if $j_1,j_2,j_3 > j$. 

    To show the upper bound on $j$, consider the locations in $U$ before the queried bits in $S+j$: each of the bits $1,\ldots,j$ of $U$ must appear in at least one of $U_1,U_2,U_3$. But before the queried locations $s_m+j_m$ in $U_m$, trace $m$ can have at most $j_m$ retained bits. Thus $j\leq j_1+j_2+j_3$, giving the claimed upper bound. 
\end{proof}

\subsection{Fourier Properties}

Now that we understand the functions $\alpha$ and $\beta_p$ relating a three way product of expectations to a single expectation, we show they satisfy certain Fourier properties. In this subsection, our goal is to show that the sequence $\beta_p$, weighted by a reasonably well-behaved function, has bounded Fourier transform at large frequencies. 

To begin, we write down the generating function of $\beta_p$. (Recall that $\beta_p$ is the pdf of a joint probability distribution over three variables $j_1,j_2,j_3$.)

\begin{lemma}
The probability $\beta_p(j_1,j_2,j_3,j)$ equals the coefficient of $z_1^{j_1}z_2^{j_2}z_3^{j_3}$ in the formal power series $\left(1-\frac{1}{P}+\frac{1}{P}\prod_{m=1}^3((1-p)+pz_m)\right)^j$, letting $P = 1-(1-p)^3$.
\end{lemma}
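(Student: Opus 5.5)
The plan is to compute the joint generating function of the counts one bit of the $P$-trace at a time, and then raise it to the $j$-th power using independence across bits.

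First I will pin down the simulation of Lemma~\ref{lem:simulation} at the level of a single position of $x$. Each position carries a triple of independent coins $(b_1,b_2,b_3)\in\{0,1\}^3$, where $b_m=1$ means trace $U_m$ retains the bit, so each $b_m$ is $1$ with probability $p$. The $P$-trace $U$ retains the position exactly when $(b_1,b_2,b_3)\neq(0,0,0)$, which happens with probability $P=1-(1-p)^3$. Conditioned on retention, the triple has the law of three independent $p$-coins conditioned on not all being zero. Positions of $x$ are independent of one another. Therefore, among the bits retained in $U$, these conditioned triples are i.i.d., and they are independent of how many bits were retained.

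Next, fix the first $j$ bits of $U$. The number $j_m$ of those bits retained by $U_m$ is the sum of the $b_m$-coordinates of $j$ i.i.d. conditioned triples. So $\beta_p(j_1,j_2,j_3,j)$ is the coefficient of $z_1^{j_1}z_2^{j_2}z_3^{j_3}$ in $Q(z)^j$, where $Q$ is the generating function of a single conditioned triple. This step uses only the standard fact that the generating function of a sum of independent vectors is the product of their generating functions.

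Then I will compute $Q$. The unconditioned triple has generating function $\prod_{m=1}^3((1-p)+pz_m)$. Its value at $(0,0,0)$ is $(1-p)^3=1-P$, and this is exactly the weight on the zero monomial. Removing that weight and renormalizing gives
\[Q(z)=\frac{1}{P}\Big(\prod_{m=1}^3((1-p)+pz_m)-(1-P)\Big)=1-\frac{1}{P}+\frac{1}{P}\prod_{m=1}^3((1-p)+pz_m).\]
Substituting this into $Q(z)^j$ gives the claimed formula. Since $Q$ is a polynomial, taking coefficients of the power series is harmless.

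The only subtle point is justifying that the first $j$ bits of $U$ correspond to i.i.d. conditioned triples. This requires noting that the event that a given set of positions is retained is determined by those positions' own coins. Conditioning on which positions form the first $j$ retained bits therefore leaves each retained position's triple distributed as the conditioned law, independently across positions.
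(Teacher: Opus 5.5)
Your proposal is correct and follows the paper's proof. Both take the single-bit generating function $\prod_{m=1}^3((1-p)+pz_m)$, remove the constant term $(1-p)^3=1-P$, renormalize by $\frac{1}{P}$ to condition on retention, and raise to the $j$th power using independence across retained bits; your extra argument that the conditioned triples for the first $j$ bits of $U$ are i.i.d.\ makes explicit a step the paper only asserts (it also follows directly from the fresh-randomness construction in Lemma~\ref{lem:simulation}).
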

\begin{proof}
    The allocation of a single bit to a single $p$-trace $m \in \{1,2,3\}$ follows a Bernoulli distribution with generating function $(1-p) + pz_m$. The generating function of the joint distribution over all three $p$-traces is therefore $\prod_{m=1}^3 (1 - p + pz_m)$. The probability that the bit is allocated to zero traces is the constant term $1 - P = (1-p)^3$. To condition on the complement of this event, we set the constant term to zero and normalize the remaining coefficients: 
    \begin{equation*}
        \frac{1}{P} \left( \prod_{m=1}^3 (1-p + pz_m) - (1-P) \right) = 1 - \frac{1}{P} + \frac{1}{P} \prod_{m=1}^3 (1-p + pz_m)
    \end{equation*}
    Taking the $j$th power of this expression corresponds to repeating the process for $j$ trials.  
\end{proof}

We now establish several somewhat technical tools for studying the Fourier transform of $\beta_p$. 

\begin{lemma}\label{lem:beta-expression}
For a frequency $\xi$ and nonnegative integer $j$ we have

\[\hspace{-1.5cm}
\sum_{\ell \in\mathbb{Z}}e^{i\xi\ell}\beta_p(j_1+\ell,j_2+\ell,j_3+\ell,j)=\frac{1}{(2\pi)^2}\int_{\substack{\xi_1,\xi_2\in[0,2\pi]\\\xi_3\equiv \xi-(\xi_1+\xi_2) \mod 2\pi}}e^{-i(j_1 \xi_1+j_2\xi_2+j_3\xi_3)}\left(1-\frac{1}{P}+\frac{1}{P}\prod_{m=1}^3((1-p)+p e^{i \xi_m})\right)^j\,d\xi_1\,d\xi_2\]
\end{lemma}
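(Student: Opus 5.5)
The identity is standard Fourier inversion applied to the generating function of the previous lemma. Write
\[Q(z_1,z_2,z_3):=\Big(1-\tfrac{1}{P}+\tfrac{1}{P}\prod_{m=1}^3((1-p)+pz_m)\Big)^j=\sum_{a_1,a_2,a_3\ge 0}\beta_p(a_1,a_2,a_3,j)\,z_1^{a_1}z_2^{a_2}z_3^{a_3}.\]
This is a polynomial of total degree at most $3j$, so every sum below is finite and no convergence issues arise. Setting $z_m=e^{i\xi_m}$ turns it into a trigonometric polynomial $\Phi(\xi_1,\xi_2,\xi_3)$, and by orthogonality of characters on the torus $[0,2\pi]^3$,
\[\beta_p(a_1,a_2,a_3,j)=\frac{1}{(2\pi)^3}\int_{[0,2\pi]^3}e^{-i(a_1\xi_1+a_2\xi_2+a_3\xi_3)}\Phi(\xi_1,\xi_2,\xi_3)\,d\xi_1\,d\xi_2\,d\xi_3 .\]

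The plan is to substitute $a_m=j_m+\ell$ and sum against $e^{i\xi\ell}$. Since only finitely many $\ell$ contribute, the sum and the integral can be interchanged, giving
\[\frac{1}{(2\pi)^3}\int e^{-i(j_1\xi_1+j_2\xi_2+j_3\xi_3)}\,\Phi\,\sum_{\ell}e^{i\ell(\xi-\xi_1-\xi_2-\xi_3)}\,d\xi .\]
The inner sum is formally $2\pi$ times a periodic delta at $\xi_1+\xi_2+\xi_3\equiv\xi\pmod{2\pi}$. Integrating out $\xi_3$ against that delta leaves the factor $\frac{1}{(2\pi)^2}$ and the constraint $\xi_3\equiv\xi-(\xi_1+\xi_2)$, which is exactly the claimed formula.

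To avoid manipulating distributions, I would instead verify the identity in the reverse direction, which is cleaner. Start from the right-hand side and expand $\Phi$ as $\sum_a\beta_p(a,j)e^{i(a_1\xi_1+a_2\xi_2+a_3\xi_3)}$. Substitute $\xi_3=\xi-\xi_1-\xi_2$, which is legitimate because all exponents are integers and so the integrand is $2\pi$-periodic. Each monomial then becomes
\[e^{i(a_3-j_3)\xi}\,e^{i\xi_1((a_1-j_1)-(a_3-j_3))}\,e^{i\xi_2((a_2-j_2)-(a_3-j_3))}.\]
Integrating over $[0,2\pi]^2$ and dividing by $(2\pi)^2$ kills every term except those with $a_1-j_1=a_2-j_2=a_3-j_3=:\ell$. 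The survivors contribute exactly $\sum_\ell e^{i\xi\ell}\beta_p(j_1+\ell,j_2+\ell,j_3+\ell,j)$.

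There is no real obstacle here. The only points needing care are, first, the periodicity justification that makes the constraint $\xi_3\equiv\xi-(\xi_1+\xi_2)\pmod{2\pi}$ well defined, and second, the observation that $\beta_p$ vanishes at negative arguments. The second point makes the sum over $\ell\in\mathbb{Z}$ agree with the finite sum of coefficients extracted from $Q$.
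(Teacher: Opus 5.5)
Your proof is correct and is essentially the paper's argument. The paper simply cites the projection-slice theorem: the left side is the Fourier transform along the diagonal line $(j_1,j_2,j_3)+\ell$, and the right side is the restriction of the 3D transform from the previous lemma to the plane $\xi_1+\xi_2+\xi_3\equiv\xi$. Your reverse-direction orthogonality computation is a direct verification of that instance, and your remarks on periodicity in $\xi_3$ (which needs $j_3\in\mathbb{Z}$) and on $\beta_p$ vanishing at negative arguments supply details the paper leaves implicit.
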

\begin{proof}
This is basically the projection-slice theorem, where the left hand side is a Fourier component along the diagonal line $(j_1,j_2,j_3)+\ell$, and the right hand side is the Fourier transform along the dual plane $\xi_1+\xi_2+\xi_3\equiv\xi \pmod{2\pi}$ of the 3-dimensional Fourier transform of $\beta_p(\cdot,\cdot,\cdot,j)$ described in the previous lemma.
\end{proof}

\begin{lemma}\label{lem:close-angle}
For angles $\xi_1,\xi_2,\xi_3$ and $\eps \leq \frac{1}{4}$, letting $y=e^{i\xi_1}+e^{i\xi_2}+e^{i\xi_3}$, if $|y|\geq 3-\eps$, then the sum of the angles $\xi_1+\xi_2+\xi_3$ is within $\frac{\eps}{3}$ of $3\cdot angle(y)$ (interpreting angles mod $2\pi$).
\end{lemma}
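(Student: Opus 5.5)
Rotate everything so that $y$ lies on the positive real axis, and measure each angle relative to $\theta:=angle(y)$. Write $\phi_m:=\xi_m-\theta$, taken in $(-\pi,\pi]$. The claim then becomes $|\phi_1+\phi_2+\phi_3|\leq\frac{\eps}{3}$, after reducing mod $2\pi$: we have $\xi_1+\xi_2+\xi_3-3\theta=\phi_1+\phi_2+\phi_3$ mod $2\pi$. After the rotation $y$ is real, so the real and imaginary parts give
\[\sum_m \cos\phi_m=|y|\geq 3-\eps,\qquad \sum_m\sin\phi_m=0.\]

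First I bound each individual angle. Since each $\cos\phi_m\leq 1$, the real-part condition gives $1-\cos\phi_m\leq\eps$ for every $m$, and moreover $\sum_m(1-\cos\phi_m)\leq\eps$. With $1-\cos\phi=2\sin^2(\phi/2)\geq \frac{2}{\pi^2}\phi^2$ for $|\phi|\leq\pi$, this yields $\sum_m\phi_m^2\leq\frac{\pi^2}{2}\eps$. Because $\eps\leq\frac14$, every $|\phi_m|$ is small, roughly at most $1.1$.

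Next I use the imaginary-part condition to bound the sum of the angles. Since $\sum_m \sin\phi_m=0$,
\[\sum_m\phi_m=\sum_m(\phi_m-\sin\phi_m).\]
For $|\phi|$ at most about $1.1$ we have $|\phi-\sin\phi|\leq\frac{|\phi|^3}{6}$. This gives
\[\Big|\sum_m\phi_m\Big|\leq\frac16\sum_m|\phi_m|^3\leq\frac16\Big(\max_m|\phi_m|\Big)\sum_m\phi_m^2.\]

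It then remains to check that the constants come out to at most $\eps/3$; I expect this to be the only real obstacle. The crude bound above gives roughly $\frac16\cdot(\text{max angle})\cdot\frac{\pi^2}{2}\eps$. For this to be at most $\frac{\eps}{3}$, the max angle must be at most about $0.4$, which requires $\eps$ at most about $0.08$ rather than $\frac14$. To close the gap I would tighten the estimates. The first tightening uses $1-\cos\phi\geq\frac{\phi^2}{2}-\frac{\phi^4}{24}$ in place of the $\frac{2}{\pi^2}$ bound. The second uses the $\sum_m\sin\phi_m=0$ constraint to show that the maximal angle is constrained by the others. Specifically, if one $|\phi_m|$ is large, the other two must have a combined sine cancelling it, which costs cosine. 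Optimizing under this constraint gives an extremal configuration of the form $\phi_1=-2\phi_2=-2\phi_3$ approximately. In that configuration $\sum\phi_m^2\approx 2\eps$ and $|\sum\phi_m|\approx\frac{|\phi|^3}{6}\cdot O(1)$, which is $O(\eps^{3/2})$. This is well below $\eps/3$ for $\eps\leq\frac14$.

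If a clean uniform argument proves elusive, I would fall back on a Lagrange-multiplier or compactness check over the region $\eps\in(0,\frac14]$. That check would verify $|\sum_m\phi_m|/\eps\leq\frac13$, using the fact that the ratio is $O(\sqrt\eps)$ as $\eps\to0$.
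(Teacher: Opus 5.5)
Your setup is the right one and matches the paper's: rotate so $y$ is real, reduce to $\phi_m\in(-\pi,\pi]$, and use $\sum_m\sin\phi_m=0$ to write $\sum_m\phi_m=\sum_m(\phi_m-\sin\phi_m)$. The gap is that you never establish the constant $\frac13$ for $\eps$ up to $\frac14$. You correctly note that the chain $|\phi-\sin\phi|\le|\phi|^3/6$, $\sum_m\phi_m^2\le\frac{\pi^2}{2}\eps$ is too lossy. What follows does not prove the bound:
\begin{itemize}
\item the proposed tightening is never carried out;
\item the extremal configuration $\phi_1\approx-2\phi_2\approx-2\phi_3$ is a heuristic that only gives an $O(\eps^{3/2})$ asymptotic with no explicit constant;
\item the Lagrange-multiplier/compactness check is not performed.
\end{itemize}
A ratio that is $O(\sqrt{\eps})$ as $\eps\to0$ says nothing by itself about $\eps=\frac14$.

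The missing idea is to compare $|\phi_m-\sin\phi_m|$ termwise against $1-\cos\phi_m$, instead of routing through $\sum_m\phi_m^2$ with a lossy constant. You should also bound the individual angles sharply. From $\cos\phi_m\ge1-\eps\ge\frac34$ you get $|\phi_m|\le\arccos\frac34<0.73$ directly; your bound of $1.1$ is an artifact of the weak $\frac{2}{\pi^2}$ inequality. For such $\phi$, the bounds $|\phi-\sin\phi|\le\frac{|\phi|^3}{6}$ and $1-\cos\phi\ge\frac{\phi^2}{2}\bigl(1-\frac{\phi^2}{12}\bigr)$ give
\[
|\phi-\sin\phi|\le\frac{|\phi|/3}{1-\phi^2/12}\,(1-\cos\phi)\le\frac13(1-\cos\phi),
\]
since the ratio is increasing in $|\phi|$ and is about $0.25$ at $|\phi|=0.73$. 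Summing over $m$ gives $|\sum_m\phi_m|\le\frac13\sum_m(1-\cos\phi_m)\le\frac{\eps}{3}$. This is exactly the paper's argument: it cites the pointwise inequality $|\xi-\sin\xi|\le\frac13(1-\cos\xi)$ for $|\xi|\le0.8$. No extremal analysis and no further use of the sine constraint are needed.

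Your own ``first tightening'' would also have closed the gap once combined with the sharp angle bound. It gives $\sum_m\phi_m^2\le2.1\,\eps$, and $\frac16\cdot0.73\cdot2.1\approx0.26<\frac13$. Your conclusion that one needs $\eps\lesssim0.08$ came from the $\frac{\pi^2}{2}$ constant and the crude angle bound, not from a genuine obstruction.
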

\begin{proof}
Without loss of generality we rotate everything so that $y$ is real (with angle 0). Then our assumptions state that $\sin(\xi_1)+\sin(\xi_2)+\sin(\xi_3)=0$ and that $3-\cos(\xi_1)-\cos(\xi_2)-\cos(\xi_3)\leq \eps$. Thus for each of $\xi_1,\xi_2,\xi_3$ we have $1-\cos(\xi_i)\leq \eps\leq\frac{1}{4}$ meaning that $|\xi_i|\leq 0.8$. We have for any angle $|\xi|\leq 0.8$ that $|\xi-\sin(\xi)|\leq \frac{1}{3}(1-\cos(\xi))$; so thus applying the triangle inequality on each of $\sin(\xi_1)$, $\sin(\xi_2)$, and $\sin(\xi_3)$, and adding, we conclude $|\xi_1+\xi_2+\xi_3|\leq \frac{\eps}{3}$
\end{proof}
\begin{lemma}\label{lem:low-frequency-or-decay}
Let $p\in(0,\frac{1}{4}]$ and let $\xi\in[-\pi,\pi]$. Given $\xi_1+\xi_2+\xi_3\equiv \xi \pmod{2\pi}$, then for any $\tau\leq p$  the quantity $z=\frac{\left(\prod_{m=1}^3((1-p)+p e^{i \xi_m})\right)-(1-p)^3}{1-(1-p)^3}$ either has radius $\leq 1-\frac{\tau}{4}$ or has angle with magnitude $\leq |\xi|+\tau$; and $|z|\leq 1$ always.
\end{lemma}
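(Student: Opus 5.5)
The plan is to write $w_m:=(1-p)+pe^{i\xi_m}$, so that $z=\frac{w_1w_2w_3-(1-p)^3}{P}$ with $P=1-(1-p)^3$.

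\textbf{Step 1: $|z|\leq 1$.} Expanding the product gives
\[w_1w_2w_3=\sum_{T\subseteq\{1,2,3\}}(1-p)^{3-|T|}p^{|T|}e^{i\sum_{m\in T}\xi_m}.\]
Subtracting the $T=\emptyset$ term leaves $P$ times a convex combination of unit complex numbers, so $|z|\leq 1$ immediately.

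\textbf{Step 2: set up the dichotomy.} Write $z=\sum_{T\neq\emptyset}q_T e^{i\phi_T}$, where $q_T=(1-p)^{3-|T|}p^{|T|}/P$ are probability weights and $\phi_T=\sum_{m\in T}\xi_m$. Assume $|z|>1-\frac{\tau}{4}$; I need to show $|angle(z)|\leq|\xi|+\tau$.

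The standard fact I will use is this: if a convex combination $\sum_T q_T e^{i\phi_T}$ has modulus at least $1-\eta$, then, rotating so that $angle(z)=0$,
\[\sum_T q_T\bigl(1-\cos(\phi_T-\psi)\bigr)\leq\eta,\qquad \psi:=angle(z).\]
Hence each $\phi_T$ with weight $q_T$ lies within $O(\sqrt{\eta/q_T})$ of $\psi$.

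\textbf{Step 3: extract the angle of $z$ from the full product.} The most useful term is $T=\{1,2,3\}$, with $\phi_T\equiv\xi \pmod{2\pi}$. Its weight is $q=p^3/P\geq p^2/3$, which is small but controlled. The singletons carry weight about $(1-p)^2p/P\approx 1/3$ each, so each $\xi_m$ is within $O(\sqrt{\eta})$ of $\psi$.

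The difficulty is the mismatch: the singletons force $\psi\approx\xi_m$, whereas $\xi_1+\xi_2+\xi_3\equiv\xi$ gives $3\psi\approx\xi$. That yields only $|\psi|\approx|\xi|/3\leq|\xi|$, but the error must be at most $\tau$, while $\sqrt{\eta}\sim\sqrt{\tau}$ is larger than $\tau$. So a crude perturbation bound fails, and I need a first-order cancellation argument in the spirit of Lemma~\ref{lem:close-angle}.

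\textbf{Step 4: the first-order cancellation.} Rotating so $\psi=0$, the condition $\mathrm{Im}(z)=0$ together with $\sum_T q_T(1-\cos\phi_T)\leq\eta$ should give
\[\Bigl|\sum_T q_T\phi_T\Bigr|\leq O(\eta),\]
using $|\phi-\sin\phi|\lesssim(1-\cos\phi)$ for small angles, as in Lemma~\ref{lem:close-angle}. Since the weights are symmetric in $m$, $\sum_T q_T\phi_T=c\,(\xi_1+\xi_2+\xi_3)$ with
\[c=\frac{(1-p)^2p+2(1-p)p^2+p^3}{P}=\frac{p}{P}\geq\frac13.\]
Writing $\xi_1+\xi_2+\xi_3=\xi-3\psi+2\pi k$ in the rotated frame, this gives the congruence
\[\xi\equiv 3\psi \pm O(\eta) \pmod{2\pi}.\]
The remaining issue is the multiple of $2\pi$. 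Each $\xi_m-\psi$ is $O(\sqrt{\eta})$ small, which pins $k$ down. Therefore $|\psi|\leq\frac{|\xi|}{3}+O(\eta)\leq|\xi|+\tau$.

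\textbf{Step 5: check the constants.} Near $\psi\approx\pm\pi/3$ with $|\xi|\approx\pi$, the aliasing $3\psi\equiv\xi$ could in principle select $\psi=(\xi\pm2\pi)/3$, whose modulus may exceed $|\xi|$. I expect this to be excluded, since $|\xi|\leq\pi$ gives $|(\xi\pm 2\pi)/3|\leq\pi$, but it must be checked. I also need to confirm that the small-angle regime holds: since $\eta=\tau/4\leq p/4\leq 1/16$, the singleton weights about $1/3$ force each $|\xi_m-\psi|\leq 0.8$. The main obstacle is getting the $O(\eta)$ first-order bound in Step 4 rather than the lossy $O(\sqrt{\eta})$ bound.
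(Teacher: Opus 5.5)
Your proposal has two genuine gaps: the key identity in Step~4 is wrong, and the aliased case (the multiple of $2\pi$) is never excluded.

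\textbf{Step~4.} The identity ``$\sum_T q_T\phi_T=c(\xi_1+\xi_2+\xi_3)$ in the rotated frame'' conflates two different rotations. Multiplying $z$ by $e^{-i\psi}$ subtracts $\psi$ from every $\phi_T$. Replacing each $\xi_m$ by $\xi_m-\psi$ instead subtracts $|T|\psi$ from $\phi_T$. Since $z$ mixes terms of degree $1,2,3$ in the $e^{i\xi_m}$, these are not the same operation. Write $\theta_m:=\xi_m-\psi$ and let $\theta_T$ be the rotated angle of the $T$-term. Then $\theta_T\equiv\sum_{m\in T}\theta_m+(|T|-1)\psi$, so first-order cancellation gives $c\sum_m\theta_m+(3c-1)\psi\approx 0$. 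That is, $\psi\approx c\,\xi=\xi/(3-3p+p^2)$, not $3\psi\approx\xi$.

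Your congruence $\xi\equiv 3\psi\pm O(\eta)$ is in fact false. Take $\xi_1=\xi_2=\xi_3=a$ small and $\xi=3a$. Then $\psi=\frac{3a}{3-3p+p^2}+O(a^3)$ and $1-|z|=\Theta(pa^2)$. With $\tau$ of order $pa^2$ the hypothesis $|z|>1-\frac{\tau}{4}$ holds, but $|3\psi-\xi|=\Theta(pa)$, which is not $O(\tau)$ as $a\to 0$. The lemma's conclusion still holds in this example; what fails is your intermediate bound $|\psi|\le\frac{|\xi|}{3}+O(\eta)$.

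Also, only the singleton angles are forced small by $1-|z|<\eta$. The pair terms have weight about $p/3$, so their rotated angles can be of order $1$. The triple term has weight about $p^2/3$, which can be below $\eta/2$, leaving its angle unconstrained. So the small-angle inequality does not apply termwise. A global version on $(-\pi,\pi]$ exists, but then the linear identity above holds only up to $2\pi$-branch terms that you would have to control.

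\textbf{The multiple of $2\pi$.} This is exactly where the lemma has content, and smallness of the $\xi_m-\psi$ does not pin $k$ down. Take $\xi=0$ and $\xi_1=\xi_2=\xi_3=\frac{2\pi}{3}$. The singletons are perfectly aligned, yet $\psi=\frac{2\pi}{3}+O(p)$, so the angle alternative genuinely fails. The lemma survives only because $|z|=1-\frac{3}{2}p+O(p^2)\le 1-\frac{\tau}{4}$: the three pair terms, of total weight $\Theta(p)$, sit at angle $\frac{4\pi}{3}$, a full $\frac{2\pi}{3}$ away from the singletons. Excluding this aliased case needs a radius estimate that uses the $\Theta(p)$ pair weight together with $\tau\le p$, and your plan has nothing of the sort.

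Your Step~5 also looks at the wrong regime. The danger is small $|\xi|$, where $(\xi\pm 2\pi)/3\approx\pm\frac{2\pi}{3}$ far exceeds $|\xi|+\tau$; the bound $|(\xi\pm2\pi)/3|\le\pi$ is irrelevant.

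\textbf{How the paper avoids both problems.} It works with the homogeneous quantity $y=e^{i\xi_1}+e^{i\xi_2}+e^{i\xi_3}$.
\begin{itemize}
\item The identity $z=\frac{(1-p)^2y+p(1-p)\bar y e^{i\xi}+p^2e^{i\xi}}{3-3p+p^2}$ shows that $|z|>1-\frac{\tau}{4}$ forces $|y|\ge 3-\tau$.
\item Lemma~\ref{lem:close-angle} then gives $\xi\equiv 3\,\mathrm{angle}(y)$ up to $\frac{\tau}{3}$.
\item Without wrap-around, $z$ is a positive combination of vectors at angles $\mathrm{angle}(y)$, $\xi-\mathrm{angle}(y)$, $\xi$. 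These lie in an arc of width less than $\pi$ inside $[-|\xi|-\tau,\,|\xi|+\tau]$, so $\mathrm{angle}(z)$ does too.
\item With wrap-around, the angle between $y$ and $\bar y e^{i\xi}$ is at least $\frac{\pi}{3}-\frac16$. The law of cosines then gives $|z|\le 1-\frac{p}{4}\le 1-\frac{\tau}{4}$.
\end{itemize}
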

\begin{proof}
We will assume that the first possibility is violated, that $|z|\geq 1-\frac{\tau}{4}$, and use it to either conclude that the second condition is satisfied, or derive a contradiction.

Define $y=e^{i\xi_1}+e^{i\xi_2}+e^{i\xi_3}$. Then reexpress $z$ as the following expression of $y$ and its complex conjugate: \begin{equation}\label{eq:z-three-terms}z=\frac{(1-p)^2y+p(1-p)\overline{y}e^{i\xi}+p^2e^{i\xi}}{3-3p+p^2}\end{equation}
and thus $|z|\leq \frac{|y|(1-p)+p^2}{3-3p+p^2}=1-(3-|y|)\frac{1-p}{3-3p+p^2}$ and thus, since $\frac{1-p}{3-3p+p^2}\geq \frac{1}{4}$ we have that if $|z|>1-\frac{\tau}{4}$ then $|y|\geq 3-\tau$. We thus apply Lemma~\ref{lem:close-angle} to conclude that $\xi$ is within $\tau$ of $3 \cdot angle(y)$, interpreted mod $2\pi$. 
In the case that the angles do not ``wrap around'', so that for $\xi\in[-\pi,\pi]$ we have $|\xi-3\cdot angle(y)|\leq  \tau$, then each of the three complex coefficients in the numerator of Equation~\ref{eq:z-three-terms}, $y, \bar{y}e^{i\xi},e^{i\xi}$, have angles of magnitudes respectively $\leq \frac{1}{3}|\xi|+\frac{1}{3}\tau, \leq \frac{2}{3}|\xi|+\frac{1}{3}\tau, |\xi|$ respectively, which all have magnitude at most $|\xi|+\tau$.

On the other hand, in the ``wrap around'' case we have, for some integer $j\in\{-2,-1,1,2\}$ that $|j\cdot 2\pi+\xi-3\cdot angle(y)|\leq  \tau$; since $\xi\in[-\pi,\pi]$ we have that $|j\cdot 2\pi+\xi|\geq \pi$. The first two complex coefficients above, $y$ and  $\bar{y}e^{i\xi}$, have angles $angle(y)$ and $\xi-angle(y)$ respectively, with a difference of $\xi-2\cdot angle(y)$, which, multiplying our $\tau$ bound by $\frac{2}{3}$ has magnitude at least $|\xi-2\cdot angle(y)|\geq \frac{2}{3}\pi-\frac{1}{3}|\xi|-\frac{2}{3}\tau$. Since $|\xi|\leq\pi$, and $\tau\leq p\leq\frac{1}{4}$, this angle between $y$ and $\bar{y}e^{i\xi}$ is at least $\frac{1}{3}\pi-\frac{1}{6}$, which has cosine $<0.64$. 

From the law of cosines, for two vectors of lengtha $a,b$ respectively and angle $\theta$ between them, we have that the length of their sum equals $\sqrt{a^2+b^2+2ab\cos(\theta)}=\sqrt{(a+b)^2-2ab(1-\cos(\theta))}$ which is at most $a+b-(1-\cos(\theta))\frac{a b}{a+b}$. Thus we can upper bound $|z|$ by upper bounding the length of the sum of the terms of Equation~\ref{eq:z-three-terms} in directions $y,\bar{y}e^{i\xi}$ and adding to this the length of the third term, $\frac{p^2}{3-3p+p^2}$. We thus bound the length of the sum of the first two terms by substituting in $a=(1-p)^2\frac{|y|}{3-3p+p^2}$ and $b=p(1-p)\frac{|y|}{3-3p+p^2}$ and $1-\cos(\theta)\geq 0.36$ to get $|z|\leq \frac{|y|}{3-3p+p^2}\cdot\left(p(1-p)+(1-p)^2- 0.36\frac{p(1-p)^3}{p(1-p)+(1-p)^2}\right) +\frac{p^2}{3-3p+p^2}$. Simplifying, using $|y|\leq 3$, yields $|z|\leq 1-0.36 p \frac{3(1-p)^2
}{3-3p+p^2}$; the fraction $\frac{3(1-p)^2
}{3-3p+p^2}$ is decreasing in $p$, as can be seen by rewriting it in terms of $q=1-p$ as $\frac{3}{1+1/q+1/q^2}$ which is clearly increasing for $q\in[\frac{3}{4},1]$ and hence decreasing in $p\in [0,\frac{1}{4}]$. Thus since, when $p=\frac{1}{4}$ we have $\frac{3(1-p)^2
}{3-3p+p^2}\geq 0.72$, yielding that $|z|\leq 1-0.36 \cdot 0.72\cdot p\leq 1-\frac{1}{4}p\leq  1-\frac{\tau}{4}$, contradicting our assumption that $|z|>1-\frac{\tau}{4}$.
\end{proof}

The below lemma lets us translate bounds on polynomials on the unit circle to bounds inside the unit circle, using a basic subharmonicity argument.

\begin{lemma}\label{lem:sqrt-bound}
Let $\lambda,\delta,\eps> 0$, with $\eps\leq \frac{\pi}{2}$. Let $f(j)$ be a function from the nonnegative integers to real numbers such that $\sum_{j\geq 0} |f(j)|\leq \lambda$, and for any angle $\xi$ with $|\xi|\geq \eps$ we have $\left|\sum_{j\geq 0} e^{i j \xi} f(j)\right|\leq \delta$. Then for any real $y\in[0,1]$ and any angle $\xi$ with $|\xi|\geq \eps$ we have \[\left|\sum_{j\geq 0} e^{i j \xi} f(j) y^j\right|\leq \sqrt{\lambda\delta}\]
\end{lemma}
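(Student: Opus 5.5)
The plan is to treat $F(z):=\sum_{j\geq 0} f(j)z^j$ as an analytic function on the open unit disk and apply the two-constants theorem, i.e.\ subharmonicity of $\log|F|$ bounded via harmonic measure. Since $\sum_j|f(j)|\leq\lambda$, the series converges absolutely and uniformly on the closed disk. Hence $F$ is continuous on $\{|z|\leq 1\}$, analytic inside, and satisfies $|F|\leq\lambda$ everywhere there. By hypothesis, on the boundary arc $B_\eps:=\{e^{i\theta}:|\theta|\geq\eps\}$ we have $|F|\leq\delta$. The quantity to bound is $|F(z)|$ at $z=ye^{i\xi}$ with $y\in[0,1]$ and $|\xi|\geq\eps$.

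Two reductions come first. If $y=1$ the claim follows from $\delta\leq\sqrt{\lambda\delta}$, so we may take $y<1$. We may also assume $\delta\leq\lambda$, since otherwise $|F(z)|\leq\lambda\leq\sqrt{\lambda\delta}$ directly. Let $A:=\{e^{i\theta}:|\theta|<\eps\}$ be the complementary arc and let $\omega(z)$ denote the harmonic measure of $A$ seen from $z$. The function $\log|F|$ is subharmonic and bounded above by the continuous boundary data ($\log\lambda$ on $A$, $\log\delta$ on $B_\eps$). The Poisson integral therefore gives
\[\log|F(z)|\leq \omega(z)\log\lambda+(1-\omega(z))\log\delta .\]
Because $\delta\leq\lambda$, it suffices to show $\omega(z)\leq\frac12$, which yields $|F(z)|\leq\lambda^{1/2}\delta^{1/2}$. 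The case $F\equiv 0$ is trivial.

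The key geometric step, and the only real obstacle, is showing $\omega(z)\leq\frac12$. I will use monotonicity of harmonic measure together with the fact that for a semicircle $S$ bounded by a diameter $D$, the harmonic measure of $S$ equals exactly $\frac12$ on $D$ and is at most $\frac12$ on the closed half-disk on the far side of $D$ (by symmetry under reflection across $D$). It then suffices to find a semicircle $S\supseteq A$ such that $z$ lies on $D$ or on the side of $D$ away from $S$; this is where $\eps\leq\frac{\pi}{2}$ is used. There are two cases.
\begin{itemize}
\item If $\eps\leq|\xi|\leq\pi-\eps$, take $D$ to be the diameter through $\pm e^{i\xi}$, which contains $z$. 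Neither endpoint lies in $A$, and $A$ has length $2\eps\leq\pi$, so $A$ is contained in one of the two semicircles cut off by $D$. Then $\omega(z)\leq\frac12$.
\item If $|\xi|>\pi-\eps\geq\frac{\pi}{2}$, then $z$ lies in the closed left half-disk, while $A\subseteq\{|\theta|<\frac{\pi}{2}\}$ lies in the right semicircle. Taking $D$ to be the vertical diameter again gives $\omega(z)\leq\frac12$.
\end{itemize}

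Combining these, $\log|F(ye^{i\xi})|\leq\frac12\log\lambda+\frac12\log\delta$, which is the claimed bound $\sqrt{\lambda\delta}$. The only technical point to check carefully is applying the subharmonic maximum principle with merely continuous boundary data. This is handled by applying it on disks of radius $r<1$ and letting $r\to 1$, or by directly comparing with the Poisson integral of the upper semicontinuous boundary function, using the continuity of $F$ up to the boundary established above.
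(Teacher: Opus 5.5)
Your proposal is correct and follows the same route as the paper. Both arguments bound the subharmonic function $\log|F|$ by its Poisson average on the unit circle, using the weights $\log\lambda$ on the arc $\{|\theta|<\eps\}$ and $\log\delta$ elsewhere, after the same reduction to $\delta\leq\lambda$, and then use that the harmonic measure of that arc, seen from $ye^{i\xi}$ with $|\xi|\geq\eps$, is at most $\frac12$. The paper only asserts this last fact as following from ``a basic symmetry argument''; your two-case diameter/semicircle argument, together with your handling of boundary continuity, supplies that omitted detail correctly.
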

\begin{proof}
Without loss of generality we assume $\delta\leq \lambda$ in the proof, for otherwise we set $\delta=\lambda$ without changing the assumption, and strengthening the conclusion.

Define the polynomial $P(z)=\sum_{j\geq 0} f(j) z^j$, which converges on the unit disk in the complex plane. The lemma's conclusion is equivalent to saying that $|P(z)|\leq\sqrt{\lambda\delta}$ when $|z|\leq 1$ and $|angle(z)|\geq \eps$.

To prove this, we point out that the assumption, translated to apply to $P$, states that for inputs on the unit circle, $|P(e^{i \xi})|\leq \lambda$ always, and is at most $\delta$ when $|\xi|\geq\eps$. Further, since $P(z)$ is analytic on the unit disk, thus $\log|P(z)|$ is subharmonic on the unit disk. We thus upper bound $\log|P(z)|$ by its average over the unit circle, weighted by the Poisson kernel. We use the fact (which follows from a basic symmetry argument) that, for $\eps\in[0,\frac{\pi}{2}]$ and $z=re^{i\theta}$, then if $|\theta|\geq \eps$ then the harmonic measure from $z$ of the arc of angles $[-\eps,\eps]$ is at most $\frac{1}{2}$. 
Thus we conclude $\log|P(z)|\leq\frac{1}{2}\log\delta+\frac{1}{2}\log\lambda$; exponentiating gives the desired result.
\end{proof}

\begin{lemma}\label{lem:fourier-beta2}
Let $j_1,j_2,j_3$ be integers with at least one nonnegative, and let $p\in(0,\frac{1}{4}]$. Let $w:\mathbb{Z}\rightarrow\mathbb{R}_{\geq 0}$ have sum $\leq 1$, be supported on a positive interval $[I_-,I_+]$ and, letting $\hat{w}$ be the Fourier transform of $w$, for parameters $\eps\in(0,p], \delta\geq 0$ we assume, for $|\xi|\geq \eps$ that $|\hat{w}(\xi)|\leq \delta$. Then for frequency $|\xi'|\geq 3\eps$ we have, letting $\lambda=3I_++j_1+j_2+j_3$ that
\begin{equation}\label{eq:fourier-beta}
\left|\sum_j \left(e^{i\xi'j}\sum_\ell w(\ell)\beta_p(j_1+\ell,j_2+\ell,j_3+\ell,j)\right)\right|\leq \lambda\delta+\max(\lambda,\frac{8\pi}{\eps})\cdot e^{-\eps I_-/12}
\end{equation}
if $I_-<\lambda$, and otherwise, the left hand side is 0.
\end{lemma}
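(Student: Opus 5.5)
We will compute the left-hand side of Equation~\ref{eq:fourier-beta} through the generating function of $\beta_p$. Write $\Phi(\xi_1,\xi_2,\xi_3)=1-\frac1P+\frac1P\prod_{m=1}^3((1-p)+pe^{i\xi_m})$, which is exactly the quantity $z$ of Lemma~\ref{lem:low-frequency-or-decay}; that lemma gives $|\Phi|\le 1$. Since $w$ is supported on $[I_-,I_+]$, we first swap the sums so the left-hand side becomes $\sum_\ell w(\ell)\sum_j e^{i\xi' j}\beta_p(j_1+\ell,j_2+\ell,j_3+\ell,j)$.

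\textbf{Support.} Lemma~\ref{lem:alpha-range}-style reasoning shows $\beta_p(j_1+\ell,j_2+\ell,j_3+\ell,j)$ vanishes unless $\max_m(j_m+\ell)\le j\le \sum_m(j_m+\ell)$. Each trial retains a bit in at least one trace, so the counts sum to at least $j$; and each count is at most $j$. Hence only $j\le\lambda$ contributes when $\ell\le I_+$. Also $j\ge \max_m j_m+\ell\ge I_-$ because some $j_m\ge0$. So if $I_-\ge\lambda$ the allowed range of $j$ is empty (up to the boundary case) and the sum is $0$. Otherwise at most $\lambda$ values of $j$ matter.

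\textbf{Fourier representation.} For each fixed $j$ we invoke Lemma~\ref{lem:beta-expression}, with $w$ in place of the exponential. Summing over $\ell$ with weights $w(\ell)$ turns the factor $e^{i\xi\ell}$ into $\hat w$. The coefficient $\sum_\ell w(\ell)\beta_p(j_1+\ell,\dots,j)$ then equals $\frac{1}{(2\pi)^3}\int e^{-i(j_1\xi_1+j_2\xi_2+j_3\xi_3)}\,\overline{\hat w}(\xi_1+\xi_2+\xi_3)\,\Phi^j\,d\xi_1d\xi_2d\xi_3$, with signs fixed by the conventions. Multiplying by $e^{i\xi'j}$ and summing over $j$ in the allowed window replaces $\Phi^j$ by $\sum_j(e^{i\xi'}\Phi)^j$. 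We then split the $(\xi_1,\xi_2,\xi_3)$ domain into two parts:
\begin{itemize}
\item[(a)] the part where $|\xi_1+\xi_2+\xi_3|\ge\eps$ (mod $2\pi$), so that $|\hat w|\le\delta$;
\item[(b)] the part where the sum angle $\xi$ satisfies $|\xi|<\eps$.
\end{itemize}
On (a) we bound crudely: $|\Phi|\le1$ and at most $\lambda$ terms give the $\lambda\delta$ contribution. On (b) we apply Lemma~\ref{lem:low-frequency-or-decay} with $\tau_{lem}=\eps\le p$. Either $|\Phi|\le 1-\eps/4$, or $\arg\Phi$ lies within $|\xi|+\eps<2\eps$ of $0$.

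\textbf{Main obstacle: region (b).} In the first alternative, every contributing $j\ge I_-$ has $|\Phi^j|\le e^{-\eps I_-/4}$, and summing the geometric tail gives at most $\min(\lambda,4/\eps)e^{-\eps I_-/4}$. In the second alternative, $e^{i\xi'}\Phi$ has angle at least $|\xi'|-2\eps\ge\eps$ away from $0$. So we face an oscillatory geometric sum $\sum_{j=a}^{b}(e^{i\xi'}\Phi)^j$ whose ratio has modulus $\le1$ and angle at least $\eps$. Such a sum is bounded by about $2/|1-e^{i\xi'}\Phi|\lesssim 4\pi/\eps$, times the first-term modulus $|\Phi|^{a}$ with $a\ge I_-$. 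The difficulty is that $|\Phi|$ may be essentially $1$ here, so we get no decay from $|\Phi|^{I_-}$, while the claimed bound still needs the factor $e^{-\eps I_-/12}$. To recover it, we will interpolate between the two alternatives rather than treat them separately. Apply Lemma~\ref{lem:low-frequency-or-decay} with a $\tau_{lem}$ depending on the point, using a threshold $1-\eps/12$. Where $|\Phi|\ge 1-\eps/12$, the angle is within $|\xi|+\eps/3$, and we use the oscillation bound. Where $|\Phi|<1-\eps/12$, we use the decay $e^{-\eps I_-/12}$.

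If the oscillatory bound still lacks the exponential factor, we will switch to Lemma~\ref{lem:sqrt-bound}. We view $\sum_j f(j)y^j$, with $f(j)$ the $w$-weighted coefficients, as a polynomial in the radial variable. This transfers the smallness from the low-frequency cone, and it is where the $\max(\lambda,8\pi/\eps)$ prefactor should arise. Matching constants so that the final bound reads $\lambda\delta+\max(\lambda,\frac{8\pi}{\eps})e^{-\eps I_-/12}$ is the step we expect to require the most care.
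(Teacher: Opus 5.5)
Your support argument, the Fourier representation, the $\lambda\delta$ bound on region (a), and your treatment of the first alternative of Lemma~\ref{lem:low-frequency-or-decay} in region (b) all match the paper. However, the gap you flag in the second alternative is real, and neither of your proposed fixes closes it. While the window $[I_-,\lambda]$ is imposed by a sharp indicator, the inner sum is $\sum_{j=I_-}^{\lambda}(e^{i\xi'}\Phi)^j$. Near points where $|\Phi|=1$ (e.g.\ $\xi_1=\xi_2=\xi_3=0$, which lies in region (b)), this sum can be of order $1/\eps$ when $|\xi'|\approx 3\eps$, and it carries no factor $e^{-\eps I_-/12}$. Changing the threshold in Lemma~\ref{lem:low-frequency-or-decay} to $1-\eps/12$ only moves the boundary between the two cases; the integrand still has no exponential smallness where $|\Phi|$ is arbitrarily close to $1$. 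Your fallback of applying Lemma~\ref{lem:sqrt-bound} to $f(j)=\sum_\ell w(\ell)\beta_p(j_1+\ell,j_2+\ell,j_3+\ell,j)$ is circular. That lemma's hypothesis, smallness of $\sum_j e^{ij\xi}f(j)$ for $|\xi|\ge\eps$, is essentially the statement being proved. Applying it instead to the sharp indicator of $[I_-,\lambda]$ gives nothing, because that indicator's Fourier transform away from $0$ is only $O(1/\eps)$.

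The missing idea is a \emph{smooth} window. Since $f$ vanishes outside $[I_-,\lambda]$, you may multiply it by any $w'$ equal to $1$ there. The paper takes $w'=B_{I^{wide},\eps,\frac{2}{3}I_-}$ from Lemma~\ref{lem:cumulative-epsilon}, with $I^{wide}=[\frac{1}{3}I_-,\lambda+\frac{2}{3}I_-]$. This uses the room between $0$ and $I_-$ to ramp up over length $\frac{2}{3}I_-$, so $\|w'\|_1\le\lambda$ and its Fourier transform is at most $\frac{8\pi}{\eps}e^{-\eps I_-/6}$ for $|\xi|\ge\eps$. Write $\Phi=ye^{i\xi''}$ and shift $j$ by $\frac{1}{3}I_-$. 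Then Lemma~\ref{lem:sqrt-bound}, applied to $w'(\cdot+\frac{1}{3}I_-)$ rather than to $f$, bounds $\bigl|\sum_j w'(j)(e^{i\xi'}\Phi)^j\bigr|$ by $y^{I_-/3}\sqrt{\lambda\cdot\frac{8\pi}{\eps}e^{-\eps I_-/6}}\le\max(\lambda,\frac{8\pi}{\eps})e^{-\eps I_-/12}$ whenever $|\xi'+\xi''|\ge\eps$. This bound holds pointwise, however close $y$ is to $1$. It covers exactly your second alternative, where $|\xi''|\le 2\eps$ and $|\xi'|\ge 3\eps$. In the first alternative, the trivial bound $y^{I_-/3}\lambda\le\lambda e^{-\eps I_-/12}$ applies. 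This is where both the prefactor $\max(\lambda,\frac{8\pi}{\eps})$ and the exponent $\frac{1}{12}$ come from.
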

\begin{proof}

Define the interval $I^{wide}=[\frac{1}{3}I_-\,,3I_++\frac{2}{3}I_-+j_1+j_2+j_3]$ and define the function $w':=B_{I^{wide},\eps,\frac{2}{3}I_-}$ by Lemma~\ref{lem:cumulative-epsilon}, which thus has value 1 on the interval $[I_-,3I_++j_1+j_2+j_3]$, has sum $3I_++j_1+j_2+j_3-\frac{1}{3}I_-$, and for angles $|\xi|\geq \eps$ has Fourier transform of magnitude $\leq \frac{8\pi}{\eps} e^{-I_-\eps/6}$. 

Defining $f(j)=\sum_\ell w(\ell)\beta(j_1+\ell,j_2+\ell,j_3+\ell,j)$, we point out that $\beta$ from its definition as a probabilistic process is nonzero only when its last argument is at least each of its first three arguments and at most their sum. Thus for $B(j_1+\ell,j_2+\ell,j_3+\ell,j)$ to be nonzero we need $j\in [\ell,3\ell+j_1+j_2+j_3]$; since $w(\ell)$ is nonzero only for $\ell\in [I_-,I_+]$ we have that $f(j)$ is nonzero only when $j\in [I_-,3I_++j_1+j_2+j_3]$---and thus, if this interval is empty, the left hand side of Equation~\ref{eq:fourier-beta} equals 0. Otherwise, recall that $w'$ was defined to equal 1 exactly on this range, so that the left hand side of Equation~\ref{eq:fourier-beta}, reexpressed as $\sum_j e^{i\xi'j} f(j)$, equals $\sum_j e^{i\xi'j} w'(j) f(j)$.

Lemma~\ref{lem:beta-expression} says that, letting $\hat{w}(\xi)$ be the Fourier transform of $w$, so that $w(\ell)=\frac{1}{2\pi}\int_{\xi\in[0,2\pi]} \hat{w}(\xi)e^{-i\xi\ell}\,d\xi$, thus $f(j)= \frac{1}{2\pi}\int_{\xi\in[0,2\pi]} \hat{w}(\xi) \sum_\ell e^{i\xi\ell}\beta(j_1+\ell,j_2+\ell,j_3+\ell,j) \,d\xi$, and thus, defining $z_{\xi_1,\xi_2,\xi_3}=1-\frac{1}{P}+\frac{1}{P}\prod_{m=1}^3((1-p)+p e^{i \xi_m})$, then we have \[f(j)=\int_{\xi\in[0,2\pi]}\frac{\hat{w}(\xi)}{(2\pi)^3}\int_{\substack{\xi_1,\xi_2\in[0,2\pi]\\\xi_3\equiv \xi-(\xi_1+\xi_2) \mod 2\pi}}e^{-i(j_1 \xi_1+j_2\xi_2+j_3\xi_3)}(z_{\xi_1,\xi_2,\xi_3})^j\,d\xi_1\,d\xi_2\,d\xi\]

Thus since the left hand side of Equation~\ref{eq:fourier-beta} equals $\sum_{j\geq N} e^{i\xi'j} w'(j) f(j)$, we reexpress this from the above expression for $f(j)$, bringing the sum and the Fourier coefficient inside:
\begin{equation}\label{eq:fourier-beta-integral}
\int_{\xi\in[0,2\pi]}\frac{\hat{w}(\xi)}{(2\pi)^3}\int_{\substack{\xi_1,\xi_2\in[0,2\pi]\\\xi_3\equiv \xi-(\xi_1+\xi_2) \mod 2\pi}}e^{-i(j_1 \xi_1+j_2\xi_2+j_3\xi_3)}\sum_{j\geq N} e^{i\xi'j} w'(j)(z_{\xi_1,\xi_2,\xi_3})^j\,d\xi_1\,d\xi_2\,d\xi\end{equation}

We bound the inner sum using Lemma~\ref{lem:sqrt-bound}. We first reexpress $z_{\xi_1,\xi_2,\xi_3}=y e^{i \xi''}$ for $y\in[0,1]$, letting us merge the terms $e^{i\xi'j}e^{i\xi'' j}$ into effectively a single angle $\xi'+\xi''$; we also shift $j$ by $-I_-/3$, since $w'$ is supported on inputs $\geq I_-/3$; shifting $j$ like this changes the magnitude by a factor of $y^{I_-/3}$ which we account for separately. Thus, by Lemma~\ref{lem:sqrt-bound}, we have $|\sum_{j\geq 0} e^{i\xi'j} w'(j+I_-/3)(z_{\xi_1,\xi_2,\xi_3})^{j+I_-/3}|\leq y^{I_-/3} \sqrt{\lambda\frac{8\pi}{\eps}e^{-\eps I_-/6}}$ when $|\xi'+\xi''|\geq\eps$, and by $y^{I_-/3}\lambda$ in general.

When the outer integration variable satisfies $|\xi|> \eps$ we have $|\hat{w}(\xi)|\leq \delta$, and we use the above bounds, trivially bounding $y\leq 1$. The contribution of such $\xi$ to Equation~\ref{eq:fourier-beta-integral} (which equals the left hand side of Equation~\ref{eq:fourier-beta} that we seek to bound) we thus bound by $\delta$ times our general bound $y^{I_-/3}\lambda$ from above, whose product we bound by $\delta \lambda$,
 for the contribution from 
$|\xi|> \eps$.

On the other hand, for $|\xi|\leq \eps$ we apply Lemma~\ref{lem:low-frequency-or-decay} which says that $y\leq 1$ and for any $\tau\in [0,p]$ we have either $y\leq 1-\frac{\tau}{4}$ or $|\xi''|\leq |\xi|+\tau$. We choose $\tau=\eps$, since by assumption, $\eps\leq p$.

Recall we assume $|\xi'|\geq 3\eps$. If the case of Lemma~\ref{lem:low-frequency-or-decay} where $|\xi''|\leq |\xi|+\tau\leq 2\eps$ applies, we have that $|\xi'+\xi''|\geq \eps$, in which case our bound $y^{I_-/3} \sqrt{\lambda\frac{8\pi}{\eps}e^{-\eps I_-/6}}\leq \sqrt{\lambda\frac{8\pi}{\eps}e^{-\eps I_-/6}}$ applies. Else, in the other case of Lemma~\ref{lem:low-frequency-or-decay} we have $y\leq 1-\frac{\tau}{4}\leq e^{-\tau/4}=e^{-\eps/4}$ and we have the bound $y^{I_-/3}\lambda\leq e^{-\eps I_-/12}\lambda$. Multiplying by the bound $|\hat{w}(\xi)|\leq 1$ and averaging yields a bound of the max of these two prior bounds, which we bound as $\max(\lambda,\frac{8\pi}{\eps})e^{-\eps I_-/12}$, leading to our final bound.
\end{proof}

\subsection{Main Lemma}
The below lemma defines a sequence of coefficients $A(j)$ that express via the definition of $\alpha$, the coefficient of the statistic $S+j$ needed to reexpress a triple product of shifted versions of statistics $s_1,s_2,s_3$.
This lemma shows that, because of the properties of the coefficients $\alpha$, we can effectively transfer bounds on the Fourier smoothness of the weight function $w$ to Fourier bounds on the coefficients $A$. Thus in the context of the induction step of Proposition~\ref{prop:induction}, we can ensure smoothness of the coefficients $A(j)$ of our statistic $S+j$ by picking a smooth weight function $w$ at the beginning.

\begin{lemma}\label{lem:fourier-alpha}
Let $j_1,j_2,j_3$ be nonnegative integers and let $p\in(0,\frac{1}{4}]$. Let $w:\mathbb{Z}\rightarrow\mathbb{R}_{\geq 0}$ have sum $\leq 1$, be supported on a nonnegative interval $[I_-,I_+]$ and, letting $\hat{w}$ be the Fourier transform of $w$, for parameters $\eps\in(0,p], \delta\geq 0$ we assume, for $|\xi|\geq \eps$ that $|\hat{w}(\xi)|\leq \delta$. Let $s_1,s_2,s_3,S$ be statistics each with minimum index $\min_i s_1(i)=\min_i s_2(i)=\min_i s_3(i)=\min_i S_i=1$.  Define $A(j):=\sum_{\ell} w(\ell)\alpha_{s_1+j_1+\ell,s_2+j_2+\ell,s_3+j_3+\ell}(S+j)$. Then, letting $\lambda=3I_++j_1+j_2+j_3$ we have that $A$ is supported in the interval $[I_-,\lambda]$, sums to $\leq 1$, and for frequency $|\xi'|\geq 3\eps$ we have:

\begin{equation}\label{eq:fourier-alpha}\left|\sum_{j} e^{i\xi' j}A(j)\right|\leq 3\max_i(S_i)^2\left(\lambda\delta+\max(\lambda,\frac{8\pi}{\eps})\cdot e^{-\eps I_-/12}\right)
\end{equation}
while for general $\xi'$ the expression is bounded by $1$.
\end{lemma}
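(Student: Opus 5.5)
The plan is to reduce everything to Lemma~\ref{lem:fourier-beta2} through the decomposition of Lemma~\ref{lem:beta}. Apply Lemma~\ref{lem:beta} with shifts $(j_1+\ell,j_2+\ell,j_3+\ell)$ to write
\[\alpha_{s_1+j_1+\ell,s_2+j_2+\ell,s_3+j_3+\ell}(S+j)=\sum_{\ell_1,\ell_2,\ell_3:\min=0}\alpha_{s_1+\ell_1,s_2+\ell_2,s_3+\ell_3}(S)\,\beta_p(j_1+\ell-\ell_1,j_2+\ell-\ell_2,j_3+\ell-\ell_3,j).\]
Multiplying by $w(\ell)$ and summing over $\ell$ gives
\[A(j)=\sum_{\ell_1,\ell_2,\ell_3}\alpha_{s_1+\ell_1,s_2+\ell_2,s_3+\ell_3}(S)\sum_\ell w(\ell)\,\beta_p(j_1-\ell_1+\ell,\,j_2-\ell_2+\ell,\,j_3-\ell_3+\ell,\,j).\]
So $A$ is a nonnegative combination of sequences of exactly the form treated in Lemma~\ref{lem:fourier-beta2}, with the triple $(j_1-\ell_1,j_2-\ell_2,j_3-\ell_3)$ in place of $(j_1,j_2,j_3)$. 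Since $\min(\ell_1,\ell_2,\ell_3)=0$, at least one entry equals some $j_m\ge 0$, which is the hypothesis of that lemma. Its $\lambda$ is $3I_++\sum_m(j_m-\ell_m)\le\lambda$. The bound in Lemma~\ref{lem:fourier-beta2} is monotone in $\lambda$, so each inner sequence has Fourier magnitude at most $\lambda\delta+\max(\lambda,\frac{8\pi}{\eps})e^{-\eps I_-/12}$ at $|\xi'|\ge3\eps$.

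Next, bound the number of nonzero outer terms. For $\alpha_{s_1+\ell_1,s_2+\ell_2,s_3+\ell_3}(S)$ to be nonzero, every queried bit of $U_m$ must land inside $S$, whose largest index is $\max_i S_i$. Hence $1+\ell_m\le\max_i S_i$, so each $\ell_m\in\{0,\dots,\max_i S_i-1\}$. With one coordinate forced to zero, there are at most $3\max_i(S_i)^2$ such triples. Each coefficient $\alpha(\cdot)\le1$, since it is a probability. The triangle inequality then yields Equation~\ref{eq:fourier-alpha}.

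The support and mass claims come directly from earlier results.
\begin{itemize}
\item \emph{Support.} By Lemma~\ref{lem:alpha-range}, each term is nonzero only if $\min_m(j_m+\ell)\le j\le\sum_m j_m+3\ell$. With $\ell\in[I_-,I_+]$ and $j_m\ge0$, this gives $j\in[I_-,\lambda]$.
\item \emph{Mass.} For fixed shifts, $\sum_{j,S}\alpha(S+j)=1$, because it is a probability distribution over the union set. Hence $\sum_j A(j)\le\sum_\ell w(\ell)\le1$ for a fixed $S$.
\item \emph{Trivial Fourier bound.} Nonnegativity together with total mass at most $1$ gives $|\sum_j e^{i\xi' j}A(j)|\le1$ for every $\xi'$.
\end{itemize}

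The main obstacle is the bookkeeping in the decomposition step. One must check that the $\beta_p$ arguments of Lemma~\ref{lem:beta} really take the shifted form $(j_m-\ell_m)+\ell$ with a common $\ell$, so that the inner sum matches Lemma~\ref{lem:fourier-beta2}, including the case where some $j_m-\ell_m$ is negative. One must also confirm that a smaller effective $\lambda$, or the case $I_-\ge\lambda$ where that lemma gives $0$, only helps. The counting bound on the $\ell_m$ is where I would be most careful, since it fixes the constant $3\max_i(S_i)^2$.
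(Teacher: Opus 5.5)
Your proposal is correct and follows essentially the same route as the paper. Both arguments expand via Lemma~\ref{lem:beta}, bound each $\alpha\le 1$, and apply Lemma~\ref{lem:fourier-beta2} to each inner sequence, where the smaller effective $\lambda$ (or the zero case) only helps. Both then count at most $3\max_i(S_i)^2$ admissible triples $(\ell_1,\ell_2,\ell_3)$, and take the support and mass claims from Lemma~\ref{lem:alpha-range} and the probabilistic meaning of $\alpha$.
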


\begin{proof}
We bound the sum of $A$ by $\sum_{j,\ell} w(\ell)\alpha_{s_1+j_1+\ell,s_2+j_2+\ell,s_3+j_3+\ell}(S+j)$ from the definition of $A$. The $\alpha$ coefficients represent different probabilities from a probabilistic process as we vary $j$, so thus for fixed $\ell$ the sum over $j$ is at most $w(\ell)$; and since $\sum_\ell w(\ell)\leq 1$ we get the intended bound of $\leq 1$.

For the case of frequency $|\xi'|\geq 3\eps$ we reexpress the left hand side of Equation~\ref{eq:fourier-alpha} using Lemma~\ref{lem:beta} as

\[\left|\sum_{j,\ell} e^{i\xi' j}w(\ell)\sum_{\ell_1,\ell_2,\ell_3:\min(\ell_1,\ell_2,\ell_3)=0} \alpha_{s_1+\ell_1,s_2+\ell_2,s_3+\ell_3}(S) \beta_p(j_1+\ell-\ell_1,j_2+\ell-\ell_2,j_3+\ell-\ell_3,j)\right|\]

We crudely bound each $\alpha\leq 1$ since they represent probabilities; we move the sum over $\ell_1,\ell_2,\ell_3$ outside, and apply Lemma~\ref{lem:fourier-beta2}---since $\min(\ell_1,\ell_2,\ell_3)=0$ and all of $j_1,j_2,j_3$ are nonnegative, we have that at least one of $j_1-\ell_1,j_2-\ell_2,j_3-\ell_3$ is nonnegative, as required by the input conditions of Lemma~\ref{lem:fourier-beta2}. We then bound the range of the triple sum over $\ell_1,\ell_2,\ell_3$ by $3\max_i(S_i)^2$, since the $\alpha$ coefficient will be 0 if any of the queried indices are outside the range of $S$, or if all of $\ell_1,\ell_2,\ell_3$ are strictly positive. Thus we conclude with the bound of Lemma~\ref{lem:fourier-beta2} times $3\max_i(S_i)^2$, as claimed; we note that $\lambda_{lem}$ in the context of Lemma~\ref{lem:fourier-beta2} might be smaller than $\lambda$ here (since in the lemma we decrease $j_1,j_2,j_3$ by $\ell_1,\ell_2,\ell_3$ respectively), which just leads to a stronger bound.

Finally, we bound the support of $A(j)$ by invoking Lemma~\ref{lem:alpha-range} to see that $\alpha_{s_1+j_1+\ell,s_2+j_2+\ell,s_3+j_3+\ell}(S+j)$ is 0 unless $\ell+\min(j_1,j_2,j_3)\leq j\leq 3\ell+j_1+j_2+j_3$. Combined with the definition $w(\ell)$ is nonzero only when $I_-\leq \ell\leq I_+$, we conclude that these conditions can both be satisfied only when $j\in [I_-,3I_++j_1+j_2+j_3]$ as claimed.
\end{proof}

\section{Deconvolution}\label{sec:deconvolution}
This section describes how to \emph{deconvolve} by $\Bin{N}{p}$ in the sense that, given a function $f:\mathbb{Z}\rightarrow \mathbb{R}$, we want to find a function $g$ such that $g\ast \Bin{N}{p}\approx f$. However, as discussed in Section~\ref{sec:overview}, we need extremely tight control of both the support of $g$ and its Fourier decay. While convolving with $\Bin{N}{p}$ improves a function's Fourier decay since the Fourier transform of $\Bin{N}{p}$ decays analogously to a Gaussian, deconvolution is famously ill-conditioned, since in the Fourier domain, deconvolution corresponds to dividing by a rapidly decaying function. Paradoxically, we aim to achieve what sounds impossible: bounding the support of deconvolution by essentially the same bound (up to constants) as one would expect for convolution with $\Bin{N}{p}$.

Explicitly, Corollary~\ref{cor:deconvolution} says that, given an accuracy bound $\delta$, and provided the Fourier transform of $f$ decays to size $\delta$ for similar frequencies as $\Bin{N}{p}$, then we construct a deconvolution $g$ so that $g\ast \Bin{N}{p}$ is $O(\delta)$-close to $f$, and $g$ will be supported on an interval $O(r)$ larger than the support of $f$, where $r$ is the size of the interval on which $\Bin{N}{p}\geq \delta$.

The following lemma bounds the $m^\textrm{th}$ derivative of the inverse of the Fourier transform of $\Bin{N}{p}$.

In the bound below, the initial term $\frac{m!e^m}{m^m}$ is polynomial in $m$ by Stirling's approximation and should be ignored; the final term is essentially the square of the inverse Gaussian approximation of the function we are differentiating; the interesting term is the one in the middle, taken to the $m^\textrm{th}$ power, expressing essentially the length scale at which the derivative grows, independent of the common inverse-Gaussian factor.

\begin{lemma}\label{lem:inverse-derivative}
For $p\in [0,\frac{1}{4}]$ and $x\in [-\frac{\pi}{4},\frac{\pi}{4}]$, the $m^{\textrm{th}}$ derivative of $\frac{1}{((1-p)e^{-i p x}+p\cdot e^{i (1-p)x})^N}$ has magnitude at most $\frac{m!e^m}{m^m}(e^3\max(2.5 m,\sqrt{mNp},|x|Np))^{m}e^{pNx^2}$.
\end{lemma}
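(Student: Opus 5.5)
Write $\phi(x):=(1-p)e^{-ipx}+p\,e^{i(1-p)x}$, so that $\phi(x)=e^{-ipx}\big((1-p)+pe^{ix}\big)$ is the centered characteristic function of a single Bernoulli$(p)$ variable. The target is $1/\phi(x)^N=e^{-N\psi(x)}$, where $\psi:=\log\phi$ is analytic near the real interval $[-\frac{\pi}{4},\frac{\pi}{4}]$.

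The plan is to bound the $m$-th derivative with Cauchy's integral formula. Fix $x$ and choose a radius $r$. Then
\[\Big|\frac{d^m}{dx^m}\phi^{-N}(x)\Big|\leq \frac{m!}{r^m}\max_{|w-x|=r}|\phi(w)|^{-N}.\]
So the work splits into two parts: bounding $|\phi(w)|^{-N}$ for complex $w=x+a+ib$ with $|a|,|b|\leq r$, and then optimizing over $r$.

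\textbf{Step 1: bound $\mathrm{Re}\,\psi$ off the real axis.} The function $\psi$ is a cumulant generating function evaluated at $iw$. We have $\psi(0)=0$, $\psi'(0)=0$ because of the centering, and the second derivative is about $-p(1-p)$. Taylor expanding gives $\psi(w)=-\frac{p(1-p)}{2}w^2+O(p|w|^3)$, and this expansion stays valid, with constants uniform in $p\leq\frac14$, as long as $|w|$ is at most a constant. That is enough here because $|x|\leq\frac{\pi}{4}$ and we will take $r\leq$ a small constant.

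Taking real parts,
\[-\mathrm{Re}\,\psi(x+a+ib)\leq \frac{p}{2}\big((x+a)^2-b^2\big)+Cp|w|^3\lesssim p\big(x^2+2|x|r+r^2\big).\]
A cleaner route is to compute directly: $|(1-p)+pe^{iw}|^2=1-2p(1-p)\big(1-\cos(\cdot)\big)+\dots$, together with the factor $|e^{-ipw}|=e^{pb}$, and then bound the logarithm via $\log(1-t)\geq -t-t^2$. Either way the outcome is
\[\max_{|w-x|=r}|\phi(w)|^{-N}\leq \exp\big(pNx^2+O(pN(|x|r+r^2))+O(pN|w|^3)\big).\]
The cubic error is absorbed into the $|x|r$ and $r^2$ terms once $r$ and $|x|$ are bounded. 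This step is the main obstacle: I have to keep the leading coefficient of $x^2$ at most $1$, which is exactly what the stated $e^{pNx^2}$ factor requires. That works because the true coefficient is $\frac{p(1-p)}{2}\leq p$, and the slack between $\frac{1}{2}$ and $1$ has to absorb the cubic remainder for $|x|\leq\frac{\pi}{4}$. I expect a short explicit inequality, checked at the endpoints, to be needed here.

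\textbf{Step 2: choose $r$.} Take
\[r=\frac{m}{\max(2.5m,\sqrt{mNp},|x|Np)}.\]
Then $r\leq 0.4$, $pNr^2\leq m$, and $pN|x|r\leq m$. The extra exponent from Step 1 is therefore at most a constant multiple of $m$, which the constants should make $e^{2m}$ or better, while $1/r^m$ equals $\big(\max(\cdot)/m\big)^m$. Putting these together,
\[\Big|\frac{d^m}{dx^m}\phi^{-N}(x)\Big|\leq m!\,\frac{e^{3m}}{m^m}\max(\cdot)^m\,e^{pNx^2},\]
which matches the stated form $\frac{m!e^m}{m^m}(e^3\max(\cdot))^m e^{pNx^2}$, with spare $e^{m}$ factors to absorb constant losses. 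The $2.5m$ term in the max is there precisely to force $r\leq 0.4$, so that the Taylor and logarithm bounds of Step 1 hold on the whole circle.
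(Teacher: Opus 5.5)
Your skeleton is the same as the paper's. You use Cauchy's estimate at radius $r=\min(0.4,\sqrt{m/(Np)},m/(Np|x|))$, so that $pNr^2\le m$ and $pN|x|r\le m$, against a fixed budget of $e^{4m}$ for the excess in the exponent. The gap is Step 1, which is the entire quantitative content of the lemma, and the reason you give there is wrong. Since $|w|\le |x|+r$, the remainder $Cp|w|^3$ contains the term $Cp|x|^3$, which carries no factor of $r$. So it cannot be ``absorbed into the $|x|r$ and $r^2$ terms.'' When $r=m/(Np|x|)$ is small (e.g.\ $x$ fixed, $N\to\infty$), $CpN|x|^3$ is of the same order as $pNx^2$ and exceeds any multiple of $m$. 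The polynomial prefactor $\max(\cdot)^m$ cannot absorb it either. It has to fit in the slack between the true coefficient $\frac{1-p}{2}$ and the allowed coefficient $1$ of $pNx^2$. That is, you need an explicit $C$ with $C|x|\le \frac{1+p}{2}$ for all $|x|\le\frac{\pi}{4}$, where $C$ bounds $\frac{1}{6p}|\psi'''|$ on segments from $0$ into the region $|\mathrm{Re}\,w|\le \frac{\pi}{4}+0.4$, $|\mathrm{Im}\,w|\le 0.4$. In addition, the coefficients multiplying $pN|x|r$ and $pNr^2$ must be explicit and total at most $4$, since an $O(\cdot)$ cannot be compared against a fixed $e^{4m}$. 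You flag the first requirement but leave it as an expectation, and it is not automatic. With $\psi'''=-iq(1-q)(1-2q)$ and $q=pe^{iw}/(1-p+pe^{iw})$, the crude bounds $|1-q|\le 1+|q|$ and $|1-2q|\le 1+2|q|$ on that region give $C\approx 0.85$ at $p=\frac14$. That is slightly more than the $\frac{2(1+p)}{\pi}\approx 0.8$ you need there, so you must use something sharper, such as the exact form $1-q=(1-p)/(1-p+pe^{iw})$.

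The paper avoids any cubic term by perturbing from the real point $x$ rather than from $0$, and by bounding $g=\phi$ itself rather than $\log\phi$:
\begin{itemize}
\item On the real line, $\frac{d^2}{dx^2}\mathrm{Re}\,g\ge -p(1-p)$ gives $|g(x)|\ge 1-\frac{p}{2}x^2$, with no higher-order correction.
\item On the disk, $g'(z)=ip(1-p)(e^{i(1-p)z}-e^{-ipz})$ has two exponents with real parts at most $r$ that differ by $iz$. Hence $|g'|\le p(|x|+r)e^r$, and $\min_{|z-x|\le r}|g(z)|\ge 1-p(\frac12x^2+r(|x|+r)e^r)\ge 0.7$ for $r\le 0.4$.
\item Then $1-\alpha\ge e^{-1.3\alpha}$ on $[0,0.3]$ yields $|g(z)|\ge e^{-p(x^2+2r(|x|+r))}$.
\end{itemize}
The coefficient of $x^2$ comes out as $0.65\le 1$ for free, and $2pNr(|x|+r)\le 4m$ uses exactly the $e^{4m}$ budget. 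The same repair works within your $\psi$ formulation if you expand about $x$. The exact identity $|\phi(x)|^2=1-2p(1-p)(1-\cos x)$ gives $-\mathrm{Re}\,\psi(x)\le 0.6\,px^2$. You then control the move off the axis with $|\psi'(x)|=\frac{p(1-p)|e^{ix}-1|}{|1-p+pe^{ix}|}\le p|x|$ and $\frac12\sup|\psi''|=\frac12\sup|q(1-q)|\le p$ on the disk. This gives an excess of at most $pN(|x|r+r^2)\le 2m$.
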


\begin{proof}
We extend our function analytically to the complex plane and use the Cauchy formula for the derivative: 
\[\frac{d^m}{dx^m}\frac{1}{((1-p)e^{-i p x}+p\cdot e^{i (1-p)x})^N}=\frac{m!}{2\pi r^m}\int_{0}^{2\pi } \frac{e^{-i m\theta}}{((1-p)e^{-i p (x+r\cdot e^{i\theta})}+p\cdot e^{i (1-p)(x+r\cdot e^{i\theta})})^N}\,d\theta\]
provided that the denominator is nonzero on the disk of radius $r$ about $x$.

Reexpressed slightly, suppose for some positive $r$ we can get a lower bound $\eps_r>0$, over all complex $z$ within distance $r$ of $x$ of the magnitude of the function $(1-p)e^{-ipz}+p\cdot e^{i(1-p)z}$, then this gives us a corresponding overall upper bound for our derivative of $\frac{m!}{r^m} \eps_r^{-N}$. We derive such a bound now and then choose $r$.

Letting $g(z)=(1-p)e^{-ipz}+p\cdot e^{i(1-p)z}$, its derivative is $g'(z)=i p(1-p)(e^{i (1-p)z}-e^{-i p z})$. Because the arguments of the two exponentials, $i(1-p)z$ and $-ipz$, are separated by $iz$ (which has magnitude at most $|x|+r$) and because both arguments have real part at most $r$ (since $z$ is within distance $r$ of the real number $x$, so $real(i(1-p)z)\leq r$ and $real(-ipz)\leq r$) so we have the bound $|g'(z)|\leq p (|x|+r)e^r$, for any $z$ within distance $r$ of $x$. Thus $\eps_r=\min_{z:|z-x|\leq r}|g(z)|\geq |g(x)|-p r (|x|+r) e^r$.

Since the real part of $g(x)$ for real arguments trivially has second derivative $\geq -p$, we have $|g(x)|\geq 1-\frac{p}{2} x^2$. Thus $\eps_r\geq 1-p(\frac{1}{2}x^2 +r(|x|+r)e^r)$. For $r\leq 0.4$ and our bounds $|x|\leq\frac{\pi}{4}$, $p\leq \frac{1}{4}$, this is at least $0.7$ and since $1-\alpha\geq e^{-1.3\alpha}$ when $1-\alpha\in[0.7,1]$ we thus have $\eps_r\geq e^{-1.3 p(\frac{1}{2}x^2 +r(|x|+r)e^r)}$. Trivially bounding $e^r\leq 1.5$ for $r\leq 0.4$, and since $1.3\cdot 1.5<2$, we thus change the initial multiplier from $1.3$ to 2, giving $\eps_r\geq e^{-p(x^2+2r(|x|+r))}$, yielding our bound on the derivative of $\frac{m!}{r^m}e^{pN(x^2+2r(|x|+r))}$, for any $r\in [0, 0.4]$.

We chose $r=\min(0.4,\sqrt{\frac{m}{Np}},\frac{m}{Np|x|})$. Then in all cases, $Npr^2\leq m$ and $Npr|x|\leq m$. Thus the quantity inside the exponential term in our bound can be bounded as $pN(x^2+2r(|x|+r))\leq pNx^2+2m+2m$, which simplifies our overall bound to $\frac{m!}{r^m}e^{4m}e^{pNx^2}$. Thus since $\frac{1}{r}=\max(2.5,\sqrt{\frac{Np}{m}},\frac{Np|x|}{m})$, we thus conclude our overall bound of $\frac{m!e^m}{m^m}(e^3\max(2.5 m,\sqrt{mNp},|x|Np))^{m}e^{pNx^2}$.
\end{proof}

Below we bound the derivatives of a standard construction known as the cardinal B-spline.

\begin{lemma}\label{lem:spline}
For any integer $m\geq 1$ there is a function $B_m(x)$ that has value $0$ for $x\leq 0$, has value $1$ for $x\geq 1$, takes values $\in[0,1]$ for $x\in[0,1]$, and for any integer $r\in \{0,\ldots,m\}$ has $r^{\textrm{th}}$ derivative bounded in magnitude by $(2m)^r$.
\end{lemma}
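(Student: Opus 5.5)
The plan is to build $B_m$ as a rescaled cumulative cardinal B-spline. Let $N_m$ be the density of the sum of $m$ independent uniform random variables on $[0,\frac{1}{m}]$, that is, the $m$-fold convolution of $m\cdot\mathbbm{1}_{[0,1/m]}$. Define $B_m(x):=\int_{-\infty}^x N_m(t)\,dt$, the CDF of this sum. Since the sum is supported on $[0,1]$, $B_m$ is $0$ for $x\leq 0$ and $1$ for $x\geq 1$. It also lies in $[0,1]$ everywhere, because it is a CDF. This settles the $r=0$ case, since $(2m)^0=1$.

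For $r\geq 1$, note that $B_m^{(r)}=N_m^{(r-1)}$. The natural tool is the identity for differentiating a convolution with a box. Write $h=\frac{1}{m}$ and $u=m\mathbbm{1}_{[0,h]}$. For any function $\phi$,
\[(\phi\ast u)'(x)=m\bigl(\phi(x)-\phi(x-h)\bigr),\]
understood distributionally. Applying this once to each of the first $r-1$ box factors of $N_m=u^{\ast m}$, with $r-1\leq m-1$, gives
\[N_m^{(r-1)}=m^{r-1}\Delta_h^{r-1}\,u^{\ast(m-r+1)},\]
where $\Delta_h\phi(x)=\phi(x)-\phi(x-h)$ is the backward difference.

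The bound then has two parts. First, the remaining factor $u^{\ast(m-r+1)}$ is a convolution of at least one box of height $m$ with probability densities, so its sup-norm is at most $m$. Second, $\Delta_h^{r-1}$ expands into $2^{r-1}$ shifted terms with $\pm1$ signs, so it inflates the sup-norm by at most a factor $2^{r-1}$. Together these give
\[|B_m^{(r)}|\leq m^{r-1}\cdot 2^{r-1}\cdot m=2^{r-1}m^r\leq (2m)^r.\]

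The main obstacle is the endpoint case $r=m$. Here $N_m^{(m-1)}$ is piecewise constant, so $B_m^{(m)}$ exists only piecewise, away from the knots. I would read the derivative bound as holding wherever the derivative exists, or equivalently as a bound on the $(r-1)$-Lipschitz constant. If full smoothness at $r=m$ is needed, I would instead use $m+1$ boxes of width $\frac{1}{m+1}$. The same computation then gives the bound $2^{r-1}(m+1)^r$, which is at most $(2m)^r$ for $m\geq 1$, and now makes $B_m$ genuinely $C^m$ wherever it is applied.
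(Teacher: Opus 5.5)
Your main argument is the paper's proof. The paper differentiates the box $f_m=m\mathbf{1}_{[0,1/m]}$ into the signed measure $g_m=m\delta_0-m\delta_{1/m}$ of mass $2m$, which is your operator $m\Delta_h$. It then bounds $g_m^{*(r-1)}*f_m^{*(m-r+1)}$ by $m(2m)^{r-1}=2^{r-1}m^r$, exactly as you do. The paper does not comment on the case $r=m$. Your first reading of that case is correct: the bound holds off the knots, i.e.\ $B_m^{(m-1)}$ is Lipschitz with constant $(2m)^m$. That reading suffices downstream, since the integration by parts in Lemma~\ref{lem:h-local} only needs $H^{(m-1)}$ to be absolutely continuous.

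Your optional fallback, however, rests on a false inequality. The claim $2^{r-1}(m+1)^r\le(2m)^r$ is equivalent to $(1+\frac{1}{m})^r\le 2$. This fails at $r=m$ for every $m\ge 2$: for $m=r=2$ it reads $18\le 16$.

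The fallback can be repaired using nonnegativity. The remaining convolution of $m+2-r$ boxes is nonnegative with sup at most $m+1$. For $r\ge 2$, the positively and negatively signed terms of $\Delta_h^{r-1}$ each carry total weight $2^{r-2}$. Hence $|B^{(r)}|\le 2^{r-2}(m+1)^r\le(2m)^r$, because $(1+\frac{1}{m})^r\le(1+\frac{1}{m})^m<e<4$. The case $r=1$ is just $m+1\le 2m$.
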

\begin{proof}
Let $f_m(x)$ be the function that has value $m$ for $x\in[0,\frac{1}{m}]$ and 0 otherwise. Consider the $m$ way convolution of $f_m$ with itself, which we denote $f_m^{*m}$, which we may regard as the pdf of a distribution. Let $B_m$ be the cdf of this distribution; namely, let $B'(x)=f_m^{*m}(x)$. 

We point out that convolving $f_m$ with itself any number of times cannot increase its maximum value, which was originally $m$, since $f_m$ is a distribution so convolving a function with a distribution returns a weighted average of its values at each point, and thus cannot increase its maximum.

Consider the derivative of $f_m$, which is a signed measure, which we denote $g_m$, consisting of $m$ times a delta function at $x=0$ minus $m$ times a delta function at $x=\frac{1}{m}$. Thus the $L_1$ norm of this signed measure is $2m$. Since convolution commutes with differentiation, we now express the $r^{\textrm{th}}$ derivative of $B_m$ as the $r-1^{\textrm{st}}$ derivative of $f_m^{*m}(x)$, which we express by absorbing one derivative into each of $r-1$ different copies of $f_m$, yielding $g_m^{*r-1}*f_m^{*m-r+1}$. The $f_m^{*m-r+1}$ term has maximum value $\leq m$; and this is scaled by at most $2m$ for each of the $r-1$ copies of $g_m$ we convolve with, giving us our final bound of $m(2m)^{r-1}$, which we harmlessly round up to $(2m)^r$ for simplicity.
\end{proof}

The following lemma combines the two previous results to define a kernel that we use to deconvolve, via its Fourier transform $H$: we define $H$ to be the inverse of the Fourier transform of the pdf of $\Bin{N}{p}$, but mollified by multiplying by the smooth cutoff function $B$ from Lemma~\ref{lem:spline}.

\begin{lemma}\label{lem:H-def}
Given a positive integer $N$, a probability $p\in (0,\frac{1}{4}]$, a frequency cutoff $\eps>0$, and two parameters defining a smooth cutoff: a positive integer order $m$, and a width $W>0$ such that $\eps+W\leq\frac{\pi}{4}$, define the function $H(x)=\frac{B_m(\frac{\eps+W-|x|}{W})}{((1-p)e^{-i p x}+p\cdot e^{i (1-p)x})^N}$ where $B_m$ is the smooth cutoff function from Lemma~\ref{lem:spline} and the denominator is the Fourier transform of the binomial distribution $\Bin{N}{p}$, centered to have mean 0. We claim that: the numerator of $H(x)$ equals 1 when $x\in[-\eps,\eps]$, and $H(x)$ is 0 when $x\notin [-(\eps+W)\,,\eps+W]$; and $|H^{(m)}(x)|\leq e\sqrt{m}\cdot e^{pN(\eps+W)^2} \left((\frac{2m}{W})+(e^3\max(2.5 m,\sqrt{mNp},(\eps+W) Np))\right)^m$ everywhere. And $|H(x)|\leq e^{pN|x|^2}$, and 0 outside of its support, $x\in[-(\eps+W)\,,\eps+W]$.
\end{lemma}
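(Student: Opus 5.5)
The plan is to verify the four claims about $H$ in order, starting with the support and cutoff properties, which follow directly from Lemma~\ref{lem:spline}. The numerator is $B_m(\frac{\eps+W-|x|}{W})$, whose argument is $\geq 1$ exactly when $|x|\leq\eps$ (so the numerator equals 1) and $\leq 0$ exactly when $|x|\geq \eps+W$ (so the numerator, and hence $H$, vanishes). For the pointwise bound $|H(x)|\leq e^{pN|x|^2}$, I will use $0\leq B_m\leq 1$ together with the estimate from the proof of Lemma~\ref{lem:inverse-derivative}, namely $|g(x)|\geq 1-\frac{p}{2}x^2\geq e^{-px^2}$ for real $|x|\leq\frac{\pi}{4}$ and $p\leq\frac14$. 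Here $g(x)=(1-p)e^{-ipx}+pe^{i(1-p)x}$, and the second inequality holds because $\frac{p}{2}x^2$ is small. Raising to the $N$th power then gives the claimed bound on the support $|x|\leq\eps+W\leq\frac{\pi}{4}$.

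The main step is the derivative bound, which I will get from the Leibniz rule: $H^{(m)}=\sum_{r=0}^m\binom{m}{r}\,\frac{d^r}{dx^r}B_m(\tfrac{\eps+W-|x|}{W})\cdot \frac{d^{m-r}}{dx^{m-r}}g(x)^{-N}$. For $|x|\leq\eps$ the numerator is locally constant, and for $|x|\geq\eps+W$ it is zero, so $|x|$ is smooth wherever it matters and no issue arises at $x=0$. By the chain rule and Lemma~\ref{lem:spline}, the $r$th derivative of the numerator has magnitude $\leq (2m/W)^r$. By Lemma~\ref{lem:inverse-derivative} with $|x|\leq \eps+W$, writing $M:=e^3\max(2.5m,\sqrt{mNp},(\eps+W)Np)$, the $(m-r)$th derivative of $g^{-N}$ is at most $\frac{(m-r)!e^{m-r}}{(m-r)^{m-r}}M'^{\,m-r}e^{pN(\eps+W)^2}$. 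Here $M'$ is the same max but with $m-r$ in place of $m$, so $M'\leq M$ since each entry of the max is monotone in its order.

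The remaining work, and the main obstacle, is controlling the prefactor $\frac{k!e^k}{k^k}$ for $k=m-r$ uniformly, so that the binomial sum collapses to $(2m/W+M)^m$. By Stirling's bound $k!\leq e\sqrt{k}(k/e)^k$, this prefactor is at most $e\sqrt{k}\leq e\sqrt m$; for $k=0$ it is 1, reading the expression as its limit. Pulling out $e\sqrt m\, e^{pN(\eps+W)^2}$ leaves $\sum_r\binom{m}{r}(2m/W)^rM^{m-r}=(2m/W+M)^m$, which is exactly the stated bound. The only other care needed is the $r=0$ and $r=m$ endpoint terms; I will check these are covered by the same inequality, which they are since the factor $e\sqrt m\geq 1$.
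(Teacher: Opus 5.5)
Your proposal is correct and follows the paper's proof essentially step by step: the Leibniz rule, Lemma~\ref{lem:spline} with the chain-rule factor $1/W$, and Lemma~\ref{lem:inverse-derivative} with the Stirling bound $\frac{k!e^k}{k^k}\le\max(1,e\sqrt{k})$. It then uses monotonicity of the max in the order, replaces $|x|$ by $\eps+W$ on the support, and collapses the sum with the binomial theorem. The only deviations are minor and slightly more careful than the paper: you bound $|H(x)|$ directly via $|g(x)|\ge 1-\frac{p}{2}x^2\ge e^{-px^2}$ instead of citing the zeroth-derivative case of Lemma~\ref{lem:inverse-derivative} (whose Cauchy-formula proof degenerates to radius $0$ when $m=0$), and you explicitly note that the cutoff is locally constant near $x=0$, so $|x|$ causes no smoothness issue there.
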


\begin{proof}
The first two properties follow from Lemma~\ref{lem:spline} where we rescale the input of $B_m$ so that $B_m(\frac{\eps+W-|x|}{W})$ smoothly goes from 1 to 0 over the range $\eps$ to $\eps+W$, and symmetrically from $-\eps$ to $-(\eps+W)$, and takes values in $[0,1]$ everywhere.

Define $g(x)=\frac{1}{((1-p)e^{-i p x}+p\cdot e^{i (1-p)x})^N}$, namely, $H$ without the $f$ numerator term.
From Lemma~\ref{lem:inverse-derivative} we have that $|g^{(r)}(x)|\leq\frac{r!e^r}{r^r}(e^3\max(2.5 r,\sqrt{rNp},|x|Np))^{r}e^{pNx^2}$. By Stirling's approximation, $\frac{r!e^r}{r^r}\leq \max(1,e\sqrt{r})$; for $r\leq m$ this is at most $e\sqrt{m}$.

From Lemma~\ref{lem:spline} we have that $|B_m^{(m-r)}(x)|\leq (2m)^{m-r}$, so that the corresponding derivative of $B_m(\frac{\eps+W-|x|}{W})$ has magnitude bounded by $(\frac{2m}{W})^{m-r}$. 

In general, $\frac{d^m}{dx^m} f(x)g(x)=\sum_{r=0}^m \binom{m}{r}f^{(m-r)}(x)g^{(r)}(x)$ and we thus bound 
\begin{align*}|H^{(m)}(x)| &\leq e\sqrt{m}\cdot e^{pNx^2} \sum_{r=0}^m \binom{m}{r}(\frac{2m}{W})^{m-r}(e^3\max(2.5 m,\sqrt{mNp},|x|Np))^{r}\\
&=e\sqrt{m}\cdot e^{pNx^2} \left((\frac{2m}{W})+(e^3\max(2.5 m,\sqrt{mNp},|x|Np))\right)^m\end{align*}
In the first step we used $r\leq m$ to replace $r$ in the $g^{(r)}$ derivative expression by its upper bound $m$. For the final conclusion, since $H$ and all its derivatives vanish outside $x\in[-(\eps+W),\eps+W]$, we thus replace $|x|$ by its upper bound $\eps+W$ to yield our final expression.

We bound $|H(x)|$ via the ``0 derivative'' bound of Lemma~\ref{lem:inverse-derivative}.
\end{proof}

The following lemma uses the fact that the $m^{\textrm{th}}$ moment of a function becomes the $m^{\textrm{th}}$ derivative of its Fourier transform; thus we can use bounds on high derivatives of the Fourier transform to get good tail bounds on the function.

\begin{lemma}\label{lem:h-local}
Given a function $H:\mathbb{R}\rightarrow\mathbb{C}$ and a real number $\rho$ such that $H(\xi)e^{-i \rho \xi}$ is periodic of period $2\pi$, having Fourier series $h:\mathbb{Z}\rightarrow\mathbb{C}$, and for which we have a bound $\|H^{(m)}\|_{\infty}$ on the $m^{\textrm{th}}$ derivative of $H$ everywhere then, given a radius $\ell$ we have 
\[\forall j\notin (-\rho-\ell,-\rho+\ell),\;|h(j)|\leq \frac{\|H^{(m)}\|_{\infty}}{\ell^m}\]
\end{lemma}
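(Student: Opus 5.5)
The plan is the standard ``moments are derivatives'' argument: integrate by parts $m$ times in the Fourier inversion formula. The main care point is the bookkeeping of the phase $e^{-i\rho\xi}$.

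First I would fix conventions consistent with the paper's definition of the Fourier transform. Let $K(\xi):=H(\xi)e^{-i\rho\xi}$, which is $2\pi$-periodic, and write
\[K(\xi)=\sum_j h(j)e^{ij\xi},\qquad h(j)=\frac{1}{2\pi}\int_{-\pi}^{\pi}K(\xi)e^{-ij\xi}\,d\xi=\frac{1}{2\pi}\int_{-\pi}^{\pi}H(\xi)e^{-i(j+\rho)\xi}\,d\xi .\]
The integrand $H(\xi)e^{-i(j+\rho)\xi}$ is $2\pi$-periodic, because it equals $K(\xi)e^{-ij\xi}$ with $j$ an integer. Hence every boundary term in an integration by parts over a full period cancels. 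I need $H$ to be $C^{m-1}$ with bounded $m$-th derivative, which is implicit in assuming $\|H^{(m)}\|_\infty$ is finite.

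Next I would integrate by parts $m$ times, differentiating $H$ and integrating the exponential. For $j+\rho\neq 0$ this gives
\[h(j)=\frac{1}{2\pi}\cdot\frac{1}{(i(j+\rho))^{m}}\int_{-\pi}^{\pi}H^{(m)}(\xi)e^{-i(j+\rho)\xi}\,d\xi .\]
For the boundary terms to vanish at each stage, the functions $H^{(r)}(\xi)e^{-i(j+\rho)\xi}$ must be periodic. I would verify this by noting that $H^{(r)}(\xi)=\frac{d^r}{d\xi^r}\bigl(K(\xi)e^{i\rho\xi}\bigr)$. This expands as a combination of terms $K^{(a)}(\xi)e^{i\rho\xi}$, and after multiplying by $e^{-i(j+\rho)\xi}$ each term becomes $K^{(a)}(\xi)e^{-ij\xi}$, which is periodic. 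The same computation also shows that no extra phase factors are introduced.

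Finally I would bound the integral crudely by $2\pi\|H^{(m)}\|_\infty$. This gives $|h(j)|\le \|H^{(m)}\|_\infty/|j+\rho|^m$. For $j\notin(-\rho-\ell,-\rho+\ell)$ we have $|j+\rho|\ge \ell$, and the claim follows. In the degenerate case $\ell=0$ the bound is vacuous. If the statement's sign conventions for $h$ turn out to be the opposite of mine (so that the excluded window should be centered at $+\rho$), I would flip the convention to match the statement, since the paper's Fourier transform uses $e^{+ij\xi}$. I expect no genuine obstacle beyond getting this sign consistent.
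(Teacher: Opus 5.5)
Your proposal is correct and follows essentially the same route as the paper. The paper shows, by one integration by parts with the boundary term killed by periodicity, that the Fourier series of $G'(\xi)e^{-i\rho\xi}$ is $i(j+\rho)g(j)$. It then iterates this $m$ times and bounds the Fourier coefficients of $H^{(m)}(\xi)e^{-i\rho\xi}$ by $\|H^{(m)}\|_\infty$, which is exactly your $m$-fold integration by parts, packaged differently. Your convention $h(j)=\frac{1}{2\pi}\int_{-\pi}^{\pi}H(\xi)e^{-i(j+\rho)\xi}\,d\xi$ already matches the paper's, so the window centered at $-\rho$ is right and no sign flip is needed.
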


\begin{proof}
Given a function $G(\xi)$ such that $G(\xi)e^{-i\rho\xi}$ is periodic with period $2\pi$, and letting the Fourier series of $G(\xi)e^{-i\rho\xi}$ be $g(j)$, we claim that the Fourier series of $G'(\xi)e^{-i\rho\xi}$ equals $(j+\rho)i\cdot g(j)$. To show this, we compute the Fourier series of $G'(\xi)e^{-i\rho\xi}$ via integration-by-parts in terms of the Fourier series of $G(\xi)e^{-i\rho\xi}$ which is $g(j)$ by definition: at location $j$ the Fourier series of $G'(\xi)e^{-i\rho\xi}$ is \[\frac{1}{2\pi}\int_{-\pi}^\pi G'(\xi)e^{-i\rho\xi} e^{-ij\xi}\,d\xi=\left.\frac{1}{2\pi}G(\xi)e^{-i\rho\xi} e^{-ij\xi}\right|_{-\pi}^\pi-(-\rho-j)i\cdot\frac{1}{2\pi}\int_{-\pi}^\pi G(\xi)e^{-i\rho\xi} e^{-ij\xi}\,d\xi=(j+\rho)i\cdot g(j)\]
where the first term vanishes because $G(\xi)e^{-i\rho\xi}$ is periodic with period $2\pi$.

Thus we can repeatedly apply this relation $m$ times to $H$ (since $H$ when multiplied by $e^{-i\rho\xi}$ has period $2\pi$, all its derivatives have the same period) to see that the Fourier series of $H^{(m)}(\xi)e^{-i\rho\xi}$ equals $((j+\rho)i)^m\cdot h(j)$.

The Fourier series of $H^{(m)}(\xi)e^{-i\rho\xi}$ (which equals $((j+\rho)i)^m\cdot h(j)$), has magnitude bounded by $\|H^{(m)}\|_{\infty}$ and thus \[|(j+\rho)^m\cdot h(j)|\leq  \|H^{(m)}\|_{\infty}\]

When $j\notin (-\rho-\ell,-\rho+\ell)$ then $|j+\rho|\geq \ell$. So thus
\[|h(j)|\leq \frac{\|H^{(m)}\|_{\infty}}{\ell^m}\]
\end{proof}

The goal of the following lemma is to approximately deconvolve by the binomial distribution $\Bin{N}{p}$, in a way that lets us control the support and the error.

\begin{lemma}\label{lem:deconvolution}
Let $f:\mathbb{Z}\rightarrow\mathbb{C}$ supported on an interval $[I_-,I_+]$ with $\sum_{j=I_-}^{I_+} |f(j)| \leq 1$. Further, letting $F:[-\pi,\pi]\rightarrow\mathbb{C}$ be the Fourier transform of $f$, we assume that, for a given $\eps,\delta>0$ we have that $|F(\xi)|\leq\delta$ when $|\xi|\geq \eps$. Let $\Bin{N}{p}$ be a binomial distribution with $p\in (0,\frac{1}{4}]$. Define $h:\mathbb{Z}\rightarrow\mathbb{R}$ to be the Fourier series of $H(\xi)e^{-ipN\xi}$ for $H$ defined in Lemma~\ref{lem:H-def} (depending on associated parameters $\ell,m,W$, where $\eps+W\leq\frac{\pi}{4}$) and let $h_\ell(j)=h(j)\cdot\mathbbm{1}_{[|j+pN|< \ell]}$. Then $f\ast h_\ell$ is ``a local deconvolution of $f$ with $\Bin{N}{p}$'' in the following sense: $f\ast h_\ell$ is supported on $[I_--pN-\ell,I_+-pN+\ell]$, with the pointwise bound $|f\ast h_\ell|\leq  \max(e^{pN\eps^2}, e^{pN(\eps+W)^2}\delta)+\frac{ \|H^{(m)}\|_{\infty}}{\ell^m}$, and, when convolved with $\Bin{N}{p}$ is close to $f$ in the sense $\|f\ast h_\ell\ast \Bin{N}{p}-f\|_\infty\leq \delta+\frac{\|H^{(m)}\|_{\infty}}{\ell^m}$
\end{lemma}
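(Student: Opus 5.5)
The plan is to view everything in the Fourier domain and split $h$ into its truncated part $h_\ell$ and its tail $h-h_\ell$. Write $\hat b(\xi)=((1-p)+pe^{i\xi})^N$ for the Fourier transform of $\Bin{N}{p}$. Then $\hat b(\xi)=e^{ipN\xi}\cdot((1-p)e^{-ip\xi}+pe^{i(1-p)\xi})^N$, so the Fourier transform of $h$, namely $H(\xi)e^{-ipN\xi}$, satisfies
\[\hat h(\xi)\hat b(\xi)=B_m\!\left(\tfrac{\eps+W-|\xi|}{W}\right).\]
One point to check first is that $H(\xi)e^{-ipN\xi}$ is genuinely $2\pi$-periodic. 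Because $H$ is supported in $[-\pi/4,\pi/4]$, we can take the periodic extension of this compactly supported piece; the hypothesis of Lemma~\ref{lem:h-local} is then satisfied with $\rho=pN$, since the derivative bound from Lemma~\ref{lem:H-def} holds everywhere. Lemma~\ref{lem:h-local} then gives
\[|h(j)|\leq \|H^{(m)}\|_\infty/\ell^m \quad\text{for } |j+pN|\geq \ell,\]
which controls the tail $h-h_\ell$.

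\textbf{Support.} This is immediate. Since $f$ is supported on $[I_-,I_+]$ and $h_\ell$ is supported on $(-pN-\ell,-pN+\ell)$, the convolution $f\ast h_\ell$ is supported on $[I_--pN-\ell,\,I_+-pN+\ell]$.

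\textbf{Closeness after reconvolving.} Write
\[f\ast h_\ell\ast \Bin{N}{p}-f=\bigl(f\ast h\ast\Bin{N}{p}-f\bigr)-f\ast(h-h_\ell)\ast\Bin{N}{p}.\]
For the first term, its Fourier transform is $F(\xi)\bigl(B_m(\cdot)-1\bigr)$. The cutoff $B_m(\cdot)$ equals $1$ on $[-\eps,\eps]$, so the integrand vanishes there. Off that interval $|F|\leq\delta$ and $|B_m-1|\leq 1$. Inverting the transform (an average over $[-\pi,\pi]$) gives a pointwise bound of $\delta$.

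For the second term, convolving with the probability distributions $f$ (whose $\ell_1$ norm is at most $1$) and $\Bin{N}{p}$ cannot increase the sup norm. So it is at most $\|h-h_\ell\|_\infty\leq \|H^{(m)}\|_\infty/\ell^m$.

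\textbf{Pointwise bound.} Similarly write $f\ast h_\ell=f\ast h-f\ast(h-h_\ell)$. The second piece is bounded by $\|H^{(m)}\|_\infty/\ell^m$ as before. For $f\ast h$, bound it by the average over $[-\pi,\pi]$ of $|F(\xi)H(\xi)|$, using $|H(\xi)|\leq e^{pN\xi^2}$ from Lemma~\ref{lem:H-def}:
\begin{itemize}
\item On $|\xi|\leq\eps$: $|F|\leq 1$ and $|H|\leq e^{pN\eps^2}$.
\item On $\eps<|\xi|\leq\eps+W$: $|F|\leq\delta$ and $|H|\leq e^{pN(\eps+W)^2}$.
\item Outside $[-(\eps+W),\eps+W]$: $H=0$.
\end{itemize}
Since an average is at most the maximum of the integrand, this gives the stated $\max(e^{pN\eps^2},\,e^{pN(\eps+W)^2}\delta)$ term.

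\textbf{Main obstacle.} This is bookkeeping about Fourier conventions: the sign of the exponent in $e^{ij\xi}$ versus the inversion formula, and the exact centering shift $e^{-ipN\xi}$. These determine whether $h$ concentrates at $-pN$, as the indicator $\mathbbm{1}_{[|j+pN|<\ell]}$ requires, and whether $\hat h\hat b$ is exactly the cutoff. I would verify this by checking on a delta function that convolving with $\Bin{N}{p}$ shifts mass rightward by about $pN$, so deconvolving must shift it left by $pN$. I would also verify that the paper's inverse-transform sign matches the periodicity argument in Lemma~\ref{lem:h-local}. Finally, $h$ may be complex-valued, but the bounds only use absolute values, so this causes no harm.
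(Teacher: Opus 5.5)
Your proposal is correct and follows the paper's proof essentially step for step. You use the same split into $f\ast h\ast\Bin{N}{p}-f$, which is controlled by $\delta$ because $\hat h\,\hat b$ is exactly the cutoff $B_m$, plus the tail $f\ast(h-h_\ell)\ast\Bin{N}{p}$, controlled by Lemma~\ref{lem:h-local} and the fact that convolving with an $\ell_1$-norm-$\leq 1$ sequence cannot increase the sup norm; the pointwise bound on $f\ast h$ via $|H(\xi)|\leq e^{pN\xi^2}$ is likewise the paper's. Your remark about taking the $2\pi$-periodic extension of $H(\xi)e^{-ipN\xi}$ before invoking Lemma~\ref{lem:h-local} fills in a detail the paper leaves implicit. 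Also, $h$ is in fact real, since $B_m(\frac{\eps+W-|\xi|}{W})$ is even and so $\hat h(-\xi)=\overline{\hat h(\xi)}$.
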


\begin{proof}

We first bound $\|f\ast h\ast \Bin{N}{p}-f\|_\infty\leq \delta$ by analyzing in the Fourier domain; and we will then use the triangle inequality to relate this to the corresponding expression involving $h_\ell$. Recall that $\Bin{N}{p}$ is the sequence of Fourier coefficients of $((1-p)+p\cdot e^{i \xi})^N$. Also, by definition, $\widehat{h}(\xi):=H(\xi)e^{-i p N \xi}=\frac{B_m(\frac{\eps+W-|\xi|}{W})}{((1-p)+p\cdot e^{i \xi})^N}$, for $B_m$ as defined in Lemma~\ref{lem:spline}. Thus we have $\widehat{h}(\xi)\widehat{\Bin{N}{p}}(\xi)=B_m(\frac{\eps+W-|\xi|}{W})$. Thus the Fourier transform of the 3-way convolution $f\ast h\ast \Bin{N}{p}$ exactly equals the Fourier transform of $f$ at all frequencies $|\xi|\leq \eps$. For frequencies of magnitude $>\eps$, we point out that the Fourier transform of $f-f\ast h\ast \Bin{N}{p}$ equals $F(\xi)(1-B_m(\frac{\eps+W-|\xi|}{W}))$ where $1-B_m(\frac{\eps+W-|\xi|}{W})\in [0,1]$ and thus this error term has magnitude $\leq \delta$ in the Fourier domain; thus $\|f\ast h\ast \Bin{N}{p}-f\|_{\infty}\leq \delta$.

We bound $\|h-h_\ell\|_\infty$ by applying Lemma~\ref{lem:h-local} to our $H$, yielding $\|h-h_\ell\|_\infty\leq \frac{ \|H^{(m)}\|_{\infty}}{\ell^m}$, since $h_\ell$ is defined by truncating $h$ to 0 outside of $(-pN-\ell,-pN+\ell)$, and Lemma~\ref{lem:h-local} bounds $h$ exactly outside this range. Next, convolution by a function of $L_1$ norm $\leq 1$ preserves bounds on $L_\infty$ norm, so thus $\|(h-h_\ell)\ast f\ast \Bin{N}{p}\|_\infty\leq \frac{ \|H^{(m)}\|_{\infty}}{\ell^m}$. Thus the triangle inequality gives us our final error bound of \[\|f\ast h_\ell\ast \Bin{N}{p}-f\|_{\infty}\leq \delta+\frac{ \|H^{(m)}\|_{\infty}}{\ell^m}\]

Finally, the support bound is trivial since convolution of two functions each bounded on intervals is bounded on the set-sum of the intervals.

The pointwise bound on $|f\ast h_\ell|$ we prove by first bounding $|f\ast h|$ and then adding in our above bound $\|h-h_\ell\|_\infty\leq \frac{\|H^{(m)}\|_{\infty}}{\ell^m}$. The Fourier transform of $f\ast h$ equals the pointwise product of the Fourier transforms of $f$ and $h$. We use Lemma~\ref{lem:H-def} to bound the Fourier transform of $h$ at frequency $\xi$ by $e^{pN|\xi|^2}$ The Fourier transform of $f$, which we defined as $F$, has magnitude $\leq 1$ everywhere; and thus the product of the Fourier transforms of $f$ and $h$, for $|\xi|\leq \eps$, has magnitude at most $e^{pN\eps^2}$. Otherwise, when $|\xi|\in [\eps,\eps+W]$, we have by assumption that $|F(\xi)|\leq \delta$, giving us the bound $e^{pN(\eps+W)^2}\delta$. Thus the Fourier transform of $f\ast h$ is pointwise bounded in magnitude by $\max(e^{pN\eps^2}, e^{pN(\eps+W)^2}\delta)$, meaning that $|f\ast h|$ is pointwise bounded by this as well; adding the initial triangle inequality term gives us our final bound of \[|f\ast h_\ell|\leq  \max(e^{pN\eps^2}, e^{pN(\eps+W)^2}\delta)+\frac{ \|H^{(m)}\|_{\infty}}{\ell^m}\]

\end{proof}

\begin{corollary}\label{cor:deconvolution}
Let $f:\mathbb{Z}\rightarrow\mathbb{C}$ supported on an interval $[I_-,I_+]$ with $\sum_{j=I_-}^{I_+} |f(j)| \leq 1$. Further, letting $F:[-\pi,\pi]\rightarrow\mathbb{C}$ be the Fourier transform of $f$, we assume that, for a given $\eps,\delta>0$ we have that $|F(\xi)|\leq\delta$ when $|\xi|\geq \eps$. 
If for $P\in(0,\frac{1}{4}]$ and positive integer $R'$ we have $e^{-0.16PR'+1}\leq\delta\leq e^{-4PR'\eps^2}<1$ then there exists a $g(x)$ that is ``an $r$-local deconvolution of $f$ with $\Bin{R'}{P}$'' in the following sense: for $r=300\sqrt{R'P\lceil\log\frac{1}{\delta}\rceil}$ we have $g$ is supported on $[I_--R'P-r,I_+-R'P+r]$, with the pointwise bound $|g(x)|\leq 11\cdot e^{R'P\eps^2}$, and, $\|g\ast \Bin{R'}{P}-f\|_\infty\leq 9\delta$.
\end{corollary}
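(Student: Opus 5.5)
We will take $g:=f\ast h_\ell$ from Lemma~\ref{lem:deconvolution}, with $N:=R'$ and $p:=P$. The remaining work is to pick the free parameters $\ell,m,W$ so that every error term is $O(\delta)$. Set $L:=\lceil\log\frac{1}{\delta}\rceil$ and choose
\[m:=L,\qquad W:=\sqrt{L/(R'P)},\qquad \ell:=r=300\sqrt{R'PL}.\]
Lemma~\ref{lem:deconvolution} then gives support $[I_--R'P-r,\,I_+-R'P+r]$, which is exactly the claimed support. It also gives the approximation bound $\delta+\|H^{(m)}\|_\infty/\ell^m$ and the pointwise bound $\max(e^{R'P\eps^2},e^{R'P(\eps+W)^2}\delta)+\|H^{(m)}\|_\infty/\ell^m$. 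So the corollary reduces to two estimates:
\[\|H^{(m)}\|_\infty/\ell^m\leq 8\delta \qquad\text{and}\qquad e^{R'P(\eps+W)^2}\delta\leq 10\,e^{R'P\eps^2}\ \text{(or similar)}.\]

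First we check the side conditions. The hypothesis $\delta\geq e^{-0.16PR'+1}$ gives $L\leq 0.16PR'$, so $W\leq 0.4$. The hypothesis $\delta\leq e^{-4PR'\eps^2}$ gives $\eps\leq\frac12\sqrt{L/(R'P)}=W/2$. Together these yield $\eps+W\leq 1.5W\leq 0.6<\pi/4$, as Lemma~\ref{lem:H-def} requires.

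For the pointwise bound, expand the exponent:
\[R'P(\eps+W)^2=R'P\eps^2+2R'P\eps W+R'PW^2.\]
The last term is $L$, and the middle term is at most $R'PW^2=L$. This crude estimate gives $e^{R'P(\eps+W)^2}\delta\leq e^{R'P\eps^2}e^{2L}\delta\approx e^{R'P\eps^2}e^{L}$, which is too weak: it loses a factor $1/\delta$. So the cutoff width must be smaller. Take instead $W:=c_0\sqrt{L/(R'P)}$ with a small constant $c_0$ (say $c_0=1/4$). Then
\[R'P(\eps+W)^2\leq R'P\eps^2+(c_0+c_0^2)L\leq R'P\eps^2+L/2,\]
using $2R'P\eps W\leq 2c_0\cdot\frac12 L=c_0L$. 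This bounds the second term of the max by $e^{R'P\eps^2}\delta^{1/2}\leq e^{R'P\eps^2}$. The rounding of $L$ costs at most a factor $e$, which is absorbed in the constant $11$.

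The main obstacle is the derivative-tail term $\|H^{(m)}\|_\infty/\ell^m$, estimated via Lemma~\ref{lem:H-def}:
\[\|H^{(m)}\|_\infty\leq e\sqrt{m}\,e^{R'P(\eps+W)^2}\Big(\tfrac{2m}{W}+e^3\max\big(2.5m,\sqrt{mR'P},(\eps+W)R'P\big)\Big)^m.\]
With $m=L$ and $W=c_0\sqrt{L/(R'P)}$, each term in the bracket is $O(\sqrt{LR'P})$:
\begin{itemize}
\item $2m/W=(2/c_0)\sqrt{LR'P}$;
\item $\sqrt{mR'P}=\sqrt{LR'P}$;
\item $(\eps+W)R'P\leq(\frac12+c_0)\sqrt{LR'P}$;
\item $2.5m=2.5L\leq\sqrt{LR'P}$, since $L\leq 0.16R'P$.
\end{itemize}
Hence the bracket is at most $C\sqrt{LR'P}$ with $C\approx 8+e^3\approx 28$. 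Dividing by $\ell^m=(300\sqrt{R'PL})^L$ gives a ratio of at most $(C/300)^L\leq e^{-2L}\leq\delta^2$, because $300/C>e^2$. The prefactor is $e\sqrt{L}\,e^{R'P(\eps+W)^2}\leq e\sqrt{L}\,e^{R'P\eps^2+L/2}$. The factor $e^{R'P\eps^2}\leq\delta^{-1/4}$ is harmless, and the total is $\lesssim\delta^{2-1/2-1/4}\sqrt{L}\leq\delta$ for any $\delta<1$, up to a constant.

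Combining, $\|g\ast\Bin{R'}{P}-f\|_\infty\leq\delta+O(\delta)\leq 9\delta$, and $|g|\leq e^{R'P\eps^2}+O(1)\leq 11e^{R'P\eps^2}$. The real constraint is the constant bookkeeping: $C/300$ must be small enough to beat the $e^{L/2}$ and $e^{R'P\eps^2}$ losses. The choice of $300$ in $r$ is exactly what provides this margin.
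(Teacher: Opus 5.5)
Your proof is correct and follows the paper's route: take $g=f\ast h_r$ from Lemma~\ref{lem:deconvolution} with $m=\lceil\log\frac{1}{\delta}\rceil$ and truncation radius $r$, then beat $\|H^{(m)}\|_\infty/r^m$ using the bound of Lemma~\ref{lem:H-def}. The one difference is the cutoff width. The paper takes $W=\sqrt{\lceil\log\frac{1}{\delta}\rceil/(R'P)}-\eps$, so that $R'P(\eps+W)^2=\lceil\log\frac{1}{\delta}\rceil$ exactly, and absorbs the resulting $e/\delta$ prefactor with $\big((4+e^3)/300\big)^m\leq\delta^{2.5}$; your $W=\frac{1}{4}\sqrt{L/(R'P)}$ instead keeps the prefactor at $e^{R'P\eps^2+5L/16}$ and pays for it via $e^{R'P\eps^2}\leq\delta^{-1/4}$, and both bookkeepings close within the stated constants.
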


\begin{proof}
Recall our upper bound $\|H^{(m)}\|_{\infty}\leq e\sqrt{m}\cdot e^{R'P(\eps+W)^2} \left((\frac{2m}{W})+(e^3\max(2.5 m,\sqrt{mR'P},(\eps+W) R'P))\right)^m$ from Lemma~\ref{lem:H-def}.

Choose $m=\lceil\log\frac{1}{\delta}\rceil$ and choose $W=\sqrt{\lceil\log\frac{1}{\delta}\rceil/R'P}-\eps$. We point out that by our lower bound on $\delta$ we have $\lceil\log\frac{1}{\delta}\rceil\leq 0.16 R'P$, and thus $\eps+W\leq 0.4\leq\frac{\pi}{4}$ so thus $\eps,W$ satisfy the requirements of Lemma~\ref{lem:H-def}. Then $\max(2.5 m,\sqrt{mR'P},(\eps+W) R'P)=\sqrt{R'P\lceil\log\frac{1}{\delta}\rceil}$ since the last two terms in the $\max$ are equal by definition, and the first term is smaller by our assumption lower-bounding $\delta$. Also, from our assumption upper-bounding $\delta$ we have $\eps\leq \frac{1}{2}\sqrt{(\log\frac{1}{\delta})/R'P}$ and thus $W\geq \frac{1}{2}\sqrt{\lceil\log\frac{1}{\delta}\rceil/R'P}$ and so $\frac{2m}{W}\leq 4\sqrt{R'P\lceil\log\frac{1}{\delta}\rceil}$. Combining, since $R'P(\eps+W)^2=\lceil\log\frac{1}{\delta}\rceil\leq\log\frac{e}{\delta}$ we have $\|H^{(m)}\|_{\infty}\leq \frac{e}{\delta}e\sqrt{m}\left((4+e^3)\sqrt{R'P\lceil\log\frac{1}{\delta}\rceil}\right)^m$

Thus, since $r=300\sqrt{R'P\lceil\log\frac{1}{\delta}\rceil}$ and $\log\frac{300}{4+e^3}\geq 2.5$ we have $\frac{\|H^{(m)}\|_{\infty}}{r^m}\leq e^2\sqrt{\lceil\log\frac{1}{\delta}\rceil}\cdot\delta^{1.5}\leq e^2\delta$. Thus from Lemma~\ref{lem:deconvolution} our final bound is $(1+e^2)\delta\leq 9\delta$.

The pointwise bound of Lemma~\ref{lem:deconvolution} becomes, under our parameter choices and from the above inequalities, \[|f\ast h_r|\leq \max(e^{R'P\eps^2}, e^{R'P(\eps+W)^2}\delta)+\frac{ \|H^{(m)}\|_{\infty}}{r^m}\leq \max(e^{R'P\eps^2},e)+e^2\delta\leq 11\cdot e^{R'P\eps^2}\]
\end{proof}

\section{Induction}
\label{sec:induction} 
The section proves Proposition~\ref{prop:induction}, the main induction step of Theorem~\ref{thm:main}. See Section~\ref{sec:overview} for in-depth discussion and context about how this brings together pieces from all the sections of this paper.

The following is a basic calculation showing how bounds on the Fourier transform combine under pointwise multiplication.

\begin{lemma}\label{lem:product}
Let $f,g:\mathbb{Z}\rightarrow\mathbb{R}$ have Fourier transforms $F,G$ respectively, defined on angles $\xi\in (-\pi,\pi]$, interpreted mod $2\pi$. If $\sum_j |f(j)|\leq 1$ and for all angles $|\xi|\geq \eps$ we have $|F(\xi)|\leq \delta$, and suppose we have $\sum_j |g(j)|\leq 1$ and for all angles $|\xi|\geq \eps'$ we have $|G(\xi)|\leq \delta$. Then the pointwise product $h(j)=f(j)g(j)$ satisfies $\sum_j |h(j)|\leq 1$ and for any angle $|\xi|\geq \eps+\eps'$ we have, letting $H$ be the Fourier transform of $h$, that $|H(\xi)|\leq \delta$. 
\end{lemma}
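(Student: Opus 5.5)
The plan is to prove both claims directly: the $\ell_1$ bound termwise, and the Fourier bound by writing $H$ as a convolution of $F$ and $G$ on the circle and bounding the integrand pointwise.

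\textbf{The $\ell_1$ bound.} Since $\sum_j |g(j)|\leq 1$, every entry satisfies $|g(j)|\leq 1$. Hence
\[\sum_j |h(j)|=\sum_j |f(j)|\,|g(j)|\leq \sum_j |f(j)|\leq 1.\]

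\textbf{The Fourier bound.} Both $f$ and $g$ are absolutely summable, so $F,G$ are continuous on the circle. Also $|F|\leq \sum_j|f(j)|\leq 1$ and $|G|\leq 1$ everywhere. The transform of a pointwise product is the normalized circular convolution of the transforms:
\[H(\xi)=\frac{1}{2\pi}\int_{-\pi}^{\pi}F(\eta)\,G(\xi-\eta)\,d\eta.\]
To justify this, substitute $g(j)=\frac{1}{2\pi}\int G(\theta)e^{-ij\theta}d\theta$ into $\sum_j f(j)g(j)e^{ij\xi}$. Then swap sum and integral, which is legitimate by absolute summability of $f$ and boundedness of $G$. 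The inner sum is $\sum_j f(j)e^{ij(\xi-\theta)}=F(\xi-\theta)$, and a change of variables gives the displayed formula.

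Next, bound the integrand pointwise. Fix $\xi$ whose circular distance to $0$ is at least $\eps+\eps'$, and take any $\eta$.
\begin{itemize}
\item If $|\eta|\geq\eps$ (circular distance), then $|F(\eta)|\leq\delta$ and $|G(\xi-\eta)|\leq 1$.
\item If $|\eta|<\eps$, then the circular distance of $\xi-\eta$ to $0$ is greater than $\eps'$. This is the triangle inequality for the metric on $\mathbb{R}/2\pi\mathbb{Z}$: distance of $\xi$ to $0$ is at most distance of $\xi-\eta$ to $0$ plus $|\eta|$. So $|G(\xi-\eta)|\leq\delta$ and $|F(\eta)|\leq 1$.
\end{itemize}
Either way $|F(\eta)G(\xi-\eta)|\leq\delta$. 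Averaging over $\eta$ then gives $|H(\xi)|\leq\delta$.

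The only point needing care is the wrap-around mod $2\pi$. This is why every distance above is taken in the circle metric rather than as an absolute value of a representative in $(-\pi,\pi]$. With that convention there is no real obstacle; the argument is a routine calculation.
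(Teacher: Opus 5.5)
Your proposal is correct and follows the paper's proof essentially step for step. Both bound $\sum_j|h(j)|$ using $|g(j)|\leq 1$, and both write $H$ as the normalized circular convolution of $F$ and $G$, noting that every term of the integrand is at most $\delta$ because either $|\eta|\geq\eps$ or $|\xi-\eta|\geq\eps'$. You also justify the convolution identity and state the wrap-around mod $2\pi$ explicitly via the circle metric, which the paper leaves implicit.
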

\begin{proof}
    We note that, since $|f|,|g|$ are both bounded pointwise by $1$, $\sum_j |h(j)| = \sum_j |f(j)||g(j)| \leq \sum_j f(j) \leq 1$. Similarly, $F,G$ are bounded in magnitude by $1$ everywhere. 
    
    The pointwise product in the spatial domain corresponds to convolution in the Fourier domain, so we have $H(\xi) = \frac{1}{2\pi} \int_{-\pi}^{\pi} F(\omega)G(\xi - \omega) d\omega$. If $|\xi| \geq \eps + \eps'$, then for any value of $\omega \in (-\pi,\pi]$, we must have either $|\omega| \geq \eps$ or $|\xi - \omega| \geq \eps'$, so we have $|F(\omega)G(\xi-\omega)| \leq \delta$. Since $H(\xi)$ is the average of this expression over an interval of length $2\pi$, we must have $|H(\xi)| \leq \delta$, as desired. 
\end{proof}

The following lemma shows how, given a triple product of expected statistics with $\tau$ discrepancy between $x,y$, we can construct a single statistic $S$ that, when summed over shifts $j$, weighted by $A(j)$, still has large discrepancy between $x,y$, yet the weights $A(j)$ are extremely smooth and have bounded support.

\begin{lemma}\label{lem:three-to-one}
Let $x,y$ be two binary strings and let $p\in(0,\frac{1}{4}]$ be a retention probability.
Suppose there is a weight function $w:\mathbb{Z}\rightarrow\mathbb{R}_{\geq 0}$ supported on a positive integer interval $[I_-,I_+]$, having sum $\leq 1$, and for frequencies $|\xi|\geq \eps$ has Fourier transform of magnitude $\leq \delta$, for $\eps\in(0,\frac{1}{3}p]$. Suppose there is a $k$-tuple of distinct positive integers $s$ with $\min_i s_i=1$ and $\max_i s_i=\sigma$ such that, for three offsets $\ell_0=0,\ell_1,\ell_2\geq 0$, defining $f(j):=w(j) x_p^{(s+j)}$ and $g(j):=w(j) y_p^{(s+j)}$, with $|\sum_j f(j)f(j+\ell_1)f(j+\ell_2)-g(j)g(j+\ell_1)g(j+\ell_2)|\geq \tau$ or the corresponding product of two terms omitting $\ell_2$, for some $\tau>0$. Defining $\lambda=3I_++\ell_0+\ell_1+\ell_2$ then: there exists a probability $P\in[p,3p]$, a tuple $S$ of $\leq 3k$ distinct positive integers having minimum value 1 and maximum value $\leq 10(I_++\sigma)$, and there exists a weight function $A:\mathbb{Z}\rightarrow\mathbb{R}_{\geq 0}$ that is supported within $[I_-,\lambda]$, sums to $\leq 1$, and where $A$ has Fourier transform of magnitude $\leq 300 (I_++\sigma)^2\cdot\left(\lambda\delta+\max(\lambda,\frac{8\pi}{3\eps})\cdot e^{-\eps I_-/4}\right)$ at frequencies greater than $9\eps$ such that $|\sum_j A(j)(x_P^{(S+j)}-y_P^{(S+j)})|\geq \frac{\tau-3\cdot 2^{-I_+}}{(10(I_++\sigma))^{3k}}$.
\end{lemma}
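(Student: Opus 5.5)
Expand the triple product termwise, convert each term to single-trace statistics with Lemma~\ref{lem:alpha}, and pigeonhole over statistics $S$. Each summand is $w(j)w(j+\ell_1)w(j+\ell_2)\,x_p^{(s+j)}x_p^{(s+j+\ell_1)}x_p^{(s+j+\ell_2)}$. Lemma~\ref{lem:beta} (with $s_1=s_2=s_3=s$ and shifts $j,j+\ell_1,j+\ell_2$) writes the product of expectations as $\sum_{j',S}\alpha_{s+j,s+j+\ell_1,s+j+\ell_2}(S+j')\,x_P^{(S+j')}$ with $P=1-(1-p)^3\in[p,3p]$; the same holds for $y$ with the same coefficients. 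So the difference in the hypothesis equals
\[\sum_S\sum_{j'}A_S(j')\big(x_P^{(S+j')}-y_P^{(S+j')}\big),\qquad A_S(j'):=\sum_j \tilde w(j)\,\alpha_{s+j,s+j+\ell_1,s+j+\ell_2}(S+j'),\]
where $\tilde w(j)=w(j)w(j+\ell_1)w(j+\ell_2)$. This is exactly the form $A(j)$ of Lemma~\ref{lem:fourier-alpha} with $(j_1,j_2,j_3)=(0,\ell_1,\ell_2)$ and weight $\tilde w$. In the two-term case I take a third trivial factor, or use the analogous two-trace simulation, which gives $P=1-(1-p)^2$.

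For the smoothness of $\tilde w$, I apply Lemma~\ref{lem:product} twice to $w$, $w(\cdot+\ell_1)$ and $w(\cdot+\ell_2)$. Shifts do not change Fourier magnitudes, so $\tilde w$ has sum $\le1$, lies in $[I_-,I_+]$, and has Fourier transform $\le\delta$ beyond $3\eps$. Then Lemma~\ref{lem:fourier-alpha}, applied with $\eps_{lem}=3\eps\le p$, gives the following for each $A_S$:
\begin{itemize}
\item[1)] it is supported in $[I_-,\lambda]$;
\item[2)] it sums to $\le1$;
\item[3)] its Fourier transform beyond $9\eps$ is at most $3\max_i(S_i)^2\big(\lambda\delta+\max(\lambda,\tfrac{8\pi}{3\eps})e^{-\eps I_-/4}\big)$.
\end{itemize}
Once $\max_i S_i\le 10(I_++\sigma)$, the prefactor is at most $300(I_++\sigma)^2$, which is the stated bound.

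The remaining step, which I expect to be the main obstacle, is to restrict to statistics $S$ with $\max_i S_i\le 10(I_++\sigma)$ and bound what is discarded. The bits of $U_1,U_2,U_3$ at indices up to $I_++\ell_m+\sigma$ must come from positions of $U$ no larger than roughly the sum of these. That sum could exceed $10(I_++\sigma)$ when $\ell_1,\ell_2$ are large, so I first show that $\tilde w\equiv0$ unless $\ell_1,\ell_2\le I_+-I_-$. Then every relevant trace index is $\le 2I_++\sigma$, and the union $S$ has $\max_i S_i\le 3(2I_++\sigma)\le 10(I_++\sigma)$. Since $\min_i S_i=1$ after normalizing, the real issue is the window positions, and I would treat the relevant part of $x$ as short.

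The $3\cdot2^{-I_+}$ loss then comes from tail events, for example one of the three simulated traces needing more than the available $U$ bits. I would handle these with a Chernoff-type bound of $2^{-I_+}$ per trace, using that statistics far out have exponentially small weight, and absorb them via the triangle inequality into $\tau-3\cdot2^{-I_+}$.

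Finally, count the admissible $S$. Each is a subset of $\{1,\dots,10(I_++\sigma)\}$ of size $\le 3k$, so there are at most $(10(I_++\sigma))^{3k}$ of them. By pigeonhole, some $S$ satisfies $|\sum_j A_S(j)(x_P^{(S+j)}-y_P^{(S+j)})|\ge(\tau-3\cdot2^{-I_+})/(10(I_++\sigma))^{3k}$. Taking $A=A_S$ gives nonnegativity, since both $\alpha$ and $\tilde w$ are nonnegative.
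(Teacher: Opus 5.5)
Your skeleton matches the paper's proof: expansion via Lemma~\ref{lem:beta}, the combined weight $\tilde w$ made smooth via Lemma~\ref{lem:product}, Lemma~\ref{lem:fourier-alpha} applied at $3\eps$, and pigeonhole over at most $(10(I_++\sigma))^{3k}$ choices of $S$. The gap is in the truncation step, which rests on a false claim. It is not true that the union $S$ deterministically satisfies $\max_i S_i\le 3(2I_++\sigma)$.

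In the simulation of Lemma~\ref{lem:simulation}, the $U$-position of the $i$-th bit of $U_m$ is the position of the $i$-th head of coin $m$. That position is not bounded by any function of the queried indices, because the other coins can absorb arbitrarily many bits of $U$ in between. For example, with positive probability the first $N$ bits of $U$ all go only to $U_2$. Then a queried bit of $U_1$ lies beyond position $N$ while a queried bit of $U_2$ lies near the start, so $\alpha(S+j')>0$ for $S$ of arbitrarily large span. Lemma~\ref{lem:alpha-range} only bounds the offset $j'$ of the \emph{first} queried bit, since every bit of $U$ before it is unqueried in all three traces; that argument does not extend to the last queried bit.

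This is also why your account of the $3\cdot 2^{-I_+}$ loss misses. If your deterministic bound held, the truncation would be exact and nothing would be lost. The length of $x$ plays no role either, since the $\alpha$'s are defined by the coin process alone, with out-of-range trace bits read as $0$.

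The correct truncation, which is what the paper does, is probabilistic.
\begin{itemize}
\item Since $\tilde w(j)\ne 0$ forces $j+\ell_m\le I_+$, every queried index in every $U_m$ is at most $M:=I_++\sigma$. Your $2I_++\sigma$ is too loose: with it, a Chernoff bound over $10(I_++\sigma)$ positions no longer gives the $2^{-I_+}$ rate when $\sigma\ll I_+$.
\item If each coin shows at least $M$ heads among the first $10M$ positions of $U$, then all queried bits lie at $U$-positions $\le 10M$, so $\max_i S_i\le 10M$.
\item Conditioned on at least one head, each coin is heads with probability $p/P\ge\frac13$, independently across positions. Chernoff therefore bounds each coin's failure probability by $2^{-M}$, for a total of $3\cdot 2^{-M}\le 3\cdot 2^{-I_+}$.
\item For fixed $j$ the $\alpha$'s form a probability distribution over $S+j'$, statistics lie in $[0,1]$, and $\sum_j\tilde w(j)\le 1$. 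So the discarded terms contribute an amount in $[0,3\cdot 2^{-I_+}]$ to each of the $x$- and $y$-sums, hence at most $3\cdot 2^{-I_+}$ to their difference.
\end{itemize}
With this replacement, the rest of your argument goes through unchanged.
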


\begin{proof}

Using Lemma~\ref{lem:beta} we can reexpress a triple product of expected statistics as a convex combination of statistics of order $\leq 3k$, defining $P=1-(1-p)^3$:
\[x^{(s+j)}_p x^{(s+j+\ell_1)}_px^{(s+j+\ell_2)}_p=\sum_{j'\geq 0, S:\min_i S_i=1} \alpha_{s+j,s+j+\ell_1,s+j+\ell_2}(S+j') x_P^{(S+j')}\]

Multiplying by the corresponding triple product of the weight function $w$ and summing over $j$ yields, by definition of $f$, that
\[\sum_j f(j)f(j+\ell_1)f(j+\ell_2)=\sum_{j'\geq 0, S:\min_i S_i=1} x_P^{(S+j')}\sum_j \alpha_{s+j,s+j+\ell_1,s+j+\ell_2}(S+j') \prod_{m=0}^2 w(j+\ell_m)\]
and this is thus at least $\tau$ different from the corresponding mixture of statistics of $y$.

We combine the 3 versions of $w$ into 1 by defining $w^{comb}(j):=\prod_{m=0}^2 w(j+\ell_m)$.

In the subscript of $\alpha$, each statistic $s+j+\ell_m$ is shifted by an amount which appears in the factor $w(j+\ell_m)$ at the end, and thus we may ignore any statistics for which $j+\ell_m$ is outside the interval $[I_-,I_+]$ defined as the support of $w$. As a reminder, entries of $s$ are all $\leq \sigma$ by definition of $\sigma$. Thus the subscript of $\alpha$ effectively consists of statistics whose maximum indices are all at most $I_++\sigma$. Recall the probabilistic process defining $\alpha_{s_0,s_1,s_2}$: for each location in a trace (that we think of as having retention probability $P$), we flip three independent coins of bias $p$ but \emph{condition} on at least one of them being heads; heads in coin $m\in\{0,1,2\}$ means we use that bit in simulated trace $m$. For statistics $s_0,s_1,s_2$ of span $\leq I_++\sigma$, we claim that none of these statistics will use any bits beyond location $10(I_++\sigma)$ unless for the first $10(I_++\sigma)$ coin tosses, at most $I_++\sigma$ heads occur for at least 1 of the 3 coin types. Since at least one of the coins must be heads with each flip, the probability of each single coin being heads is $\geq \frac{1}{3}$; thus Chernoff bounds on binomial tail probabilities easily show that this occurs with probability $\leq 3\cdot 2^{-(I_++\sigma)}$, which we bound by the simpler expression $\leq 3\cdot 2^{-I_+}$.

Thus the difference between

\[\sum_{\substack{j'\geq 0, S:\min_i S_i=1\\\max_i S_i\leq 10(I_++\sigma)}}x_P^{(S+j')}\sum_j \alpha_{s+j,s+j+\ell_1,s+j+\ell_2}(S+j') \cdot w^{comb}(j)\]

and the corresponding expression for $y$ has magnitude at least $\tau-3\cdot 2^{-I_+}$.

We thus use pigeonhole to extract a single $S$ for which the rest of the expression has significant difference: since $S$ consists of $\leq 3k$ (distinct) entries between 1 and $10(I_++\sigma)$, the number of such $S$ is at most $(10(I_++\sigma))^{3k}$. And thus there must exist a single $S$ for which the difference between 
\begin{equation}\label{eq:w-combined}\sum_{j'\geq 0}x_P^{(S+j')}\sum_j \alpha_{s+j,s+j+\ell_1,s+j+\ell_2}(S+j') \cdot w^{comb}(j)\end{equation}
and the corresponding expression of $y$ has magnitude at least $\frac{\tau-3\cdot 2^{-I_+}}{(10(I_++\sigma))^{3k}}$

We now define the coefficients claimed by the lemma: let $A(j'):=\sum_\ell  w^{comb}(\ell)\cdot \alpha_{s+\ell,s+\ell+\ell_1,s+\ell+\ell_2}(S+j')$, so that Equation~\ref{eq:w-combined} is exactly
$\sum_{j'} A(j')x_P^{(S+j')}$, with an analogous expression for  $y$, so that $|\sum_j A(j)(x_P^{(S+j)}-y_P^{(S+j)})|\geq \frac{\tau-3\cdot 2^{-I_+}}{(10(I_++\sigma))^{3k}}$ as claimed.

We now apply Lemma~\ref{lem:fourier-alpha} to ``transfer'' the Fourier bounds on $w^{comb}$ to analogous bounds on the Fourier transform $\hat{A}$.

We bound the Fourier transform of $w^{comb}(j)=\prod_{m=0}^2 w(j+\ell_m)$ by considering it as a 3-way product of shifted versions of $w$, and applying  Lemma~\ref{lem:product} for each product, to conclude that we can retain the same Fourier bounds for $w^{comb}(j)$ of $\delta$, just starting at frequency $3\eps$ instead of $\eps$.

Thus Lemma~\ref{lem:fourier-alpha}, applied for $3\eps$ instead of $\eps$, yields that $A$ sums to $\leq 1$, and letting $\lambda=3I_++\ell_0+\ell_1+\ell_2$ we have that $A$ is supported in the interval $[I_-,\lambda]$, and for any frequency $|\xi|\geq 9\eps$ we have $|\hat{A}(\xi)|\leq 300 (I_++\sigma)^2\cdot\left(\lambda\delta+\max(\lambda,\frac{8\pi}{3\eps})\cdot e^{-\eps I_-/4}\right)$.

The two-product case is identical, using the two-trace analog of the
$\alpha,\beta$ decomposition; equivalently, one may use an empty third
statistic whose expectation is $1$. This only improves the order, support,
and tail bounds shown above, so the same conclusions hold.
\end{proof}

We now prove the main induction step. We break the proof into sub-lemmas for readability.

\begin{proposition}\label{prop:induction}
Given $n$-bit strings $x,y$ where $d=\min\{i:x_i\neq y_i\}$ is the point of first discrepancy, and a retention probability $p\in(0,\frac{1}{12}]$. For integer $R$ larger than some universal constant, if $\tau$ satisfies $4k\log R\leq \log\frac{1}{\tau}\leq R^{0.8}p^2$ and if there exists a statistic $s$ consisting of $k$ integers in $\{1,\ldots,\sigma\}$ with $\min_i s_i=1$, where $\sigma\leq  \sqrt{Rp\log\frac{1}{\tau}}$  
(``$s$ has order $k$ and span $\sigma$'') such that for some integer $\ell\leq 2Rp+ \sqrt{R p\log\frac{1}{\tau}}$ we have $|x_{[d-2R:n],p}^{(s+\ell)}-y_{[d-2R:n],p}^{(s+\ell)}|\geq \tau$ then: there exists an integer $R'\in[3R^{1.1},R^{2}]$ and a probability $P\in [p,3p]$ and a bound $\tau'\geq 1/poly(1/\tau)$ for which there is an order $\leq 3k$ statistic $S$ of span $\leq  \sqrt{\frac{1}{2}R'P\log\frac{1}{\tau'}}$ and $12k\log R'\leq \log\frac{1}{\tau'}\leq ({R'}/2)^{0.8}P^2$ and a location $\ell'\leq R'P+ \sqrt{\frac{1}{2}R' P\log\frac{1}{\tau'}}$ for which $|x_{[d-R-R':n],P}^{(S+\ell')}-y_{[d-R-R':n],P}^{(S+\ell')}|\geq \tau'$.
\end{proposition}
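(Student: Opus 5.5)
The plan follows the roadmap of Section~\ref{sec:overview}, using Lemma~\ref{lem:binomial-shift} twice. First I will move the window start from $d-2R$ forward to $d-R$. By Lemma~\ref{lem:binomial-shift}, $x_{[d-2R:n],p}^{(s+\ell)}$ equals, up to an error $\Pr[\Bin{R}{p}\geq \ell]$, the convolution of $j\mapsto x^{(s+j)}_{[d-R:n],p}$ with $\Bin{R}{p}$. Next I fix a smooth weight $w$ supported on an interval $[I_-,I_+]$ of length $\Theta(\sqrt{Rp\log\frac1\tau})$ around $Rp$, with Fourier decay below $\mathrm{poly}(\tau)$ beyond frequency $\eps\approx\sqrt{\log\frac1\tau/(R'P)}$; the construction $B_{I,\eps,\cdot}$ of Lemma~\ref{lem:cumulative-epsilon} provides this. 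I then set $f(j)=w(j)x^{(s+j)}_{[d-R:n],p}$ and $g(j)=w(j)y^{(s+j)}_{[d-R:n],p}$. The weight is chosen to equal $1$ wherever the binomial mass relevant to location $\ell$ lives, so we get $|f_b(\ell)-g_b(\ell)|\geq\tau/2$. Abruptness comes for free: $x,y$ agree before $d$, and $d$ lies near position $Rp$ of a trace of $x_{[d-R:n]}$. The probability that the statistic $s+j$ with $j\ll Rp$ reaches bit $d$ decays like a binomial lower tail. This gives the tail bound $\sum_j|f(j)-g(j)|e^{-2uj}\leq\tau^{-c}$ needed by Lemma~\ref{lem:combined-contrapositive}, after shifting indices so that the support lies in $\{-R,\dots,R\}$ and rescaling so that $\|f\|_1,\|g\|_1\leq\frac12$.

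Next I apply Lemma~\ref{lem:combined-contrapositive}. In case 1 there is a summed double or triple product separating $f_b,g_b$. Since $f_b$ is again a window-$[d-2R:n]$ statistic sequence (up to error), I can absorb the convolution by reweighting with $w\ast\Bin{R}{p}$, which is still smooth and supported on a slightly larger interval. Lemma~\ref{lem:three-to-one} then yields a statistic $S$ of order $\leq 3k$, a probability $P\in[p,3p]$, and smooth coefficients $A$ supported in $[I_-,\lambda]$, with $|\sum_jA(j)(x_P^{(S+j)}-y_P^{(S+j)})|\geq\tau^{O(1)}$ because $k\log R\lesssim\log\frac1\tau$. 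In case 2, Corollary~\ref{cor:shift} gives a first-order linear statistic with weights $e^{ij2\pi/R^2}$ times a smooth envelope. Splitting into real and imaginary parts, and positive and negative parts, and applying pigeonhole, produces the same kind of conclusion with $S=s$ and $P=p$ (or rather a trivially simulated $P$).

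Then I deconvolve. Corollary~\ref{cor:deconvolution}, applied to $A$ with $N=R'$ and accuracy $\delta=\tau^{C}$, gives $A'$ supported within $O(\sqrt{R'P\log\frac1\tau})$ of $[I_--R'P,\lambda-R'P]$, with $\|A'\ast\Bin{R'}{P}-A\|_\infty\leq 9\delta$. Applying Lemma~\ref{lem:binomial-shift} in reverse writes $\sum_jA'(j)(x^{(S+j)}_{[d-R-R':n],P}-y^{(S+j)}_{[d-R-R':n],P})$ as the earlier sum plus negligible error. A final pigeonhole over the $O(R^2)$ support points (with $|A'|\leq 11e^{R'P\eps^2}=\mathrm{poly}(1/\tau)$) extracts a single $\ell'$ with discrepancy $\tau'=\mathrm{poly}(\tau)$.

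The main obstacle is parameter bookkeeping. I will choose $R'$ as the largest value with $\sqrt{R'P\log\frac1\tau}\lesssim\sqrt{Rp\log\frac1\tau}$-scale support, i.e.\ roughly $R'\approx R^{2}$ clipped to $[3R^{1.1},R^2]$, using $\log\frac1\tau\leq R^{0.8}p^2$. I then pick $\tau'$ as a fixed polynomial of $\tau$ so that three things hold: $\ell'\leq R'P+\sqrt{\frac12R'P\log\frac1{\tau'}}$, the span $10(I_++\sigma)\leq\sqrt{\frac12R'P\log\frac1{\tau'}}$, and $12k\log R'\leq\log\frac1{\tau'}\leq(R'/2)^{0.8}P^2$. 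The delicate point is that the deconvolution's smoothness hypothesis $\delta\leq e^{-4PR'\eps^2}$ and the abruptness bound $\ell'-R'P\lesssim\sqrt{R'P\log\frac1{\tau'}}$ must both hold with only a constant-power loss in $\tau$. I expect to verify this last, and to possibly need to enlarge the constants in $I_\pm$.
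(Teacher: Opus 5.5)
Your outline follows the same architecture as the paper's proof: binomial shift forward, weighted sequences $f,g$, Lemma~\ref{lem:combined-contrapositive}, then Lemma~\ref{lem:three-to-one} or Corollary~\ref{cor:shift}, deconvolution, binomial shift back, pigeonhole. However, the scales you commit to are inconsistent, and those scales are the substance of this proposition.

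A weight supported on an interval of length $L$ can have Fourier transform below $\tau^{O(1)}$ (relative to its mass) at all $|\xi|\geq\eps$ only if $L\gtrsim\eps^{-1}\log\frac1\tau$. This is a Remez-type uncertainty principle; in Lemma~\ref{lem:cumulative-epsilon} it appears as the requirement $m\eps\gtrsim\log\frac1\tau$ on the transition length $m$. So a $w$ of width $\Theta(\sqrt{Rp\log\frac1\tau})$ only decays beyond $\eps\approx\sqrt{\log\frac1\tau/(Rp)}$. The hypothesis $\delta\leq e^{-4PR'\eps^2}$ of Corollary~\ref{cor:deconvolution} then forces $R'=O(R)$, violating $R'\geq 3R^{1.1}$. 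Indeed, your own rule ``largest $R'$ with $\sqrt{R'P\log\frac1\tau}\lesssim\sqrt{Rp\log\frac1\tau}$'' gives $R'\approx R$, not $R^2$.

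Nor is $R'\approx R^2$ reachable at all. The left transition of $w$ must fit inside $[0,Rp]$: shifts $j<0$ are void, and the flat part must contain $\ell-t$ for $t$ in the bulk of $\Bin{R}{p}$, i.e.\ sit near $Rp$. Correspondingly, Lemma~\ref{lem:fourier-alpha} carries the factor $e^{-\eps I_-/12}$ with $I_-\lesssim Rp$. Hence $\eps\gtrsim\log\frac1\tau/(Rp)$ and $R'\lesssim R^2p/\log\frac1\tau$.

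Enlarging constants in $I_\pm$ cannot fix this; $w$ must have width $\Theta(Rp)$. The paper takes $w=\frac{1}{Rp}B_{[0.6Rp,1.4Rp],\eps,0.3Rp}$, flat on $[0.9Rp,1.1Rp]$ (which contains the binomial bulk once one shows $\ell\in[1.95Rp,2.05Rp]$), with $\eps=c'\log\frac1\tau/(Rp)$ and $R'=c''R^2p/\log\frac1\tau$, $c'$ large and $c'c''$ small. Then $A$ lives on $[0.6Rp,5.8Rp]$, the deconvolution radius is $\leq 0.3Rp$, and the span of $S$ is $\leq 15Rp$. All of these are at most $Rp\sqrt{c''c_f/2}\leq\sqrt{\frac12R'P\log\frac1{\tau'}}$ for $\tau'=\tau^{c_f}$ with $c_f$ a large constant. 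This is exactly how the next step's abruptness hypothesis survives with only a constant-power loss in $\tau$.

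Three further steps fail as written.

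\emph{Deconvolution direction.} Lemma~\ref{lem:binomial-shift} gives $\sum_{j''}A'(j'')x^{(S+j'')}_{[d-R-R':n],P}\approx\sum_j x^{(S+j)}_{[d-R:n],P}\sum_{j'}A'(j+j')\,bin(R',j',P)$. This is a correlation, not a convolution, so what you need is $\overleftarrow{A}'\ast\Bin{R'}{P}\approx\overleftarrow{A}$. Deconvolving $A$ itself puts $A'$ near $[I_--R'P,\lambda-R'P]$, i.e.\ at meaningless negative shifts. Deconvolving the reflection and reflecting back puts it near $R'P+[I_-,\lambda]$, which is what yields $\ell'\leq R'P+O(Rp)$.

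\emph{Case~1.} $f_b$ is not a reweighted $[d-2R:n]$ statistic sequence. In $f_b(j)=\sum_t bin(R,t,p)w(j-t)x^{(s+j-t)}_{[d-R:n],p}$ the weight depends on the number $t$ of bits retained from the first $R$ positions, which the trace does not reveal. Freezing it incurs an error of order $\sqrt{Rp\log\frac1\tau}\,\|w'\|_\infty$ per entry. That is only $1/\mathrm{poly}(R)$, not small against a $\tau^{O(1)}$ signal (recall $\tau\leq R^{-4k}$). The paper instead writes the triple product of $f_b$ as a binomial average over shifts $(j_0,j_1,j_2)$ and discards shifts outside $[0.9Rp,1.1Rp]$. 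It then pigeonholes one triple and applies Lemma~\ref{lem:three-to-one} to $f$ itself with offsets $\ell_m-j_m$.

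\emph{Case~2.} Splitting into positive and negative parts creates a kink where $\sin((j-Rp)\frac{2\pi}{R^2})$ changes sign. That destroys the Fourier decay needed for deconvolution. It is also unnecessary, since Corollary~\ref{cor:deconvolution} accepts complex input. Keep $A(j)=w(j)e^{-i(j-Rp)2\pi/R^2}$, whose transform is $\hat w$ shifted by $2\pi/R^2\leq\eps$, and pigeonhole only at the very end.
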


\begin{proof}
Because we assume that $\log\frac{1}{\tau}\leq R^{0.8}p^2$, we can thus say that for $\emph{any}$ constant $c^*>0$ we have $\log\frac{1}{\tau}\leq c^* R p$ for sufficiently large $R$. We will repeatedly use variants of this bound to show that tail probabilities of the form $e^{-c^*Rp}$ are smaller than $\tau$---or even $\tau$ to any constant power---under the assumptions of the proposition: that $R$ exceeds an appropriately chosen universal constant.

As a first use of this fact, we show that $\ell\in [1.95Rp,2.05Rp]$. The upper bound follows from our assumption that $\ell\leq 2Rp+ \sqrt{R p\log\frac{1}{\tau}}$, since from above, $\log\frac{1}{\tau}$ is smaller than any desired constant multiple of $Rp$. On the other hand since $s+\ell$ has a maximum index $\sigma+\ell$, then a difference in $x_{[d-2R:n],p}^{(s+\ell)}$ versus $y_{[d-2R:n],p}^{(s+\ell)}$ can only happen if some bit at location $\geq d$ in the original strings ends up at location $\leq \sigma+\ell$ in the trace; this happens when $<\sigma+\ell$ of the initial $2R$ bits of the substring $x_{[d-2R:n]}$ are retained in the trace, which happens with probability $\leq \Pr[\Bin{2R}{p}<\sigma+\ell]$, which thus bounds $\tau$. Since by assumption $\sigma\leq  \sqrt{Rp\log\frac{1}{\tau}}$, which is smaller than $0.025Rp$ for large enough $R$, thus if $\ell<1.95Rp$ then we have $\tau\leq  \Pr[\Bin{2R}{p}<1.975Rp]$, which violates our assumptions for large enough $R$, since the right hand side decays exponentially with $Rp$.

We now choose a frequency cutoff $\eps=c' \frac{1}{Rp}\log\frac{1}{\tau}$, for a constant $c'$ to be chosen later (see Equation~\ref{eq:constants}), and use this to choose a smooth function $w:=\frac{1}{Rp} B_{[0.6Rp,1.4Rp],\eps,0.3Rp}$ as defined in Lemma~\ref{lem:cumulative-epsilon}, which from Lemma~\ref{lem:cumulative-epsilon} thus is supported on integers in $[0.6Rp,1.4Rp]$, takes values in the range $[0,\frac{1}{Rp}]$ and is equal to $\frac{1}{Rp}$ on integers in $[0.9Rp,1.1Rp]$, has sum $0.5$, and its Fourier transform at angles $|\xi|\geq \eps$ has magnitude at most $\delta:=\frac{8\pi}{\eps} e^{-0.3 Rp \eps/4}=\frac{8\pi}{\eps} \tau^{0.075 c'}$.

Define the nonnegative sequences \[f(j-Rp):=w(j) x^{(s+j)}_{[d-R:n],p} \quad\textrm{and}\quad g(j-Rp):=w(j) y^{(s+j)}_{[d-R:n],p}\] namely, the expected value of any shift $j$ of the statistic $s$, for a deletion channel of retention probability $p$ starting $R$ to the left of the point of first discrepancy $d$, scaled by the weights $w(j)$, and where we shift the sequences $Rp$ to the left. The shift by $-Rp$ makes the expected location of the point of first discrepancy, in a deletion channel starting at $d-R$, end up at location 0 in $f$ and $g$, so thus we effectively ``center $f,g$ around the expected location of the point of first discrepancy in the trace.''

From the support bounds on $w$, we have that the sequences $f,g$ are supported in $[-0.4Rp,0.4Rp]$. And since statistics take values in $[0,1]$, we have $\|f\|_1,\|g\|_1\leq \sum_j w(j) =0.5$. 

We will apply Lemma~\ref{lem:combined-contrapositive} to the sequences
\begin{equation*}
    f_b := f \ast \Bin{R}{p}, \quad g_b := g \ast \Bin{R}{p}
\end{equation*}
Lemma~\ref{lem:combined-contrapositive} requires certain conditions to hold for the sequences $f,g$ and $f_b,g_b$, which we now verify.

\begin{lemma}
    \label{lem:contrapositive-conditions}
    For $R$ large enough, there exists $\ell \in [1.95RP,2.05Rp]$ and $\tau_{lem} := \frac{\tau}{2Rp}$ such that $\abs{f_b(\ell-Rp) - g_b(\ell-Rp)} \geq \tau_{lem}$, and $\sqrt{\frac{\log \frac{1}{\tau_{lem}}}{Rp}} \leq \frac{\pi}{8}$, and \begin{equation}
        \label{eq:left-tail}
        \sum_j |f(j)-g(j)|\cdot e^{-2j\sqrt{(\log\frac{1}{\tau_{lem}})/(Rp)}}\leq poly\left(\frac{1}{\tau_{lem}}\right)
    \end{equation}
\end{lemma}
\begin{proof}
    We relate $f_b-g_b$ to our assumption that $|x_{[d-2R:n],p}^{(s+\ell)}-y_{[d-2R:n],p}^{(s+\ell)}|\geq \tau$ to set up the application of Lemma~\ref{lem:combined-contrapositive}. From Lemma~\ref{lem:binomial-shift} we have that 
    \begin{equation}
        \label{eq:2R-difference}
        x_{[d-2R:n],p}^{(s+\ell)}-y_{[d-2R:n],p}^{(s+\ell)}=e_x-e_y+\sum_{j=0}^R bin(R,j,p)\left(x_{[d-R:n],p}^{(s+\ell-j)}-y_{[d-R:n],p}^{(s+\ell-j)}\right)
    \end{equation}
    where  $0\leq e_x,e_y\leq \Pr[\Bin{R}{p}\geq \ell+1]$. Since, as proven at the start, $\ell\in [1.95Rp,2.05Rp]$, we have that $|e_x-e_y|$ decays exponentially in $Rp$ and thus is at most $\frac{1}{10}\tau$ for large enough $R$. Next, we will compare the above binomial sum with the very similar expression
    \begin{equation}
        \label{eq:smoothed difference}
        f_b(\ell-Rp)-g_b(\ell-Rp)=\sum_{j=0}^R bin(R,j,p) w(\ell-j) \left(x^{(s+\ell-j)}_{[d-R:n],p}-y^{(s+\ell-j)}_{[d-R:n],p}\right)
    \end{equation}
    When $\ell-j\in[0.9Rp,1.1Rp]$ then $w(\ell-j)=\frac{1}{Rp}$, and thus the summand is exactly $\frac{1}{Rp}$ times the summand in Equation~\ref{eq:2R-difference}. The other case, $\ell-j\notin[0.9Rp,1.1Rp]$, since $\ell\in [1.95Rp,2.05Rp]$, is thus possible only when $j\notin [0.95Rp,1.05Rp]$; and thus we bound the total contribution from this case, using the range bound $w(j)\in [0,\frac{1}{Rp}]$, to conclude that Equation~\ref{eq:smoothed difference} is within $\frac{1}{Rp}\Pr[\Bin{R}{p}\notin [0.95Rp,1.05Rp]]$ of the value of Equation~\ref{eq:2R-difference} without the $e_x-e_y$ terms and when multiplied by $\frac{1}{Rp}$. Thus, from Chernoff bounds we have that $|f_b(\ell-Rp)-g_b(\ell-Rp)|\geq \frac{\tau}{2Rp}$ for $R$ above a large enough constant. 

    We point out that $\sqrt{\frac{\log\frac{1}{\tau_{lem}}}{Rp}}\leq \frac{\pi}{8}$ for sufficiently large $R$, satisfying the second condition.

    Finally, we prove Equation~\ref{eq:left-tail}. As argued above, $|f(j)-g(j)|\leq \frac{1}{Rp}\Pr[\Bin{R}{p}<Rp+j+\sigma]$, which, by Chernoff bounds, is at most $\frac{1}{Rp}\cdot e^{-\frac{(j+\sigma)^2}{3Rp}}$ when $j+\sigma\leq 0$. Thus we can bound the $j\leq -\sigma$ portion of the sum of Equation~\ref{eq:left-tail} by $\frac{1}{Rp}\sum_j e^{-\frac{(j+\sigma)^2}{3Rp}}e^{-2j\sqrt{(\log\frac{1}{\tau_{lem}})/(Rp)}}$. We reexpress the exponential by completing the square, getting the equivalent expression $\frac{1}{Rp}\frac{1}{\tau_{lem}^3}e^{2\sigma\sqrt{\frac{\log\frac{1}{\tau_{lem}}}{Rp}}}\sum_j e^{-\frac{\left(j+\sigma+3\sqrt{Rp\log\frac{1}{\tau_{lem}}}\right)^2}{3Rp}}$. The sum here is bounded by $O(\sqrt{Rp})$, since it decays exponentially fast outside an interval of size $\sqrt{Rp}$ around $j=-\sigma-3\sqrt{Rp\log\frac{1}{\tau_{lem}}}$; meanwhile, the multipliers outside the sum are polynomial in $\frac{1}{\tau_{lem}}$, since, in particular, $\sigma\leq  \sqrt{Rp\log\frac{1}{\tau}}$ by assumption, and thus we bound the exponential factor as $e^{2\sigma\sqrt{\frac{\log\frac{1}{\tau_{lem}}}{Rp}}}\leq \frac{1}{\tau_{lem}^{3 }}$, using the fact that $\log\frac{1}{\tau_{lem}}\leq 2\log\frac{1}{\tau}$ for large enough $R$; thus we get a $poly(\frac{1}{\tau_{lem}})$ bound for the portion of Equation~\ref{eq:left-tail} coming from $j\leq -\sigma$.

    For the remaining portion of the sum, we note that $\sum_j |f(j)-g(j)|\leq 1$, and that, for $j>-\sigma$ we may bound the multiplier $e^{-2j\sqrt{(\log\frac{1}{\tau_{lem}})/(Rp)}}\leq e^{2\sigma\sqrt{(\log\frac{1}{\tau_{lem}})/(Rp)}}\leq \frac{1}{\tau_{lem}^{3 }}$ as above, leading to an overall $poly(\frac{1}{\tau_{lem}})$ bound for Equation~\ref{eq:left-tail}.
\end{proof}

We may now apply Lemma~\ref{lem:combined-contrapositive} to $f,g$. We show that both conclusions of the lemma imply the following general statement about summed statistics. 

\begin{lemma}
    \label{lem:coefficients-helper}
    There exists a sequence $A: \mathbb{Z} \rightarrow \mathbb{C}$ supported on $[0.6Rp,5.8Rp]$ with $\|A\|_1 \leq 1$; there exists a statistic $S$ of min value 1, of order $\leq 3k$ and of span $\leq 15Rp$; and there exists a probability $P \in [p,3p]$ such that 
    \begin{equation}\label{eq:combined-discrepancy-bound}
        \abs{\sum_j A(j) (x^{(S+j)}_{[d-R:n],P} - y^{(S+j)}_{[d-R:n],P}} \geq \tau^{c+1}
    \end{equation}
    for some constant $c \geq 1$. Further, the Fourier transform of $A$ has magnitude $\leq \tau^{0.075c'-1}$ for frequencies $|\xi| \geq 9\eps$. 
\end{lemma}
\begin{proof}
    Having checked the conditions in Lemma~\ref{lem:contrapositive-conditions}, we apply Lemma~\ref{lem:combined-contrapositive} to $f,g$ with threshold $\tau_{lem}=\frac{\tau}{2Rp}$, yielding either Conclusion 1 or Conclusion 2 of Lemma~\ref{lem:combined-contrapositive}. We consider both cases. \\
    
    \noindent \textbf{Conclusion 1}: In this case, there exist offsets $\ell_0=0,\ell_1\geq 0,\ell_2\geq 0$ such that $|\sum_j f_b(j+\ell_0)f_b(j+\ell_1)f_b(j+\ell_2)-g_b(j+\ell_0)g_b(j+\ell_1)g_b(j+\ell_2)|\geq \tau^c$, or the analogous double product---where, since we assume $\frac{1}{\tau}\geq R$, we have absorbed the polynomial dependence on $R$ in the lemma into just a power of $\tau$, for $R$ larger than some absolute constant. We assume $c\geq 1$, and otherwise round $c$ up to 1, preserving the inequality. 

    Recalling that $f_b,g_b$ are the convolution of $f,g$ respectively with $\Bin{R}{p}$, thus a triple product of $f_b$ implicitly consists of an average of triple products of $f$ shifted by a triple of integers $j_0,j_1,j_2\sim \Bin{R}{p}$. Thus, we now remove all terms for which any of $j_0,j_1,j_2\notin [0.9Rp,1.1Rp]$, which changes each of $f_b,g_b$ by an amount exponentially small in $Rp$; for sufficiently large $R$ this change is at most $\frac{1}{2}\tau^c$. By the pigeonhole principle, thus there is a single $j_0,j_1,j_2\in [0.9Rp,1.1Rp]$ whose discrepancy is at least the average, and thus we conclude that, for this $j_0,j_1,j_2\in [0.9Rp,1.1Rp]$ we have \begin{equation}\label{eq:triple-bound}|\sum_j f(j+\ell_0-j_0)f(j+\ell_1-j_1)f(j+\ell_2-j_2)-g(j+\ell_0-j_0)g(j+\ell_1-j_1)g(j+\ell_2-j_2)|\geq \frac{1}{2}\tau^c\end{equation}

    We further point out that $f,g$ both have support $[-0.4Rp,0.4Rp]$, and thus all of $\ell_0-j_0,\ell_1-j_1,\ell_2-j_2$ must be within $0.8Rp$ of each other for Equation~\ref{eq:triple-bound} to be nonzero.

    Next, we apply Lemma~\ref{lem:three-to-one} to the bound of Equation~\ref{eq:triple-bound}, setting $\tau_{lem}:=\frac{\tau^c}{2}$, and using the weight function $w$ supported on the interval from $I_-^{lem}:=0.6Rp$ to $I_+^{lem}:=1.4Rp$; the input offsets are $\ell_0-j_0,\ell_1-j_1,\ell_2-j_2$, though to match the input conditions of the lemma, we subtract off the min of these three from all of them and relabel it as $\ell_0^{lem}=0$, with $\ell_1^{lem},\ell_2^{lem}\in[0,0.8Rp]$ from our above observation that all of $\ell_0-j_0,\ell_1-j_1,\ell_2-j_2$ must be within $0.8Rp$ of each other. Thus in the context of Lemma~\ref{lem:three-to-one} we have $\lambda_{lem}:= 3I_+^{lem}+\ell_1^{lem}+\ell_2^{lem}\leq 5.8Rp$. We check the input condition of the lemma that $\eps\leq\frac{1}{3}p$: since $\eps:=c' \frac{1}{Rp}\log\frac{1}{\tau}$, by our assumption that $\log\frac{1}{\tau}\leq R^{0.8}p^2$ we have $\eps\leq c' R^{-0.2} p$, which is thus $\leq\frac{1}{3}p$ for large enough $R$. We thus invoke the lemma, which yields that: there exists a tuple $S$ of $\leq 3k$ distinct positive integers having minimum value 1 and maximum value $\leq 10(I_+^{lem}+\sigma)\leq 15Rp$ and there exist nonnegative coefficients $A:\mathbb{Z}\rightarrow \mathbb{R}_{\geq 0}$ supported on $[0.6Rp,5.8Rp]$, that sum to $\leq 1$, has discrepancy $|\sum_j A(j)(x_{[d-R:n],P}^{(S+j)}-y_{[d-R:n],P}^{(S+j)})|\geq \frac{\tau^c}{2(15Rp)^{3k}}\geq\tau^{c+1}$, and has Fourier transform that at frequencies $|\xi|\geq 9\eps$ has magnitude  $\leq 300 (I_+^{lem}+\sigma)^2\cdot\left(\lambda_{lem}\delta+\max(\lambda_{lem},\frac{8\pi}{3\eps})\cdot e^{-\eps I_-^{lem}/4}\right)$. Recall from above that $\eps=c' \frac{1}{Rp}\log\frac{1}{\tau}$ and $\delta=\frac{8\pi}{\eps} \tau^{0.075 c'}$. Thus our overall Fourier bound, since $I_-^{lem}=0.6Rp\geq 0.3Rp$, is $\leq \tau^{0.075c'} (Rp)^3$ times a constant, which we can thus bound by $\tau^{0.075c'-1}$ when $R$ exceeds some global constant. \\

    \noindent \textbf{Conclusion 2:} In this case we further apply Corollary~\ref{cor:shift}, which yields that $\big|\sum_{j} (f_b(j)-g_b(j))\cdot e^{ij\frac{2\pi}{R^2}}\big|\geq \frac{\tau_{lem}}{R^2}-\tau_{lem}^2$, which is at least $2\tau^2$ for $R$ larger than some constant since $\tau_{lem}=\frac{\tau}{2Rp}$, and we assumed $\tau\leq \frac{1}{R^4}$ and $p\leq \frac{1}{4}$. 
    Analogously to the Conclusion 1 case, this conclusion is modified by an amount exponentially small in $Rp$ if we limit the binomial convolutions in $f_b,g_b$ to $j'\in[0.9Rp,1.1Rp]$ and we can use pigeonhole to choose a single such $j'$ with at least the average discrepancy. Thus for some $j'\in[0.9Rp,1.1Rp]$ we have $\big|\sum_{j} (f(j-j')-g(j-j'))\cdot e^{ij\frac{2\pi}{R^2}}\big|\geq \tau^2$. However, the shift $j'$ does not affect the magnitude of the discrepancy, so we may drop $j'$ to simply conclude that in the Conclusion 2 case of Lemma~\ref{lem:combined-contrapositive} we have \begin{equation}\label{eq:shift-conclusion}\big|\sum_{j} (f(j)-g(j))\cdot e^{ij\frac{2\pi}{R^2}}\big|\geq \tau^2\end{equation}
    We define $A(j):=w(j) e^{-i(j-Rp)\frac{2\pi}{R^2}}$, so that Equation~\ref{eq:shift-conclusion} says that $|\sum_j A(j)(x_{[d-R:n],p}^{(s+j)}-y_{[d-R:n],p}^{(s+j)})|\geq \tau^2$. Thus this result is analogous to that for the Conclusion 1 case, except we use the original $s,p$ instead of the new $S,P$, and thus the original bounds on $s,p$ are strong enough. We still have strong Fourier bounds on $A$: elementwise multiplying $w(j)$ by $e^{-i(j-Rp)\frac{2\pi}{R^2}}$ simply shifts its Fourier transform by angle $\frac{2\pi}{R^2}$ and multiplies by a phase; thus since since $\frac{2\pi}{R^2}\leq \eps:=c'\frac{1}{Rp}\log\frac{1}{\tau}$ for large enough $R$, we conclude that, in the Conclusion 2 case, $A$ has Fourier transform that at frequencies $|\xi|\geq 2\eps$ has magnitude at most $\frac{8\pi}{\eps} \tau^{0.075 c'} \leq \tau^{0.075 c' - 1}$. 

    Since the desired conclusions hold in either case, this completes the proof.
\end{proof}

We define $R':=c''\frac{R^2 p}{\log\frac{1}{\tau}}$ for a constant $c''$ that we specify now, along with the constant $c'$ used to define $\eps$ at the start of this proof, so that they relate to the constant $c$ of Equation~\ref{eq:triple-bound} induced by Lemma~\ref{lem:combined-contrapositive}. Specifically, we choose $c'$ to be any large enough constant and $c''>0$ to be any constant small enough relative to $c'$ so that
\begin{equation}\label{eq:constants}
0.075c'\geq 3+c\quad \textrm{and}\quad c'c''\leq 0.000004
\end{equation}

\begin{lemma}\label{lem:statistics-deconvolution}
    For large enough $R$, there exists a probability $P \in [p,3p]$, a statistic $S$ of min value 1 and order $\leq 3k$ and span $\leq 15Rp$, a location $j'' \leq R'P + 6.1Rp$ for which
    \begin{equation}\label{eq:final-discrepancy}|x_{[d-R-R':n],P}^{(S+j'')}-y_{[d-R-R':n],P}^{(S+j'')}|\geq \tau^{c_f}\end{equation}
    for any sufficiently large constant $c_f$. 
\end{lemma}
\begin{proof}
    Consider the sequence $A$ given by Lemma~\ref{lem:coefficients-helper} and the associated $P\leq 3p\leq\frac{1}{4}$. We apply Corollary~\ref{cor:deconvolution} to $A(-j)$ to approximately deconvolve the sequence by $\Bin{R'}{P}$. Crucially, we will \emph{flip} $A$ before deconvolving: define the notation $\overleftarrow{A}$ so that $\overleftarrow{A}(j):=A(-j)$. We will denote the output of Corollary~\ref{cor:deconvolution} as $\overleftarrow{A}'$, and thus $A'$ will be the flipped version of the output. The parameters of Corollary~\ref{cor:deconvolution} correspond to the bounds obtained from Lemma~\ref{lem:coefficients-helper}: we let $\delta_{lem}:=\tau^{0.075c'-1}$ and $\eps_{lem}:=9\eps$. 

    We first check the input conditions of Corollary~\ref{cor:deconvolution}. For the condition $e^{-0.16PR'+1}\leq\delta_{lem}$, recall that $\delta_{lem}:=\tau^{0.075c'-1}$ while the left hand side is $e^{-0.16PR'+1}\leq e\cdot e^{-0.16 c''\frac{R^2 p^2}{\log\frac{1}{\tau}}}$; since $\log\frac{1}{\tau}\leq R^{0.8}p$, we bound our expression by $e\cdot e^{-0.16 c''R^{0.4}\log\frac{1}{\tau}}$, which is smaller than $\tau^{0.075c'-1}$ for sufficiently large $R$.

    For the input condition $\delta_{lem}\leq e^{-4PR'\eps_{lem}^2}$, the right hand side is $e^{-4Pc''\frac{R^2 p}{\log\frac{1}{\tau}}(9c' \frac{1}{Rp}\log\frac{1}{\tau})^2}\geq\tau^{3\cdot 4\cdot 9^2c''\cdot {c'}^2}$ which is thus at least $\tau^{0.075c'-1}$ if $972c''\cdot {c'}^2\leq 0.075c'-1$; this equation is implied by the constant conditions in Equation~\ref{eq:constants} since $0.075c'\geq 3$ and thus the right hand side is at least $0.075c'-1\geq 0.05c'$, and thus this condition reduces to $c''c'\leq \frac{0.05}{972}$, which is implied by the second condition of Equation~\ref{eq:constants}.
    
    Thus Corollary~\ref{cor:deconvolution} defines a radius $r_{lem}=300\sqrt{p_{lem}R'\lceil\log\frac{1}{\delta_{lem}}\rceil}$ and shows that there exists a sequence with $|A'(j)|\leq 11\cdot e^{PR'\eps_{lem}^2}$ where the support of $A'$ is the support of $A$ shifted by $R'P$ and expanded by radius $r$; and, crucially, $\overleftarrow{A}'\ast \Bin{R'}{P}$ is within $9\delta_{lem}$ of $\overleftarrow{A}$ pointwise. From our above bounds, the coefficients are bounded as $|A'(j)|\leq 11\cdot e^{p_{lem}R'\eps^2}\leq 11\cdot (\frac{1}{\tau})^{3\cdot 9^2c''\cdot {c'}^2}$. 

    Next, the radius added to the support width by deconvolution is bounded by $r_{lem}=300\sqrt{p_{lem}R'\lceil\log\frac{1}{\delta_{lem}}\rceil}$; since $\lceil\log\frac{1}{\delta_{lem}}\rceil\leq 1+(0.075c'-1)\log\frac{1}{\tau}$ we bound this as $\leq 0.075c'\log\frac{1}{\tau}$ for large enough $R$, and thus $300\sqrt{p_{lem}R'\lceil\log\frac{1}{\delta_{lem}}\rceil}\leq 300\sqrt{3p c''\frac{R^2 p}{\log\frac{1}{\tau}} 0.075c'\log\frac{1}{\tau}}=Rp\cdot 300\sqrt{3\cdot 0.075\cdot c'c''}$. We will show that $r_{lem}\leq 0.3Rp$, which requires $300\sqrt{3\cdot 0.075\cdot c'c''}\leq 0.3$, which is satisfied when $c'c''\leq 0.000004$, which is the second condition of Equation~\ref{eq:constants}. Thus, since Lemma~\ref{lem:coefficients-helper} showed that $A$ is supported within $[0.6Rp,5.8Rp]$, then, shifting this domain by $R'P$ and expanding it by radius $r_{lem}\leq 0.3Rp$ lets us conclude that $A'$ is supported within $[R'P+0.3Rp,R'P+6.1Rp]$.

    Thus since $\overleftarrow{A}'\ast \Bin{R'}{P}$ is within $9\cdot \tau^{0.075c'-1}$ of $\overleftarrow{A}$ pointwise, and the sum of the sizes of their supports is $\leq R'+6R$, we have that $\|\overleftarrow{A}'\ast \Bin{R'}{P}-\overleftarrow{A}\|_1\leq \frac{1}{2}\tau^{0.075c'-2}$ for large enough $R$. Since Equation~\ref{eq:constants} ensures that $0.075c'-2\geq c+1$ then, combining these bounds with Equation~\ref{eq:combined-discrepancy-bound}, for $S$ defined in Lemma~\ref{lem:coefficients-helper} we have that: \begin{equation}\label{eq:binomial-discrepancy}|\sum_{j} (\overleftarrow{A}'\ast \Bin{R'}{P})(-j)\cdot(x_{[d-R:n],P}^{(S+j)}-y_{[d-R:n],P}^{(S+j)})|\geq \frac{1}{2}\tau^{c+1}\end{equation}

    The final step is applying Lemma~\ref{lem:binomial-shift} showing that binomially weighted statistics of substrings $x_{[d-R:n]},y_{[d-R:n]}$ may be reexpressed as statistics of substrings starting $R'$ farther back, thus completing our induction step, ``zooming out'' from scale roughly $R$ to scale roughly $R'$ while preserving a nonnegligible difference in statistics of $x$ versus $y$.

    Explicitly, we expand the left hand side of Equation~\ref{eq:binomial-discrepancy}:
\[|\sum_{j} \sum_{j'} \overleftarrow{A}'(-j-j')\cdot bin(R',j',P)\cdot(x_{[d-R:n],P}^{(S+j)}-y_{[d-R:n],P}^{(S+j)})|\]

    Instead of summing over all $j$, instead define $j''=j+j'$ and equivalently sum over all $j',j''$, where we replace all occurrences of $j$ by the equivalent $j''-j'$:

    \[|\sum_{j''} \sum_{j'} \overleftarrow{A}'(-j'')\cdot bin(R',j',P)\cdot(x_{[d-R:n],P}^{(S+j''-j')}-y_{[d-R:n],P}^{(S+j''-j')})|\]

    We now apply Lemma~\ref{lem:binomial-shift} for each $j''$, to conclude:

    \begin{equation}\label{eq:R'-discrepancy}|\sum_{j''} A'(j'')\cdot (x_{[d-R-R':n],P}^{(S+j'')}-y_{[d-R-R':n],P}^{(S+j'')})|+\sum_{j''} |A'(j'')\Pr[\Bin{R'}{P}\geq 1+j'']|\geq \frac{1}{2}\tau^{c+1}\end{equation}

    Since $A'$ is supported on $j''\geq R'P+0.3Rp$, we bound $\Pr[\Bin{R'}{P}\geq 1+j'']$ by standard Chernoff bounds by $e^{-\frac{(0.3Rp)^2}{3R'P}}\leq \tau^{\frac{(0.3)^2}{9c''}}$; thus since the size of the domain of $A'$ is $\leq 6Rp\leq \frac{1}{50\tau}$ for large enough $R$, and $|A'(j'')|\leq 11\cdot (\frac{1}{\tau})^{3\cdot 9^2c''\cdot {c'}^2}$ from above, then, provided that $\frac{(0.3)^2}{9c''}-3\cdot 9^2c''\cdot {c'}^2 \geq c+2$, then we conclude that the second sum of Equation~\ref{eq:R'-discrepancy} is at most $\frac{1}{4}\tau^{c+1}$ and thus the first sum is at least $\frac{1}{4}\tau^{c+1}$. This condition on $c,c',c''$ is implied by Equation~\ref{eq:constants}, since the condition $c'c''\leq 0.000004$ yields $\frac{(0.3)^2}{9c''}\geq 2500c'$ and $-3\cdot 9^2c''\cdot {c'}^2\geq -0.000972c'$, and the condition $c'\geq c+3$ lets us conclude the required inequality $\frac{(0.3)^2}{9c''}-3\cdot 9^2c''\cdot {c'}^2 \geq c+2$. Thus we have shown:

    \[|\sum_{j''} A'(j'')\cdot (x_{[d-R-R':n],P}^{(S+j'')}-y_{[d-R-R':n],P}^{(S+j'')})|\geq \frac{1}{4}\tau^{c+1}\]

    Now we use pigeonhole to extract a single $j''$ for which the discrepancy of the statistics $|x_{[d-R-R':n],P}^{(S+j'')}-y_{[d-R-R':n],P}^{(S+j'')}|$ is large: we divide our bound $\frac{1}{4}\tau^{c+1}$ by the bound $|A'(j'')|\leq 11\cdot (\frac{1}{\tau})^{3\cdot 9^2c''\cdot {c'}^2}$ and our bound on the domain size $6Rp\leq \frac{1}{50\tau}$. Thus, for large enough $R$, there exists $j''\in [R'P+0.3Rp,R'P+6.1Rp]$ for which 
    \begin{equation}|x_{[d-R-R':n],P}^{(S+j'')}-y_{[d-R-R':n],P}^{(S+j'')}|\geq \tau^{c+2+3\cdot 9^2c''\cdot {c'}^2}\end{equation}
    Choosing $c_f$ to be at least $c+2+3\cdot 9^2c''\cdot {c'}^2$ yields the lemma.
\end{proof}

It remains only to prove bounds on the statistic $S$ and the values $R',\tau'$ as desired by the proposition statement. From Lemma~\ref{lem:statistics-deconvolution}, we have constructed $S$ as having min value 1 and span $\leq 15Rp$ and order $\leq 3k$.

Define $\tau':=\tau^{c_f}$, where we will choose the constant $c_f$ to be at least $c+2+3\cdot 9^2c''\cdot {c'}^2$ so that Equation~\ref{eq:final-discrepancy} yields a $\tau^{c_f}$ discrepancy bound.

To prove the remaining bounds, we first note that $R':=c''\frac{R^2 p}{\log\frac{1}{\tau}}\in [3R^{1.1},R^2]$ for sufficiently large $R$. Then we check that our bound $15Rp$ on the span of $S$ is at most the requirement $ \sqrt{\frac{1}{2}R'P\log\frac{1}{\tau'}}$; this expression is at least $Rp \cdot  \sqrt{\frac{1}{2}c'' c_f}$, which is thus at least $15Rp$ for large enough constant $c_f$. The location $\ell'$ in the conclusion is defined to be $j''$ which we bounded as $\leq R'P+6.1Rp$, and thus is also bounded by $R'P+ \sqrt{\frac{1}{2}R'P\log\frac{1}{\tau'}}$, for large enough constant $c_f$.

We next show $12k\log R'\leq \log\frac{1}{\tau'}\leq ({R'}/2)^{0.8}P^2$. For the first inequality, recall our assumption that $4k\log R\leq \log\frac{1}{\tau}$; since $R'\leq R^2$, and $\log\frac{1}{\tau'}=c_f\log\frac{1}{\tau}$, we have that this inequality is satisfied when $c_f\geq 6$. For the second inequality, recall our assumption $\log\frac{1}{\tau}\leq {R}^{0.8}p^2$; since $P\geq p$ and since $R'\geq R^{1.1}$ from above, we have that $({R'}/2)^{0.8}P^2$ exceeds any constant $c_f$ times ${R}^{0.8}p^2$, for large enough $R$. We have thus concluded all the desired bounds.
\end{proof}

\section{Main Result}\label{sec:main}
In this section we prove the main result of our paper, Theorem~\ref{thm:main}; beforehand, we show the base case needed for induction, Lemma~\ref{lem:mean-based}.

\subsection{Base Case}
The base case analysis relies on standard techniques from the mean-based analysis of traces, but with some adaptations to ensure that the location of the bit $j$ that we measure is as far to the left as possible. See~\cite{de2019optimal,peres17Average} for prior bounds along these lines. We start out with a fact used in all this work bounding Littlewood polynomials.

\begin{fact}[From~\cite{borwein1997littlewood}, Corollary 3.2]
    \label{fact:arcbound}
    Let $F(z)$ be a polynomial with all coefficients bounded by $1$ in magnitude and $|F(0)| = 1$. There exists an absolute constant $C$ such that for any $\theta \in (0,\pi]$, we have $\sup_{t : |t| \leq \theta} |F(e^{it})| \geq \exp(-C/\theta)$. 
\end{fact}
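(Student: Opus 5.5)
Since the Fact is quoted from~\cite{borwein1997littlewood}, we only sketch how we would reprove it. The plan is a subharmonicity argument for $\log|F|$ combined with a rotation-product trick. Let $M:=\sup_{|t|\le\theta}|F(e^{it})|$, and assume $M<1$, since otherwise there is nothing to prove.

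The coefficient bound gives, for $|z|<1$,
\[|F(z)|\le\sum_j|z|^j\le\frac{1}{1-|z|}.\]
So $\log|F|$ is at most $\log\frac{1}{1-|z|}$ in the open disk, and at most $\log M$ on the arc $\Gamma_\theta=\{e^{it}:|t|\le\theta\}$.

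\textbf{Step 1: transfer the arc bound slightly inside the disk.} Let $\eta:=c_0\theta$ for a small absolute constant $c_0$. Consider points $z$ with $|z|=1-\eta$ and $|\arg z|\le\theta/2$. Around such a $z$, take the domain $\Omega_z$ cut out of the unit disk by a disk of radius $\approx\theta/2$ centered at $z/|z|$. Its boundary consists of part of $\Gamma_\theta$ plus an interior arc. On the interior arc, $1-|w|\gtrsim\eta^2$ only near the corners, so the bound there must be handled with care; we expect it to give $|F(w)|\le(1/\eta)^{O(1)}$ on most of it. The harmonic measure of the $\Gamma_\theta$ portion, seen from $z$, is at least an absolute constant $\omega_0>0$, because $z$ is at distance $\eta\ll\theta$ from it. Subharmonicity then gives
\[\log|F(z)|\le\omega_0\log M+O\!\left(\log\frac{1}{\theta}\right).\]

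\textbf{Step 2: rotation product.} Let $K:=\lceil 4\pi/\theta\rceil$ and define
\[G(z):=\prod_{k=0}^{K-1}F\big(e^{2\pi i k/K}z\big).\]
Then $|G(0)|=|F(0)|^K=1$. On the circle $|z|=1-\eta$, every point has at least one rotated copy whose argument lies in $[-\theta/2,\theta/2]$, and Step 1 bounds that factor. Each of the other $K-1$ factors is bounded by $1/\eta$. By the maximum principle,
\[0=\log|G(0)|\le\omega_0\log M+O\!\left(\log\frac1\theta\right)+K\log\frac{1}{\eta}.\]
This yields only $\log M\ge -O\!\left(\frac{1}{\theta}\log\frac{1}{\theta}\right)$, which is off by a logarithmic factor.

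\textbf{Main obstacle: removing the $\log\frac1\theta$ loss.} We would try two refinements, in this order.

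First, use Jensen's formula in place of the maximum principle:
\[0\le\frac{1}{2\pi}\int_0^{2\pi}\log|G(\rho e^{it})|\,dt.\]
Here the trivial bound $\log\frac{1}{1-\rho}$ is replaced by the sharper mean estimate that the coefficient bound gives. By Parseval,
\[\frac{1}{2\pi}\int_0^{2\pi}|F(\rho e^{it})|^2\,dt\le\frac{1}{1-\rho^2},\]
so the mean of $\log|F(\rho e^{it})|$ is only $O(\log\frac{1}{1-\rho})$.

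Second, push $\rho$ all the way to the unit circle, where the $L^2$ mean of $|F|$ is at most $\sqrt{n+1}$. This is where the real difficulty lies, since that bound depends on the degree $n$. We would aim for a version of Step 1 in which the harmonic measure of the arc, integrated over the rotated copies, sums to $\Theta(1)$ per rotation while the remaining mass costs only $O(1)$ per rotation. Then the $K\approx 4\pi/\theta$ rotations contribute
\[\Theta(K)\log M+O(K)\ge 0,\]
which gives the desired $\log M\ge -C/\theta$. This calibration of harmonic measure against the $\frac{1}{1-|z|}$ growth is the step we expect to be delicate, and it is exactly the content of~\cite{borwein1997littlewood}, which we would follow if our direct attempt stalls.
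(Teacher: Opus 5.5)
This Fact is not proved in the paper; it is imported from \cite{borwein1997littlewood}. Your sketch does not replace that citation. Steps~1 and~2 are sound as far as they go, but, as you say, they only give $\sup_{|t|\le\theta}|F(e^{it})|\ge\exp(-C\theta^{-1}\log\frac1\theta)$. That is not the statement: the sharp $C/\theta$ is exactly what Lemma~\ref{lem:base} uses when it balances $d\theta$ against $C/\theta$. Removing the logarithm is the entire content of the Fact, and you leave it as an aspiration deferred to the very reference.

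Neither proposed refinement can supply it.
\begin{itemize}
\item Under Jensen's formula the rotation product is inert. Each factor $F(e^{2\pi ik/K}z)$ has the same circle mean, so the mean of $\log|G(\rho e^{it})|$ is exactly $K$ times that of $\log|F(\rho e^{it})|$.
\item On a circle with $1-\rho\asymp\theta$, the Parseval bound $\frac12\log\frac{1}{1-\rho^2}$ on the off-arc mean is of the true order for generic (e.g.\ random-sign) coefficients. This bookkeeping therefore returns exactly the $\frac1\theta\log\frac1\theta$ you already had.
\item Letting $\rho\to1$ brings back the degree, as you note.
\end{itemize}
The underlying problem is that any circle $|z|=\rho$ forces a bad trade-off. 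If it is shallow, you pay $\log\frac1\theta$ on harmonic measure of order one. If it is deep, the arc, seen through an opening of width $\asymp\theta$, carries harmonic measure much smaller than $\theta$.

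The missing idea is to drop level circles and use a single sub-mean-value (two-constant) estimate for $u=\log|F|$ at $0$, on a domain adapted to the arc.
\begin{itemize}
\item Assume $M:=\sup_{|t|\le\theta}|F(e^{it})|<1$ and, by monotonicity in $\theta$, that $\theta\le 1$.
\item Let $\Omega\ni 0$ be bounded by $A=\{e^{it}:|t|\le\theta\}$ and a curve $\Gamma$ inside the disc. Take $\Gamma$ to leave the unit circle \emph{tangentially} at $e^{\pm i\theta}$, with depth $1-|w|\asymp s^2$ at angular distance $s$ from $A$, curvature bounded independently of $\theta$, and depth at most $\frac12$.
\item Interior and exterior tangent discs of uniform radius make the harmonic measure $\omega$ of $\Omega$ from $0$ comparable to arclength, with absolute constants.
\end{itemize}
Hence $\omega(A)\ge c_1\theta$, while
\[\int_\Gamma\log\frac{1}{1-|w|}\,d\omega(w)\le C\int_0^{\pi}\log\frac{C}{s}\,ds=O(1).\]
Since $u(0)=0$, $u\le\log M$ on $A$, and $u\le\log\frac{1}{1-|w|}$ on $\Gamma$, this gives $0\le c_1\theta\log M+O(1)$, i.e.\ $M\ge e^{-C/\theta}$. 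No rotation product and no Step~1 are needed.

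The tangential departure is what escapes the trade-off above. The cost $\log\frac{1}{1-|w|}$ stays integrable against harmonic measure. Yet near $A$ the domain looks like the disc, so $\omega(A)$ stays of order $\theta$.
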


In this subsection we index $x$ starting at $0$, i.e., $x = x_0x_1\dots x_{n-1}$. 

\begin{lemma}[From~\cite{de2019optimal}]
    \label{lem:mean-based}
    Let $x \in \{0,1\}^n$, and let $U \sim \mathbf{Del}_p(x)$. For any $w \in \mathbb{C}$, we have
    \begin{equation*}
        \sum_{j \geq 0} \Exp[U_j] \cdot w^j = p \sum_{i=0}^{n-1} x_i (1-p + pw)^{i} 
    \end{equation*}
\end{lemma}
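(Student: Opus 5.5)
The plan is to prove the identity coefficient by coefficient, using linearity of expectation together with the fact that each retained bit lands at a binomially distributed position in the trace. Write $U_j=\sum_{i=0}^{n-1} x_i\,\mathbbm{1}[\text{bit } i \text{ is retained and lands at position } j]$, with traces 0-indexed as in this subsection. Then
\[\Exp[U_j]=\sum_{i=0}^{n-1} x_i\,\Pr[\text{bit } i \text{ retained and exactly } j \text{ of bits } 0,\ldots,i-1 \text{ retained}].\]
Because retention decisions are independent across bits, this probability factors as $p\cdot bin(i,j,p)$. The count of earlier retained bits is distributed as $\Bin{i}{p}$, matching the computation in the proof of Lemma~\ref{lem:binomial-shift}.

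Next, multiply by $w^j$ and sum over $j\ge 0$. The sum is finite, since $U_j=0$ for $j\ge n$ and $bin(i,j,p)=0$ for $j>i$, so the two sums can be exchanged freely:
\[\sum_{j\ge0}\Exp[U_j]w^j=p\sum_{i=0}^{n-1}x_i\sum_{j=0}^{i}\binom{i}{j}p^j(1-p)^{i-j}w^j.\]
By the binomial theorem the inner sum equals $(1-p+pw)^i$, which gives the claimed identity for every $w\in\mathbb{C}$. Both sides are polynomials in $w$, so no convergence issues arise.

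There is no substantive obstacle. The one point that needs care is the indexing convention. In this subsection positions start at $0$, so bit $i$ lands at trace position $j$ exactly when $j$ of the $i$ earlier bits survive, with no off-by-one shift. A further point is that out-of-range accesses return $0$, consistent with Definition~\ref{def:k-th}, so the sum over $j$ is effectively finite.
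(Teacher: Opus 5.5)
Your argument is correct: writing $\Exp[U_j]=p\sum_{i}x_i\binom{i}{j}p^j(1-p)^{i-j}$, exchanging the finite sums, and applying the binomial theorem gives the identity directly. You rightly use the exponent $i-j$; the paper's displayed definition of $bin$ has the typo $1-j$. The paper gives no proof of its own and imports this lemma from \cite{de2019optimal}, where it is established by essentially this same computation.
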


Our base case result is the following.

\begin{lemma}\label{lem:base}
    Let $x,y \in \{0,1\}^n$ be two strings with $d = \min \{i :x_i \neq y_i\}$. Let $p\in[8 d^{-1/2},\frac{1}{2}]$. Define $\tilde{a}_i:=\Exp_{U\sim \Del{p}{x}}[U_i] - \Exp_{U\sim \Del{p}{y}}[U_i]$. For some fixed constant $C^*>0$, there exists $j \leq pd + C^* p^{1/3} d^{2/3}$ such that $|\tilde{a}_j| \geq \exp(-C^*d^{1/3}p^{-1/3})$, for $d$ larger than some fixed constant. 
\end{lemma}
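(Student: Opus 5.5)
We will use the mean-based generating function identity of Lemma~\ref{lem:mean-based}, applied to $x$ and $y$ and subtracted. Writing $a_i := x_i - y_i$, which vanishes for $i<d$ and has $|a_d|=1$, we get for every $w\in\mathbb{C}$
\[\sum_{j\geq 0}\tilde a_j w^j = p\sum_{i\geq d} a_i (1-p+pw)^i = p\,(1-p+pw)^d\, A(1-p+pw),\]
where $A(z):=\sum_{i\geq 0} a_{d+i} z^i$ is a Littlewood-type polynomial: its coefficients are bounded by 1 and $|A(0)|=1$. The plan is to choose a circle $w=\rho e^{i\theta}$ on which the right-hand side is large. We also need the left-hand side, restricted to indices $j > J:=pd+C^*p^{1/3}d^{2/3}$, to be tiny. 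Then some $j\leq J$ must carry a large $|\tilde a_j|$, and we get it by pigeonhole over at most $J+1\leq n$ terms.

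\textbf{Choosing the contour.} Following~\cite{de2019optimal}, we look at $z=1-p+pw$ for $w$ on a circle of radius $\rho<1$ slightly inside the unit circle. As $w$ ranges over a small arc $|\arg w|\leq \phi$, the point $z$ traces a small arc near $z=1-p+p\rho$. Its image contains points $e^{it}$-like directions covering an arc of angular width about $p\phi$, provided we take $\rho$ so that $|z|$ is close to 1.

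The cleanest concrete choice is $\rho = 1-\eta$ with $\eta\approx (pd)^{-1/3}$ (up to constants). We then use the maximum modulus principle and subharmonicity to transfer Fact~\ref{fact:arcbound}, which is stated on the unit circle, to the circle $|z| = 1-p\eta$. Concretely, apply Fact~\ref{fact:arcbound} to $A(rz)$ with $r=1-p\eta$: this is still a polynomial with coefficients bounded by 1 and constant term of modulus 1. It gives a point $z_0=re^{it_0}$ with $|t_0|\leq\theta:=C_1\, p^{1/3} d^{-1/3}$ and $|A(z_0)|\geq e^{-C/\theta}$. Solving $z_0=1-p+pw_0$ gives $|w_0|\leq 1-\eta'$, with $\eta'\gtrsim \eta - O(t_0^2/p)$. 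We choose constants so that $\eta'\approx (p d)^{-1/3}$ remains positive; the constraint $p\geq 8d^{-1/2}$ ensures $t_0^2/p\ll\eta$.

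\textbf{Bounding the two sides.} At $w_0$, the right-hand side has modulus at least
\[p\,|z_0|^d e^{-C/\theta}\geq p\, e^{-O(p\eta d)}e^{-O(d^{1/3}p^{-1/3})} = \exp(-O(d^{1/3}p^{-1/3})),\]
since $p\eta d \approx p^{2/3}d^{2/3}\leq d^{1/3}p^{-1/3}\cdot(pd)^{1/3}p^{...}$. This exponent has to be tuned: we will instead set $\eta\approx d^{-2/3}p^{-4/3}$ so that $p\eta d\approx d^{1/3}p^{-1/3}$, and we recheck that $t_0^2/p \approx p^{-1/3}d^{-2/3}$ is dominated.

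On the left-hand side, we split the sum at $J$. Each $|\tilde a_j|\leq \Pr[\text{fewer than } j+1 \text{ of the first } d \text{ bits survive}]$ is at most 1. More usefully, $\tilde a_j$ is tiny for $j<pd - \Omega(\cdot)$, but that is not needed. The tail $\sum_{j>J}|w_0|^j\leq \eta'^{-1}(1-\eta')^{J}$ is $\leq \exp(-\eta' J)/\eta'$. With $J-pd = C^*p^{1/3}d^{2/3}$ we have $\eta'(J-pd)\approx C^* d^{1/3}p^{-1/3}\cdot(\ldots)$, so choosing $C^*$ large makes this tail much smaller than the main term. The contribution of $j\leq pd$ is not bounded by this argument, and that is fine: the lemma only asks for some $j\leq J$. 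Hence $\sum_{j\leq J}|\tilde a_j|\geq \tfrac12\exp(-O(d^{1/3}p^{-1/3}))$, and pigeonhole over $J+1\leq d+1$ indices absorbs the polynomial factor into the exponent for $d$ large.

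\textbf{Main obstacle.} The delicate step is the simultaneous balancing of three scales: the Littlewood loss $e^{-C/\theta}$, the radial loss $|z_0|^d$ (equivalently $(1-\eta)^{pd}$ in the trace variable), and the tail cutoff $\eta'(J-pd)$. These must all land at $d^{1/3}p^{-1/3}$ while keeping $|w_0|<1$, i.e.\ keeping the curvature term $t_0^2/p$ below $\eta$. I would first fix $\theta\asymp (p/d)^{1/3}$ and then verify that the induced $\eta$ is consistent. I expect the hypothesis $p\geq 8d^{-1/2}$ is exactly what makes $\theta\leq p$ (so the arc maps inside the disk), and I would check this inequality carefully first.
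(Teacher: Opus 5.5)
There is a genuine gap, and it sits exactly where you say the small-$j$ terms are ``not needed.'' If you only bound $|w_0|^j\le 1$ for $j\le J$, the entire radial factor $|z_0|^d$ must be affordable. That forces $1-|z_0|=p\eta\lesssim d^{-2/3}p^{-1/3}$, i.e.\ $\eta\asymp d^{-2/3}p^{-4/3}$. Then $1-|w_0|=\eta'\le\eta$; note the curvature correction is of order $t_0^2/p^2$, not $t_0^2/p$, and it only pushes $|w_0|$ toward $1$. With this $\eta'$, the margin beyond $pd$ buys only $\eta'(J-pd)\lesssim \eta\,C^*p^{1/3}d^{2/3}=O(C^*/p)$, not $C^*d^{1/3}p^{-1/3}$: the factor hidden in your ``$(\ldots)$'' is $(pd^{1/2})^{-2/3}\le\frac14$. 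Since $\eta'pd\le p\eta d$ at best cancels the radial loss, the tail $\sum_{j>J}|w_0|^j$ is negligible only if $\eta'(J-pd)$ also beats the Littlewood loss $C/\theta$. But keeping $|w_0|<1$ for a worst-case $|t_0|=\theta$ requires $\theta\lesssim p\sqrt{\eta}\asymp p^{1/3}d^{-1/3}$, so $C/\theta\gtrsim d^{1/3}p^{-1/3}\ge 4/p$. You would therefore need $C^*$ to grow like a power of $pd^{1/2}$. That is unbounded on the allowed range, and it is large in the regime the main theorem actually uses, where $d_{base}\ge p_{base}^{-6}$. A single circle with only right-tail control localizes $j$ to $(1+K)pd$ (the looser location of~\cite{peres17Average}), not to $pd+O(p^{1/3}d^{2/3})$.

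The missing idea is to pay the large radial loss and then cancel it by localizing $j$ from both sides. The paper takes $r=1-\theta$ with $\theta=p^{1/3}d^{-1/3}$. Then $1-|w_0|\approx p^{-2/3}d^{-1/3}$, and the margin $C'p^{1/3}d^{2/3}$ does buy $\Theta(C'd^{1/3}p^{-1/3})$ in the exponent. The price is that the main term is only $p\,r^de^{-C/\theta}\approx p\,e^{-p^{1/3}d^{2/3}-O(d^{1/3}p^{-1/3})}$. The right tail is handled by the geometric series, as you planned. For the left tail the paper uses precisely the bound you discarded, $|\tilde a_j|\le e^{-(pd-j)^2/(2pd)}$: trace position $j$ can see an index $\ge d$ only if at most $j$ of the first $d$ bits survive. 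For $pd-j\ge C'p^{1/3}d^{2/3}$ this beats the growth $|w_0|^{j-pd}\le e^{O((pd-j)p^{-2/3}d^{-1/3})}$. Pigeonhole over the window $|j-pd|\le C'p^{1/3}d^{2/3}$ and divide by $|w_0|^{j_0}\le\exp(-p^{1/3}d^{2/3}+(C'+1)d^{1/3}p^{-1/3})$. This cancels the $e^{-p^{1/3}d^{2/3}}$ factor, leaving $|\tilde a_{j_0}|\ge\exp(-O(d^{1/3}p^{-1/3}))$.
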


\begin{proof}
    Let $a_i = x_i - y_i$, and define the following polynomials:
    \begin{align*}
        A(z) &= \sum_{j=0}^{n-1} a_j z^j \\
        B(z) &= \sum_{j=0}^{n-d-1} a_{d+j} z^j
    \end{align*}
    By assumption, the $z^d$ term is the first nonzero term in $A$, so $A(z) = z^{d} B(z)$. Also by assumption $|B(0)| = 1$. Let $\theta = \frac{p^{1/3}}{d^{1/3}}$ and let $r=1-\theta=1-\frac{p^{1/3}}{d^{1/3}}$. Applying~\Cref{fact:arcbound} to the polynomial $B(r z)$, thus there exists $\theta_0$ with $|\theta_0| \leq \theta$ such that $|B(re^{i\theta_0})| \geq \exp(-C/\theta)$. Defining $w_0 = 1 + \frac{r e^{i\theta_0}-1}{p}$ so that $1-p+pw_0=r e^{i\theta_0}$ and applying \Cref{lem:mean-based}, we have
    \begin{align*}
        \abs{\sum_{j \geq 0} \tilde{a}_j w_0^j} &= \abs{p \sum_{j \geq 0} a_j (1-p+pw_0)^j} \\
        &= p\abs{A(1-p+pw_0)} \\
        &= p\abs{(r e^{i\theta_0})^d B(r e^{i\theta_0})} \\
        &\geq p r^d \exp(-C/\theta)
    \end{align*}
    Since $1-r=\frac{p^{1/3}}{d^{1/3}} \leq \frac{1}{2}$ for large enough $d$, we further lower-bound this last expression from the fact that $\log 1-x\geq -x-x^2$ for $x\leq \frac{1}{2}$, as
    \begin{equation}\label{eq:mean-bound} \abs{\sum_{j \geq 0} \tilde{a}_j w_0^j}\geq p \cdot\exp(d(-\theta-\theta^2)-C/\theta)=p\cdot \exp(-d^{2/3}p^{1/3}-d^{1/3}(p^{2/3}+C p^{-1/3}))\end{equation}
    
    We now bound the right tail of the power series; we consider the portion of the sum $j \geq pd+C'p^{1/3}d^{2/3}$, for constant $C'$ which we will relate at the end to the constant $C^*$ of the lemma statement.
    
    From the definition of $w_0$  we have
    \[        |w_0|^2 = \abs{1 + \frac{r e^{i\theta_0}-1}{p}}^2 =(1+\frac{r-1}{p})^2+\frac{2r(1-p)(1-\cos (\theta_0))}{p^2}\]
    Since $1-\cos(\theta_0)\leq \frac{\theta_0^2}{2}$ and $|\theta_0|\leq \frac{p^{1/3}}{d^{1/3}}$ and $r\leq 1$, we bound the second term by $\frac{\theta_0^2}{p^2}\leq \frac{1}{d^{2/3}p^{4/3}}$. We use this to bound $\log|w_0^2|$, where we first bound the log of the first term only, since $1-r=d^{-1/3}p^{1/3}$ and $\log 1-x\leq -x$, by $-\frac{2}{d^{1/3}p^{2/3}}$; then, by our assumption relating $p$ and $d$ we have that $\frac{1}{d^{1/3}p^{2/3}}\leq\frac{1}{4}$; using the fact that $\log$ has derivative $\leq 2$ when evaluated above $\frac{1}{2}$ we conclude that \[\log|w_0|^2\leq -\frac{2}{d^{1/3}p^{2/3}}+\frac{2}{d^{2/3}p^{4/3}}\]
    The right tail is thus bounded by the formula for a geometric series, since $|\tilde{a}_j|\leq 1$:
    \[\abs{\sum_{j \geq pd+C'p^{1/3}d^{2/3}} \tilde{a}_j w_0^j} \leq \frac{|w_0|^{pd+C'p^{1/3}d^{2/3}}}{1-|w_0|} \leq \frac{\exp((-\frac{1}{d^{1/3}p^{2/3}}+\frac{1}{d^{2/3}p^{4/3}})(pd+C'p^{1/3}d^{2/3}))}{1-|w_0|}\]

    For large enough constant $C'$ this expression is negligible compared with our upper bound for the sum over all $j$ computed in Equation~\ref{eq:mean-bound}: dividing this expression by Equation~\ref{eq:mean-bound}, the terms proportional to $d^{2/3}$ in the exponential cancel, leaving a ratio of $\frac{1}{p(1-|w_0|)}\exp\left(d^{1/3}\left(\frac{-C'+C+1}{p^{1/3}}+p^{2/3}\right)+\frac{C'}{p}\right)$ which can be easily seen to be at most $\exp(-d^{1/3}p^{-1/3})$ for sufficiently large $C'$ and $d$.

    On the other hand, for $j \leq pd$, we can use Chernoff bounds to bound $\tilde{a}_j$: entry $j$ of the trace can only show discrepancy between $x$ and $y$ if it comes from location $\geq d$ in the original string; this can only occur if, of the bits $0,\ldots,d-1$ in the original string, $\leq j$ bits were retained in the trace. Chernoff bounds show this has probability at most $e^{-\frac{(dp-j)^2}{2dp}}$. Thus 
\[\abs{\sum_{j < pd-C'p^{1/3}d^{2/3}} \tilde{a}_j w_0^j} \leq \sum_{j < pd-C'p^{1/3}d^{2/3}}   e^{-\frac{(dp-j)^2}{2dp}} |w_0|^j\]
where for large enough constant $C'$, the expression for each valid $j$ is negligible compared to the bound of Equation~\ref{eq:mean-bound}: our above bounds show that, for $j = pd-C'p^{1/3}d^{2/3}$, the $j^{\textrm{th}}$ term of the above equation is bounded by \[e^{-\frac{(dp-j)^2}{2dp}}e^{(-\frac{1}{d^{1/3}p^{2/3}}+\frac{1}{d^{2/3}p^{4/3}})j}=e^{-d^{2/3}p^{1/3}+d^{1/3}p^{-1/3}(C'+1-{C'}^2/2)  -C'/p}\]
and it is straightforward to check that our bounds decay as $j$ decreases.

Thus since the left and tails of the sum of Equation~\ref{eq:mean-bound} are negligible compared to the overall sum, we conclude from the pigeonhole principle that there exists $j_0$ in the central region  $[pd-C'p^{1/3}d^{2/3},pd+C'p^{1/3}d^{2/3})$ for which $|\tilde{a}_{j_0}\cdot w_0^{j_0}|\geq \frac{p}{4C'p^{1/3}d^{2/3}}\cdot \exp(-d^{2/3}p^{1/3}-d^{1/3}(p^{2/3}+C p^{-1/3}))$, where this expression is half of the sum from Equation~\ref{eq:mean-bound} divided by the size of the range. We bound $|\tilde{a}_{j_0}|$ itself by dividing this bound by the upper bound that, for $j_0$ in this range, we have $|w_0|^{j_0}\leq \exp((-\frac{1}{d^{1/3}p^{2/3}}+\frac{1}{d^{2/3}p^{4/3}})(pd-C'p^{1/3}d^{2/3}))$, letting us conclude that \[|\tilde{a}_{j_0}|\geq \frac{p}{4C'p^{1/3}d^{2/3}}\cdot \exp(-d^{1/3}((C'+C+1)p^{-1/3}+p^{2/3})+C'/p)\]

Thus choosing $C^*\geq C'+C+2$ lets us conclude the lemma, for large enough $d$.
\end{proof}

\subsection{Proof of Main Theorem}
\begin{theorem}\label{thm:main}
There exists a constant $c>0$ such that for any retention probability $p>0$, trace reconstruction on $n>1$ bit strings can be done from $e^{p^{-7/3}(\log_2 n)^c}$ traces.
\end{theorem}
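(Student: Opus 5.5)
We will prove the theorem by reducing reconstruction to pairwise distinguishing and then chaining the base case into the induction. For any pair $x\neq y$ we will exhibit a statistic $S$ of order $k\le polylog(n)$, a retention probability $P$, and a location $\ell'$ with $|x^{(S+\ell')}_{P}-y^{(S+\ell')}_{P}|\ge \tau_{final}$, where $\tau_{final}\ge e^{-p^{-7/3}(\log n)^{O(1)}}$. The quantity $U_{S_1+\ell'}\cdots U_{S_k+\ell'}$ is $\{0,1\}$-valued, so $O(\tau_{final}^{-2}\log(\text{\#candidates}))$ traces estimate it to accuracy $\tau_{final}/3$. A union bound over all $2^{2n}$ pairs then lets us run a tournament: output the candidate $x'$ that wins (is consistent with the empirical statistic) against every other candidate. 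This costs only an extra $O(n)$ factor, which is absorbed into the quasipolynomial bound.

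There is one mismatch to handle: the induction increases the retention probability from the starting value $p_0$ up to $P_{final}\le 3^{T}p_0$, but we are handed a fixed $p$. We will therefore run the whole argument with $p_0:=p/3^{T}$, where $T=O(\log\log n)$ is the number of steps, so $p_0=p/polylog(n)$. At the end we obtain some $P_{final}\le p$. Traces at $P_{final}$ are simulated from traces at $p$ by independently deleting further bits (deletion channels compose), so no loss occurs. Note that $P$ is only known to lie in $[p_0,3^Tp_0]$, and the simulation requires $P\le p$; since $P$ is explicitly $1-(1-p)^3$ at each step, the algorithm knows it (or can try all $O(\log\log n)$ histories).

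For the base case we will first reduce to the situation $d\le n$ with enough room. If $d$ is small (say $d\le R_0$ with $R_0=polylog(n)$), we pad by prepending $2R_0$ zeros to both strings. A prepended zero-string is simulated from traces of $x$ by prepending $\Bin{2R_0}{p_0}$ zeros, and afterwards $d\ge 2R_0$. Then we apply Lemma~\ref{lem:base} to the window $x_{[d-2R_0:n]}$ with $d_{lem}=2R_0$, where $p_0\ge 8(2R_0)^{-1/2}$ holds once $R_0\ge p_0^{-2}polylog$. This gives a mean statistic ($k=1$, $\sigma=1$) at location $\ell\le 2R_0p_0+C^*p_0^{1/3}R_0^{2/3}$ with $\tau_0=\exp(-C^*R_0^{1/3}p_0^{-1/3})$. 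We must check the hypotheses of Proposition~\ref{prop:induction}: $\ell\le 2R_0p_0+\sqrt{R_0p_0\log(1/\tau_0)}$ holds since $\sqrt{R_0p_0\cdot R_0^{1/3}p_0^{-1/3}}=R_0^{2/3}p_0^{1/3}$; $4\log R_0\le\log(1/\tau_0)$ holds trivially; and $\log(1/\tau_0)\le R_0^{0.8}p_0^2$ requires $R_0^{0.47}\gtrsim p_0^{-7/3}$, i.e.\ $R_0\approx p_0^{-5}$. This computation is where the $p^{-7/3}$ exponent enters, since then $\log(1/\tau_0)\approx R_0^{1/3}p_0^{-1/3}$ is $poly(p_0^{-1})$ times $polylog(n)$.

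Next we iterate Proposition~\ref{prop:induction}. Its output has exactly the form of its input with $R\mapsto R'/2$, which we can verify by matching the start point: the next window $[d-R-R':n]$ must be rewritten as $[d-2(R'/2)':n]$. This step I expect to be the main bookkeeping obstacle, because the conclusion uses $R+R'$ rather than $2\cdot(R'/2)$. We fix it by again prepending shared padding to adjust the offset via Lemma~\ref{lem:binomial-shift}, or by noting that $R\ll R'$ so that we can choose the next $R_{new}:=(R+R')/2$ and absorb the $O(R)$ shift into the location slack $\sqrt{R'P\log(1/\tau')}$.

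The resulting parameters grow as follows. Since $R'\ge 3R^{1.1}$, after $T=O(\log\log n)$ steps the window exceeds $n$, and we stop once $d-R-R'\le 1$; padding with zeros on the left makes this exact. Each step raises $\log(1/\tau)$ by a constant factor, so $\log(1/\tau_{final})\le C^{T}\log(1/\tau_0)=polylog(n)\cdot p^{-7/3}$ after accounting for $p_0=p/polylog$. Each step multiplies the order by 3, so $k\le 3^T=polylog(n)$. We also need the side condition $P\le1/12$ throughout, which holds because $P_{final}\le p$ and we can assume $p\le 1/12$ by first deleting extra bits. Together these give the claimed $e^{p^{-7/3}(\log_2 n)^c}$ trace bound.
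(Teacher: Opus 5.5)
Your proposal follows the paper's proof essentially step for step. You scale the starting retention probability down by $3^{T}$ with $T=O(\log\log n)$ (the paper's $p_{base}=3^{-b}\min(\frac14,p)$) and pad with known zeros. You take a mean-based base case from Lemma~\ref{lem:base} on a window of length $2R_0$, then iterate Proposition~\ref{prop:induction} with $R_{next}=\frac12(R+R')$; this gives $d-2R_{next}=d-R-R'$ exactly, so there is no shift to absorb. Finally you simulate $P\le p$ by extra deletions and pay an $O(n)$ factor to go from distinguishing to reconstruction.

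Two bookkeeping points need repair.

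First, your check of the location hypothesis $\ell\le 2R_0p_0+\sqrt{R_0p_0\log(1/\tau_0)}$ drops constants. Lemma~\ref{lem:base} places $\ell$ up to $C^*p_0^{1/3}(2R_0)^{2/3}$ beyond $2R_0p_0$. If $\log\frac{1}{\tau_0}$ is the lemma's own bound $C^*(2R_0)^{1/3}p_0^{-1/3}$, the slack is only $\sqrt{C^*}\,2^{1/6}R_0^{2/3}p_0^{1/3}$, which is too small once $C^*>\frac12$. The paper fixes this by deliberately weakening the claimed discrepancy to $\exp(-2{C^*}^2(2R_0)^{1/3}p_0^{-1/3})$. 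The slack then becomes exactly $C^*2^{2/3}R_0^{2/3}p_0^{1/3}$. You must then take the constant $K$ in $R_0=Kp_0^{-5}$ large enough to keep $\log\frac1{\tau_0}\le R_0^{0.8}p_0^2$.

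Second, for $n=2$ we have $(\log_2 n)^c=1$. The constant factors in the exponent, including those from reducing to $p\le\frac1{12}$, therefore cannot be absorbed. The paper handles $n=2$ directly, noting that $\frac{\log 3}{p^2}$ traces suffice. For $n\ge 3$ absorption works because $\log_2 n\ge\log_2 3>1$.

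As an aside, your choice $R_0\approx p_0^{-5}$ gives $\log\frac1{\tau_0}=\Theta(p_0^{-2})$, not $p_0^{-7/3}$. The $7/3$ in the theorem comes from the paper's less economical $d_{base}=p_{base}^{-6}$. Your parameters therefore imply the stated bound a fortiori.
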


\begin{proof}
We first show how to distinguish any two $n$-bit strings $x,y$ using quasipolynomial traces, using Proposition~\ref{prop:induction} as the induction step with Lemma~\ref{lem:base} as the base case. Let $d$ be the point of first discrepancy between $x,y$. If at any point in this proof we consider a deletion channel that extends ``off the end'' of our strings, the intention is to implicitly pad the string by prepending and appending known 0s and simulate traces from the enlarged string, which can be easily done from traces of the original string, $x$ or $y$ by prepending or appending known 0s of appropriate randomly sampled lengths.

We must choose parameters for the base case lemma in light of the expected trajectory of the induction, so we analyze how parameters evolve over the induction first. At each invocation of Proposition~\ref{prop:induction}, we start at some location $d-2R$ which is $2R$ back from the point of first discrepancy, $d$, and end with an analogous conclusion at location $d-R-R'$, where Proposition~\ref{prop:induction} guarantees that $R+R'\geq 3R^{1.1}>(2R)^{1.1}$. Namely, each application of Proposition~\ref{prop:induction} takes the deletion channel's starting location relative to $d$ and raises it to the power of $\geq 1.1$. Thus if we start at any length $\geq 2$, the number of times we need to raise this to the $1.1$ power until we exceed $n$ is at most $\log_{1.1} \log_2 n$. Thus let $b:=\lceil\log_{1.1} \log_2 n\rceil$ be our bound on the number of iterations of Proposition~\ref{prop:induction} needed until we are guaranteed to reach the beginning of the string.

Meanwhile, each application of Proposition~\ref{prop:induction} increases the retention probability $p_{lem}$ by a factor of up to 3, so to make sure that after the last iteration our retention probability is still smaller than the actual retention probability $p$ of our traces, we need to choose a base case value $p_{base}:=3^{-b}\cdot \min(\frac{1}{4},p)$ (where the min with $\frac{1}{4}$ is chosen so that, at the last invocation of Proposition~\ref{prop:induction}, we still have $p_{lem}\leq\frac{1}{12}$ as required by the lemma).

We will thus use Lemma~\ref{lem:base} with $p_{base}$ as a base case for our induction to find a mean-based difference between traces of $x,y$; set $d_{base}=2R_{base}=\max(1/p_{base}^6,C_1,C_2,C_3)$ where $C_1$ is the lower bound on $d$ of Lemma~\ref{lem:base} and $C_2$ is the lower bound on $2R$ from Proposition~\ref{prop:induction} and $C_3$ is a constant to be specified below; this guarantees, since $p_{base}\leq\frac{1}{4}$, that the input condition of Lemma~\ref{lem:base}, that $p_{base}\geq 8d_{base}^{-1/2}$, is satisfied. Then Lemma~\ref{lem:base} guarantees that, setting $\tau_{base}=\exp(-\max(1,2{C^*}^2) \cdot d_{base}^{1/3} p_{base}^{-1/3})$, then, since strings $x_{[d-2R_{base}:n]},y_{[d-2R_{base}:n]}$, when zero-indexed as Lemma~\ref{lem:base} does, have first discrepancy $d_{base}=2R_{base}$, then there exists $j\leq 2R_{base}p_{base}+\sqrt{R_{base}p_{base}\log\frac{1}{\tau_{base}}}$ for which the order-1 (``mean based'') statistics differ as: \[|x_{[d-2R_{base}:n],p_{base}}^{(1+j)}-y_{[d-2R_{base}:n],p_{base}}^{(1+j)}|\geq \tau_{base}\]

We now will invoke the induction step, Proposition~\ref{prop:induction}, starting from $k_{base}=1$ with $s_{base}$ the order-1 statistic consisting of just entry 1, which thus has span $\sigma_{base}=1$. The remaining input condition of Proposition~\ref{prop:induction} requires that $4 \log R_{base}\leq\log\frac{1}{\tau_{base}}\leq R_{base}^{0.8} p_{base}^2$; since $\log\frac{1}{\tau_{base}}=\max(1,2C^*)^2 \cdot 2^{1/3}R_{base}^{1/3} p_{base}^{-1/3}$ and since $2R_{base}\geq p_{base}^{-6}$, it is easy to check that our two inequalities must hold when $2R_{base}$ is larger than some constant, which we denote $C_3$, as referenced above.

Thus we apply Proposition~\ref{prop:induction}, and in fact show that we can apply it up to $b$ times in succession, stopping when the location $d-2R$ becomes $\leq 1$ and we thus may interpret its conclusion as applying to a deletion channel on the entire strings $x,y$. It is straightforward to check that the output conditions of Proposition~\ref{prop:induction} imply exactly the required input conditions for the next application of  Proposition~\ref{prop:induction}: the lemma starts with a statistic $s$ of order $k$ having discrepancy $\geq \tau$ between $x,y$ when applied on a deletion channel of retention probability $p$ starting at location $d-2R$; the lemma concludes with a new statistic $S$ of order $\leq 3k$ having discrepancy $\tau'$ when applied to a deletion channel of retention probability $P\in[p,3p]$ starting at location $d-R'-R$. The span of $S$ is bounded by $\sqrt{\frac{1}{2}R'P\log\frac{1}{\tau'}}$ which, when considering the value of $R$ needed for the next invocation of the lemma, $R_{next}=\frac{1}{2}(R'+R)\geq \frac{1}{2}R'$, means that the span of $S$ is indeed bounded by $\sqrt{R_{next}P\log\frac{1}{\tau'}}$ as required. The other conditions of the lemma for the next iteration are all easily seen to be implied by the analogous conclusions of the lemma in the previous iteration, including: $4(3k)\log R_{next} \le 12k\log R' \le \log(1/\tau')$ and $(R'/2)^{0.8}P^2 \le R_{next}^{0.8}P^2$.

We thus analyze the conclusions of Proposition~\ref{prop:induction} after its last iteration to conclude our main result. The discrepancy $\tau$ decays inverse polynomially in each iteration; and since $\tau$ is bounded away from 1 we can thus find a constant $C$ encoding the bounds of Proposition~\ref{prop:induction} so that in each iteration $\tau'\geq \tau^C$. After $b$ iterations, we thus conclude that there is some offset $\ell$ and statistic $S$ of order $\leq 3^b$ such that for retention probability $P$ bounded by the retention probability of the actual channel $p$, the discrepancy is bounded as \[|x_{[d-R-R':n],P}^{(S+\ell)}-y_{[d-R-R':n],P}^{(S+\ell)}|\geq \tau_{base}^{C^b}\]

We thus estimate the statistic $S+\ell$ from traces to distinguish $x$ from $y$: pad traces by prepending with 0s when $d-R-R'<1$, and simulate a $P<p$ retention probability from the actual retention probability $p$ traces by throwing out each bit i.i.d. with probability $1-\frac{P}{p}$. 
Because a statistic is a 0/1-valued random variable, we can thus estimate the expected value of our statistic $S+\ell$ to within error $<\frac{1}{2}\tau_{base}^{C^b}$ with probability $\geq\frac{2}{3}$ using $4\cdot\tau_{base}^{-2\cdot C^b}$ samples (from Chebyshev's inequality). Thus the sample complexity of distinguishing $x$ from $y$ from traces of retention probability $p$ is $\leq 4\cdot\tau_{base}^{-2\cdot C^b}$.

Plugging in the definitions that  $\tau_{base}=\exp(-\max(1,2C^*)^2 \cdot d_{base}^{1/3} p_{base}^{-1/3})$ and $d_{base}=\max(1/p_{base}^6,C_1,C_2,C_3)$ we see that $\tau_{base}\geq \min(\frac{1}{C'},e^{-C''p_{base}^{-7/3}})$ for some constants $C',C''>0$; plugging in $p_{base}=3^{-b}\cdot \min(\frac{1}{4},p)$ yields $\tau_{base}\geq \min(\frac{1}{C'},e^{-C''3^{(7/3)b}\cdot\max(4^{7/3},p^{-7/3})})$ and now raising to the $-2C^b$ power and multiplying by 4 yields a sample complexity bounded by $4\cdot \max({C'}^{2C^b},e^{2 C^b\cdot C''\cdot3^{(7/3)b}\max(4^{7/3},p^{-7/3})})$. Since $b:=\lceil\log_{1.1} \log_2 n\rceil$, for $n\geq 3$ we can upper bound any constant to the $b$ power by an appropriate constant power of $\log_2 n$, and thus we can bound all the constant and constant-to-the-$b$ terms in the exponent by $(\log_2 n)^c$ for some $c$, leading to our bound of $e^{p^{-7/3}(\log_2 n)^c}$.

It is a standard fact that, if one can distinguish pairs of traces in a certain sample complexity, then $O(n)$ factor more traces suffice to reconstruct an arbitrary $x$, and we can increase $c$ to encompass this small extra factor.

Finally, for the remaining $n=2$ case we can trivially calculate that with probability $p^2$ a trace will retain both bits and thus reveal the original string; given $j$ traces, none of them will be of length $2$ with probability $\leq (1-p^2)^j\leq e^{-jp^2}$; so setting $j=\frac{\log 3}{p^2}$, we will see the whole string with probability $\geq \frac{2}{3}$. It is easy to check that our bound $e^{p^{-7/3}}$ always exceeds $\frac{\log 3}{p^2}$, proving the $n=2$ case.
\end{proof}

\section{Technical Ingredients}\label{sec:misc}
The following lemmas lead to Lemma~\ref{lem:cumulative-epsilon} that is used to define the weight function $w$ in the main induction result of Proposition~\ref{prop:induction}, but also as part of Section~\ref{sec:alpha}.

\begin{lemma}\label{lem:bump-pdf-alpha}
Given two integers $\alpha,m >0$, define the function $B^{pdf}_{\alpha,m }:\mathbb{Z}\rightarrow[0,1]$ that, on input $j$, equals the binomial probability $bin(\alpha,j,\frac{1}{2})$, except clamped to 0 if $|j-\frac{\alpha}{2}|\geq \frac{m}{2}$, and with the tail mass added back in arbitrarily to locations $j\in (\frac{\alpha}{2}-\frac{m}{2},\frac{\alpha}{2}+\frac{m}{2})$. Then its Fourier transform at angle $\xi$ has magnitude at most $e^{-\alpha\xi^2/8}+4\cdot e^{-m^2/(2\alpha)}$.
\end{lemma}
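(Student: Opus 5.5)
We will write $B^{pdf}_{\alpha,m}=b-t_{out}+t_{in}$. Here $b(j)=bin(\alpha,j,\frac12)$ is the unmodified binomial pdf. The term $t_{out}$ is the restriction of $b$ to $|j-\frac{\alpha}{2}|\geq\frac m2$, the tails that get clamped to zero. The term $t_{in}\geq 0$ is the mass added back inside the window. By construction $\|t_{in}\|_1=\|t_{out}\|_1=:T$. The triangle inequality then bounds the Fourier transform at $\xi$ by $|\hat b(\xi)|+\|t_{out}\|_1+\|t_{in}\|_1=|\hat b(\xi)|+2T$. The proof reduces to two estimates: $|\hat b(\xi)|\leq e^{-\alpha\xi^2/8}$ and $T\leq 2e^{-m^2/(2\alpha)}$. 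We will also note that $B^{pdf}_{\alpha,m}$ is nonnegative with total mass 1, so it takes values in $[0,1]$, as the statement's codomain requires.

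For the main term, $\hat b(\xi)=\left(\frac{1+e^{i\xi}}{2}\right)^\alpha$, so $|\hat b(\xi)|=|\cos(\xi/2)|^\alpha$ for $\xi\in(-\pi,\pi]$. We then use the elementary inequality $\cos(u)\leq e^{-u^2/2}$ for $|u|\leq\pi/2$, which follows by comparing $\log\cos u$ with $-u^2/2$ (its derivative is $-\tan u\leq -u$). With $u=\xi/2$ this gives $|\cos(\xi/2)|^\alpha\leq e^{-\alpha\xi^2/8}$, exactly the first term.

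For the tail, Hoeffding's inequality applied to $\alpha$ fair coins gives $\Pr[|\Bin{\alpha}{\frac12}-\frac{\alpha}{2}|\geq t]\leq 2e^{-2t^2/\alpha}$. Taking $t=\frac m2$ yields $T\leq 2e^{-m^2/(2\alpha)}$, hence $2T\leq 4e^{-m^2/(2\alpha)}$, matching the second term.

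The only subtlety is the phrase that the removed mass is added back ``arbitrarily'' inside the window. That is why we bound the added-back part crudely by its $\ell_1$ norm rather than attempting any cancellation. This requires the open window $(\frac{\alpha}{2}-\frac m2,\frac{\alpha}{2}+\frac m2)$ to contain an integer, which holds whenever $m\geq 1$. I expect no real obstacle; the step needing the most care is checking that the cosine inequality holds on the full range $|\xi|\leq\pi$ with the constant $1/8$, which the $|u|\leq\pi/2$ version covers.
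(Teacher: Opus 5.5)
Your proof is correct and matches the paper's argument: bound $|\cos(\xi/2)|^\alpha\le e^{-\alpha\xi^2/8}$ on $[-\pi,\pi]$, then use Hoeffding to bound the clamped tail mass by $2e^{-m^2/(2\alpha)}$, and charge twice that in $\ell_1$ for removing and re-adding it. One small slip in your side remark: for $m=1$ and $\alpha$ odd, the open window $(\frac{\alpha}{2}-\frac12,\frac{\alpha}{2}+\frac12)$ contains no integer, but this is harmless, since then $4e^{-1/(2\alpha)}>1$ makes the bound trivial, and your triangle-inequality argument only needs $\|t_{in}\|_1\le T$ anyway.
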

\begin{proof}
The Fourier transform of $\Bin{\alpha}{\frac{1}{2}}$ at angle $\xi$ has magnitude $\cos(\frac{\xi}{2})^{\alpha}\leq e^{-\alpha\xi^2/8}$, for $\xi\in[-\pi,\pi]$.

Meanwhile, the probability mass of the binomial at distance $\geq \frac{m}{2}$ from the mean is bounded, by Hoeffding's inequality, to be $\leq 2\cdot e^{-m^2/(2\alpha)}$; so removing this mass at the tails and adding it in elsewhere changes the distribution by at most $4\cdot e^{-m^2/(2\alpha)}$ in $\ell_1$ distance. Adding these two bounds yields the lemma.    
\end{proof}
\begin{lemma}\label{lem:bump-pdf-epsilon}
Given integer $m >0$ and real number $\eps\in(0,1]$, there exists a function $B^{pdf}:\mathbb{Z}\rightarrow\mathbb{R_{\geq 0}}$ with sum 1 and supported on $(\frac{m}{\eps} - \frac{m}{2}, \frac{m}{\eps} + \frac{m}{2})$ such that for any $|\xi|\geq \eps$, the magnitude of the Fourier transform of $B^{pdf}$ is at most $8\cdot e^{-\eps m/4}$. 
\end{lemma}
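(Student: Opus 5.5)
The plan is to instantiate Lemma~\ref{lem:bump-pdf-alpha} with $\alpha$ chosen so that the binomial centre $\frac{\alpha}{2}$ sits at $\frac{m}{\eps}$. Concretely, I would set $\alpha:=2\lceil\frac{m}{\eps}\rceil$, or $\alpha:=\frac{2m}{\eps}$ rounded so that the centre is as close to $\frac{m}{\eps}$ as possible, and define $B^{pdf}:=B^{pdf}_{\alpha,m}$.

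\textbf{Support and mass.} By construction in Lemma~\ref{lem:bump-pdf-alpha}, $B^{pdf}_{\alpha,m}$ has sum exactly 1, since the removed tail mass is added back. It is nonnegative and supported on $(\frac{\alpha}{2}-\frac{m}{2},\frac{\alpha}{2}+\frac{m}{2})$.

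\textbf{Fourier bound.} For $|\xi|\geq\eps$, Lemma~\ref{lem:bump-pdf-alpha} gives the bound $e^{-\alpha\xi^2/8}+4e^{-m^2/(2\alpha)}$. With $\alpha\approx\frac{2m}{\eps}$, the first term is at most $e^{-\frac{2m}{\eps}\cdot\frac{\eps^2}{8}}=e^{-\eps m/4}$. The second term is $4e^{-m^2\eps/(4m)}=4e^{-\eps m/4}$. The total is $5e^{-\eps m/4}\leq 8e^{-\eps m/4}$.

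\textbf{Main obstacle.} The only real difficulty is integrality: $\frac{m}{\eps}$ need not be an integer, so $\alpha$ must be rounded, and this trades off against both the support window and the constants.

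If I round up, $\alpha=2\lceil m/\eps\rceil\leq\frac{2m}{\eps}+2$. The first term only improves. The second term becomes $4e^{-m^2/(2\alpha)}$ with $\frac{m^2}{2\alpha}\geq\frac{m^2\eps}{4(m+\eps)}$. This is at least $\frac{\eps m}{4}-\frac{\eps^2}{4}$, which costs only a factor $e^{\eps^2/4}\leq e^{1/4}$. The slack between 5 and 8 absorbs this.

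The centre then moves to $\lceil m/\eps\rceil$, shifted by less than 1 from $\frac{m}{\eps}$. I would handle this by translating the whole function by an integer, which leaves the Fourier magnitude unchanged, and by using the open interval to check that the support still lies in $(\frac{m}{\eps}-\frac{m}{2},\frac{m}{\eps}+\frac{m}{2})$.

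If an off-by-one remains, I would instead apply Lemma~\ref{lem:bump-pdf-alpha} with $m-1$ or $m-2$ in place of $m$ to shrink the window slightly. The constant 8 leaves room to pay for this, and for small $m$ the bound is trivial anyway, since the Fourier magnitude is at most 1.
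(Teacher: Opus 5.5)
Your route is the paper's: instantiate Lemma~\ref{lem:bump-pdf-alpha} with $\alpha\approx 2m/\eps$. Your Fourier estimate for $\alpha=2\lceil m/\eps\rceil$ is correct, with total at most $(1+4e^{1/4})e^{-\eps m/4}$. The problem is the support claim, which you rightly flag as the delicate point but do not actually check, and which fails for your choice.

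Forcing $\alpha$ to be even makes the centre $\alpha/2$ an integer, and integer translations keep it an integer. Take $m\ge 3$ odd and $\eps=\frac{2m}{2m+1}$, so $\frac{m}{\eps}=m+\frac12$. Then $B^{pdf}_{\alpha,m}$ is positive on exactly $m$ consecutive integers. But the open interval $(\frac{m}{\eps}-\frac{m}{2},\frac{m}{\eps}+\frac{m}{2})$ has integer endpoints and contains only $m-1$ integers, so no translate fits.

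Your fallback of replacing $m$ by $m-1$ does fit, but the constant no longer closes. With $\alpha=2m/\eps+1$, Lemma~\ref{lem:bump-pdf-alpha} at $|\xi|=\eps$ gives $e^{-\eps m/4}\bigl(e^{-\eps^2/8}+4e^{\eps m/4-(m-1)^2/(2\alpha)}\bigr)$. For $m=101$ this is already about $8.29\,e^{-\eps m/4}$, and it tends to $e^{-1/8}+4e^{5/8}\approx 8.36$ times $e^{-\eps m/4}$. So the claim that ``the constant 8 leaves room'' is false exactly in the case where you need it.

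The missing observation is that you never need an integer centre. The paper takes $\alpha=\lfloor 2m/\eps\rfloor$, and the support check is then immediate:
\begin{itemize}
\item The right end is handled by $\alpha/2\le m/\eps$.
\item For the left end, $2j$ and $\alpha-m$ are integers, so $2j>\alpha-m$ forces $2j\ge \alpha-m+1>\frac{2m}{\eps}-m$.
\end{itemize}
Rounding down is paid for in the Gaussian term: for $|\xi|\ge\eps$,
$e^{-(2m/\eps-1)\xi^2/8}\le e^{\xi^2/8}e^{-\eps m/4}\le e^{\pi^2/8}e^{-\eps m/4}<4e^{-\eps m/4}$.
The Hoeffding term is at most $4e^{-\eps m/4}$ because $\alpha\le 2m/\eps$, which gives the total $8e^{-\eps m/4}$.

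The same parity argument shows that any integer $\alpha$ with $|\alpha-2m/\eps|<1$ works. So your alternative of rounding $2m/\eps$ to the nearest integer would also have been fine. The failure comes only from rounding $m/\eps$ rather than $2m/\eps$, which forces $\alpha$ to be even.
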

\begin{proof}
Let $\alpha=\lfloor\frac{2m}{\eps}\rfloor$ and apply Lemma~\ref{lem:bump-pdf-alpha} to yield $B^{pdf}_{\alpha,m}$. From Lemma~\ref{lem:bump-pdf-alpha}, at angle $\xi$, since $|\xi|\leq \pi$, its Fourier transform has magnitude at most $e^{-(\frac{2m}{\eps}-1)\xi^2/8}+4\cdot e^{-m^2/(2\frac{2m}{\eps})}$; since $e^{\pi^2/8}<4$, this is bounded by $8\cdot e^{-\eps m/4}$.
\end{proof}

\begin{lemma}\label{lem:cumulative-epsilon}
Given integer $m>0$ and real number $\eps\in(0,\pi]$, let $b$ denote the function from Lemma~\ref{lem:bump-pdf-epsilon}, shifted to have support in $\{1,\ldots,m\}$. Then, for an integer interval $I=\{I_-,\ldots,I_+\}$ define the function $B_{I,\eps,m}(j):=\sum_{\ell=I_-}^{I_+-m-1} b(j-\ell)$. Then $B_{I,\eps,m}$ is supported in $I$; equal to 1 on the subinterval that is $m$ smaller, $\{I_-+m,\ldots,I_+-m\}$; has sum $I_+-I_- -m$; takes values in $[0,1]$; and its Fourier transform at angle $|\xi|\geq \eps$ has magnitude at most $\frac{8\pi}{\eps} e^{-m\eps/4}$.
\end{lemma}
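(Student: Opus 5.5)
We prove the properties of $B_{I,\eps,m}$ one at a time, treating it as the convolution of the bump $b$ with the indicator of an interval. Write $\mathbbm{1}_J$ for the indicator of the integer interval $J=\{I_-,\ldots,I_+-m-1\}$, so that $B_{I,\eps,m}=b\ast\mathbbm{1}_J$.

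\textbf{Support, values and sum.} Since $b$ is nonnegative with sum 1 (Lemma~\ref{lem:bump-pdf-epsilon}), $B_{I,\eps,m}(j)=\sum_{\ell\in J}b(j-\ell)$ is a partial sum of the entries of $b$, so it lies in $[0,1]$.

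\begin{itemize}
\item[] For the support: $b(j-\ell)\neq 0$ requires $1\leq j-\ell\leq m$. Together with $\ell\in J$ this forces $j\in[I_-+1,I_+-1]\subseteq I$.
\item[] For the region where $B=1$: take $j\in\{I_-+m,\ldots,I_+-m\}$. The full support of $b(j-\cdot)$ is $\{j-m,\ldots,j-1\}$, and this set lies inside $J$ up to an off-by-one at the right end. Checking the right end, $j-1\leq I_+-m-1$ holds, so the whole window is inside $J$. At the left end, $j-m\geq I_-$ holds. Hence all of $b$'s mass is captured and $B(j)=1$.
\item[] For the sum: the sum of a convolution is the product of the sums, so it equals $1\cdot|J|=I_+-I_--m$.
\end{itemize}

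\textbf{Fourier bound.} The Fourier transform of $B_{I,\eps,m}$ is $\hat b(\xi)\cdot\widehat{\mathbbm{1}_J}(\xi)$, and $|\hat b(\xi)|\leq 8e^{-\eps m/4}$ for $|\xi|\geq\eps$ by Lemma~\ref{lem:bump-pdf-epsilon}; shifting $b$ only multiplies $\hat b$ by a unit phase. The factor $\widehat{\mathbbm{1}_J}$ is a geometric sum, so
\[
|\widehat{\mathbbm{1}_J}(\xi)|=\frac{|1-e^{i|J|\xi}|}{|1-e^{i\xi}|}\leq\frac{2}{|1-e^{i\xi}|}=\frac{1}{|\sin(\xi/2)|}.
\]
For $|\xi|\leq\pi$ we have $|\sin(\xi/2)|\geq|\xi|/\pi$, so for $|\xi|\geq\eps$ this is at most $\pi/\eps$. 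Combining the two factors gives
\[
|\widehat{B_{I,\eps,m}}(\xi)|\leq 8e^{-\eps m/4}\cdot\frac{\pi}{\eps}=\frac{8\pi}{\eps}e^{-m\eps/4}.
\]

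\textbf{Hypothesis issues.} The one delicate point is a mismatch of hypotheses. Lemma~\ref{lem:bump-pdf-epsilon} assumes $\eps\leq 1$, but here $\eps\leq\pi$ is allowed. I expect this to be harmless: for $\eps\in(1,\pi]$, apply Lemma~\ref{lem:bump-pdf-epsilon} with $\eps=1$. If that does not give a sufficient bound at $|\xi|\geq\eps$, I would instead recheck that the proof of Lemma~\ref{lem:bump-pdf-alpha} uses only $|\xi|\leq\pi$, which extends it to this range. A second point: $b$ is said to be supported on an interval of length $<m$ that has been shifted into $\{1,\ldots,m\}$, and I will verify that this shift is possible. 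Finally, if $J$ is empty (the case $I_+-I_-\leq m$), all claims hold trivially: $B\equiv 0$ and its Fourier transform vanishes, and the "equal to 1" subinterval is then empty. The stated sum formula would be negative in that case, but the lemma is only used with $I_+-I_-> m$.
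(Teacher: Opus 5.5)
Your proof is correct and follows essentially the same route as the paper: you write $B_{I,\eps,m}$ as $b$ convolved with the indicator of $\{I_-,\ldots,I_+-m-1\}$, which gives the support, flat-region, value and sum claims, and you bound the Fourier transform by $|\hat b(\xi)|$ times the geometric-series factor $2/|1-e^{i\xi}|\le \pi/|\xi|$. On the side issue of $\eps\in(1,\pi]$, which the paper does not address and which never arises since the lemma is only applied with $\eps\le p$: your first fallback (invoking Lemma~\ref{lem:bump-pdf-epsilon} with $\eps=1$) only gives $8e^{-m/4}$, which is too weak. Your second fallback does work, because the proof of Lemma~\ref{lem:bump-pdf-epsilon} uses only $\eps\le|\xi|\le\pi$, and in the degenerate case $\lfloor 2m/\eps\rfloor=0$ the claimed bound $8e^{-m\eps/4}$ exceeds $1$ and so holds trivially.
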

\begin{proof}
The support bounds come from fact that $b$ is supported on $\{1,\ldots,m\}$ from Lemma~\ref{lem:bump-pdf-epsilon}; the bounds on the values and sum of $B_{I,\eps,m}$ come from the fact that $b$ is the pdf of a distribution, and $B_{I,\eps,m}(j):=\sum_{\ell=I_-}^{I_+-m-1} b(j-\ell)$ is a sum of $b(t)$ on a subset of those $t$ in the support $\{1,\ldots,m\}$ and on the claimed flat interval, that subset contains all of $\{1,\ldots,m\}$, yielding $B(j)=1$ there.

The Fourier bound follows from defining $\widehat{b}$ to be the Fourier transform of $b$ and $\widehat{B}$ to be the Fourier transform of $B_{I,\eps,m}$ where, since $B_{I,\eps,m}(j):=\sum_{\ell=I_-}^{I_+-m-1} b(j-\ell)$ we have that $\widehat{B}(\xi)=\widehat{b}(\xi)\sum_{\ell=I_-}^{I_+-m-1} e^{i\xi \ell}$. Thus from the formula for the sum of a geometric series, for $|\xi|\leq \pi$ we have $|\widehat{B}(\xi)|\leq|\widehat{b}(\xi)|\cdot\frac{2}{|1-e^{i\xi}|}\leq |\widehat{b}(\xi)|\cdot\frac{\pi}{|\xi|}$, which yields our final bound, after substituting the Lemma~\ref{lem:bump-pdf-epsilon} bound that for $|\xi|\geq \eps$ we have $|\widehat{b}(\xi)|\leq 8\cdot e^{-\eps m/4}$.
\end{proof}

\section*{Acknowledgements}
The authors would like to thank Zachary Chase for many insightful discussions, and Alex Wein for pointing out the connection to the problem of multiple reference alignment.
\bibliographystyle{alpha}
\bibliography{reference}

\end{document}